\newtheorem{theorem}{Theorem}[section]
\newtheorem{lemma}[theorem]{Lemma}
\newtheorem{hypot}[theorem]{Hypothesis}
\newtheorem{cor}[theorem]{Corollary}
\theoremstyle{definition}
\newtheorem{definition}[theorem]{Definition}
\theoremstyle{plain}
\newtheorem{remark}[theorem]{Remark}
\newtheorem{open}{Open Question}
\newenvironment{reminder}[1]{\bigskip
	\noindent {\bf Reminder of #1  }\em}{\smallskip}
\renewcommand{\epsilon}{\varepsilon}
\newcommand{\E}{\mathop{\mathbb{E}}}
\newcommand{\F}{\mathbb{F}}
\newcommand{\poly}{\mathop{\mathrm{poly}}}
\newcommand{\polylog}{\mathop{\mathrm{polylog}}}
\newcommand{\aprod}{p_{\textrm{alt}}}
\newcommand{\hamsim}{\mathop{\mathrm{HamSim}}}
\newcommand{\maxsim}{\mathop{\mathrm{MaxSim}}}
\newcommand{\ik}{i_{(k)}} 
\newcommand{\xa}{x^{\mathrm{a}}}
\newcommand{\xb}{x^{\mathrm{b}}}
\newcommand{\binodd}{{\mathrm{bin}}_{\mathrm{odd}}}
\newcommand{\BQP}{\textsf{BQP}\xspace}
\newcommand{\AM}{\textsf{AM}\xspace}
\newcommand{\MA}{\textsf{MA}\xspace}
\newcommand{\PTIME}{\textsf{P}\xspace}
\newcommand{\ETIME}{\textsf{E}\xspace}
\newcommand{\NL}{\textsf{NL}\xspace}
\newcommand{\NP}{\textsf{NP}\xspace}
\newcommand{\PSPACE}{\textsf{PSPACE}\xspace}
\newcommand{\co}{\textsf{co}}
\newcommand{\BP}{\textsf{BP}\xspace}
\newcommand{\IP}{\textsf{IP}\xspace}
\newcommand{\eps}{\epsilon}
\newcommand{\SETH}{\textsf{SETH}\xspace}
\newcommand{\PCP}{\textsf{PCP}\xspace}
\newcommand{\NC}{\textsf{NC}\xspace}
\newcommand{\DTIME}{\textsf{DTIME}}
\newcommand{\NTIME}{\textsf{NTIME}}
\newcommand{\SPACE}{\textsf{SPACE}}
\newcommand{\NSPACE}{\textsf{NSPACE}}
\newcommand{\MAXAPAIR}[1]{\textsf{Max}-#1-\textsf{Pair}\xspace}
\newcommand{\MINAPAIR}[1]{\textsf{Min}-#1-\textsf{Pair}\xspace}
\newcommand{\ASATPAIR}[1]{#1-\textsf{Satisfying-Pair}\xspace}
\newcommand{\MAXLCSPAIR}{\textsf{Closest-LCS-Pair}\xspace}
\newcommand{\MINLCSPAIR}{\textsf{Furthest-LCS-Pair}\xspace}
\newcommand{\NEXP}{\textsf{NEXP}\xspace}
\newcommand{\BPSAT}{\textsf{BP-SAT}\xspace}
\newcommand{\BPSATPAIR}{\textsf{BP-Satisfying-Pair}\xspace}
\newcommand{\ACBPSAT}{\textsf{AC-BP-SAT}\xspace}
\newcommand{\OAPT}{\textsf{OAPT}\xspace}
\newcommand{\OAPTrestri}{Restricted~\textsf{OAPT}\xspace}
\newcommand{\OAPTfull}{\textsf{Orthogonal-Alternating-Product-Tensors}\xspace}
\newcommand{\MINTT}{\textsf{Min-TropSim}\xspace}
\newcommand{\MINTTrestri}{Restricted~\textsf{Min-TropSim}\xspace}
\newcommand{\MAXTT}{\textsf{Max-TropSim}\xspace}
\newcommand{\MAXTTrestri}{Restricted~\textsf{Max-TropSim}\xspace}
\newcommand{\GAPMINTT}[1]{#1-\textsf{Gap-Min-TropSim}\xspace}
\newcommand{\GAPMINTTrestri}[1]{Restricted~#1-\textsf{Gap-Min-TropSim}\xspace}
\newcommand{\GAPMAXTT}[1]{#1-\textsf{Gap-Max-TropSim}\xspace}
\newcommand{\GAPMAXTTrestri}[1]{Restricted~#1-\textsf{Gap-Max-TropSim}\xspace}
\newcommand{\SUPERGAPMAXTT}[1]{#1-\textsf{Super-Gap-Max-TS}\xspace}
\newcommand{\LCS}{\textsf{LCS}\xspace}
\newcommand{\REGEXPSTRPAIR}{\textsf{RegExp-String-Pair}\xspace}
\newcommand{\CLOSESTREGEXPSTRPAIR}{\textsf{Closest-RegExp-String-Pair}\xspace}
\newcommand{\GAPCLOSESTREGEXPSTRPAIR}[1]{#1-\textsf{Gap-Closest-RegExp-String-Pair}\xspace}
\newcommand{\STIPAIR}{\textsf{Subtree-Isomorphism-Pair}\xspace}
\newcommand{\MINLCSTPAIR}{\textsf{Min-LCST-Pair}\xspace}
\newcommand{\MAXLCSTPAIR}{\textsf{Max-LCST-Pair}\xspace}
\newcommand{\equivclass}{\textsf{BP-Pair-Class}\xspace}
\newcommand{\equivclasshard}{\textsf{BP-Pair-Hard}\xspace}
\newcommand{\circuitclass}{\mathcal{C}}
\newcommand{\SAT}{\textsf{SAT}}
\newcommand{\OV}{\textsf{OV}}
\newcommand{\Sub}{\mathop{\textsf{STI}}}
\newcommand{\RSub}{\mathop{\textsf{RSTI}}}
\newcommand{\LCST}{\mathop{\textsf{LCST}}}
\newcommand{\RLCST}{\mathop{\textsf{RLCST}}}
\newcommand{\NNS}{\textsf{NNS}}
\newcommand{\protocol}{\Pi}
\newcommand{\dist}{\textsf{dist}}
\newcommand{\depth}{\mathop{\mathrm{depth}}}
\title{Fine-grained Complexity Meets \IP = \PSPACE}
\def\ShowAuthNotes{1}
\newcommand{\authnote}[2]{\ \\ \textcolor{red}{\parbox{0.9\linewidth}{[{\footnotesize {\bf #1:} { {#2}}}]}}\newline}
\newcommand{\authnote}[2]{}
\date{}
\author{Lijie Chen\thanks{MIT, \texttt{lijieche@mit.edu}. Supported by an Akamai Fellowship.}\and
	\!Shafi Goldwasser\thanks{MIT and Simons Institute for the Theory of Computing, Berkeley, \texttt{shafi@theory.csail.mit.edu}. Supported by NSF grant CNS-1413920.}\and
	\!Kaifeng Lyu\thanks{Tsinghua University, \texttt{lkf15@mails.tsinghua.edu.cn}. Most of the work was done while visiting MIT.}
	\and
	\!Guy N. Rothblum\thanks{Weizmann Institute of Science, \texttt{rothblum@alum.mit.edu}. Part of this research was done while the author was visiting MIT, supported by NSF grant CNS-1413920.}\and
	\!Aviad Rubinstein\thanks{Stanford University, \texttt{aviad@cs.stanford.edu}. Supported by a Robert N. Noyce Family Faculty Fellowship. Most of the work was done while the author was at Harvard University, supported by a Rabin Postdoctoral Fellowship.}
}
\begin{document}

\clearpage\maketitle
\thispagestyle{empty}
\begin{abstract}
In this paper we study the fine-grained complexity of finding exact and approximate solutions to problems in $\PTIME$. 
Our main contribution is showing reductions from an exact to an approximate solution for a host of such problems. 

As one (notable) example, we show that the $\MAXLCSPAIR$  problem (Given two sets of strings $A$ and $B$, compute exactly the maximum $\LCS(a, b)$ with $(a, b) \in A \times B$) is equivalent to its approximation version (under near-linear time reductions, and with a constant approximation factor). 
More generally, we identify a class of problems, which we call $\equivclass$, comprising both exact and approximate solutions, and show that they are all equivalent under near-linear time reductions.

Exploring this class and its properties, we also show: 
\begin{itemize}
\item Under the $\textsf{NC-SETH}$ assumption (a significantly more relaxed assumption than $\SETH$), solving any of the problems in this class requires essentially quadratic time.
\item Modest improvements on the running time of known algorithms (shaving log factors) would imply that $\NEXP$ is not in non-uniform $\NC^1$.
\item  Finally, we leverage our techniques to show new barriers for deterministic approximation algorithms for $\LCS$. 
\end{itemize}

A very important consequence of our results is that they continue to hold in the \textbf{\emph{data structure setting}}. In particular, it shows that a data structure for \emph{approximate} Nearest Neighbor Search for $\LCS$ ($\NNS_\LCS$) implies a data structure for \emph{exact} $\NNS_\LCS$ and a data structure for answering regular expression queries with essentially the same complexity.

At the heart of these new results is a deep connection between interactive proof systems for bounded-space computations and the fine-grained complexity of exact and approximate solutions to problems in $\PTIME$. In particular, our results build on the proof techniques from the classical $\IP = \PSPACE$ result.

	
\end{abstract}
\addtocounter{page}{-1}
\newpage

\section{Introduction}
The study of the fine-grained hardness of problems in $\PTIME$ is one of the most interesting developments of the last few years in complexity theory. The study was initially aimed at the complexity of {\it exact versions} of important problem
in $\PTIME$, such as Longest Common Subsequence (LCS), Edit Distance, All Pair Shortest Path (APSP), and 3-SUM. This was the natural starting point. There are several main thrusts of the study: establishing equivalence classes of problems that are ``equivalent'' to each other in the sense that a substantial improvement in one would imply a similar improvement in the other;  showing fine-grained hardness under complexity assumptions, most notably the SETH\footnote{The Strong Exponential Time Hypothesis (SETH) states that for every $\eps > 0$ there is a $k$ such that $k$-$\SAT$ cannot be solved in $O((2-\eps)^n)$ time~\cite{IP01-SETH}.}; and showing implications of even slight algorithmic improvements, such as ``shaving-logs'' off algorithms for $\PTIME$ time problems, to circuit lower bounds.

However, for many of these problems, approximate solutions are of interest as well, as they originate in natural problems which arise in pattern matching and bioinformatics~\cite{AVW14,BI15,BackursI16,BGL17,bringman2018multivariate}, dynamic data structures~\cite{Patrascu10,AV14,AV14,henzinger2015unifying,kopelowitz2016higher,abboud2016popular,henzinger2017conditional,GoldsteinKLP17}, graph algorithms~\cite{RV13,GIKW17,Abboud2015tria,krauthgamer2017conditional}, computational geometry~\cite{Bring14,Wil18,david2016complexity,Che18} and machine learning~\cite{BackursIS17}.
Thus, studying
the hardness of the {\it approximation version} of the problems, soon became the next frontier.

There are two ways one can imagine to attack the hardness of approximation of problems.
\begin{enumerate}
\item
{\bf Show approximation hardness  under complexity assumptions. }This has been the approach by the recent breakthrough result of Abboud, Rubinstein and Williams~\cite{ARW17-proceedings} who introduced a ``Distributed $\PCP$'' framework and used it to show tight conditional lower bounds, 
under the SETH assumption,  for several fundamental approximation problems, including approximate Bichromatic Max-Inner Product, Subset Query, Bichromatic $\LCS$ Closest Pair, Regular Expression Matching and Diameter in Product Metrics. 

We remark that the challenge in showing 
tight lower bounds for the hardness on approximation problems in $\PTIME$ in contrast to exact problems, is that the traditional $\PCP$ paradigm can not be applied directly to fine-grained complexity, due to the super-linear size blow up in the constructed $\PCP$ instances~\cite{AroraS98,AroraLMSS98,Dinur07-PCP}, which becomes super-polynomial after reducing to problems in $\PTIME$ (when we  care about the \emph{exact} exponent of the running time, a super-polynomial blow-up is certainly unacceptable).
\item {\bf Show equivalence between the hardness of exact and approximate problems}. Namely, the latter is not substantially easier than the former.
This is in essence the original $\PCP$ methodology for showing the hardness of approximation of intractable problems (e.g. $\NP$-hard problems).\\
This
is the {\bf focus}
of the current paper for $\PTIME$-time problems.
\end{enumerate}

Interestingly, in terms of proof techniques, the key to establish our equivalence results is the application of the classical $\IP = \PSPACE$ result~\cite{LundFKN92,Shamir92}. In particular, our results are established via an application of the efficient $\IP$ communication protocol for low space computation ~\cite{AW09-algebrization}. This demonstrates that the techniques in interactive proofs can be ``scaled down'' to establish better results in fine-grained complexity.
Furthermore, it adds yet another example of the connection between communication complexity and fine-grained complexity (arguably, the most involved one).

\subsection{Exact to Approximate Reduction for Nearest Neighbor Search for $\LCS$}
\label{sec:first-example}

We begin with one of our most interesting results: an equivalence between \textbf{\emph{exact}} and \textbf{\emph{approximate}} Nearest Neighbor Search for $\LCS$.   

\begin{itemize}
	\item Nearest Neighbor Search for $\LCS$ ($\NNS_\LCS$): Preprocess a database $\mathcal{D}$ of $N$ strings of length $D \ll N$, and then for each query string $x$, find $y \in \mathcal{D}$ maximizing $\LCS(x,y)$. 
	
	The approximate version only requires to find $y \in \mathcal{D}$ such that $\LCS(x,y)$ is an $f(D)$-approximation of the maximum value.
\end{itemize}

Approximate data structures for the above problem that support fast preprocessing and queries would be highly relevant for bioinformatics. For the similar $\NNS_{\textsf{Edit-Distance}}$ problem, a breakthrough work of~\cite{ostrovsky2007low} used a metric embedding technique to obtain an $2^{O(\sqrt{\log D \log\log D})}$-approximate data structure with polynomial preprocessing time, $\poly(D,\log n)$ query time. 

In contrast, our first result shows that exact $\NNS_\LCS$ and approximate $\NNS_\LCS$ are essentially equivalent. That is, a similar data structure for \emph{approximate} $\NNS_\LCS$ would directly imply a data structure for \emph{exact} $\NNS_\LCS$ with essentially the same complexity!

\begin{theorem}[Informal]\label{theo:NNS-informal}
	For $D = 2^{(\log N)^{o(1)}}$, suppose there is a data structure for $\NNS_\LCS$ with approximation ratio $2^{(\log D)^{1 - \Omega(1)}}$, then there is another data structure for exact $\NNS_\LCS$ with essentially the same preprocessing time/space and query time.
\end{theorem}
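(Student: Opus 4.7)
The plan is to build a near-linear-time reduction from exact $\NNS_\LCS$ to approximate $\NNS_\LCS$ by \emph{simulating} an Aaronson--Wigderson-style interactive proof for small-space computations. Since $\LCS$ between two strings of length $D$ can be decided in space $\polylog(D)$, it admits an $\IP$ protocol with $\polylog(D)$ rounds, $\polylog(D)$-sized messages, and a near-linear-time verifier. In the reduction, the verifier is the querier (who already holds $x$ and a candidate threshold $k$); the role of the prover will be played by a sequence of approximate nearest-neighbor lookups against auxiliary databases derived from $\mathcal{D}$.

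First, I would reduce the optimization version of $\NNS_\LCS$ to $O(\log D)$ thresholded decision queries ``does there exist $y \in \mathcal{D}$ with $\LCS(x,y) \ge k$?''. For each threshold, I would arithmetize the LCS dynamic programming tableau over a small finite field and execute the standard sum-check/low-degree-extension protocol. At each round, the correct prover response is a short field element or polynomial. I would encode the universe of possible prover responses as binary strings through a gap-amplifying, $\LCS$-preserving code, and preprocess, for every $y \in \mathcal{D}$, the auxiliary strings corresponding to all partial transcripts of the protocol when run on $y$. To answer a query, the querier builds at each round a ``query encoding'' whose $\LCS$ with the correct partial transcript beats, by a constant multiplicative factor, the $\LCS$ with any inconsistent transcript; a single approximate $\NNS_\LCS$ query against the auxiliary database then recovers the correct prover message, one round at a time.

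The main technical obstacle is designing the encoding so that the multiplicative LCS gap between a \emph{consistent} and an \emph{inconsistent} prover message is $1 + \Omega(1)$, and so that this gap survives $\polylog(D)$ rounds of recursive composition. The permitted approximation ratio $2^{(\log D)^{1-\Omega(1)}}$ is precisely the budget one can afford to spend across all rounds without collapsing the gap; it also explains the restriction $D = 2^{(\log N)^{o(1)}}$, since the auxiliary databases blow up by a subpolynomial factor per round and the total blow-up must remain subpolynomial in $N$ so that preprocessing, space, and query time inherit the parameters of the assumed approximate data structure up to subpolynomial overhead. Carrying out the arithmetization at the \emph{string} level, so that field operations used by the verifier translate into honest $\LCS$ computations on the encoded strings, is the crux of the argument and is where the $\IP=\PSPACE$ machinery does the heavy lifting.
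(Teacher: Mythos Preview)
Your proposal has a genuine gap. You plan to simulate the $\IP$ protocol \emph{round by round}, using one approximate $\NNS_\LCS$ query per round to ``recover the correct prover message.'' The difficulty is that an approximate nearest-neighbor query at round $i$ may return the auxiliary string belonging to some $y_i \in \mathcal{D}$, while the query at round $i+1$ returns the auxiliary string of a different $y_{i+1}$. Nothing in your scheme forces the recovered prover messages to be consistent with a \emph{single} $y$; you are, in effect, allowing the prover to switch witnesses adaptively across rounds. The soundness of an $\IP$ protocol holds only against a fixed prover strategy, not against such a witness-switching adversary, so the final accept/reject bit you compute need not reflect whether any single $y\in\mathcal{D}$ satisfies $\LCS(x,y)\ge k$. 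Your remark that the obstacle is ``designing the encoding so that the multiplicative LCS gap between a consistent and an inconsistent prover message is $1+\Omega(1)$'' misidentifies the issue: the protocol gives you no per-round notion of correctness to amplify.

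The paper sidesteps this entirely by \emph{not} simulating the protocol interactively. Instead, the whole $\IP$ game tree is collapsed into a single pair of tensors via the Tropical Similarity $s(u,v)=\E\max\E\max\cdots$ (the $\max$ ranging over Merlin's moves, the $\E$ over the verifier's coins), and then into a single pair of strings via the $\LCS$ gadgets of Abboud--Rubinstein. Crucially, this composite reduction is \emph{element-wise}: there are maps $f$ and $g$, each computable in time $2^{\polylog(D)}$, such that $\LCS(f(a),g(b,k))$ is large iff $\LCS(a,b)\ge k$, and $f(a)$ depends only on $a$ while $g(b,k)$ depends only on $(b,k)$. One therefore preprocesses $\{f(a):a\in\mathcal{D}\}$ into a \emph{single} approximate $\NNS_\LCS$ structure; a query $x$ is answered by binary-searching on $k$ and issuing one approximate query $g(x,k)$ per threshold. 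Because each query is against a fixed database element $f(a)$, the optimal Merlin strategy is automatically tied to that specific $a$, and the consistency problem never arises.
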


In the following, we first discuss $\MAXLCSPAIR$ (a natural \emph{offline} version of $\NNS_\LCS$) to illustrate our techniques, and then discuss our other results in details.

\subsection{Techniques: Hardness of Approximation in $\PTIME$ via Communication Complexity and the Theory of Interactive Proofs} 

$\MAXLCSPAIR$ is the problem that given two sets of strings $A$ and $B$, compute the maximum $\LCS(a,b)$ with $(a,b) \in A\times B$. We show how to reduce exact $\MAXLCSPAIR$ to approximate $\MAXLCSPAIR$ as an illustration of our proof techniques. 

\begin{theorem}[Informal]\label{theo:informal-exact-to-approx-MaxLCSPAIR}
	 There is a near-linear time\footnote{Throughout this paper, we use near-linear time to denote the running time of $N^{1 + o(1)}$.} reduction from \MAXLCSPAIR\ to $2^{(\log N)^{1-\Omega(1)}}$ factor approximate $\MAXLCSPAIR$, when $A,B$ are two sets of $N$ strings of length $D = 2^{(\log N)^{o(1)}}$.
\end{theorem}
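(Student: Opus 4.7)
The plan is to reduce exact \MAXLCSPAIR\ to its approximate version by simulating an efficient interactive proof protocol for the threshold predicate $P_t(a,b) := \mathbf{1}[\LCS(a,b) \geq t]$. First, I would reduce \MAXLCSPAIR\ to $O(\log D) = (\log N)^{o(1)}$ instances of the decision problem "does some pair $(a,b) \in A \times B$ satisfy $P_t(a,b) = 1$?" by binary search over $t \in \{1,\ldots,D\}$; this search overhead is sub-polynomial given the regime $D = 2^{(\log N)^{o(1)}}$, so it suffices to build a near-linear reduction from a single threshold instance to approximate \MAXLCSPAIR\ with the stated gap.

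The core step is to exploit that $\LCS$, and hence the predicate $P_t$, is computable in $O(\log^2 D)$ space (equivalently, by an $\NC^2$ circuit of size $\poly(D)$). Thus $P_t$ admits an IP protocol in the style of the AW09 algebrization / GKR protocol with $\polylog(D)$ rounds and $\polylog(D)$ bits of communication per round, whose verifier's final acceptance test is a low-degree polynomial identity over a field of size $\polylog(D)$. I would then apply parallel repetition / soundness amplification to boost the completeness-to-soundness gap to $2^{(\log N)^{1-\Omega(1)}}$. The total communication (across all rounds and repetitions) blows up to $(\log N)^{1-\Omega(1)}$ bits, so the number of possible verifier-randomness/prover-transcript pairs is at most $2^{(\log N)^{1-\Omega(1)}} = N^{o(1)}$, which is the budget available for gadget blow-up.

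Next, for each configuration $(r,\pi)$ of randomness and transcript, I would decompose the verifier's acceptance predicate into a product of an "$a$-side" function $f_a(r,\pi)$ and a "$b$-side" function $g_b(r,\pi)$ -- the same split-and-communicate trick underlying the distributed-PCP framework. Using \LCS\ gadgets in the spirit of Abboud--Backurs--Williams and Bringmann--K\"unnemann, I would assemble for each $a \in A$ a string $\widetilde{a}$ encoding the vector $(f_a(r,\pi))_{(r,\pi)}$ and similarly $\widetilde{b}$ for each $b \in B$, so that $\LCS(\widetilde{a},\widetilde{b})$ is, up to a fixed additive shift, proportional to the number of configurations on which both sides accept. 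By completeness this quantity equals the full configuration count for some pair on a yes-instance; by soundness every pair on a no-instance attains only a $2^{-(\log N)^{1-\Omega(1)}}$ fraction, yielding the claimed multiplicative gap. Enumerating transcripts adds only an $N^{o(1)}$ multiplicative blow-up in the size of the output sets, preserving near-linear complexity.

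The main obstacle I anticipate is engineering the \LCS\ gadgets so that the arithmetic/boolean verifier check converts faithfully into an \LCS\ value with a sharp multiplicative gap, while keeping the blown-up string length sub-polynomial in $N$. Concretely, two issues must be handled: (i) choosing the IP protocol's field and repetition parameters so the final check reduces to a boolean AND/OR of coordinate matches the gadgets can count, and (ii) ensuring no spurious "cross-configuration" \LCS\ matches inflate the soundness value and shrink the gap. The role of prover messages is also delicate: either they are absorbed into the enlarged $A^{\ast},B^{\ast}$ sets (affordable since the number of transcripts is $N^{o(1)}$), or the efficient-prover property of GKR-type protocols is invoked so that optimal responses are baked directly into the verifier's check, avoiding enumeration. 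Getting these tradeoffs to line up simultaneously -- polylogarithmic communication, polynomial-sized gadget alphabet, and a clean completeness-vs-soundness LCS gap -- is where the real work lies.
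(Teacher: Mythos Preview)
Your high-level outline is right and matches the paper: binary search over the threshold $t$, observe that the predicate $[\LCS(a,b)\ge t]$ lies in small space (the paper uses that it is in $\NL$), invoke an $\IP$ communication protocol \`a la Aaronson--Wigderson with $\polylog(D)$ total bits, and then embed into $\LCS$ via gadgets. The gap is in the embedding step.

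You propose to enumerate all pairs $(r,\pi)$ of verifier randomness and prover transcript, write the verifier's acceptance as a product $f_a(r,\pi)\cdot g_b(r,\pi)$, and build gadgets so that $\LCS(\widetilde a,\widetilde b)$ is essentially $\sum_{r,\pi} f_a(r,\pi)\, g_b(r,\pi)$. This is the $\MA$/distributed-PCP picture of \cite{ARW17-proceedings}, and it does \emph{not} capture a multi-round $\IP$ protocol. The quantity you care about is
\[
\max_{\pi_1}\;\E_{r_1}\;\max_{\pi_2}\;\E_{r_2}\;\cdots\;[\text{verifier accepts}],
\]
where the prover's optimal message in round $i$ depends on the verifier's coins in earlier rounds. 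Summing over all $(r,\pi)$ loses this nesting: in the soundness case a cheating prover may have, for each fixed $r$, \emph{some} transcript $\pi$ that makes the verifier accept, so the count $\sum_{r,\pi} f_a g_b$ can be large even though no single adaptive strategy succeeds on more than an $\epsilon$-fraction of $r$. Conversely your two proposed fixes do not work: ``absorbing prover messages into enlarged sets'' requires enumerating prover \emph{strategies} (functions from randomness-so-far to next message), which is doubly exponential in the communication; and ``baking optimal responses into the verifier's check'' is impossible because the honest prover's messages depend jointly on $a$ and $b$, so they cannot be precomputed on either side alone.

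The paper's resolution is exactly the missing ingredient: it passes through the \emph{Tropical Similarity} tensor problem of \cite{AbboudR18}. One writes the $\IP$ transcript as an index into a tensor and defines $s(u,v)=\E_{i_1}\max_{i_2}\E_{i_3}\cdots(u_i v_i)$, so that $s(G(a),H(b))$ equals the optimal-prover acceptance probability by construction. The $\LCS$ gadgets then implement the $\max$ and $\Sigma$ operators \emph{recursively} (not as a flat sum), composing them level by level to realize the alternation. This recursive gadget composition, rather than a flat inner-product count, is what makes the completeness/soundness gap survive the embedding.
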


We first introduce the concept of $\mathcal{A}$-\textit{Satisfying-Pair} problem. This problem asks whether there is a pair of $(a,b)$ from two given sets $A$ and $B$ such that $(a, b)$ is a yes-instance of $\mathcal{A}$. By binary search, $\MAXLCSPAIR$ can be easily formulated as an $\mathcal{A}$-\textit{Satisfying-Pair} problem: is there a pair $(a,b) \in A \times B$ such that $\LCS(a,b) \ge k$? The key property we are going to use is that the function $A^{\ge k}_{\LCS}(a,b) := [\LCS(a,b) \ge k]$ can be computed in very small space, i.e., it is in $\textsf{NL}$ (see Lemma~\ref{thm:lcs-nl}). Indeed, we will show in this paper that for all $\mathcal{A}$-\textit{Satisfying-Pair} such that $\mathcal{A}$ can be computed in small space, $\mathcal{A}$-\textit{Satisfying-Pair} can be reduced to approximate $\MAXLCSPAIR$.

\paragraph*{The Reduction in a Nutshell.} 
First, we consider an $\IP$ communication protocol for $\LCS$. In this setting, Alice and Bob hold strings $a$ and $b$, and they want to figure out whether $\LCS(a,b) \ge k$. To do so, they seek help from an untrusted prover Merlin by engaging in a conversation with him. The protocol should satisfy that when $\LCS(a,b) \ge k$, Merlin has a strategy to convince Alice and Bob w.h.p., and when $\LCS(a,b) < k$, no matter what Merlin does, Alice and Bob will reject w.h.p. The goal is to minimize the total communication bits (between Alice and Bob, or Alice/Bob and Merlin).

Next, by a result of Aaronson and Wigderson~\cite{AW09-algebrization}, it is shown that any function $f(a,b)$ which can be computed in $\NL$ admits an $\IP$ communication protocol with $\polylog(N)$ total communication bits. Finally, using an observation from~\cite{AbboudR18}, an efficient $\IP$ communication protocol can be embedded into approximate $\LCS$, which completes the reduction. In the following we explain each step in details.

\paragraph*{$\IP$ Communication Protocols for Low Space Computation.} The key technical ingredient of our results is the application of $\IP$ communication protocols for low space computation by Aaronson and Wigderson~\cite{AW09-algebrization}. It would be instructive to explain how it works.

\newcommand{\WT}{\widetilde}

Let us use the sum-check $\IP$ protocol for the Inner Product problem as an example. Arthur gets access to a function $f : \{0,1\}^{n} \to \{0,1\}$ and its multilinear extension $\WT{f} : \mathbb{F}_q^{n} \to \mathbb{F}_q$ over a finite field $\mathbb{F}_q$.
Let $f_{i}(x) = f(i \circ x)$ for $i \in \{0,1\}$ be the restrictions of $f$ after setting the first bit of the input, Arthur wants to compute the inner product $ \sum_{x \in \{0,1\}^{n-1}} f_0(x) \cdot f_1(x)$. To do so, he engages in a conservation with an untrusted Merlin who tries to convince him. During the conversation, Merlin is doing all the ``real work'', while Arthur only has to query $\WT{f}$ once at the last step, which is the crucial observation in~\cite{AW09-algebrization}.

Now, imagine a slightly different setting where $f_0$ and $f_1$ are held by Alice and Bob respectively, which means each of them holds a string of length $N = 2^{n-1}$. They still want to compute the inner product with Merlin, while using minimum communication between each other. 

In this setting, Alice (pretending she is Arthur) can still run the previous $\IP$ protocol with Merlin. When she has to query $\WT{f}(z)$ for a point $z$ at the last step, she only needs Bob to send her the contribution of his part to $\WT{f}(z)$, which only requires $O(\log q)$ bits. In terms of the input size of Alice and Bob, this $\IP$ communication protocol runs in $\poly(n) = \polylog(N)$ time. The same also extends to any $\poly(n) = \polylog(N)$ space computation on $f$, if we use the $\IP$ protocol for $\PSPACE$~\cite{LundFKN92,Shamir92}.

In Section~\ref{sec:tensor}, we provide a parameterized $\IP$ communication protocol for Branching Program\footnote{Informally speaking, a branching program with length $T$ and width $W$ formulates a \emph{non-uniform} low-space computation with running time $T$ and space $\log W$, see Definition~\ref{defi:BP} for a formal definition.} (Theorem~\ref{thm:bp-ip}). Informally, we have:

\begin{theorem}[$\IP$ Communication Protocol for BP (Informal)]
	Let $P$ be a branching program of length $T$ and width $W$ with $n$ input bits, equally distributed among Alice and Bob. 
	For every soundness parameter $\epsilon > 0$, there is an \IP-protocol for $P$, such that:
	\begin{itemize}
		\item Merlin and Alice exchange $\WT{O}\left(\log^2 W \log^2 T \log \epsilon^{-1}\right)$ bits, and toss the same amount of public coins;
		\item Bob sends $O( \log\log(WT) \cdot \log \epsilon^{-1})$ bits to Alice;
		\item Alice always accepts if $P$ accepts the input, and otherwise rejects with probability at least $1 - \epsilon$.
	\end{itemize}
\end{theorem}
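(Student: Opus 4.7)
The strategy is to arithmetize the branching program as an iterated matrix product over a finite field and to run the standard Shamir/LFKN-style sum-check with Merlin, in the Aaronson--Wigderson fashion, assigning the role of verifier to Alice. The two-party structure only intervenes at the very end, where the sum-check has been reduced to a single query on the low-degree extension of the input; this query splits cleanly between Alice's and Bob's halves of $x$, and Bob's contribution collapses to a single field element. Concretely, fix a finite field $\F_q$ with $\log q = \Theta(\log\log(WT))$, write the $t$-th transition of $P$ as $M_t(x)=(1-x_{i_t})M_t^{(0)}+x_{i_t}M_t^{(1)}\in\F_q^{W\times W}$ with the $0/1$-matrices $M_t^{(b)}$ fixed by the (public) BP description, and note that $P$ accepts iff $\bigl(\prod_{t=1}^T M_t(x)\bigr)_{s_0,s_{\mathrm{acc}}}=1$. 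Let $\widetilde{M_t}(u,v)$ denote the multilinear extension of $M_t$ in its $2\log W$ row/column bits.

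\textbf{Sum-check for the iterated product.} Merlin and Alice (with shared public coins) run Shamir's divide-and-conquer sum-check: a claim $\widetilde{\prod_{t\in[a,b]}M_t}(u,v)=\alpha$ is rewritten as $\sum_{k\in\{0,1\}^{\log W}}\widetilde{\prod_{t\in[a,c]}M_t}(u,k)\cdot\widetilde{\prod_{t\in(c,b]}M_t}(k,v)$ and resolved by $\log W$ sum-check rounds that pin $k$ down to a random $k^\star\in\F_q^{\log W}$; the two resulting subclaims are then merged into one via the standard random-line trick, and the recursion continues on each halved interval. The recursion has depth $O(\log T)$, each level contributes $O(\log W)$ sum-check rounds whose messages are degree-$O(\log W)$ univariates over $\F_q$, so a single run uses $\widetilde{O}(\log^2 W \log T)$ bits of Merlin--Alice communication. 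The recursion terminates in a single claim about $\widetilde{M_{t^\star}}(u^\star,v^\star)$ at random indices.

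\textbf{Delegating the terminal claim to Bob and amplification.} Before bottoming out, Merlin and Alice take an $\F_q$-random linear combination over the $T$ transition steps, converting the terminal claim into the evaluation of $\widetilde{x}(z)$ at a random point $z\in\F_q^{\log n}$, where $\widetilde{x}$ is the multilinear extension of the full input string. Because $x=x_A+x_B$ on disjoint supports, $\widetilde{x}(z)=\widetilde{x_A}(z)+\widetilde{x_B}(z)$. Alice evaluates $\widetilde{x_A}(z)$ locally and forwards $z$ to Bob, who returns the single field element $\widetilde{x_B}(z)$ in $O(\log\log(WT))$ bits. Alice accepts iff the two halves add up to the claimed value. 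A single run has constant soundness; $O(\log\epsilon^{-1})$ parallel independent repetitions with unanimous acceptance amplify the soundness error to at most $\epsilon$, multiplying both Merlin--Alice communication and Bob-to-Alice communication by $O(\log\epsilon^{-1})$, which yields the stated bounds. Completeness is perfect throughout.

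\textbf{Main obstacle.} The delicate step is concentrating Bob's contribution into a single field element per run; a naive implementation, which would ask Bob to evaluate $\widetilde{M_t^{(1)}}(u^\star,v^\star)$ for every $t$ whose input bit is Bob's, would cost $\Omega(T)$ field elements from him. The fix is the random-linear-combination folding step described above, which is precisely the ``scale-down'' of the \IP=\PSPACE\ machinery used in Aaronson--Wigderson. Once this folding is in place, the remaining work is bookkeeping the soundness loss across the $O(\log W\log T)$ sum-check rounds and the $O(\log T)$ recursion levels, and choosing $|\F_q|$ and the repetition count so that the target error $\epsilon$ is attained while Bob still sends only $O(\log\log(WT)\cdot\log\epsilon^{-1})$ bits.
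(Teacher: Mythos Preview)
Your overall architecture---arithmetize, run a sum-check with Merlin acting as verifier-helper, end with a single linear query on the input that splits between Alice and Bob, then amplify---is the right one and mirrors the paper's Aaronson--Wigderson philosophy. But the core of your protocol has a real gap, in the merging step.

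After one level of your divide-and-conquer sum-check you hold two claims, one about $\widetilde{\prod_{t\in[a,c]}M_t}(u,k^\star)$ and one about $\widetilde{\prod_{t\in(c,b]}M_t}(k^\star,v)$. These are evaluations of two \emph{different} multilinear polynomials (different matrix products), and the ``standard random-line trick'' only merges two evaluations of the \emph{same} polynomial. If you try to make them the same by also arithmetizing the interval start, i.e.\ working with $\widetilde{P_{k}}(s,u,v)$ for $s\in\F_q^{\log T-k}$, then the identity
\[
\widetilde{P_k}(s,u,v)\;=\;\sum_{w\in\{0,1\}^{\log W}}\widetilde{P_{k-1}}(s,0,u,w)\,\widetilde{P_{k-1}}(s,1,w,v)
\]
fails for non-Boolean $s$: the right side has degree $2$ in each coordinate of $s$, so the degree in $s$ doubles at every level and the line polynomials blow up to degree $\Omega(T)$. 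Either way you end up with $\Omega(T)$ communication, not $\polylog(WT)$. Your ``random linear combination over the $T$ transition steps'' at the end does not fix this; it presupposes that you have already reduced to a single terminal claim, which is precisely what the broken merging was supposed to deliver.

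The paper's proof avoids this problem by taking the other classical route: it first writes ``$P$ accepts $x$'' as a TQBF instance with $t=O(\log W\log T)$ alternating quantifiers via the Savitch/Stockmeyer--Meyer trick $\exists m\,\forall(u',v',j)\in\{(u,m,0),(m,v,1)\}\colon\psi_{k-1}(\cdots)$---the universal quantifier is exactly the merge you are missing---and then arithmetizes with Shamir's $\Pi,\Sigma$ operators plus explicit degree-reduction operators $\mathcal{R}_{z_i}$ inserted after each quantifier to keep all degrees $O(1)$. This yields $M=O(t^2)=O(\log^2W\log^2T)$ rounds. The paper also does not repeat: it takes the field size $2^q=\Theta(M/\epsilon)$ so that a single run already has soundness $\epsilon$, and Bob's single field element is $O(q)=O(\log\log(WT)+\log\epsilon^{-1})$ bits. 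Your repetition-based amplification is a legitimate alternative for that last step, but the sum-check itself needs either the $\forall$-merge/degree-reduction machinery or some other device you have not supplied.
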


Since $A^{\ge k}_{\LCS}(a,b)$ is in $\NL$, the above in particular implies that there is an $\IP$ communication protocol for $A^{\ge k}_{\LCS}$ with $\polylog(D)$ total communication bits ($D$ is length of strings).

\paragraph*{$\IP$ Communication Protocols and Tropical Tensors.} 
The next technical ingredient is the reduction from an $\IP$ communication protocol to a certain Tropical Tensors problem~\cite{AbboudR18}. We use a $3$-round $\IP$ communication protocol as an example to 
illustrate the reduction. Consider the following $3$-round $\IP$ communication protocol $\protocol$:

\begin{itemize}
	\item Alice and Bob hold strings $x$ and $y$, Merlin knows both $x$ and $y$.
	\item Merlin sends Alice and Bob a string $z_1 \in \mathcal{Z}_1$.
	\item Alice sends Merlin a uniform random string $z_2 \in \mathcal{Z}_2$.
	\item Merlin sends Alice and Bob another string $z_3 \in \mathcal{Z}_3$.
	\item Bob sends Alice a string $z_4 \in \mathcal{Z}_4$, and Alice decides whether to accept or reject.
\end{itemize}

The main idea of~\cite{AbboudR18} is that the above $\IP$ protocol can be reduced into a certain Tropical Similarity function. That is, for $x$ and $y$, we build two tensors $u=u(x)$ and $v=v(y)$ of size $|\mathcal{Z}_1| \times |\mathcal{Z}_2| \times |\mathcal{Z}_3| \times |\mathcal{Z}_4|$ as follows: we set $u_{z_1,z_2,z_3,z_4}$ to indicate whether Alice accepts, given the transcript $(z_1,z_2,z_3,z_4)$ and input $x$; we also set $v_{z_1,z_2,z_3,z_4}$ to indicate whether Bob sends the string $z_4$, given the previous transcript $(z_1,z_2,z_3)$ and input $y$. Then, by the definition of $\IP$ protocols, it is not hard to see the acceptance probability when Merlin uses optimal strategy is:
\[
\textsf{acc}(u,v) := \max_{z_1 \in \mathcal{Z}_1} \E_{z_2 \in \mathcal{Z}_2} \max_{(z_3,z_4) \in \mathcal{Z}_3 \times \mathcal{Z}_4} u_{z_1,z_2,z_3,z_4} \cdot v_{z_1,z_2,z_3,z_4}.
\]

In the above equality, the $\max$ operator corresponds to the actions of Merlin, who wishes to maximize the acceptance probability, while the $\E$ operator corresponds to actions of Alice, who sends a uniform random string. 
It can be easily generalized to $\IP$ protocols of any rounds, by replacing $\textsf{acc}$ with a series of $\max$ and $\E$ operators, which is called Tropical Similarity (denoted by $s(u,v)$) in~\cite{AbboudR18} (see also Definition~\ref{def:max-min-tt}). When the number of total communication bits is $d$, both $u$ and $v$ are of size $2^{d}$.

Applying the above to the $\polylog(D)$ bits $\IP$ protocol for $A^{\ge k}_{\LCS}$, it means for two strings $a,b$, we can compute two tensors $u,v$ of length $2^{\polylog(D)} = 2^{(\log N)^{o(1)}}$, such that when $\LCS(a,b) \ge k$, $s(u,v)$ is large, and otherwise $s(u,v)$ is small. 

\paragraph*{Simulating Tropical Tensors by Composing $\max$ and $\Sigma$ Gadgets.} While the above reduction is interesting in its own right, the Tropical Similarity function seems quite artificial.
Another key idea from~\cite{AbboudR18} is that $s(u,v)$ can be simulated by $\LCS$. The reduction works by noting that with $\LCS$, one can implement the $\max$ and $\Sigma$ (which is equivalent to $\E$) gadgets straightforwardly, and composing them recursively leads to gadgets for Tropical Similarity. That is, for tensors $u$ and $v$, one can construct strings $S(u)$ and $T(v)$ of similar sizes, such that $\LCS(S(u),T(v))$ is proportional to $s(u,v)$. 

Putting everything together, for two strings $a,b$, we can compute two other strings $S(a)$ and $T(b)$ of length $2^{\polylog(D)} = 2^{(\log N)^{o(1)}}$, such that $\LCS(a,b) \ge k$, $\LCS(S(a),T(b))$ is large, and otherwise $\LCS(S(a),T(b))$ is small. This completes our reduction.

\subsection*{Our Results In Detail}
\subsection{From Exact to Approximate in the Fine-Grained World}\label{sec:exact-to-approx}

More generally, we consider the following four (general flavor) problems.

\begin{itemize}
\item 
The $\MAXLCSPAIR$ problem.
\item
The \CLOSESTREGEXPSTRPAIR\ problem: Given a set $A$ of $N$ regular expressions of length $2^{(\log N)^{o(1)}}$ and a set $B$ of $N$ strings of length $2^{(\log N)^{o(1)}}$, find $(a,b) \in A \times B$ with maximum Hamming Similarity\footnote{Hamming Similarity between two strings are defined as the fraction of positions that they are equal, while the hamming similarity between a regular expression $a$ and a string $b$ is the maximum of the hamming similarity between $z$ and $b$ where $z$ is in the language of $a$.}.
\item
The \MAXLCSTPAIR\ problem:  Given two sets $A, B$ of $N$ bounded-degree trees with size $2^{(\log N)^{o(1)}}$, find a pair $(a,b) \in A \times B$ such that $a$ and $b$'s have the largest common subtree.
\item
The \MAXTT\ problem:  Given two sets $A, B$ of $N$ binary tensors with size $2^{(\log N)^{o(1)}}$, find the pair with maximum Tropical Similarity\footnote{see Definition~\ref{def:max-min-tt} for a formal definition.}.
\end{itemize}

Our main theorem shows equivalence between exact and approximation versions of the above problems. In fact, we show that all these problems, together with the following two closely related decision problems and a generic satisfying pair problem, are \emph{equivalent} under \emph{near-linear} time reductions. See Theorem~\ref{theo:informal-equiv-class} for a formal statement of the equivalence class.

\begin{itemize}
	\item
	The \REGEXPSTRPAIR problem: Given a set $A$ of $N$ regular expressions of length $2^{(\log N)^{o(1)}}$ and a set $B$ of $N$ strings of length $2^{(\log N)^{o(1)}}$, is there a pair $(a, b) \in A \times B$ such that $b$ matches $a$?
	\item
	The \STIPAIR problem:  Given two sets $A, B$ of $N$ bounded-degree trees with size $2^{(\log N)^{o(1)}}$, is there a pair $(a,b) \in A \times B$ such that $a$ is isomorphic to a subtree of $b$?
	\item 
	The \BPSATPAIR
	problem: Given a branching program\footnote{see Definition~\ref{defi:BP} for a formal definition} $P$ of size $2^{(\log N)^{o(1)}}$ and two sets $A, B$ of $N$ strings, is there a pair $(a,b) \in A \times B$ making $P$ accepts the input $(a,b)$?
\end{itemize}

We will refer to this set of problems as $\equivclass$.

\begin{remark}
	$\STIPAIR$ and $\MAXLCSTPAIR$ may seem artificial, but they are nice intermediate problems for showing hardness of the closely related problems Subtree Isomorphism and Longest Common Subtree, which are extensively studied natural problems (see Section~\ref{sec:equivclass-hard-intro} and Section~\ref{sec:related-works-specific-problems}).
\end{remark}

\subsubsection*{Equivalence in the Data Structure Setting}

These pair problems are interesting as they are natural \emph{off-line} versions of closely related data structure problem, which are highly relevant in the practice~\cite{Tho68,Mye92,BT09,myers1989approximate, wu1995subquadratic, knight1995approximate, myers1998reporting, navarro2004approximate, belazzougui2013approximate}. Therefore, a lower bound on the time complexity of these pair problems directly implies a lower bound for corresponding data structure problems  (see Theorem~\ref{theo:hardness-for-datastructure}). In the next section, we will show under some conjecture which is much more plausible than $\SETH$, these problems requires essentially quadratic time.

Our equivalence continues to hold in the data structure version, in particular, for the following data structure problems, any algorithm for one of them implies an algorithm for all of them with essentially the same preprocessing time/space and query time (up to a factor of $N^{o(1)}$). See Section~\ref{sec:data-structure} for the details.

\begin{itemize}
	\item $\NNS_\LCS$\footnote{already discussed in Section~\ref{sec:first-example}}: Preprocess a database $\mathcal{D}$ of $N$ strings of length $D = 2^{(\log N)^{o(1)}}$, and then for each query string $x$, find $y \in \mathcal{D}$ maximizing $\LCS(x,y)$.
	
	\item Approx. $\NNS_\LCS$: Find $y \in \mathcal{D}$ s.t. $\LCS(x,y)$ is a $2^{(\log D)^{1- \Omega(1)}}$ approximation to the maximum value.
	
	\item Regular Expression Query: Preprocess a database $\mathcal{D}$ of $N$ strings of length $D = 2^{(\log N)^{o(1)}}$, and then for each query regular expression $y$, find an $x \in \mathcal{D}$ matching $y$.
	
	\item Approximate Regular Expression Query: For a query expression $y$, distinguish between\footnote{behavior can be arbitrary when neither of the two cases hold}: (1) there is an $x \in \mathcal{D}$ matching $y$; and (2) for all $x \in \mathcal{D}$, the Hamming distance between $x$ and all $z \in L(y)$ is at least $(1-o(1)) \cdot D$, where $L(y)$ is the set of all strings matched by $y$.
\end{itemize}

That is, a non-trivial data structure for finding \emph{approximate} nearest point with $\LCS$ metric would imply a non-trivial data structure for answering regular expression query! The latter one is supported by most modern database systems such as MySQL, Oracle Database, Microsoft SQL etc., but all of them implement it by simply using full table scan for the most general case.

\paragraph*{$\LCS$ is the Hardest Distance Function for Approximate $\NNS$.} In fact, our results also suggest in a formal sense that $\LCS$ is the \emph{hardest} distance function for approximate Nearest Neighbor Search. We show that for all distance function $\dist$ which is computable in poly-logarithmic \emph{space}\footnote{Which is true for almost all nature distance functions. For example, edit distance and LCS can be computed in $\NL$, thus in $(\log^2 N)$ space by Savitch's Theorem~\cite{Savitch70}.}, exact $\NNS$ for $\dist$ can be reduced to approximate $\NNS_\LCS$.

\begin{theorem}[Informal] For a distance function $\dist$ that is computable in poly-logarithmic space, exact $\NNS$ for $\dist$ can be reduced to $2^{(\log D)^{1- \Omega(1)}}$-approximate $\NNS_\LCS$ in near-linear time.
\end{theorem}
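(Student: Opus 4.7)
The plan is to compose the same three-step pipeline used in Theorem~\ref{theo:informal-exact-to-approx-MaxLCSPAIR} (low-space IP protocol $\to$ tropical tensors $\to$ LCS gadgets), but in a way that preserves the data-structure split into preprocessing over the database and per-query work.

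First, I would reduce exact $\NNS$ for $\dist$ to a logarithmic number of threshold-decision queries. Since $\dist$ is computable in space $s(D) = \polylog(D)$, its values lie in a range of size at most $2^{s(D)} = 2^{\polylog(D)}$, so a binary search over thresholds $k$ answers exact $\NNS_{\dist}$ using $O(s(D)) = (\log N)^{o(1)}$ threshold queries of the form ``is there $y \in \mathcal{D}$ with $\dist(x,y) \le k$?''. For each fixed $k$, the predicate $[\dist(x,y) \le k]$ is itself computable in poly-logarithmic space and hence (by the standard simulation of space-bounded computation) is implemented by a branching program of length $T$ and width $W$ with $T, W \le 2^{\polylog(D)}$, on input $(x,y)$ split equally between the two parties.

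Next, I would feed this branching program into the IP communication protocol of Theorem~\ref{thm:bp-ip}, setting the soundness parameter $\epsilon$ small enough to produce a large gap; the total communication is still $\polylog(D) \cdot \log \epsilon^{-1}$. Following the Abboud--Rubinstein reduction recalled in the techniques section, I would convert this protocol into tensors $u(x)$ (depending only on $x$) and $v(y)$ (depending only on $y$), each of size $2^{\polylog(D)}$, such that the tropical similarity $s(u(x),v(y))$ equals the optimal acceptance probability and is therefore at least $1$ in the YES case and at most $\epsilon$ in the NO case. Composing with the $\max / \Sigma$ LCS gadgets then produces, separately for each threshold $k$, mappings $x \mapsto S_k(x)$ and $y \mapsto T_k(y)$ into strings of length $D' = 2^{\polylog(D)} = 2^{(\log N)^{o(1)}}$ such that $\LCS(S_k(x), T_k(y))$ has an $\Omega(1/\epsilon)$ multiplicative gap between the YES and NO cases. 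For each $k$, I would then preprocess $\{T_k(y) : y \in \mathcal{D}\}$ once into an approximate $\NNS_\LCS$ data structure, and on each query $x$ compute the $O(s(D))$ query strings $S_k(x)$ and issue one approximate $\NNS_\LCS$ query per threshold; the total preprocessing and query blow-up is a multiplicative $N^{o(1)}$, as required.

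The one quantitative check (which I expect to be the main nuisance rather than a genuine obstacle) is that the approximation factor of the oracle, namely $2^{(\log D')^{1-\Omega(1)}}$ on strings of length $D'$, must be strictly smaller than the LCS gap we create. Since $\log D' = \polylog(D)$ and the gap is $\Omega(1/\epsilon)$ with $\log(1/\epsilon)$ appearing only polynomially in $\log D'$, we can pick $\epsilon$ so that $\log(1/\epsilon)$ is, say, $(\log D')^{1-\delta/2}$ while $(\log D')^{1-\delta}$ upper-bounds the oracle's approximation exponent, giving the required separation. The rest of the argument is bookkeeping: verifying that $S_k(x)$ and $T_k(y)$ can be computed in time $D' = 2^{(\log N)^{o(1)}}$ (which follows from the explicit construction of the IP protocol and gadgets), and that per-database mapping $y \mapsto T_k(y)$ is independent of $x$, which is what makes the reduction work in the data-structure model rather than only in the offline pair-problem model.
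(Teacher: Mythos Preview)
Your proposal is correct and follows essentially the same pipeline as the paper (low-space predicate $\to$ branching program $\to$ IP protocol $\to$ tropical tensors $\to$ LCS gadgets, with the key observation that the database-side map and the query-side map are computed separately). The only minor difference is that you build a separate approximate $\NNS_\LCS$ data structure for each threshold $k$, whereas the paper feeds $k$ as an additional input bit-string to a \emph{single} fixed branching program for the predicate $[\dist(a,b)\le k]$ and places $k$ on the query side; this lets the paper preprocess the database once via a single map $f$ and handle all thresholds with one data structure. Your variant still works because the number of thresholds is at most $2^{\polylog(D)} = N^{o(1)}$, so the extra multiplicative overhead is absorbed, but the paper's encoding is a bit cleaner.
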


\subsection{Weaker Complexity Assumptions for Approximation Hardness} 

An important goal in the study of fine-grained complexity is to find more plausible conjectures, under which to base hardness. For example, although $\SETH$ is based on the historically unsuccessful attempts on finding better algorithms for $k$-$\SAT$, there is no consensus on its validity (see, e.g.~\cite{Williams16_MA-SETH,williamsLikelihoods}).

This concern has been addressed in various ways. For example, in~\cite{Abboud2015tria}, the authors prove hardness for several problems, basing on \emph{at least one of} the $\SETH$, the APSP conjecture, or the $3$-SUM Conjecture being true.
In~\cite{AHVW16}, Abboud et al. introduce a hierarchy of $\circuitclass$-$\SETH$ assumptions: the $\circuitclass$-$\SETH$ asserts that there is no $2^{(1-\eps) n}$ time satisfiability algorithm for circuits from $\circuitclass$.\footnote{In this way, the original $\SETH$ assert that there is no $2^{(1-\eps)n}$ time algorithm for satisfiability of $\textsf{CNF}$ with arbitrary constant bottom fan-in.} They show that the quadratic time hardness of Edit-Distance, $\LCS$ and other related sequence alignment problems can be based on the much weaker and much more plausible assumption $\NC\textsf{-}\SETH$. However, this has not been shown for approximation version of fine-grained problems.

In this work, we show that all problems in $\equivclass$ require essentially quadratic time under $\NC\textsf{-}\SETH$. Indeed, our hardness results are based on a weaker assumption which we call \textit{$2^{n^{o(1)}}$-size $\BP\textsf{-}\SETH$}:\footnote{Indeed, results in \cite{AHVW16} are also based on a conjecture about branching programs, which can be seen as $O(1)$-width and $2^{o(n)}$-length $\BP\textsf{-}\SETH$ or $2^{o(\sqrt{n})}$-size $\BP\textsf{-}\SETH$ using the terminology in Hypothesis~\ref{hypot:bp-seth}.}
\begin{hypot}[$2^{n^{o(1)}}$-size $\BP\textsf{-}\SETH$] \label{hypot:bp-seth}
The satisfiability of a given $2^{n^{o(1)}}$-size non-deterministic branching program cannot be solved in $O(2^{(1 - \delta)n})$ time for any $\delta > 0$.
\end{hypot}

\begin{theorem}\label{theo:better-foundation-equiv}
All problems in $\equivclass$ require $N^{2 - o(1)}$ time if we assume $2^{n^{o(1)}}$-size $\BP\textsf{-}\SETH$.
\end{theorem}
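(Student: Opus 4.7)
The plan is to reduce the satisfiability of a nondeterministic branching program directly to a single concrete problem in $\equivclass$, and then to use the near-linear time equivalences inside $\equivclass$ to propagate the resulting lower bound to every other problem in the class. The natural target is $\BPSATPAIR$, since its input already contains a branching program, so essentially no work is needed to transport the computation itself; all that remains is the classical split-and-list trick.

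Concretely, let $P$ be a nondeterministic branching program on $n$ input bits of size $s = 2^{n^{o(1)}}$. I would partition the $n$ input variables into two halves of $n/2$ bits, and let $A \subseteq \{0,1\}^{n/2}$ consist of all $2^{n/2}$ assignments to the first half and $B \subseteq \{0,1\}^{n/2}$ consist of all $2^{n/2}$ assignments to the second half. Setting $N := 2^{n/2}$, we have $n = 2 \log N$ and $s = 2^{n^{o(1)}} = 2^{(\log N)^{o(1)}}$, exactly the size constraint required in the definition of $\BPSATPAIR$ and $\equivclass$. Clearly $P$ is satisfiable if and only if there exists $(a,b) \in A \times B$ such that $P$ accepts $(a,b)$, where acceptance of a nondeterministic branching program has its usual existential semantics. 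The instance $(P, A, B)$ can be written down in time $O(N \cdot n + s) = N^{1 + o(1)}$.

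Consequently, if there were an $N^{2 - \epsilon}$ algorithm for $\BPSATPAIR$ for some fixed $\epsilon > 0$, then running it on the instance above would decide BP-SAT in time $(2^{n/2})^{2-\epsilon} \cdot N^{o(1)} = 2^{(1 - \epsilon/2 + o(1))\,n}$, contradicting Hypothesis~\ref{hypot:bp-seth}. Hence $\BPSATPAIR$ requires $N^{2 - o(1)}$ time under the hypothesis. Since the paper establishes that every problem in $\equivclass$ is equivalent to $\BPSATPAIR$ under near-linear (i.e.\ $N^{1+o(1)}$) time reductions, any $N^{2 - \delta}$ algorithm for another $\Pi \in \equivclass$ would, after composition with the $\BPSATPAIR \to \Pi$ reduction, yield an $N^{2 - \delta + o(1)}$ algorithm for $\BPSATPAIR$, again contradicting the hypothesis.

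The main subtlety --- and the only real place where care is needed --- is parameter bookkeeping: one must check that $n^{o(1)}$ in the size bound on $P$ cleanly translates into $(\log N)^{o(1)}$ after the substitution $n = 2 \log N$, and that the notion of ``$P$ accepts $(a,b)$'' inside $\BPSATPAIR$ lines up with the accept semantics of nondeterministic branching programs used in Hypothesis~\ref{hypot:bp-seth}. The first is immediate since $(2\log N)^{o(1)} = (\log N)^{o(1)}$ and $2^{(\log N)^{o(1)}} = N^{o(1)}$. The second goes through because the existential quantifier over $(a,b)$ in $\BPSATPAIR$ is compatible with $P$'s internal existential quantifier over nondeterministic choices, so the two $\exists$'s merge into a single satisfiability check; no further cleverness is required beyond the direct split-and-list argument tightly calibrated to the $\equivclass$ parameter regime.
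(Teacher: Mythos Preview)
Your proposal is correct and matches the paper's approach. The paper does not give a standalone formal proof of this theorem; it simply observes in Section~\ref{sec:prelim} (right after Definition of $\BPSATPAIR$) that ``a faster algorithm for $\BPSATPAIR$ running in $O(N^{2-\epsilon})$ time implies a faster algorithm for $\BPSAT$ running in $O(2^{(1-\epsilon/2)n})$ time'' via exactly the split-and-list trick you describe, and then relies on the $\equivclass$ equivalences (Theorem~\ref{theo:informal-equiv-class}) to propagate the lower bound---your write-up spells out the parameter translation $s = 2^{n^{o(1)}} = 2^{(\log N)^{o(1)}}$ more carefully than the paper itself does.
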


Note that $2^{n^{o(1)}}$-size $\BP\textsf{-}\SETH$ is even weaker than $n^{o(1)}$-depth circuit $\SETH$: satisfiability for $n^{o(1)}$-depth bounded fan-in circuits cannot be solved in $O(2^{(1 - \delta)n})$ time for any $\delta > 0$. This is because by Barrington's Theorem \cite{Barrington89}, $n^{o(1)}$-depth bounded fan-in circuits can be simulated by branching programs of size $2^{n^{o(1)}}$. 

It is worthwhile to compare with~\cite{ARW17-proceedings}. It is shown in~\cite{ARW17-proceedings} that assuming SETH, a $2^{(\log N)^{1 - o(1)}}$-approximation to \MAXLCSPAIR\ (\CLOSESTREGEXPSTRPAIR) requires $N^{2-o(1)}$ time for $D = N^{o(1)}$. (The $2^{(\log N)^{1 - o(1)}}$ factor is later improved to $N^{o(1)}$ in~\cite{Rub18,Che18}.)

Although our results here are quantitatively worse, it is ``qualitatively'' better in many ways: (1) the results in~\cite{ARW17-proceedings} is based on $\SETH$, while our hardness results are based on the assumptions in Theorem~\ref{theo:better-foundation-equiv}, which are much more plausible than $\SETH$; (2) we in fact have established an equivalence between $\MAXLCSPAIR$ and its approximation version, which seems not possible with the techniques in~\cite{ARW17-proceedings}; (3) our framework allows us to show that even a tiny improvement on the running time would have important algorithmic and circuit lower bound consequences (see Theorem~\ref{theo:shave-logs-equiv}), which again seems not possible with the techniques in~\cite{ARW17-proceedings}.

\subsection{$\equivclass$ Hard Problems}
\label{sec:equivclass-hard-intro}

We also identify a set of other problems which are at least as hard as any problem in \equivclass, but not necessarily in it. We say these problems are \equivclasshard.

\begin{theorem}[$\equivclasshard$ Problems]\label{theo:informal-hard}
There are near-linear time reductions from all the problems in $\equivclass$ to any of the following problems:
\begin{enumerate}
\item (Subtree Isomorphism) Given two trees $a, b$ of size at most $N$, determine whether $a$ is isomorphic to a subtree of $b$ (even if restricted to the case of binary rooted trees);

\item (Largest Common Subtree) Given two trees $a, b$ of size at most $N$, compute the exact value or a $2^{(\log N)^{o(1)}}$-approximation of the size of the largest common subtree of $a$ and $b$ (even if restricted to the case of binary rooted trees);

\item (Regular Expression Membership Testing) Given a regular expression $a$ of length $M$ and a string $b$ of length $N$, determine whether $b$ is in the language of $a$;
\end{enumerate}

\end{theorem}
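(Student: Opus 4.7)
\textbf{Proof Proposal for Theorem~\ref{theo:informal-hard}.}
The plan is to handle each of the three items by exhibiting a near-linear-time reduction from one of the pair problems in $\equivclass$ to the corresponding single-instance target. Since $\equivclass$ is already closed under near-linear-time reductions, it suffices to start from whichever pair problem is most convenient for each target. In every case the input will be $N$ small instances (of size $D = 2^{(\log N)^{o(1)}}$), which we must assemble into a single instance of size $\tilde{O}(ND)$ whose yes/no (or $\max$) answer matches the disjunction (or maximum) over the $N^{2}$ candidate pairs.

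For item (3), Regular Expression Membership Testing, I would reduce from $\REGEXPSTRPAIR$. Let $\Sigma$ be the common alphabet of all $a_i$ and $b_j$, and introduce a fresh separator $\#\notin\Sigma$. Build the combined string $S \;=\; \#\, b_1\, \#\, b_2\, \#\, \cdots\, \#\, b_N\, \#$ and the combined regular expression $R \;=\; (\Sigma\cup\{\#\})^{*}\cdot\#\cdot\bigl(a_1 \mid a_2 \mid \cdots \mid a_N\bigr)\cdot\#\cdot(\Sigma\cup\{\#\})^{*}.$ Because no $a_i$ uses $\#$, the middle factor must match a maximal $\#$-free block of $S$, which is exactly some $b_j$; hence $S\in L(R)$ iff some pair $(a_i,b_j)$ satisfies $b_j\in L(a_i)$. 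Both $|S|$ and $|R|$ are $O(ND)$, and the construction is trivially near-linear.

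For items (1) and (2), I would reduce from $\STIPAIR$ and $\MAXLCSTPAIR$ respectively. The strategy is to glue the $N$ small trees in each set into a single pattern tree $T_A$ and a single target tree $T_B$ of total size $\tilde{O}(ND)$. To the $j$-th target tree $b_j$ I would attach, above its root, a distinctive ``address'' gadget (for instance a short path whose internal degrees uniquely encode $j$), and hang all such decorated copies under a common root to form $T_B$. The pattern $T_A$ is formed analogously from the $a_i$'s, with address gadgets designed so that any subtree-isomorphism embedding (resp.\ any large common subtree) must align some decorated $a_i$-branch with a compatible decorated $b_j$-branch in $T_B$, whereupon the remaining mass of the mapping lies inside $a_i\times b_j$ and is governed by $\Sub(a_i,b_j)$ (resp.\ $\LCST(a_i,b_j)$). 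For the $2^{(\log N)^{o(1)}}$-approximate LCST variant, I would reduce from a ``gap'' version of $\MAXLCSTPAIR$ (available inside $\equivclass$ via the equivalence from Section~\ref{sec:exact-to-approx}), making the address gadgets sufficiently large that the multiplicative loss incurred by a pair $(a_i,b_j)$ with small $\LCST$ remains beyond the tolerated approximation factor.

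The main obstacle will be the tree-gadget construction for items (1) and (2). Unlike alternation in the regex case, rooted trees do not have a native ``OR'' primitive, and naive concatenation under a common root tends to enforce a conjunction rather than a disjunction across the $a_i$'s. The technical work is therefore to design address gadgets that (i) are \emph{uniquely identifying}, so that no spurious cross-pair match is possible, (ii) have only polylogarithmic overhead per small tree, preserving the near-linear total size, and (iii) in the LCST case, separate the common-subtree mass between the ``good pair'' regime and the ``all pairs bad'' regime by a factor well above $2^{(\log N)^{o(1)}}$, so that the approximate-LCST oracle can tell the two apart. The regex reduction in item (3) is by comparison nearly immediate.
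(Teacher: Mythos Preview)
For item (3), your separator-based reduction from $\REGEXPSTRPAIR$ is correct and arguably cleaner than the paper's route. The paper instead reduces directly from $\OAPTrestri$ (Theorem~\ref{thm:regexp-string-hard}): it builds the regex/string gadgets $G(a),H(b)$ of Theorem~\ref{thm:re-str-mem-gadget-oapt}, concatenates all $H(b)$'s into one long string $S$, and sets $R = \bigl[\bigcirc_{i=1}^{D}[0\mid 1]\bigr]^{*} \circ \bigl(G(a_1)\mid\cdots\mid G(a_N)\bigr) \circ \bigl[\bigcirc_{i=1}^{D}[0\mid 1]\bigr]^{*}$. Alignment works there because every $H(b)$ has the same length $D$, not because of a separator. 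Both approaches give the same parameters; yours is more modular (it works from any member of $\equivclass$), while the paper's keeps the alphabet binary.

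For items (1) and (2), however, there is a genuine gap. You correctly diagnose the obstacle---gluing the $a_i$'s under a common root forces a conjunction over $i$, not a disjunction---but ``address gadgets'' do not resolve it. If each $a_i$ carries a distinctive address, then embedding $T_A$ into $T_B$ still requires \emph{every} $a_i$-branch to find a home in $T_B$, and making the addresses uniquely identifying only makes this harder. What is needed is not identification but \emph{absorption}: $N-1$ of the pattern branches must be matchable ``for free,'' so that by pigeonhole exactly one branch is forced into a genuine $a_i$ vs.\ $b_j$ comparison.

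This is precisely why the paper does \emph{not} reduce from $\STIPAIR$ or $\MAXLCSTPAIR$ as you propose, but goes back to $\OAPTrestri$ and $\GAPMAXTTrestri{\epsilon}$. The OAPT tree gadgets of Theorem~\ref{thm:hard-sti-pair} satisfy $\RSub(G(a),H(\mathbf 0))=1$ for every $a$, so $H(\mathbf 0)$ is a universal sink. In Theorem~\ref{thm:sti-hard} the target $H_B$ is built with $N-1$ copies of $H(\mathbf 0)$ plus one branch carrying all $N$ real gadgets $H(b_j)$; the pattern $G_A$ has $N$ branches $G(a_i)$. Pigeonhole then forces at least one $G(a_i)$ into a real $H(b_j)$, yielding the disjunction. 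If you start from $\STIPAIR$ with arbitrary bounded-degree trees $a_i$ of size $D$, there is no small universal sink: a tree that contains every such $a_i$ as a subtree generally needs size exponential in $D$. The Largest Common Subtree reduction (Theorem~\ref{thm:lcst-gadget}) faces the same issue and likewise starts from $\GAPMAXTTrestri{\epsilon}$, using two scales of ``zoomed'' complete binary trees so that at most one pair of tensor gadgets can contribute nontrivially to the common subtree; your address-gadget sketch does not supply an analogous mechanism.
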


\begin{cor}\label{cor:better-foundation-hard}
	All the above $\equivclasshard$ problems require $N^{2 - o(1)}$ time (or $(NM)^{1 - o(1)}$ time for Regular Expression Membership Testing) under the same assumption as in Theorem~\ref{theo:better-foundation-equiv}.
\end{cor}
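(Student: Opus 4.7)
The plan is to derive the corollary by composing Theorem~\ref{theo:better-foundation-equiv} with the near-linear time reductions supplied by Theorem~\ref{theo:informal-hard}. For Subtree Isomorphism and Largest Common Subtree, suppose for contradiction that one of them admits an $N^{2-\delta}$-time algorithm for some constant $\delta>0$. I would pick any problem $\Pi \in \equivclass$ (say $\BPSATPAIR$). By Theorem~\ref{theo:informal-hard} there is a reduction from $\Pi$ to the target problem running in time $N^{1+o(1)}$ on instances of size $N$, and therefore producing an output instance of size at most $N'=N^{1+o(1)}$. Running the hypothetical algorithm on the output yields total time $N^{1+o(1)} + (N^{1+o(1)})^{2-\delta} = N^{2-\delta+o(1)}$, which is $O(N^{2-\delta'})$ for some $\delta'>0$, contradicting Theorem~\ref{theo:better-foundation-equiv}. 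The same argument covers Largest Common Subtree in both its exact and $2^{(\log N)^{o(1)}}$-approximation variants, since $\MAXLCSTPAIR \in \equivclass$ is already equivalent to its approximate version.

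For Regular Expression Membership Testing the bound is $(NM)^{1-o(1)}$ rather than $N^{2-o(1)}$, so a little more care is required about the two sides of the produced instance. I would reduce from $\REGEXPSTRPAIR$, whose input consists of $N$ regular expressions and $N$ strings, each of length $d=2^{(\log N)^{o(1)}}$, for total size $N^{1+o(1)}$. The natural near-linear reduction (roughly: form a big alternation $R=.^*(r_1\#_1|\cdots|r_N\#_N).^*$ and a concatenation $S=\#_1 s_1 \#_2 s_2\cdots \#_N s_N$, or the analogous construction supplied in the proof of Theorem~\ref{theo:informal-hard}) produces a single regular expression of length $M' = N^{1+o(1)}$ and a single string of length $N' = N^{1+o(1)}$, so that $N'M' = N^{2\pm o(1)}$. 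An $O((N'M')^{1-\delta})$ algorithm for Regular Expression Membership then solves $\REGEXPSTRPAIR$ in time $N^{(2+o(1))(1-\delta)} = N^{2-2\delta+o(1)}$, again contradicting Theorem~\ref{theo:better-foundation-equiv}.

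The only step where something could go wrong is the last one: the argument needs the reduction to $\equivclasshard$ Regular Expression Membership Testing to be essentially \emph{balanced}, i.e.\ $N'\cdot M' = N^{2\pm o(1)}$ rather than, say, $M' = N^{1+o(1)}$ and $N' = N^{o(1)}$, in which case the $(NM)^{1-o(1)}$ hardness would only yield the trivial $N^{1+o(1)}$ lower bound for $\REGEXPSTRPAIR$. Thus the substantive work is really inside Theorem~\ref{theo:informal-hard}, where one must ensure both the pattern and the text grow at rate $N^{1+o(1)}$; once that is in hand, the corollary follows by the two contradiction arguments above in a completely routine way.
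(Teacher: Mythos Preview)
Your proposal is correct and matches the paper's (implicit) approach: the corollary is stated without a standalone proof and is meant to follow immediately by composing Theorem~\ref{theo:better-foundation-equiv} with the near-linear reductions of Theorem~\ref{theo:informal-hard}. Your one substantive worry---that the reduction to Regular Expression Membership Testing be balanced---is indeed handled by the paper's construction (Theorem~\ref{thm:regexp-string-hard}), which produces both a pattern and a text of length $O(N\poly(D)) = N^{1+o(1)}$, so $N'M' = N^{2+o(1)}$ as you need.
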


We remark that both Subtree Isomorphism and Largest Common Subtree are studied in~\cite{ABHVZ16}. In particular, they showed that Subtree Isomorphism and Largest Common Subtree require quadratic-time under $\SETH$, even for binary rooted trees. 

Our results improve theirs in many ways: (1) for Subtree Isomorphism, we establish the same quadratic time hardness, with a much safer conjecture; (2) for Largest Common Subtree, we not only put its hardness under a better conjecture, but also show that even a $2^{(\log N)^{o(1)}}$-approximation would be hard; (3) for both of these problems, we demonstrate that even a tiny improvement on the running time would have interesting algorithmic and circuit lower bound consequences (see Theorem~\ref{theo:shave-logs-hard}).

\cite{BGL17} (which builds on~\cite{BackursI16}) classified the running time of constant-depth regular expression membership testing. In particular, they showed a large class of regular expression testing requires quadratic-time, under $\SETH$. Our results are incomparable with theirs, as our hard instances may have unbounded depth regular expressions. On the bright side, our hardness results rely on a much safer conjecture, and we show interesting consequences even for a tiny improvement of the running time.

\subsection{The Consequence of ``shaving-logs'' for Approximation Algorithms} 

There has been a large number of works focusing on ``shaving logs'' of the running time of fundamental problems~\cite{arlazaro1970economical,BansalW12Regularity,Chan15SpeedUp,Yu15Improved} (see also a talk by Chan~\cite{Chan13Art}, named ``The Art of Shaving Logs''). In a recent exciting algorithmic work by Williams~\cite{williams2014faster}, the author shaves ``all the logs'' on the running time of APSP, by getting an $n^{3} / 2^{\Theta(\sqrt{\log n})}$ time algorithm.

However, the best exact algorithms for $\LCS$ and Edit distance~\cite{MasekP80EditDist,Grabowski16EditDist} remain $O(n^{2} / \log^2 n)$, which calls for an explanation. An interesting feature of \cite{AHVW16} is that their results show that even shaving logs on LCS or Edit Distance would be very hard. In particular, they prove that an $n^{2} / \log^{\omega(1)} n$ time algorithm for either of them would imply a $2^{n} / n^{\omega(1)}$ time algorithm for polynomial-size formula satisfiability, which is much better than the current state of arts~\cite{Santhanam10,Tal15}. Such an algorithm would also imply that $\NEXP$ is not contained in non-uniform $\NC^1$, thereby solving a notorious longstanding open question in complexity theory.

The ``shaving logs barrier'' only has been studied for a few problems. It was not clear whether we can get the same barriers for some \emph{approximation problems}.

In this work we show that slightly improved algorithms (such as shaving all the logs) for any \equivclass or \equivclasshard problems, would imply \emph{circuit lower bounds} which are notoriously hard to prove. This extends all the results of~\cite{AHVW16} to approximation problems.

\begin{theorem}\label{theo:shave-logs-equiv}
If there is an $O\left(N^2\poly(D) / 2^{(\log \log N)^3}\right) ~\text{or}~ O\left(N^2 / (\log N)^{\omega(1)}\right)$ time deterministic algorithm for any decision, exact value or $\polylog(D)$-approximation problems in \equivclass, where $D$ is the maximum length (or size) of elements in sets, then the following holds:
\begin{itemize}
	\item $\NTIME[2^{O(n)}]$ is not contained in non-uniform $\NC^1$ and
	\item $\textsf{Formula-SAT}$ with $n^{\omega(1)}$ size can be solved in $2^n / n^{\omega(1)}$ time.
\end{itemize}
\end{theorem}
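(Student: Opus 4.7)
The plan is to follow the approach of~\cite{AHVW16} adapted to the $\BPSATPAIR$ problem, which is the natural base of $\equivclass$. First, I would invoke the equivalence theorem for $\equivclass$ together with the exact-to-approximate reductions to convert any hypothesized algorithm---for a decision, exact-value, or $\polylog(D)$-approximation variant of any problem in $\equivclass$---into an algorithm for $\BPSATPAIR$ with essentially the same quantitative savings over the brute-force $N^2$ time (modulo the near-linear overhead of the internal reductions, which must be tracked). Thus it suffices to derive both conclusions from a correspondingly shaved $\BPSATPAIR$ algorithm.

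For the Formula-SAT consequence, given a formula $\phi$ of size $s$ on $n$ variables, I would balance $\phi$ to depth $O(\log s)$ via Spira's theorem and then apply Barrington's theorem to obtain a width-$5$ branching program $P$ of length $\poly(s)$, so that $D = \poly(s)$. I then partition the $n$ inputs into two halves and list all $N = 2^{n/2}$ partial assignments to each half to form sets $A$ and $B$; $\phi$ is satisfiable iff $(P, A, B)$ is a YES instance of $\BPSATPAIR$. Running the hypothesized algorithm yields Formula-SAT in time $N^2 \poly(D) / 2^{(\log \log N)^3} = 2^n \poly(s) / 2^{\Theta((\log n)^3)}$ under the first bound (or $2^n / n^{\omega(1)}$ under the second). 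Choosing $s$ to be mildly super-polynomial, e.g.\ $s = 2^{(\log n)^2}$ (which still meets the precondition $D = 2^{(\log N)^{o(1)}}$), makes the total $2^n / n^{\omega(1)}$.

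To derive $\NTIME[2^{O(n)}] \not\subset \NC^1$, I would then invoke Williams' algorithmic-method framework: a $2^n / n^{\omega(1)}$-time SAT algorithm for super-polynomial-size formulas---which capture non-uniform $\NC^1$ via Spira plus Barrington---lifts to the claimed non-containment. This step is entirely parallel to the corresponding lifting argument in~\cite{AHVW16} and does not require any new ideas once the SAT speedup is in hand.

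The main obstacle is a careful accounting of overheads. The near-linear reductions among problems in $\equivclass$ cost a factor of $N^{o(1)} = 2^{o(n)}$ for $N = 2^{n/2}$, which a priori swamps the $2^{\Theta((\log n)^3)}$ savings we are trying to preserve. The key observation that rescues the calculation is that this overhead is specifically of the form $2^{\polylog(D)} = 2^{\polylog(s)}$---it arises from the $\IP$-protocol-to-tropical-tensor blowup used to embed $\BPSATPAIR$ into the other problems in $\equivclass$---so with $s$ only mildly super-polynomial it is comfortably dominated by the shaving. The same balancing argument then applies to the approximation variants (where the route is through the $\IP$ embedding of low-space computations), and verifying this quantitative balance uniformly across all problems in $\equivclass$ is the most delicate part of the proof.
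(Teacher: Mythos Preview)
Your approach is essentially the paper's: reduce Formula-SAT to $\BPSATPAIR$ via Spira/Barrington and a split into halves, push the $\BPSATPAIR$ instance through the concrete reduction to the given $\equivclass$ problem, and verify that the blow-up in $D$ is absorbed by the $2^{(\log\log N)^3}$ shaving; the circuit lower bound then follows from Williams' framework. The paper organizes this by first isolating a ``$\BPSATPAIR$ shaving $\Rightarrow$ Formula-SAT'' lemma and then, for each problem, computing the exact dimension of the reduced instance starting from width-$5$, length-$\poly(n)$ branching programs.

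One concrete issue: your choice $s = 2^{(\log n)^2}$ is both unnecessary and, for the approximation problems, fatal. With $s = n^k$ already, $\poly(s)/2^{(\log n)^3} = n^{-\omega(1)}$, so there is no need to go super-polynomial. Worse, for the $\GAPMAXTT{}$-based reductions the resulting tensor dimension is $2^{\Theta(\log^2 T \cdot \log\log T)}$; with $T = \poly(s) = 2^{\Theta((\log n)^2)}$ this becomes $2^{\Theta((\log n)^4\log\log n)}$, so $\poly(D)$ overwhelms the $2^{(\log n)^3}$ saving. The paper avoids this precisely by taking $s = n^k$, which keeps $\log T = O(\log n)$ and hence the overhead at $2^{O((\log n)^2\log\log n)}$ for the tropical-tensor route (and merely $\polylog(N)$ for the $\OAPT$ route). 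Your final paragraph correctly flags that the exponent must be checked, but the check is the whole content of the argument, and your parameter choice fails it.
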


\begin{theorem}\label{theo:shave-logs-hard}
If there is a deterministic algorithm for any decision, exact value or $\polylog(N)$-approximation problems among \equivclasshard problems listed in Theorem~\ref{theo:informal-hard} running in $O\left(N^2 / 2^{\omega(\log \log N)^3}\right)$ time (or $O\left(NM / 2^{\omega(\log \log(NM))^3}\right)$ time for Regular Expression Membership Testing), then the same consequences in Theorem~\ref{theo:shave-logs-equiv} follows.
\end{theorem}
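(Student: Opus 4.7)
The plan is to deduce Theorem~\ref{theo:shave-logs-hard} from the combination of Theorem~\ref{theo:informal-hard} (which provides near-linear time reductions from every problem in $\equivclass$ to each $\equivclasshard$ problem) and Theorem~\ref{theo:shave-logs-equiv} (the shaving-logs barrier for problems in $\equivclass$). Suppose toward contradiction that some $\equivclasshard$ problem $\mathcal{Q}$, in its decision, exact-value, or $\polylog(N)$-approximation form, admits a deterministic algorithm running in time $O(N^2/2^{\omega((\log\log N)^3)})$ (respectively $O(NM/2^{\omega((\log\log(NM))^3)})$ for Regular Expression Membership Testing). Fix any problem $\mathcal{P}\in\equivclass$ with $\tilde N$ inputs of maximum length $\tilde D = 2^{(\log\tilde N)^{o(1)}}$. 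By Theorem~\ref{theo:informal-hard}, there is a near-linear time reduction from $\mathcal{P}$ to $\mathcal{Q}$ producing an instance of size $N = \tilde N^{1+o(1)} = \tilde N\cdot 2^{(\log\tilde N)^{o(1)}}$; crucially, this reduction already embeds a $2^{(\log\tilde N)^{o(1)}}$-factor gap on the $\mathcal{Q}$ side, which is the very property that allowed Theorem~\ref{theo:informal-hard} to handle the approximation version of Largest Common Subtree.

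Composing the reduction with the hypothesized algorithm yields a deterministic algorithm for $\mathcal{P}$ running in time
\[
O\!\left(\frac{N^2}{2^{\omega((\log\log N)^3)}}\right)\;=\;O\!\left(\frac{\tilde N^{2}\cdot 2^{(\log\tilde N)^{o(1)}}}{2^{\omega((\log\log\tilde N)^3)}}\right)\;=\;O\!\left(\frac{\tilde N^{2}\cdot\poly(\tilde D)}{2^{(\log\log\tilde N)^3}}\right),
\]
using $\log N = (1+o(1))\log\tilde N$, absorbing the blowup factor $2^{(\log\tilde N)^{o(1)}}$ into $\poly(\tilde D)$, and exploiting that the ``$\omega$'' in the hypothesis is arbitrarily fast-growing, so any fixed polynomial-in-$\log\log$ denominator is dominated. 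For the $\polylog(N)$-approximation case, I would use that $\polylog(N)\ll 2^{(\log\tilde N)^{o(1)}}$, so a $\polylog(N)$-approximation for $\mathcal{Q}$ already distinguishes the two sides of the reduction's built-in gap, thereby deciding $\mathcal{P}$ exactly, which lies within the decision/exact hypothesis of Theorem~\ref{theo:shave-logs-equiv}. The rectangular case for Regular Expression Membership Testing is analogous, with $N$ and $M$ tracking the regular-expression length and the string length, and $NM \approx \tilde N^{2}\cdot\poly(\tilde D)$ after the reduction.

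Applying Theorem~\ref{theo:shave-logs-equiv} to the resulting algorithm for $\mathcal{P}$ then yields both consequences: $\NTIME[2^{O(n)}]$ is not contained in non-uniform $\NC^1$, and Formula-SAT of $n^{\omega(1)}$ size is solvable in $2^n/n^{\omega(1)}$ time. The main obstacle is the quantitative parameter bookkeeping: one has to verify that the $\omega$-slack in the hypothesized $\equivclasshard$ saving is large enough to swallow the near-linear reduction blowup $2^{(\log\tilde N)^{o(1)}}$ while still leaving the fixed $2^{(\log\log\tilde N)^3}$-factor saving required on the $\equivclass$ side, and, for the approximation case, to check that the gap embedded by Theorem~\ref{theo:informal-hard}'s reduction is indeed wider than any $\polylog(N)$. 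Both are largely routine once the statements and parameter regimes are lined up, but they are the points where one has to be attentive.
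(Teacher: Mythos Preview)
Your high-level strategy—compose the near-linear reduction of Theorem~\ref{theo:informal-hard} with the hypothesized \equivclasshard algorithm and then invoke Theorem~\ref{theo:shave-logs-equiv}—is the right shape, but the step where you ``absorb the blowup factor $2^{(\log\tilde N)^{o(1)}}$ into $\poly(\tilde D)$'' is not valid. Both quantities are of the form $2^{(\log\tilde N)^{o(1)}}$, but with unrelated $o(1)$ exponents: the reduction from an arbitrary $\mathcal{P}\in\equivclass$ to $\mathcal{Q}$ passes through \OAPT or \GAPMAXTT{$\epsilon$}, and that first leg incurs a \emph{quasi-polynomial} blowup in the dimension (see the Remark after Theorem~\ref{theo:informal-equiv-class}). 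If $\tilde D=2^{(\log\tilde N)^a}$, the intermediate tensor size is $2^{\polylog(\tilde D)}=2^{(\log\tilde N)^{ca}}$ for some constant $c>1$, so the $\mathcal{Q}$-instance has size $\tilde N\cdot\tilde D^{(\log\tilde D)^{\Omega(1)}}$, which is not $\tilde N\cdot\poly(\tilde D)$. Your displayed chain of equalities therefore breaks, and you have not established the hypothesis of Theorem~\ref{theo:shave-logs-equiv}.

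The paper's proof avoids this by not treating either theorem as a black box. It starts from the specific hard \BPSATPAIR instances of Theorem~\ref{thm:bp-pair-shaving} (width $O(1)$, length $\polylog N$), reduces them directly to \OAPTrestri with $D=\polylog N$ or to \GAPMAXTTrestri{$(\log N)^{-c}$} with $D=2^{O((\log\log N)^3)}$, and only then applies the explicit $O(N\poly(D,\epsilon^{-1}))$-time reductions of Corollary~\ref{cor:bp-to-hard} to reach $\mathcal{Q}$. With those concrete $D$, the $\mathcal{Q}$-instance has size $N'=N\cdot 2^{O((\log\log N)^3)}$, and the hypothesized $2^{\omega((\log\log N')^3)}$ saving genuinely swallows that blowup, giving an $O(N^2/(\log N)^{\omega(1)})$ algorithm for \BPSATPAIR. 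Your plan can be salvaged by fixing $\mathcal{P}$ to be \OAPT or \GAPMAXTT{$\epsilon$} with exactly those small $D$ values and citing Corollary~\ref{cor:bp-to-hard} rather than Theorem~\ref{theo:informal-hard} for the blowup bound—but once you do that, you are carrying out the paper's direct calculation rather than a black-box composition.
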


\subsection{Circuit Lower Bound Consequence for Improving Approximation Algorithms for P Time Problems} 

Finally, we significantly improve the results from~\cite{AbboudR18}, by showing much stronger circuit lower bound consequences for deterministic approximation algorithms to $\LCS$.

\begin{theorem} \label{thm:approx-lcs-cons}
	The following holds for deterministic approximation to $\LCS$:
		
		\begin{enumerate}
			\item A $2^{(\log N)^{1 - \Omega(1)}}$-approximation algorithm in $N^{2 - \delta}$ time for some constant $\delta > 0$ implies that $\textsf{E}^{\NP}$ has no $n^{o(1)}$-depth bounded fan-in circuits;
			
			\item A $2^{o(\log N / (\log \log N)^2)}$-approximation algorithm in $N^{2 - \delta}$ time for some constant $\delta > 0$ implies that $\NTIME[2^{O(n)}]$ is not contained in non-uniform $\NC^1$;
			
			\item An $O(\polylog(N))$-approximation algorithm in $N^2 / 2^{\omega(\log \log N)^3}$ time implies that $\NTIME[2^{O(n)}]$ is not contained in non-uniform $\NC^1$.
		\end{enumerate}
	
\end{theorem}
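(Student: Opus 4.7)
The plan is to chain together the reductions developed earlier in the paper---from branching-program satisfiability through the $\IP$-based tropical tensor construction to approximate LCS---and then invoke the algorithmic method of Williams to convert a faster-than-trivial $\textsf{SAT}$ algorithm into a circuit lower bound. The first step is the standard split-and-list: given a circuit $C$ on $n$ input variables, partition the inputs into two halves and enumerate the $N = 2^{n/2}$ partial assignments on each side to form sets $A, B$. Using the $\IP$ communication protocol for branching programs (Theorem~\ref{thm:bp-ip}) composed with the tropical tensor reduction and the $\max$/$\Sigma$-gadget composition into LCS strings, each partial assignment $x \in A$ (resp.\ $y \in B$) is mapped to a string $S(x)$ (resp.\ $T(y)$) of length $L$ such that $\LCS(S(x), T(y))$ admits a gap-$g$ separation between satisfying and falsifying pairs. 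Finally, collapse the resulting $\MAXLCSPAIR$ instance into a single pair of strings via the standard concatenation-with-separators gadget, which preserves the approximation gap up to constants.

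For Part~1, assume for contradiction that $\textsf{E}^{\NP}$ has non-uniform bounded fan-in circuits of depth $n^{o(1)}$. By Barrington's theorem such circuits are simulated by branching programs of size $S = 2^{n^{o(1)}}$, for which the $\IP$ protocol has communication $(\log S)^{O(1)} = n^{o(1)}$, so the LCS strings have length $L = 2^{n^{o(1)}}$ and the achievable gap is $g = 2^{(\log L)^{1-\Omega(1)}} = 2^{(\log N)^{1-\Omega(1)}}$. An $N^{2-\delta}$-time $g$-approximation for LCS then solves $\textsf{SAT}$ for $n^{o(1)}$-depth circuits in time $2^{(1-\delta)n + n^{o(1)}}$, which by the Williams algorithmic method (a fast $\textsf{SAT}$ for class $\mathcal{C}$ implies $\textsf{E}^{\NP} \not\subseteq \mathcal{C}$) contradicts the assumption. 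Part~2 uses the same pipeline but starts from $\NC^1$, where Barrington gives constant-width polynomial-length branching programs; the $\IP$ protocol's communication is now $O(\log^{2+o(1)} n \cdot \log \epsilon^{-1})$, so tuning $\epsilon$ matches the stated $2^{o(\log N / (\log \log N)^2)}$ approximation factor, and the full-strength Williams method for $\NC^1$ yields $\NTIME[2^{O(n)}] \not\subseteq \NC^1$. Part~3 follows by feeding the $O(\polylog N)$-approximation hypothesis into the same reduction and invoking the shaving-logs analysis of Theorem~\ref{theo:shave-logs-hard} rather than the constant-savings Williams method.

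The main obstacle is the quantitative balancing of parameters: the $\IP$ protocol's communication scales as $(\log S)^{O(1)} \cdot \log \epsilon^{-1}$, while the approximation factor $g$ and the time budget $N^{2-\delta}$ (or its log-shaved variant) jointly constrain both the string length $L$ and the soundness $\epsilon$. One must pick $\epsilon$ small enough that a $g$-approximate LCS solver reliably distinguishes satisfying from falsifying assignments, yet keep $L$ small enough that the induced $\textsf{SAT}$ algorithm actually beats $2^{n}$ by the required margin for each part. A secondary subtlety is derandomization, since the $\IP$ protocol uses public coins while the hypothesis supplies only a deterministic approximation algorithm; the $\polylog$-bit coin count is short enough to enumerate at only a $2^{\polylog}$ overhead (which fits the budget in all three regimes), or alternatively to fool using Nisan-style pseudorandomness for small-space computation. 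Once these parameter choices are lined up, each of the three parts follows by plugging into the corresponding version of the algorithmic method.
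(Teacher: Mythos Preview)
Your pipeline breaks at the step where you ``collapse the resulting $\MAXLCSPAIR$ instance into a single pair of strings via the standard concatenation-with-separators gadget, which preserves the approximation gap up to constants.'' It does not preserve the gap. In the completeness case only \emph{one} pair $(a^*,b^*)$ has $\LCS\approx L$, so the concatenated strings $X=\bigcirc_a S(a)$ and $Y=\bigcirc_b T(b)$ have $\LCS(X,Y)\gtrsim L$; but in the soundness case each pair has $\LCS<\epsilon L$, and the concatenated $\LCS$ can be as large as $\Theta(N\epsilon L)$ by stringing together many small matches, which swamps $L$ unless $\epsilon\ll 1/N$. No choice of $\epsilon$ achievable by the $\IP$ protocol is that small while keeping the string length subexponential. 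This is exactly the obstacle flagged at the opening of Section~\ref{sec:derand-to-circuit-lowb}: ``the contribution to the final result for just one pair is too small to make a large approximating gap.''

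The paper circumvents this by not reducing from $\SAT$ at all, but from $\ACBPSAT$: distinguish whether the branching program accepts \emph{every} input or at most a $1/n^{10}$ fraction. Via the same split-and-list this becomes the \SUPERGAPMAXTT{$\epsilon$} problem (Definition~\ref{def:att}), where in the completeness case a $(1-1/\log^{10}N)$-fraction of pairs have perfect similarity. Only with that density does the concatenation reduction of~\cite{AbboudR18} (Theorem~\ref{thm:super-gap-lcs-red}) yield a gap on the single $\LCS$ instance. The circuit lower bounds then come not from the plain Williams method but from the Ben-Sasson--Viola variant (Theorems~\ref{thm:ac-prob-enp} and~\ref{thm:ac-prob-ntime}), which converts a deterministic distinguisher for the acceptance-probability promise problem into lower bounds against $\ETIME^{\NP}$ or $\NTIME[2^{O(n)}]$. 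Your proposed fix of enumerating the protocol's public coins does not help: it lets you compute each pair's similarity exactly, but you are still left with a $\MAXLCSPAIR$ instance, not a single $\LCS$ instance, and the hypothesis is about the latter.
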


In comparison with~\cite{AbboudR18}, they show that an $O(N^{2 - \eps})$ time algorithm for constant factor deterministic approximation algorithm to $\LCS$ would imply that $\textsf{E}^\NP$ does not have non-uniform linear-size $\NC^1$ circuits or VSP circuits. Our results here generalize theirs in all aspects: (1) we show that a much stronger lower bound consequence would follow from even a sub-quadratic time $2^{(\log N)^{1 - \Omega(1)}}$-approximation algorithm; (2) we also show that a modestly stronger lower bound would follow even from a quasi-polylogarithmic improvement over the quadratic time, for approximate $\LCS$.

\medskip

More generally, following a similar argument to~\cite{AHVW16}, we can show that truly-subquadratic time algorithms for these $\equivclass$ or $\equivclasshard$ problems would imply strong circuit lower bounds against $\ETIME^{\NP}$.

\begin{cor} \label{cor:cons-bpseth}
If any of the problems listed in Theorem~\ref{theo:informal-equiv-class} and Theorem~\ref{theo:informal-hard} admits an $N^{2 - \eps}$ time algorithm (or $(NM)^{1-\epsilon}$ time algorithm for Regular Expression Membership Testing) for some $\eps > 0$, then $\textsf{E}^{\textsf{NP}}$ does not have:
	\begin{enumerate}
		\item non-uniform $2^{n^{o(1)}}$-size Boolean formulas,
		\item non-uniform $n^{o(1)}$-depth circuits of bounded fan-in, and
		\item non-uniform $2^{n^{o(1)}}$-size nondeterministic branching programs.
	\end{enumerate}
\end{cor}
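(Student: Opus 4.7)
The strategy is to chain the fine-grained reductions underlying Theorem~\ref{theo:better-foundation-equiv} and Corollary~\ref{cor:better-foundation-hard} with Williams' algorithmic approach to circuit lower bounds, in the spirit of~\cite{AHVW16}. Concretely, I would first convert an alleged $N^{2-\eps}$ algorithm for any of the listed problems into an improved satisfiability algorithm for $2^{n^{o(1)}}$-size non-deterministic branching programs, and then feed that algorithm into the Williams-style ``SAT-to-lower-bound'' engine to obtain each of the three circuit lower bound consequences.

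For the first step, note that $\BPSATPAIR$ is itself in $\equivclass$, so a standard split-and-list reduction turns a size-$2^{n^{o(1)}}$ $\BP\textsf{-}\SAT$ instance on $n$ variables into a $\BPSATPAIR$ instance with $|A|=|B|=N=2^{n/2}$ and $D = 2^{n^{o(1)}}$. Composing this with the near-linear equivalences inside $\equivclass$ (Theorem~\ref{theo:informal-equiv-class}) and with the near-linear $\equivclass$-to-$\equivclasshard$ reductions (Theorem~\ref{theo:informal-hard}), an $N^{2-\eps}$ algorithm for any target problem yields a $2^{(1-\eps/2)n + o(n)}$-time $\BP\textsf{-}\SAT$ algorithm for $2^{n^{o(1)}}$-size branching programs. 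For Regular Expression Membership Testing I would use the balanced variant $N=M=2^{n/2}$, so that an $(NM)^{1-\eps}$ algorithm still beats exhaustive search by a factor $2^{\eps n - o(n)}$ on $\BP\textsf{-}\SAT$.

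For the second step, I invoke Williams' algorithmic approach together with its formula/branching-program strengthenings from~\cite{AHVW16}: a SAT algorithm on a non-uniform circuit class $\mathcal{C}$ that beats exhaustive search by even a super-polynomial factor implies $\ETIME^{\NP} \not\subseteq \mathcal{C}$. Applied directly to $\mathcal{C}$ = non-uniform $2^{n^{o(1)}}$-size non-deterministic branching programs, this yields item~(3). For item~(1), a non-uniform $2^{n^{o(1)}}$-size Boolean formula is simulated by a non-deterministic branching program of the same size up to polynomial overhead (absorbed into the $n^{o(1)}$), so the same BP-SAT algorithm refutes this class. For item~(2), Barrington's theorem converts an $n^{o(1)}$-depth bounded fan-in circuit into a width-$5$ branching program of length $2^{O(\mathrm{depth})} = 2^{n^{o(1)}}$, hence size $2^{n^{o(1)}}$, so again the same algorithm applies.

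The main obstacle will be tracking parameters cleanly through the reduction chain: one must verify that the cumulative $N^{o(1)}$ slack across the near-linear equivalences translates into only $2^{o(n)}$ slack in the resulting $\BP\textsf{-}\SAT$ running time, and that the branching-program size bounds are preserved (rather than blown up) by the formula-to-BP and Barrington simulations in the required ``$2^{n^{o(1)}}$'' regime. Once these parameter checks are made uniformly, all three consequences follow from a single underlying $\BP\textsf{-}\SAT$ improvement combined with the three simulations (direct, formula-to-BP, and Barrington).
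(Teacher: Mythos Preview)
Your proposal is correct and follows essentially the same route as the paper: first obtain a $2^{(1-\Omega(1))n}$-time $\BPSAT$ algorithm from the assumed sub-quadratic algorithm via split-and-list and the $\equivclass$/$\equivclasshard$ reductions, then feed this into the Williams-style engine from~\cite{AHVW16} (the paper's Theorem~\ref{thm:truly-faster-implies}), using Spira's formula-balancing and Barrington's theorem to collapse formulas and $n^{o(1)}$-depth circuits into $2^{n^{o(1)}}$-size branching programs. The only cosmetic difference is that the paper emphasizes the closure of each class under the AND/OR/fan-in-$3$ wrapper required by Theorem~\ref{thm:truly-faster-implies}, whereas you reduce items (1) and (2) to item (3); both presentations rely on the same Spira--Barrington chain, so there is no substantive divergence.
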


\subsection{Discussion and Open Problems}

Here we discuss some open problems arising from our work.

\subsection*{Find More Members for $\equivclass$}

One immediate question is to find more natural quadratic-time problems belonging to \equivclass:

\begin{open}
	Find more natural problems which belong to $\equivclass$.
\end{open}

It could be helpful to revisit all $\SETH$-hard problems to see whether they can simulate $\BPSATPAIR$. In particular, one may ask whether the Orthogonal Vectors problem ($\OV$), the most studied problem in fine-grained complexity, belongs to this equivalence class:

\begin{open}
	Does $\OV$ belong to $\equivclass$?
\end{open}

If it does, then it would open up the possibility that perhaps all $\SETH$-hard quadratic-time problems are equivalent. However, some evidence suggests that the answer may be negative, as $\OV$ seems to be much easier than problems in $\equivclass$:

\begin{itemize}
	\item \textbf{The Inner Function in $\OV$ is Much Weaker.} When viewing as a $\textsf{Satisfying-Pair}$ problem, the inner function in $\OV$ is just a simple \emph{Set-Disjointness}, which seems incapable of simulating generic low-space computation.
	 
	\item \textbf{There are Non-trivial Algorithms for $\OV$.} We know that for $\OV$ with $N$ vectors of length $D = c\log n$, there are algorithms with running time $N^{2 - 1/O(\log c)}$~\cite{abboud2015more,ChanW16}. This type of non-trivial speed up seems quite unlikely (or at least much harder to obtain) for problems in \equivclass\ (see Theorem~\ref{theo:shave-logs-equiv}).
\end{itemize}  

It would be interesting to show that $\OV$ and $\BPSATPAIR$ are not equivalent under certain plausible conjectures, perhaps ideas from~\cite{carmosino2016nondeterministic} could help.

\subsection*{Quasi-Polynomial Blow Up of the Dimension}

In the reductions between our $\equivclass$ problems, we get a quasi-polynomial blowup on the dimensions: that is, a problem with element size (vector dimension, string length or tree size) $D$ is transformed into another problem with element size $2^{\polylog(D)}$. This is the main reason that we have to restrict the element size to be small, i.e., $D = 2^{(\log N)^{o(1)}}$.

This technical subtlety arises from the polynomial blow-up in the $\IP = \PSPACE$ proof: given a language in $\NSPACE[S]$, it is first transformed into a $\textsf{TQBF}$ instance of size $O(S^2)$, which is proved by an $\widetilde{O}(S^4)$ time $\IP$ protocols, using arithmetization. 

Applying that into our setting, an $\NL$ computation on two strings of length $D$ (like $\LCS$), is transformed into an $\widetilde{O}(\log^4 D)$ time $\IP$ communication protocol, which is then embedded into an approximate problem with at least $2^{\widetilde{O}(\log^4 D)}$ dimensions (say approximate $\LCS$).

However, if we have an $O(\log D)$ time $\IP$ communication protocol for $\NL$. The new dimension would be $2^{O(\log D)} = D^{O(1)}$, only a polynomial blow up. Which motivates the following interesting question:

\begin{open}\label{open:fast-IP-protocol}
	Is there an $O(\log D)$ time $\IP$ communication protocol for every problems in $\NL$? 
\end{open}

A positive resolution of the above question would also tighten several parameters in many of our results. For example, in Theorem~\ref{theo:better-foundation-equiv}, $n^{o(1)}$-depth circuit $\SETH$ could be replaced by $o(n)$-depth circuit $\SETH$, and Theorem~\ref{theo:shave-logs-equiv}, Theorem~\ref{theo:shave-logs-hard}, Theorem~\ref{theo:informal-hard} and Corollary~\ref{cor:cons-bpseth} would also have improved parameters. 

It is worth noting that $\IP$ communication lower bounds are extremely hard to prove\textemdash proving a non-trivial lower bound for $\AM$ communication protocols is already a long-standing open question~\cite{Lokam01,GoosPW16,goos2016landscape}. Hence, resolving Open Question~\ref{open:fast-IP-protocol} negatively could be hard.

\subsection{Related Works}

\newcommand{\posR}{\mathbb{R}^{+}}

\subsubsection{Hardness of Approximation in $\PTIME$}

In \cite{ARW17-proceedings} the Distributed PCP framework is introduced, which is utilized and generalized by several follow up works. Using Algebraic Geometry codes, in a recent work, \cite{Rub18} obtains a better $\MA$ protocol for Set-Disjointness, improving the efficiency of the distributed PCP construction, and shows quadratic-time hardness for $(1+o(1))$-approximation to Bichromatic Closest Pair and several other related problems. 

Building on the technique of~\cite{Rub18},~\cite{Che18} obtains a characterization of what multiplicative/additive approximation ratios to Maximum Inner Product can be computed in sub-quadratic time. He also shows a connection between $\BQP$ communication protocol for Set-Disjointness and conditional lower bound for Maximum Inner Product with $\{-1,1\}$-valued vectors.

\cite{CLM17} generalizes the distributed PCP framework by considering multi-party communication protocols, and derive inapproximability results for $k$-Dominating Set under various assumptions. In particular, using the techniques of~\cite{Rub18}, they prove that under $\SETH$, $k$-Dominating Set has no $(\log n)^{1/\poly(k,e(\eps))}$ approximation in $n^{k-\eps}$ time\footnote{where $e: \posR \to \mathbb{N}$ is some function}. 

\cite{abboud2017towards} takes a different approach, which makes use of the connection between weak derandomization and circuit lower bound~\cite{Wil13,BV14}. They show that, under a certain plausible complexity assumption, $\textsf{LCS}$ does not have a \emph{deterministic} $(1+o(1))$-approximation in $n^{2 -\eps}$ time. They also establish a connection with circuit lower bounds and prove that such a \emph{deterministic} algorithm implies $\mathsf{E}^{\NP}$ does not have non-uniform linear-size Valiant Series Parallel circuits. In~\cite{AbboudR18}, the circuit lower bound connection is improved to that any constant factor deterministic approximation for $\textsf{LCS}$ in $n^{2 - \eps}$ time implies that $\mathsf{E}^\NP$ does not have non-uniform linear-size $\textsf{NC}^1$ circuits. See~\cite{ARW17-proceedings} for more related results in hardness of approximation in~$\PTIME$.

\subsubsection{Equivalence Classes in $\PTIME$}

A partial list of the APSP equivalence class~\cite{WW10-subcubic,backurs2016tight,abboud2015subcubic,lincoln2018tight} includes: Negative Triangle, Triangle listing, Shortest Cycle, 2nd Shortest Path, Max Subarray, Graph Median, Graph Radius, Wiener Index (see~\cite{VassilevskaW2018survey} for more details). 

In~\cite{GIKW17}, it is shown that ``medium-dimensional'' $\OV$ (i.e., $\OV$ with $n^{o(1)}$ dimensions) is equivalent to High-dimension Sparse $\OV$, High-dimension $2$-Set Cover, and High-dimension Sperner Family. It is also proved that for every $(k+1)$-quantifier first-order property, its model-checking problem can be reduced to Sparse $k$-$\OV$.

In~\cite{CyganMWW17}, the authors introduce an equivalence class for $(\min,+)$-convolution, including some variants of classical knapsack problem and problems related to subadditive sequences.

\subsubsection{Hardness for Shaving Logs}
In~\cite{AHVW16}, it is shown that an $n^{2} / \log^{\omega(1)}$ time algorithm for $\LCS$ would imply that the $\textsf{Formula-SAT}$ have a $2^n / n^{\omega(1)}$ time algorithm. The construction is later tightened in~\cite{abboud2018tighter}, which shows that an $n^2 / \log^{7+\eps} n$ time algorithm for any of $\LCS$, regular expression pattern matching or Fr\'echet distance is already enough to imply new algorithm for $\textsf{Formula-SAT}$. 

\subsubsection{Related Works for Specific Problems}\label{sec:related-works-specific-problems}

\paragraph*{Longest Common Subsequence.} \LCS is a very basic problem in computer science and has been studied for decades \cite{chvatal1972selected,bergroth1998new,de1999extensive,crochemore2001fast}. In recent years, a series of works \cite{BI15,ABV15a,BK15,AHVW16,AbboudR18} have shown different evidences that \LCS may not have truly sub-quadratic time algorithm, even for approximation.

\paragraph*{Subtree Isomophism and Largest Common Subtree.} Subtree Isomorphism has been studied in \cite{matula1968algorithm, matula1978subtree, lingas1983application, reyner1977analysis,chung1987n2, lingas1989subtree,gibbons1990subtree, shamir1999faster}.
Largest Common Subtree is an \NP-hard problem when the number of trees is not fixed, and has been studied in \cite{khanna1995approximation,akutsu2000approximation,akutsu2015complexity}. For (rooted or unrooted) bounded-degree trees, both the two problems can be solved in $O(N^2)$ time, and the fastest algorithm for Subtree Isomorphism runs in $O(N^2 / \log N)$ time~\cite{lingas1983application, shamir1999faster}. In \cite{ABHVZ16}, these two problems are shown to be \SETH-hard.

\paragraph*{Regular Expression.} The found of $O(NM)$ algorithm for regular expression matching and membership testing in \cite{Tho68} is a big sucess in 70s, but after that no algorithm has been found to improve it to truly sub-quadratic time~\cite{Mye92,BT09}. Related works about the hardness of exact regular expression matching or membership testing include~\cite{BackursI16,BGL17}. There are many works on approximate regular expression matching in different formulations \cite{myers1989approximate, wu1995subquadratic, knight1995approximate, myers1998reporting, navarro2004approximate, belazzougui2013approximate}, and its hardness has been analyzed in~\cite{ARW17-proceedings}.

\subsection*{Organization of this Paper}
In Section~\ref{sec:prelim}, we introduce the needed preliminaries, as well as the formal definitions of the problems we studied. In Section~\ref{sec:outline}, we outline the structure of all reductions for our $\equivclass$ and $\equivclasshard$ problems (Theorem~\ref{theo:informal-equiv-class} and Theorem~\ref{theo:informal-hard}). For ease of presentation, these reductions are presented from Section~\ref{sec:to-BPSATPAIR} to Section~\ref{sec:subtree-Iso}. In Section~\ref{sec:algo-to-circuit-lowb}, we show that tiny improvements on the running time of $\equivclass$ or $\equivclasshard$ problems would have important algorithmic and circuit lower bound consequences. In Section~\ref{sec:derand-to-circuit-lowb}, we establish the consequences of faster deterministic approximation algorithms to $\LCS$.

\section{Preliminaries}\label{sec:prelim}

In this section, we define the $\SAT$ problem for branching program (\BPSAT), and then we introduce the formal definitions for each problem in \equivclass and \equivclasshard.

\begin{definition}[Branching Program]\label{defi:BP}
A (nondeterministic) \textit{Branching Program} (BP) on $n$ boolean inputs $x_1, \dots, x_n$ is defined as a layered directed graph with $T$ layers $L_1, \dots, L_T$. Each layer $L_i$ contains $W$ nodes. Except the last layer, every node in $L_i$ ($1 \le i < T$) is asscociated with the same variable $x_{f(i)}$ for some $f(i) \in [n]$. $T$ and $W$ are called the \textit{length} and \textit{width} of the BP. 

For every two adjacent layers $L_i$ and $L_{i+1}$, there are edges between nodes in $L_i$ to nodes in $L_{i+1}$, and each edge is marked with either $0$ or $1$.
The \textit{size} of a BP is defined as the total number of edges $O(W^2T)$.

For $1 \le i \le T, 1 \le j \le W$, the $j$-th node in the $i$-th layer $L_i$ is labeled as $(i, j)$. $u_{\mathrm{start}} = (1, 1)$ is the starting node, and $u_{\mathrm{acc}} = (T, 1)$ is the accepting node. A BP accepts an input $x$ iff there is a path from the starting node to the accepting node consisting of only the edges marked with the value of the variable associated with its starting endpoint.
\end{definition}

\begin{definition}[\BPSAT] Given a branching program $P$ on $n$ boolean inputs, the \textsf{BP-SAT} problem asks whether there is an input making $P$ accept.
\end{definition}

Like the relationship between $k$-SAT and Orthogonal Vectors (OV), we can define \BPSATPAIR problem as the counterpart of \BPSAT in the \PTIME world. \BPSATPAIR can be trivially solved in $O(N^2 \cdot \poly(W, T))$ time, and a faster algorithm for \BPSATPAIR running in $O(N^{2 - \epsilon})$ time implies a faster algorithm for \BPSAT running in $O(2^{(1 - \epsilon/2)n})$ time.
\begin{definition}[\BPSATPAIR] Given a branching program $P$ on $n$ boolean inputs (assume $n$ is even) and two sets of $N$ strings $A, B \subseteq \{0, 1\}^{n/2}$, the \BPSATPAIR problem asks whether there is a pair $a \in A, b \in B$ such that $P$ accepts the concatenation of $a$ and $b$.
\end{definition}

Throughout this paper, unless otherwise stated, we use \BPSATPAIR to denote the \BPSATPAIR problem on branching program of size $2^{(\log N)^{o(1)}}$ for convenience.

\subsection{Satisfying Pair and Best Pair Problems}\label{sec:sat-pair}

\paragraph{Satisfying Pair Problems.} Note that problems like \textit{Orthogonal Vectors} are in the form of deciding whether there is a ``satisfying pair''. In general, we can define the $\mathcal{A}$-\textit{Satisfying-Pair} problem, where $\mathcal{A}$ is an arbitrary decision problem on two input strings $x, y$:
\begin{definition}[\ASATPAIR{$\mathcal{A}$}]
Given two sets $A, B$ of $N$ strings, the $\mathcal{A}$-\textit{Satisfying-Pair} problem asks whether there is a pair of $a \in A, b \in B$ such that $(a, b)$ is an Yes-instance of $\mathcal{A}$.
\end{definition}

In this work, we study a series of \ASATPAIR{$\mathcal{A}$} problems, including
\OAPT, \REGEXPSTRPAIR and \STIPAIR, which will be formally defined in later subsections.

\paragraph{Best Pair Problems.} For an optimization problem $\mathcal{A}$ on two input strings $x, y$, we can define the \MAXAPAIR{$\mathcal{A}$} and the \MINAPAIR{$\mathcal{A}$} problems:
\begin{definition}[\MAXAPAIR{$\mathcal{A}$} / \MINAPAIR{$\mathcal{A}$}]
Given two sets $A, B$ of $N$ strings, the \MAXAPAIR{$\mathcal{A}$} (or \MINAPAIR{$\mathcal{A}$}) problem asks to compute the maximum value (or minimum value) of the result of problem $\mathcal{A}$ on input $(a, b)$.
\end{definition}

In this work, we study a series of \MAXAPAIR{$\mathcal{A}$} / \MINAPAIR{$\mathcal{A}$} problems, including \MAXTT, \MINTT, \MAXLCSPAIR, \MINLCSPAIR, \CLOSESTREGEXPSTRPAIR, \MAXLCSTPAIR and \MINLCSTPAIR, which will be formally defined in later subsections.

Note that both satisfying pair problems and best pair problems contain two sets $A, B$ in the input. Without additional explanation, we use $N$ to denote the set size, and $D$ to denote the maximum element size in sets.

\subsection{Two Tensor Problems} \label{sec:tensor-def}

We introduce two kinds of tensor problems: the \textit{Orthogonal Alternating Product Tensors} problem (\OAPT) and the \textit{Max / Min Tropical Similarity} problem (\MAXTT~/ \MINTT). The former one is an \ASATPAIR{$\mathcal{A}$} problem, which helps us to prove hardness for decision problems; the latter one is a \MAXAPAIR{$\mathcal{A}$} / \MINAPAIR{$\mathcal{A}$} problem, which helps us proving hardness of approximation for optimization problems. 

First we define \OAPT. \OAPT implicitly appears in the reduction from \BPSAT to \LCS in \cite{AHVW16} as an intermediate problem. In our reductions, \OAPT  appears naturally, and we show that \BPSATPAIR and \OAPT of certain size are equivalent under near-linear time reduction in Section \ref{sec:tensor}.

\begin{definition}[\OAPT] \label{def:oapt}
Let $t$ be an even number and $d_1 = d_2 = \cdots = d_t = 2$. The \textit{Alternating Product} $\aprod(u, v)$ of two tensors $u, v \in \{0, 1\}^{d_1 \times \cdots \times d_t}$ is defined as an alternating sequence of logical operators $\land$ and $\lor$ applied to the coordinatewise product of $u$ and $v$:
\begin{equation}
\aprod(u, v) = \bigwedge_{i_1 \in [d_1]} \left[\bigvee_{i_2 \in [d_2]}\left( \bigwedge_{i_3 \in [d_3]} \left[ \cdots \bigvee_{i_t \in [d_t]} \left(u_i \land v_i\right) \cdots \right] \right)\right].
\end{equation}
Given two sets of $N$ tensors $A, B \subseteq \{0, 1\}^{d_1 \times \cdots \times d_t}$, the Orthogonal Alternating Product Tensors (\OAPT) problem asks whether there is a pair $a \in A, b \in B$ such that the Alternating Product $\aprod(a, b) = 0$.
\end{definition}

\OAPTrestri is a restricted version of \OAPT. We mainly use this restricted version in our analysis.
\begin{definition}[\OAPTrestri] \label{def:restriced-oapt}
We say that a tensor $x \in \{0, 1\}^{d_1 \times \cdots \times d_t}$ is $\land$-invariant if the value of $x_{i_1 \cdots i_t}$ does not depend on $i_1, i_3, i_5, \dots, i_{t - 1}$. The \OAPTrestri problem is a restricted version of \OAPT, where one set of $A, B$ contains only $\land$-invariant tensors.
\end{definition}

For proving our hardness of approximation results, we further define the \MAXTT and \MINTT problems. 
The \MAXTT problem was firstly proposed by Abboud and Rubinstein in \cite{AbboudR18} under the name \textit{Tropical Tensors} in their study of \LCS.
\begin{definition}[\MAXTT~/ \MINTT] \label{def:max-min-tt}
Let $t$ be an even number and $d_1 = d_2 = \cdots = d_t = 2$. The \textit{Tropical Similarity} score $s(u, v)$ of two tensors $u, v \in \{0, 1\}^{d_1 \times \cdots \times d_t}$ is defined as an alternating sequence of operators $\E$ and $\max$ applied to the coordinatewise product of $u$ and $v$:
\begin{equation}
s(u, v) = \E_{i_1 \in [d_1]} \left[\max_{i_2 \in [d_2]}\left\{ \E_{i_3 \in [d_3]} \left[ \cdots \max_{i_t \in [d_t]} \left\{u_i \cdot v_i\right\} \cdots \right] \right\}\right].
\end{equation}
Given two sets of $N$ tensors $A, B \subseteq \{0, 1\}^{d_1 \times \cdots \times d_t}$, the \MAXTT problem asks to compute the maximum Tropical Similarity $s(a, b)$ among all pairs of $(a, b) \in A \times B$,
while the \MINTT problem asks to compute the minimum Tropical Similarity $s(a, b)$ among all pairs of $(a, b) \in A \times B$.
\end{definition}

\OAPTrestri is a restricted version of \OAPT. We mainly use this restricted version in our analysis.

\begin{definition}[\MAXTTrestri~/ \MINTTrestri] \label{def:restriced-max-min-tt}
We say that a tensor $x \in \{0, 1\}^{d_1 \times \cdots \times d_t}$ is $\max$-invariant if the value of $x_{i_1 \cdots i_t}$ does not depend on $i_2, i_4, i_6, \dots, i_{t}$. The \MAXTTrestri~/ \MINTTrestri problem is a restricted version of \MAXTT / \MINTT, where one set of $A, B$ contains only $\land$-invariant tensors.
\end{definition}

Like in \cite{AbboudR18}, we also define the following approximation variants of the \MAXTT and \MINTT.
\begin{definition}[\GAPMAXTT{$\epsilon$}] 
Let $t$ be an even number and $d_1 = d_2 = \cdots = d_t = 2$. Given two sets of $N$ tensors $A, B \in \{0, 1\}^{d_1 \times \cdots \times d_t}$ of size $D = 2^t$, distinguish between the following:
\begin{itemize}
\item \textbf{Completeness:} There is a pair of $(a, b) \in A \times B$ with a perfect Tropical Similarity $s(a, b) = 1$;
\item \textbf{Soundness:} Every pair has low Tropical Similarity score, $s(a, b) < \epsilon$.
\end{itemize}
Here $\epsilon$ is a threshold value that may depend on $N$ and $D$. \GAPMAXTTrestri{$\epsilon$} is defined similarly.
\end{definition}

\begin{definition}[\GAPMINTT{$\epsilon$}] 
Let $t$ be an even number and $d_1 = d_2 = \cdots = d_t = 2$. Given two sets of $N$ tensors $A, B \in \{0, 1\}^{d_1 \times \cdots \times d_t}$ of size $D = 2^t$, distinguish between the following:
\begin{itemize}
\item \textbf{Completeness:} There is a pair of $(a, b) \in A \times B$ with a low Tropical Similarity $s(a, b) < \epsilon$;
\item \textbf{Soundness:} Every pair has perfect Tropical Similarity score, $s(a, b) = 1$.
\end{itemize}
Here $\epsilon$ is a threshold value that may depend on $N$ and $D$. \GAPMINTTrestri{$\epsilon$} is defined similarly.
\end{definition}

In this paper we use the proof idea for $\textsf{IP} = \textsf{PSPACE}$ to show that \BPSATPAIR can be reduced to \GAPMAXTT{$\epsilon$} / \GAPMINTT{$\epsilon$} of certain size in near-linear time. The proof is in Section~\ref{sec:tensor}.

\subsection{Longest Common Subsequence}
We study the hardness of \textit{Longest Common Subsequence} (\LCS) and its pair version in this paper.
\begin{definition}[\LCS]
Given two strings $a, b$ of length $N$ over alphabet $\Sigma$, the \textsf{LCS} problem asks to compute the length of the longest sequence that appears in both $a$ and $b$ as a subsequence.
\end{definition}

\begin{definition}[\MAXLCSPAIR~/ \MINLCSPAIR] Given two sets of $N$ strings $A, B$, the \MAXLCSPAIR (or \MINLCSPAIR) problem asks to compute the maximum (or minimum) length of the longest common subsequence among all pairs of $(a, b) \in A \times B$.
\end{definition}

\subsection{Subtree Isomorphism and Largest Common Subtrees}

We study the hardness for the following two problems on trees:
\begin{definition}(Subtree Isomorphism)
Given two trees $G$ and $H$, the \textit{Subtree Isomorphism} problem asks whether $G$ is isomorphic to a subtree of $H$, i.e., can $G$ and $H$ be isomorphic after removing some nodes and edges from~$H$.
\end{definition}
\begin{definition}(Largest Common Subtree)
Given two trees $G$ and $H$, the \textit{Largest Common Subtree} problem asks to compute the size of the largest tree that is isomorphic to both a subtree of~$G$ and a subtree of~$H$.
\end{definition}

In this paper, we focus on the case of unordered trees with bounded degrees. We are interested in both rooted and unrooted trees.
Here ``rooted'' means that the root of $G$ must be mapped to the root of $H$ in the isomorphism.

The pair versions of these two problems are defined as follows:

\begin{definition}[\STIPAIR] Given two sets of $N$ trees $A, B$, the \textsf{Subtree-Isomorphism-Pair} problem asks whether there is a pair of trees $(a, b) \in A \times B$ such that the tree $a$ is isomorphic to a subtree of the tree $b$.
\end{definition}

\begin{definition}[\MAXLCSTPAIR~/ \MINLCSTPAIR] Given two sets of $N$ trees $A, B$, the \MAXLCSTPAIR (or \MINLCSTPAIR) problem asks to compute the maximum (or minimum) size of the largest common subtrees among all pairs of $(a, b) \in A \times B$.
\end{definition}

\subsection{Regular Expression Membership Testing}

We study the hardness of testing membership for regular expression. A regular expression over an alphabet set $\Sigma$ and an operator set $O = \{\,\circ,\,\mid\,, ~^+, ~^*\}$ is defined in a inductive way: 
(1) Every $a \in \Sigma$ is a regular expression;
(2) All of $[R \mid S]$, $R \circ S$, $R$, $[R]^+$ are regular expressions if $R$ and $S$ are regular expressions. 
A regular expression $p$ determines a language $L(p)$ over alphabet $\Sigma$. Specifically, for any regular expressions $R, S$ and any $a \in \Sigma$, we have: $L(a) = \{a\}$; $L([R \mid S]) = L(R) \cup L(S)$; $L(R \circ S) = \{uv \mid u \in L(R), v \in L(S)\}$; $L([R]^+) = \bigcup_{k \ge 1} \{u_1 u_2 \cdots u_k \mid u_1, \dots, u_k \in L(R)\}$; and $L([R]^*) = L(R^+) \cup \{\epsilon\}$, where $\epsilon$ denotes the empty string. The concatenation operator $\circ$ and unnecessary parenthesis is often omitted if the meaning is clear from the context.

In this paper, we study the \textit{Regular Expression Membership Testing} problem, which is defined as follows:
\begin{definition}[Exact Regular Expression Membership Testing]
Given a regular expression $p$ of length $M$ and a string $t$ of length $N$ over alphabet $\Sigma$, the \textit{Exact Regular Expression Membership Testing} problem asks whether $t$ is in the language $L(p)$ of $p$.
\end{definition}
And its pair version is defined as follows:
\begin{definition}[\REGEXPSTRPAIR]
Given a set $A$ of regular expressions of length $O(\poly(D))$and a set $B$ of $N$ strings of length $D$, the \REGEXPSTRPAIR problem asks to determine whether there is a pair $(a, b)$ such that $b$ is in the language $L(a)$ of $a$.
\end{definition}

In \cite{ARW17-proceedings}, Abboud, Rubinstein and Williams studied a problem called \textsf{RegExp Closest Pair} and showed that it is \textsf{SETH}-hard using their distributed PCP framework. In this work, we study a slightly different problem.
\begin{definition}[\CLOSESTREGEXPSTRPAIR]
For two strings $x, y$ of the same length $n$, the Hamming Similarity $\hamsim(x, y)$ between $x$ and $y$ is defined as the fraction of positions for which the corresponding symbols are equal, i.e.,
\[
\hamsim(x, y) = \frac{1}{n}\sum_{i = 1}^{n} \mathbbm{1}_{[x_i = y_i]}.
\]
Given a set $A$ of $N$ regular expressions of length $O(\poly(D))$ and a set of $N$ strings of length $D$, the \CLOSESTREGEXPSTRPAIR problem asks to compute the maximum Hamming Similarity among all pairs of $(x, b)$ satisfying $x \in L(a)$ is a string of length $D$ for some $a \in A$ and $b \in B$.
\end{definition}
\section{$\equivclass$ and an Outline of all Reductions}
\label{sec:outline}

In this section, we first state the equivalence class formally, and outline how it is proved in this paper.

\begin{theorem}[$\equivclass$]\label{theo:informal-equiv-class}
	There are near linear-time\footnote{Throughout this paper, we use near-linear time to denote the running time of $N^{1 + o(1)}$.}
	reductions between all pairs of following problems:
	\begin{enumerate}
		\item (Exact or Approximate \MAXLCSPAIR~/~\MINLCSPAIR) Given two sets $A ,B$ of $N$ strings of length $D = 2^{(\log N)^{o(1)}}$, compute the exact value or a $2^{(\log D)^{1 - \Omega(1)}}$-approximation of the maximum (minimum) $\LCS$ among $(a,b) \in A \times B$;
		
		\item (Approximate \CLOSESTREGEXPSTRPAIR) Given a set $A$ of $N$ regular expressions of length $2^{(\log N)^{o(1)}}$ and a set of $N$ strings of length $D = 2^{(\log N)^{o(1)}}$, distinguish between the case that there is a pair $(a, b) \in A \times B$ such that $b \in L(a)$ (the language of $a$), and the case that every string in $B$ has Hamming Similarity $< 2^{-(\log D)^{1 - \Omega(1)}}$ from every string of length $D$ in $\bigcup_{a \in A} L(a)$.
		
		\item (Exact or Approximate \MAXLCSTPAIR~/~\MINLCSTPAIR) Given two sets $A, B$ of $N$ bounded-degree trees with size at most $D = 2^{(\log N)^{o(1)}}$, compute the exact value or a $2^{(\log D)^{1 - \Omega(1)}}$-approximation of the maximum (minimum) size of largest common subtree among $(a,b) \in A \times B$;
		
		\item (Exact or Approximate \MAXTT~/~\MINTT) Given two sets $A, B$ of $N$ binary tensors with size $D = 2^{(\log N)^{o(1)}}$, compute the exact value or a $2^{(\log D)^{1 - \Omega(1)}}$-approximation of the maximum (minimum) Tropical Similarity among $(a,b) \in A \times B$;		
		
		\item \OAPTfull(\OAPT): Given two sets $A, B$ of $N$ binary tensors with size $2^{(\log N)^{o(1)}}$, is there a pair $(a,b) \in A \times B$ with Alternating Product\footnote{see Definition~\ref{def:oapt} for a formal definition} $0$?
		
		\item \BPSATPAIR;
		\item \REGEXPSTRPAIR;
		\item \STIPAIR;
	\end{enumerate}
\end{theorem}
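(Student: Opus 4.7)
The plan is to establish the equivalence class with a hub-and-spoke architecture, using \BPSATPAIR as the central hub (equivalently \OAPT, which the preliminary tensor reductions in Section~\ref{sec:tensor} show is equivalent to \BPSATPAIR up to near-linear overhead). To prove the theorem it suffices to exhibit, for each of the remaining seven problems $\mathcal{P}$, a pair of near-linear time reductions: an \emph{easy} direction from $\mathcal{P}$ (or its exact decision version) to \BPSATPAIR, and a \emph{hard} direction from \BPSATPAIR to the approximate (gap) version of $\mathcal{P}$. Composing opposite-direction reductions through the hub then yields all the pairwise reductions simultaneously, including, in particular, the collapse of exact and approximate versions of each optimization problem.

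The forward direction is straightforward. Every one of the underlying relations---``$b \in L(a)$'' for regular expressions, ``$a$ is isomorphic to a subtree of $b$'' for bounded-degree trees, ``$\LCS(a,b) \ge k$'' for strings, ``largest common subtree $\ge k$'' for trees, and ``tropical similarity $\ge \alpha$'' for tensors---can be decided in $\NL$ on inputs of size $D$, hence by a nondeterministic branching program of size $\poly(D) = 2^{(\log N)^{o(1)}}$. Splitting the inputs between the two sides and plugging the branching program into \BPSATPAIR immediately reduces each satisfying-pair problem. For the max/min optimization problems, binary search over the threshold reduces to $O(\log D)$ satisfying-pair queries, each instantiated with the corresponding branching program; approximate versions reduce to exact versions trivially.

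The hard direction is where the $\IP = \PSPACE$ machinery enters. Given an instance of \BPSATPAIR with branching program $P$ and sets $A, B$, invoke Theorem~\ref{thm:bp-ip} to obtain an $\IP$ communication protocol deciding $P$ on split inputs $(a,b)$ using only $\polylog(D)$ total bits. Encode this protocol as tensors $u(a), v(b)$ of dimension $2^{\polylog(D)} = 2^{(\log N)^{o(1)}}$ whose Tropical Similarity $s(u,v)$ captures Merlin's optimal acceptance probability; this yields reductions to approximate \MAXTT and, by complementing the entries, to approximate \MINTT, and hence (after rounding) to \OAPT. From gap-\MAXTT/\MINTT I would then simulate the $\max$ and $\E$ operators by problem-specific gadgets: the $\max$ and $\Sigma$ string gadgets of~\cite{AbboudR18} for approximate \MAXLCSPAIR; disjunction ($\mid$) and concatenation gadgets for \CLOSESTREGEXPSTRPAIR (and hence, after amplifying the gap to a hard threshold, \REGEXPSTRPAIR); and ``branch'' and ``chain'' subtree gadgets built from pairwise non-isomorphic marker trees for \MAXLCSTPAIR (which in turn yields \STIPAIR via a threshold reduction). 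The min-versions are obtained by dualizing each construction, swapping the roles of $\max$ and $\min$ and using the complement tensor entries.

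The main obstacle will be the gadget constructions for the tree problems and for regular expressions: one must design the $\max$ and $\E$ gadgets so that recursive composition preserves completeness exactly while amplifying the soundness gap, and so that no ``spurious'' common subtree or spurious regex match can inflate the score in an adversarial instance. For trees this is typically handled with carefully chosen pairwise non-isomorphic marker subtrees that force the intended alignment; for strings and regexes, with padding over alphabets disjoint from the rest of the instance. A second subtlety is the quasi-polylogarithmic size budget: the $\IP$ protocol already produces dimension $2^{\polylog(D)}$, and each gadget layer multiplies the size by $\polylog(D)$, so every gadget construction must be verified to have only polylogarithmic multiplicative overhead per protocol round in order for the final instance to remain of size $2^{(\log N)^{o(1)}}$ and the overall reduction to remain near-linear.
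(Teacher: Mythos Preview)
Your proposal is correct and follows essentially the same hub-and-spoke architecture as the paper, with \BPSATPAIR as the hub, the ``easy'' direction via low-space decidability of each inner relation (Theorems~\ref{thm:polylogspace-sat} and~\ref{thm:polylogspace-min-max}), and the ``hard'' direction via the $\IP$ protocol $\to$ tropical tensors $\to$ problem-specific $\max/\E$ gadgets pipeline. Two minor corrections: several of the inner relations (subtree isomorphism, largest common subtree, tropical similarity) are only shown to be in $\NSPACE[\polylog]$ rather than $\NL$ proper---this still yields branching programs of size $2^{\polylog(D)} = 2^{(\log N)^{o(1)}}$, so your argument goes through unchanged; and the paper reaches \OAPT, \REGEXPSTRPAIR, and \STIPAIR via a \emph{direct} $\textsf{TQBF}$-style reduction from \BPSATPAIR (Theorem~\ref{thm:hard-oapt}) rather than by thresholding a gap tensor instance, but your route through the gap version also works since completeness in the gap problems is perfect.
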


\begin{remark}
A technical remark is that our reductions actually have a quasi-polynomial blow-up on the string length (tensor / tree size) $D$. That is, the new string length after the transformation would be at most $D' = 2^{\polylog(D)}$, which is still $2^{(\log N)^{o(1)}}$ assuming $D = 2^{(\log N)^{o(1)}}$. That is the reason we set the size parameter to be $2^{(\log N)^{o(1)}}$ in the equivalence class. 
\end{remark}

For the ease of exposition, we break the proofs for Theorem~\ref{theo:informal-equiv-class} and Theorem~\ref{theo:informal-hard} into many sections, each one dealing with one kind of problems. Here we give an outline of the reductions for proving Theorem~\ref{theo:informal-equiv-class} and Theorem~\ref{theo:informal-hard} (Figure~\ref{fig:reduction-graph}).

\begin{figure}[ht]
	\centering
\begin{tikzpicture}[->,>=stealth',shorten >=1pt,auto,
semithick,scale = 2]
\tikzstyle{every state}=[draw=black,text=black,rectangle,font=\small]
\tikzstyle{reduceto}=[thick]

\node [state] (BPSATPAIR) at (-1.5,1) {\BPSATPAIR};
\node [state] (OAPT) at (-1.5,0) {\OAPT};
\node [state,text width=4cm,align=center] (MAXTT) at (2.7,0)
	{\GAPMAXTT{$\epsilon$} \\
	 \GAPMINTT{$\epsilon$}};

\node [state,text width=5cm,align=center] (REGEXPSTRPAIR) at (0,-1)
	{\REGEXPSTRPAIR};

\node [state,text width=5cm,align=center,color=red] (EXACTREGEXP) at (0,-1.7)
	{Regular Expression \\ Membership Testing};

\node [state,text width=5cm,align=center] (STIPAIR) at (0,-2.4)
	{\STIPAIR};

\node [state,text width=5cm,align=center,color=red] (STI) at (0,-3.1)
	{Subtree Isomorphism};

\node [state,text width=5cm,align=center] (REGEXPPAIR) at (3.25,-1)
	{Approx. \\ \CLOSESTREGEXPSTRPAIR};

\node [state,text width=5cm,align=center] (LCSPAIR) at (3.25,-3.1)
	{Approx. \MAXLCSPAIR \\ Approx. \MINLCSPAIR};
\node [state,text width=5cm,align=center] (LCSTPAIR) at (3.25,-1.7)
	{Approx. \MAXLCSTPAIR \\ Approx. \MINLCSTPAIR};
\node [state,text width=5cm,align=center,color=red] (LCST) at (3.25,-2.4)
	{Approx. \\Largest Common Subtree};

\draw[reduceto] (BPSATPAIR.-70) to (OAPT.70);
\draw[reduceto] (OAPT.110) to (BPSATPAIR.-110);

\draw[reduceto] (OAPT) |- (MAXTT);
\draw[reduceto] (OAPT) |- (REGEXPSTRPAIR);
\draw[reduceto] (OAPT) |- (EXACTREGEXP);
\draw[reduceto] (OAPT) |- (STI);
\draw[reduceto] (OAPT) |- (STIPAIR);
\draw[reduceto] (MAXTT.195) |- (LCSPAIR);
\draw[reduceto] (MAXTT.195) |- (LCSTPAIR);
\draw[reduceto] (MAXTT.195) |- (LCST);
\draw[reduceto] (MAXTT.195) |- (REGEXPPAIR);

\draw[reduceto] (REGEXPSTRPAIR) -| (1.5,-4);
\draw[-,thick] (STIPAIR) -| (1.5,-4) -- (4.75,-4);
\draw[-,thick] (REGEXPPAIR) -| (4.75,-4);
\draw[-,thick] (LCSTPAIR) -| (4.75,-4);
\draw[reduceto] (LCSPAIR) -| (4.75,-4);
\draw[-,thick] (MAXTT) -| (4.75,-4) -- (1.5,-4);

\draw[thick] (4.75,-4) -| (BPSATPAIR.210);
\end{tikzpicture}
\caption{A diagram for all our reductions. (red means it is $\equivclasshard$.)}
\label{fig:reduction-graph}
\end{figure}

\begin{itemize}
	\item[Section~\ref{sec:to-BPSATPAIR}] \textbf{$\BPSATPAIR$.} We present a generic reduction from \ASATPAIR{$\mathcal{A}$} and \MAXAPAIR{$\mathcal{A}$} / \MINAPAIR{$\mathcal{A}$} problems to $\BPSATPAIR$ (Theorem~\ref{thm:polylogspace-sat} and Theorem~\ref{thm:polylogspace-min-max}). This implies all problems in $\equivclass$ can be reduced to $\BPSATPAIR$.
	
	\item[Section~\ref{sec:tensor}] \textbf{Tensors Problems.} We show reductions from $\BPSATPAIR$ to tensors problems $\OAPT$, approximate $\MAXTT$ and $\MINTT$ (Theorem~\ref{thm:hard-oapt}, Theorem~\ref{thm:hard-gap-tt} and Theorem~\ref{thm:hard-gap-min-tt}), putting these tensor problems into our $\equivclass$ (Theorem~\ref{thm:OPAT-BPPAIR-equiv} and Theorem~\ref{thm:tt-equiv}).
	
	\item[Section~\ref{sec:LCS}] \textbf{$\LCS$.} We show reductions from approximate $\MAXTT$ ($\MINTT$) to approximate \MAXLCSPAIR\ ($\MINLCSPAIR$) (Theorem~\ref{thm:hard-gap-max-lcs-pair} and Theorem~\ref{thm:hard-gap-min-lcs-pair}), putting these $\textsf{LCS-Pair}$ problems into our $\equivclass$ (Theorem~\ref{thm:lcs-pair-equiv}).
	
	\item[Section~\ref{sec:RegularExp}] \textbf{Regular Expression.} We show reductions from $\OAPT$ to $\REGEXPSTRPAIR$ and from approximate $\MAXTT$ to approximate $\CLOSESTREGEXPSTRPAIR$ (Theorem~\ref{thm:re-str-mem-gadget-oapt} and Corollary~\ref{cor:MAXTT-to-CRSP}), putting these regular expression pair problems into our $\equivclass$ (Theorem~\ref{thm:regexp-string-pair-equiv}). 
	
	We also show a reduction from $\OAPT$ to Regular Expression Membership Testing, showing the latter problem is $\equivclasshard$ (Theorem~\ref{thm:regexp-string-hard}).
	
	\item[Section~\ref{sec:subtree-Iso}] \textbf{Subtree isomorphism.} We show reductions from $\OAPT$ to $\STIPAIR$ and from approximate $\MAXTT$ ($\MINTT$) to approximate $\MAXLCSTPAIR$ ($\MINLCSTPAIR$) (Theorem~\ref{thm:hard-sti-pair} and Theorem~\ref{thm:hard-approx-lcst-pair}), putting these problems related to subtree isomorphism into our $\equivclass$ (Theorem~\ref{thm:bppair-stipair-equiv} and Theorem~\ref{thm:lcst-pair-equiv}).
	
	We also show reductions from $\OAPT$ to Subtree Isomorphism and from approximate $\MAXTT$ to approximate Largest Common Subtree, showing that the latter problems are $\equivclasshard$ (Theorem~\ref{thm:sti-hard} and Theorem~\ref{thm:lcst-hard}).
\end{itemize}

\section{Low-Space Algorithm Implies Reduction to \BPSATPAIR}\label{sec:to-BPSATPAIR}

In this section, we present two important theorems for showing reductions from \ASATPAIR{$\mathcal{A}$}, \MAXAPAIR{$\mathcal{A}$} / \MINAPAIR{$\mathcal{A}$} problems to \BPSATPAIR. 

The key observation is a classic result in space complexity: for $S(n) \ge \log n$, if a decision problem $\mathcal{A}$ is in $\NSPACE[S(n)]$, then there is a BP of length $T = 2^{O(S(n))}$ and width $W = 2^{O(S(n))}$ that decides $\mathcal{A}$ (See, e.g., \cite{AB09-book} for the proof). This means that if $\mathcal{A}$ can be solved in small space, then we can construct a BP of not too large size to represent this algorithm.

Now we introduce our first theorem, which shows that a low-space algorithm for a decision problem $\mathcal{A}$ implies a reduction from \ASATPAIR{$\mathcal{A}$} to \BPSATPAIR:

\begin{theorem} \label{thm:polylogspace-sat}
If the decision problem $\mathcal{A}$ on inputs $a, b$ of length $n$ can be decided in $\NSPACE\left[\polylog(n)\right]$, then \ASATPAIR{$\mathcal{A}$} with two sets of $N$ strings of length $2^{(\log N)^{o(1)}}$ can be reduced to \BPSATPAIR on branching program of size $2^{(\log N)^{o(1)}}$ in near-linear time.
\end{theorem}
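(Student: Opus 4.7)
The plan is to directly apply the classical connection between space-bounded computation and branching programs sketched just before the theorem statement. Since $\mathcal{A} \in \NSPACE[\polylog(n)]$ and our instances use strings of length $D = 2^{(\log N)^{o(1)}}$, the space bound when running $\mathcal{A}$ on a pair $(a,b)$ with $|a| = |b| = D$ is $S = \polylog(D) = (\log N)^{o(1)}$. Invoking the $\NSPACE[S] \to $ BP simulation yields a nondeterministic branching program $P$ of length and width $2^{O(S)} = 2^{(\log N)^{o(1)}}$, hence of total size $2^{(\log N)^{o(1)}}$, that accepts exactly the encodings of YES-instances of $\mathcal{A}$. I will read the input of $P$ as the concatenation of two blocks of $D$ bits, one block per string, after a constant-blow-up binary encoding if the alphabet of $\mathcal{A}$ is non-binary (a larger alphabet of size $D^{O(1)}$ only costs an extra $O(\log D)$ factor per symbol and is absorbed into the $2^{(\log N)^{o(1)}}$ bound).

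The reduction then simply outputs the triple $(P, A, B)$, interpreting each $a \in A$ as the first half of the BP's input and each $b \in B$ as the second half. Correctness is immediate: $P$ accepts the concatenation $a \circ b$ iff $(a,b)$ is a YES-instance of $\mathcal{A}$, so the triple is a YES-instance of \BPSATPAIR iff some pair $(a,b) \in A \times B$ witnesses that $(A, B)$ is a YES-instance of \ASATPAIR{$\mathcal{A}$}. The running time is $2^{(\log N)^{o(1)}}$ to emit $P$, plus $O(N \cdot D) = N^{1+o(1)}$ to emit the binary encodings of $A$ and $B$, which is near-linear in the input size.

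The main technical nuisance is to reconcile the textbook configuration-graph construction with Definition~\ref{defi:BP}, which insists that every node in a single layer query the same input variable. The configurations of an $\NSPACE$ machine at a fixed time step can have the input head pointing at different positions, so different nodes of the naive layer would want to read different bits. I will handle this by expanding each time step into $O(D)$ sub-layers, indexed by the possible positions of the input head: the $i$-th sub-layer queries the variable $x_i$, lets each configuration whose input head is at position $i$ perform its real transition according to the value read, and sends every other configuration straight through via two parallel edges (labelled $0$ and $1$) to the next copy of the same configuration. This multiplies the length by $O(D) = 2^{(\log N)^{o(1)}}$ but leaves the overall BP size at $2^{(\log N)^{o(1)}}$, so the claimed bound is preserved. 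With that combinatorial detail in place, the theorem follows from the construction and the size/time accounting above.
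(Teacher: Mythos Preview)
Your proposal is correct and follows essentially the same approach as the paper's proof, which simply invokes the classical $\NSPACE[S] \Rightarrow$ BP simulation to obtain a branching program of size $2^{\polylog(n)} \le 2^{(\log N)^{o(1)}}$ and then passes $(P,A,B)$ through unchanged. Your additional care in reconciling the configuration-graph construction with Definition~\ref{defi:BP}'s per-layer single-variable requirement is a valid technical point that the paper leaves implicit, and your $O(D)$-sublayer fix handles it cleanly within the stated size bound.
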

\begin{proof}
Let $n = 2^{(\log N)^{o(1)}}$. Since $\mathcal{A}$ is in $\NSPACE\left[\polylog(n)\right]$, we can construct a BP $P$ of size $2^{\polylog(n)} \le 2^{(\log N)^{o(1)}}$ that decides $\mathcal{A}$ on inputs $a, b$ of length $n$. Then to check if there is a pair of $(a, b) \in A \times B$ such that $(a, b)$ is an Yes-instance of $\mathcal{A}$, it is sufficient to check if there is a pair of $a, b$ making $P$ accept.
\end{proof}

Our second theorem is similar. It shows that a low-space algorithm for the decision problem of an optimization problem $\mathcal{A}$ implies a reduction from \MAXAPAIR{$\mathcal{A}$} / \MINAPAIR{$\mathcal{A}$} to \BPSATPAIR:

\begin{theorem} \label{thm:polylogspace-min-max}
Let $\mathcal{A}$ be an optimization problem. If the answer to $\mathcal{A}$ on input $a, b$ of length $n$ is bounded in $[-O(\poly(n)), O(\poly(n))]$, and the decision version of $\mathcal{A}$ (deciding whether the answer is greater than a given number $k$) can be decided in $\NSPACE\left[\polylog(n)\right]$, then \MAXAPAIR{$\mathcal{A}$} (or \MINAPAIR{$\mathcal{A}$}) with two sets of $N$ strings of length $2^{(\log N)^{o(1)}}$ can reduced to $2^{(\log N)^{o(1)}}$ instances of \BPSATPAIR with branching programs of size $2^{(\log N)^{o(1)}}$ in near-linear time.
\end{theorem}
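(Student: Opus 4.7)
The plan is to reduce the optimization pair problem to a short sequence of satisfying-pair instances via binary search on the objective value, and then invoke Theorem~\ref{thm:polylogspace-sat} on each of them.

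Concretely, for the \MAXAPAIR{$\mathcal{A}$} version, first define, for each integer threshold $k$, the decision problem
\[
\mathcal{A}_k(a,b) \;:=\; \bigl[\,\mathcal{A}(a,b) \;\ge\; k\,\bigr].
\]
By hypothesis, deciding whether the answer of $\mathcal{A}$ on a pair $(a,b)$ is at least a given number $k$ lies in $\NSPACE[\polylog(n)]$. Since $k$ is a value in $[-O(\poly(n)),O(\poly(n))]$, it is representable with $O(\log n)$ bits, so providing $k$ as part of the input only contributes $O(\log n)$ additional workspace, which is absorbed by the $\polylog(n)$ space bound. Hence for every fixed $k$, $\mathcal{A}_k \in \NSPACE[\polylog(n)]$ uniformly in $k$. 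By construction,
\[
\max_{(a,b)\in A\times B} \mathcal{A}(a,b) \;=\; \max\{\, k \;:\; \mathcal{A}_k\textsf{-Satisfying-Pair}(A,B) = \textsf{Yes}\,\}.
\]

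Next, the key observation is that since the objective lies in a range of size $O(\poly(n))$, a standard binary search pins down this maximum using only $O(\log \poly(n)) = O(\log n) = (\log N)^{o(1)}$ queries, each of the form ``is $\mathcal{A}_k$-\textsf{Satisfying-Pair}$(A,B)$ a Yes-instance?''. This count is certainly at most $2^{(\log N)^{o(1)}}$. For each such query, apply Theorem~\ref{thm:polylogspace-sat} to the problem $\mathcal{A}_k$ (treating $k$ as a constant hardwired into the $\NSPACE[\polylog(n)]$ algorithm, or equivalently appending $k$ to the input): this yields, in near-linear time, a \BPSATPAIR{} instance on the same sets $A,B$ with a branching program of size $2^{\polylog(n)} = 2^{(\log N)^{o(1)}}$. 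Summing over the $(\log N)^{o(1)}$ binary-search queries, the total reduction still runs in near-linear time.

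The case of \MINAPAIR{$\mathcal{A}$} is completely symmetric: either replace $\mathcal{A}_k$ by $\mathcal{A}'_k(a,b) := [\mathcal{A}(a,b) \le k]$ (whose space complexity is the same, by complementing the $\NSPACE$ machine and applying Immerman--Szelepcs\'enyi if desired, or directly from the stated hypothesis applied to $-\mathcal{A}$), or run binary search from the low end of the interval.

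There is no substantial obstacle; the only point that deserves care is the uniformity of the space bound in the threshold $k$, which is why we explicitly check that $|k| \le \poly(n)$ implies $k$ can be carried in $O(\log n)$ workspace and therefore be folded into the $\polylog(n)$ budget. Once this is noted, the theorem follows immediately from binary search and Theorem~\ref{thm:polylogspace-sat}.
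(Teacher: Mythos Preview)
Your proposal is correct and follows essentially the same route as the paper: reduce the optimization pair problem to thresholded decision problems $\mathcal{A}_k$ and invoke Theorem~\ref{thm:polylogspace-sat} on each; for \MINAPAIR{$\mathcal{A}$} appeal to Immerman--Szelepcs\'enyi. The only difference is that you use binary search (yielding $(\log N)^{o(1)}$ adaptive queries) whereas the paper simply enumerates every threshold $k$ in the range $[-O(\poly(n)),O(\poly(n))]$ (yielding $\poly(n)\le 2^{(\log N)^{o(1)}}$ non-adaptive instances); both comfortably meet the stated $2^{(\log N)^{o(1)}}$ bound.
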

\begin{proof}
For \MAXAPAIR{$\mathcal{A}$}, we enumerate each possible answer $k$, and then we check if there is a pair of $(a, b) \in A \times B$ with answer $> k$ by reducing to \BPSATPAIR via Theorem \ref{thm:polylogspace-sat}. This reduction results in $O(\poly(n)) \le 2^{(\log N)^{o(1)}}$ instances of \BPSATPAIR, and each branching program is of size $2^{(\log N)^{o(1)}}$. Note that $\NSPACE\left[\polylog(n)\right] = \co\NSPACE\left[\polylog(n)\right]$, thus deciding whether the answer of $\mathcal{A}$ is $\le k$ is also in $\NSPACE\left[\polylog(n)\right]$. Using a similar argument we can reduce \MINAPAIR{$\mathcal{A}$} to \BPSATPAIR.
\end{proof}
\section{Tensor Problems} \label{sec:tensor} 

In this section, we show that \BPSATPAIR on branching program of size $2^{(\log N)^{o(1)}}$ is equivalent to \OAPT and (exact or approximate) \MAXTT / \MINTT problems on tensors of size $2^{(\log N)^{o(1)}}$ under near-linear time reductions. 

\subsection{Orthogonal Alternating Product Tensors}

First we show the equivalence between \BPSATPAIR and \OAPT. To start with, we present the reduction from \BPSATPAIR to \OAPT:

\begin{theorem} \label{thm:hard-oapt}
There exists an $O(N \cdot 2^{O(\log W \log T)})$-time reduction from a \BPSATPAIR instance with a branching program of length $T$ and width $W$ and two sets of $N$ strings to an \OAPT problem with two sets of $N$ tensors of size $2^{O(\log W \log T)}$.
\end{theorem}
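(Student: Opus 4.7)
The plan is to encode the non-acceptance predicate of the branching program $P$ as a strictly alternating $\bigwedge\bigvee\bigwedge\bigvee\cdots$ formula of depth $O(\log W \log T)$ that fits the \OAPT template exactly, and then factor the dependence on Alice's input $a$ versus Bob's input $b$ into two binary tensors $u(a)$ and $v(b)$. The driving observation is that reachability across $T$ layers admits a divide-and-conquer description via a midpoint state, and the \emph{complement} of this description naturally carries the $\bigwedge\bigvee$-leading alternation demanded by $\aprod$.

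Concretely, I would define $\bar{M}_{[i,j]}^{x}[u,v]\in\{0,1\}$ to be $1$ iff on input $x$ there is \emph{no} BP path from state $(i,u)$ to state $(j,v)$ using layers $i,\ldots,j$. By De Morgan applied to the standard reachability recurrence,
\[
\bar{M}_{[i,j]}[u,v] \;=\; \bigwedge_{w\in [W]}\Bigl(\bar{M}_{[i,m]}[u,w]\;\vee\;\bar{M}_{[m+1,j]}[w,v]\Bigr),
\]
with $m=\lfloor(i+j)/2\rfloor$. Unrolling this identity $\log T$ times reduces $\bar{M}_{[1,T]}[u_{\mathrm{start}},u_{\mathrm{acc}}]$ to a formula of the form $\bigwedge_{w_1\in[W]}\bigvee_{b_1\in[2]}\bigwedge_{w_2\in[W]}\bigvee_{b_2\in[2]}\cdots$ whose leaves are single-layer entries $\bar{M}_i^{x_{f(i)}}[u,v]$, each depending on a single input bit $x_{f(i)}$. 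To meet OAPT's restriction $d_1=\cdots=d_t=2$, I would split each $\bigwedge_{w\in[W]}$ into $\log W$ consecutive $\bigwedge_{[2]}$'s over the binary digits of $w$, and then interleave them with dummy $\bigvee_{[2]}$'s that depend on no variable (so $\bigvee_{d\in[2]}f=f$ is a tautology). This preserves strict alternation and produces total depth $t=O(\log W\log T)$. Finally, for each $a\in A$ I define $u(a)$ so that its entry at a given multi-index equals the leaf value $\bar{M}_i^{x_{f(i)}}[u,v]$ when $x_{f(i)}$ lies in Alice's half of the input, and equals $1$ otherwise; $v(b)$ is defined symmetrically for Bob. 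Each leaf then recovers as $u\wedge v$, so $\aprod(u(a),v(b))=\bar{M}_{[1,T]}[u_{\mathrm{start}},u_{\mathrm{acc}}]$, which equals $0$ exactly when $P$ accepts $(a,b)$. This is the desired OAPT instance.

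Each tensor has $2^t=2^{O(\log W\log T)}$ entries, and each entry is computable in $\poly(\log W,\log T)$ time by decoding its multi-index into a sequence of midpoints and subtree choices, reading off the resulting leaf layer index and endpoints, and consulting the description of $P$. The total reduction therefore runs in $O(N\cdot 2^{O(\log W\log T)})$ time, as claimed. The main technical subtlety is OAPT's rigid binary alternation: the BP recurrence naturally produces multi-valued $\bigwedge_{[W]}$ steps that, if merely flattened, would yield long blocks of consecutive $\bigwedge$'s and break the alternation; the dummy-$\bigvee_{[2]}$ padding resolves this cleanly. A smaller but essential point is keeping polarity straight, which is why I work throughout with the complement $\bar{M}$, so that OAPT's ``orthogonal'' condition $\aprod=0$ corresponds to \emph{acceptance} of $P$ rather than rejection.
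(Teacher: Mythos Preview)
Your proposal is correct and takes essentially the same approach as the paper: both encode non-acceptance of $P$ via the Savitch-style midpoint recursion on reachability, obtain an alternating $\bigwedge\bigvee\cdots$ formula of depth $O(\log W\log T)$, and factor each leaf according to whether the relevant input bit belongs to Alice or Bob. Your encoding is in fact slightly cleaner than the paper's---by treating the $\lor$ in the recurrence as the selector $\bigvee_{b\in[2]}$ you avoid the implication predicate $f(z)$ that the paper carries from the textbook TQBF construction---but note that this cleaner encoding does not obviously give $\land$-invariant tensors on one side (your $v(b)$ depends on the midpoint variables $w_k$), whereas the paper's construction does, and this is used downstream in Corollary~\ref{thm:hard-oapt-restricted}.
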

\begin{proof}
Let $P$ be a branching program of length $T$ and width $W$ on $n$ boolean inputs $x = (x_1, \dots, x_{n})$.
First, we follow the proof for the \PSPACE-completeness of \textsf{TQBF} \cite{stockmeyer1973word} to construct a quantified boolean formula $\phi(x)$, which holds true iff the branching program $P$ accepts $x$.
Then, we construct two sets $A', B'$ of $N$ tensors such that there is a pair $(a, b) \in A \times B$ satisfying $\phi(a, b)$ is true iff there is a pair $(a', b') \in A' \times B'$ with $\aprod(a', b') = 0$.

\paragraph{Construction of Quantified Boolean Formula. } We assume that $n, T - 1, W$ are powers of two without loss of generality. First we construct formulas $\psi_k(x, u, v, i)$ for all $0 \le k \le \log (T - 1), u, v \in [W], i \in [T]$ such that $\psi_k(x, u, v, i)$ holds true iff the node $(i + 2^k, v)$ is reachable from the node $(i, u)$ on the input $x = (x_1, \dots, x_{n})$.

The construction is by induction on $k$.
For $k = 0$, we split the $n$ input variables $x = (x_1, \dots, x_n)$ into two halves: $\xa = (x_1, \dots, x_{n/2})$ and $\xb = (x_{n/2+1}, \dots, x_n)$. We construct two formulas $\alpha(\xa, u, v, i)$ and $\beta(\xb, u, v, i)$. We construct the formula $\alpha$ to be true iff the variable $x_{f(i)}$ associated with the layer $L_i$ is in $\xa$, and there is an edge that goes from the node $(i, u)$ to the node $(i + 1, v)$ and is marked with the value of $x_{f(i)}$. We define the formula $\beta$ similarly for $\xb$. Then we construct $\psi_0(x, u, v, i) = \alpha(\xa, u,v,i) \lor \beta(\xb, u,v,i)$. It is easy to see that $\psi_0(x, u, v, i)$ holds true iff the node $(i + 1, v)$ is reachable from the node $(i, u)$ on the input $x = (x_1, \dots, x_{n})$.
For $k \ge 1$, we construct $\psi_{k}(x, u, v, i)$ as:
\[
(\exists m)(\forall u')(\forall v')(\forall j)[((u', v', j) = (u, m, 0) \lor (u', v', j) = (m, v, 1)) \Rightarrow \psi_{k-1}(x, u', v', i + j \cdot 2^{k-1})] 
\]
where $m, u', v' \in [W]$ and $j \in \{0, 1\}$. It is easy to see that the above formula is equivalent to
\[
(\exists m)[\psi_{k-1}(x, u, m, i) \land \psi_{k-1}(x, m, v, i + 2^{k-1})],
\]
thus it holds true iff the node $(i + 2^k, v)$ is reachable from the node $(i, u)$.

In the end, we construct the formula $\varphi(x) = \psi_{\log(T-1)}(x, 1, 1, 1)$, so $\varphi(x)$ holds true iff the branching program $P$ accepts the input $x = (x_1, \dots, x_{n})$ (Recall that $u_{\mathrm{start}} = (1, 1)$ and $u_{\mathrm{acc}} = (T, 1)$).

We split all the variables $m, u', v', j$ occurred in $\varphi(x)$ into $t$ boolean variables $z_1, \dots, z_t \in \{0, 1\}$ for some $t = O(\log W \log T)$. Without loss of generality we assume $t$ is even. Then we transform $\varphi(x)$ into the following equivalent formula $\phi$:
\[
\phi(x) = (\exists z_1)(\forall z_2)(\exists z_3) \cdots  (\forall z_t) ( f(z) \Rightarrow (g_1(\xa, z) \lor g_2(\xb,z)))
\]
where $f(z)$ is the logical conjunction of all the predicates $((a, b, j) = (u, m, 0) \lor (a, b, j) = (m, v, 1))$ in $\psi_{1}, \dots, \psi_{\log(T - 1)}$, and $g_1(\xa, z), g_2(\xb, z)$ are the innermost $\alpha(\xa, u, v, i), \beta(\xb, u, v, i)$. The quantifiers $\forall$ and $\exists$ appear alternatively.

\paragraph{Converting Quantified Boolean Formula into Tensors.} Let $d_1 = \cdots = d_t = 2$. Now we construct two sets of $N$ tensors $A', B' \subseteq \{0, 1\}^{d_1 \times \cdots \times d_t}$ to be our \textsf{OAPT} instance. For $1 \le k \le t$, we associate the $k$-th dimension of a tensor with the variable $z_k$ and associate each index $p \in [d_1] \times \cdots \times [d_t]$ with an assignment to $z_1, \dots, z_t$.
Note that strings in the set $A$ correspond to assigments to $\xa$, and strings in the set $B$ correspond to assigments to $\xb$. Thus every two strings $(a, b) \in A \times B$ along with an index $p$ specify an assignment to $x$ and $z$.

For each string $a \in A$, we construct a tensor $a' \in \{0, 1\}^{d_1 \times \cdots \times d_t}$ where for every index $p$, $a'_{p}$ is $0$ iff the formula $\lnot f(z) \lor g_1(\xa, z)$ is true with corresponding assignments to $\xa$ and $z$; for each string $b \in B$, we construct a tensor $b' \in \{0, 1\}^{d_1 \times \cdots \times d_t}$ where for every index $p$, $b'_{p}$ is $0$ iff the formula $g_2(\xa, z)$ is true with corresponding assignments to $\xb$ and $z$.

Note that $f(z) \Rightarrow (g_1(\xa, z) \lor g_2(\xb,z))$ is equivalent to $\lnot f(z) \lor g_1(\xa, z) \lor g_2(\xb,z)$, so we have
\[
a'_p \land b'_p = 0 \iff [f(z) \Rightarrow (g_1(\xa, z) \lor g_2(\xb,z))].
\]
Then it is easy to see that $\aprod(a', b') = 0$ iff $\phi(a, b)$ is true, and thus $P$ accepts a pair of $(a, b) \in A \times B$ iff $p_{\mathrm{alt}}(a', b') = 0$ for their corresponding $a', b'$.
Note that $d_1 d_2 \cdots d_t = 2^t = 2^{O(\log W \log T)}$, so each tensor has size $2^{O(\log W \log T)}$.
\end{proof}

Note that in the above construction, tensors in $B$ are all $\land$-invariant, so we have the following corollary for \OAPTrestri:
\begin{cor} \label{thm:hard-oapt-restricted}
There exists an $O(N \cdot 2^{O(\log W \log T)})$-time reduction from a \BPSATPAIR instance with a branching program of length $T$ and width $W$ and two sets of $N$ strings to an \OAPTrestri problem with two sets of $N$ tensors of size $2^{O(\log W \log T)}$.
\end{cor}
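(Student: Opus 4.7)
The plan is to observe that the reduction already constructed in the proof of Theorem~\ref{thm:hard-oapt} produces tensors in the set $B'$ that are $\land$-invariant on the nose, so the same reduction establishes the corollary without any modification. Only a bookkeeping argument about which bits of the tensor index a tensor can actually depend on is needed.

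Concretely, recall that the reduction builds a quantified boolean formula of the shape
\[
\phi(x) = (\exists z_1)(\forall z_2)(\exists z_3)\cdots(\forall z_t)\bigl(f(z) \Rightarrow (g_1(\xa,z) \lor g_2(\xb,z))\bigr),
\]
and sets $b'_p = 0$ iff $g_2(\xb,z)$ holds, where the assignment to $z = (z_1,\ldots,z_t)$ is read off from the tensor index $p = (i_1,\ldots,i_t)$. I would show that $g_2$ depends only on the \emph{universally} quantified variables, i.e., on the even-indexed $z_k$'s. Tracing through the inductive definition of $\psi_k$, the innermost $\beta(\xb, u, v, i)$ is called with arguments $u, v, i$ that come entirely from the $\forall u', \forall v', \forall j$ variables introduced at the successive levels (possibly combined with the constants from $\psi_{\log(T-1)}(x,1,1,1)$), while the existentially quantified ``midpoint'' variables $m$ appear only in the equality predicates collected inside $f(z)$ and are never passed down to $\alpha$ or $\beta$. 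Consequently $b'_p$ is a function only of the even-indexed coordinates of $p$, which is precisely $\land$-invariance.

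Plugging this observation into Theorem~\ref{thm:hard-oapt} yields the corollary with the same running time $O(N \cdot 2^{O(\log W \log T)})$ and the same tensor size $2^{O(\log W \log T)}$. The only mild point of care is the bit-splitting step: when each multi-valued variable $m, u', v', j$ at level $k$ is flattened into boolean variables $z_i$, one must order the bits so that every $\exists$-quantified bit lands on an odd position and every $\forall$-quantified bit lands on an even position, thereby preserving the alternation pattern $\exists\forall\exists\forall\cdots$ that matches the $\bigwedge \bigvee \bigwedge \bigvee \cdots$ alternation in $\aprod$. This is a purely mechanical rearrangement, and no new analysis or blow-up is incurred.
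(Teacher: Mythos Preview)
Your proposal is correct and follows the same approach as the paper: the paper's proof of this corollary is the single observation that the tensors $b'$ constructed in Theorem~\ref{thm:hard-oapt} are already $\land$-invariant, and you have simply unpacked why, namely that $g_2(\xb,z)$ depends only on the universally quantified $u',v',j$ bits, which land on the even-indexed coordinates. Your remark about the bit-splitting/padding needed to make the quantifier prefix strictly alternate is a valid clarification of something the paper leaves implicit.
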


For the other direction, by Theorem \ref{thm:polylogspace-sat}, it is sufficient to show that computing Alternating Product can be done in $\SPACE[O(\log n)] \subseteq \NSPACE[\polylog(n)]$.

\begin{lemma}\label{lam:aprod-polylog}
Given two tensors $a, b$ of size $n = 2^t$, their Alternating Product $\aprod(a, b)$ can be computed in $\SPACE[O(\log n)]$.
\end{lemma}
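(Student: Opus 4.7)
\begin{proofsketch}
The plan is to give a standard divide-and-conquer (depth-first) recursive evaluation of the formula defining $\aprod(a,b)$, and verify that it only uses $O(t)=O(\log n)$ workspace. Concretely, define a recursive procedure $\textsc{Eval}(k;\,i_1,\ldots,i_{k-1})$ that returns the value of the subexpression
\[
\bigvee/\bigwedge_{i_k \in \{0,1\}} \Bigl[\cdots \bigvee_{i_t \in \{0,1\}}(a_i \wedge b_i)\cdots\Bigr],
\]
where the outermost operator is $\wedge$ if $k$ is odd and $\vee$ if $k$ is even. The top-level call is $\textsc{Eval}(1;\,\varepsilon)$.

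The base case is $k=t+1$: having all indices $i_1,\ldots,i_t$ fixed, compute the flat position $i = \sum_{\ell=1}^{t} i_\ell 2^{t-\ell}$ on the work tape (this takes $O(t)$ bits), then read the bits $a_i$ and $b_i$ from the input and return $a_i \wedge b_i$. The recursive case is to initialize an accumulator (initially $1$ if we are computing $\wedge$, $0$ if $\vee$), then for each of the two values $i_k \in \{0,1\}$, make the recursive call $\textsc{Eval}(k+1;\,i_1,\ldots,i_k)$ and combine its result into the accumulator with $\wedge$ or $\vee$. When both iterations have been processed, return the accumulator.

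For the space analysis, observe that each stack frame stores only the level index $k$ (which takes $O(\log t)$ bits), the current loop value $i_k$ (one bit), and the one-bit accumulator, i.e.\ $O(\log t)$ bits per frame. Since the recursion depth is exactly $t$, the total workspace used by the recursion stack is $O(t \log t) = O(\log n \cdot \log\log n)$; a tighter accounting, replacing the per-frame level counter by a single global counter, brings this down to $O(t) = O(\log n)$ bits. Together with the $O(t)$ bits needed at the base case to address into the input, this gives the desired $O(\log n)$ bound.

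The only step that requires care is the space analysis, since a naive implementation might appear to store all of $(i_1,\ldots,i_t)$ redundantly at every level; the point to verify is that the indices are already implicit in the recursion stack (each frame holds only its own $i_k$), so the cumulative storage is linear in $t$, not quadratic. Everything else is routine.
\end{proofsketch}
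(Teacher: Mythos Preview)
Your proposal is correct and follows essentially the same approach as the paper: recursively evaluate the alternating $\wedge/\vee$ formula level by level, noting that the recursion depth is $t=\log n$ and each frame needs only $O(1)$ bits (with a single global depth counter). The paper's proof is in fact just a one-sentence version of what you wrote.
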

\begin{proof}
We compute the Alternating Product recursively according to the definition. There are $t$ levels of recursion in total. Since $t = \log n$, space $O(\log n)$ is enough for our algorithm.
\end{proof}

Combining Theorem \ref{thm:hard-oapt} and Lemma \ref{lam:aprod-polylog}, we can prove the equivalence between \BPSATPAIR and \OAPT.
\begin{theorem}\label{thm:OPAT-BPPAIR-equiv}
The \textsf{OAPT} problem on tensors of size $2^{(\log N)^{o(1)}}$ is equivalent to \textsf{BP-Satisfying-Pair} on branching program of size $2^{(\log N)^{o(1)}}$ under near-linear time reductions.
\end{theorem}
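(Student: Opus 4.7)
The plan is to combine the two directions already essentially in place: Theorem~\ref{thm:hard-oapt} for $\BPSATPAIR \to \OAPT$, and the $\SPACE[O(\log n)]$ algorithm for Alternating Product (Lemma~\ref{lam:aprod-polylog}) with the generic reduction of Theorem~\ref{thm:polylogspace-sat} for $\OAPT \to \BPSATPAIR$. The only work is to verify that all size parameters remain in the class $2^{(\log N)^{o(1)}}$ under the compositions and that the running times remain $N^{1+o(1)}$.

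For the direction $\BPSATPAIR \to \OAPT$, I instantiate Theorem~\ref{thm:hard-oapt} with a branching program of size $2^{(\log N)^{o(1)}}$. Then both $T$ and $W$ are at most $2^{(\log N)^{o(1)}}$, so $\log W \cdot \log T = (\log N)^{o(1)} \cdot (\log N)^{o(1)} = (\log N)^{o(1)}$. Consequently the reduction runs in time $N \cdot 2^{O(\log W \log T)} = N \cdot 2^{(\log N)^{o(1)}} = N^{1+o(1)}$, and it outputs an $\OAPT$ instance whose tensors have size $2^{O(\log W \log T)} = 2^{(\log N)^{o(1)}}$, which is exactly within the target class.

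For the reverse direction $\OAPT \to \BPSATPAIR$, I view $\OAPT$ as the $\mathcal{A}$-$\textsf{Satisfying-Pair}$ problem where $\mathcal{A}$ takes two binary strings encoding tensor halves of total size $n = 2^{(\log N)^{o(1)}}$ and decides whether $\aprod(a,b) = 0$. Lemma~\ref{lam:aprod-polylog} shows that this test lies in $\SPACE[O(\log n)] \subseteq \NSPACE[\polylog(n)]$. Applying Theorem~\ref{thm:polylogspace-sat} to this $\mathcal{A}$ yields a near-linear time reduction from $\OAPT$ (with tensors of size $2^{(\log N)^{o(1)}}$) to $\BPSATPAIR$ on a branching program of size $2^{(\log N)^{o(1)}}$, as required.

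There is no real obstacle; the only thing to watch is that the parameter budget composes correctly. Specifically, I must check that (i) the tensor size $2^{O(\log W \log T)}$ produced by Theorem~\ref{thm:hard-oapt} remains $2^{(\log N)^{o(1)}}$ when the input branching program is of that size (which uses the simple algebraic fact $((\log N)^{o(1)})^2 = (\log N)^{o(1)}$), and (ii) that the input length $n$ fed into Theorem~\ref{thm:polylogspace-sat} via the tensor encoding is itself $2^{(\log N)^{o(1)}}$, so $\polylog(n) = (\log N)^{o(1)}$, ensuring the resulting branching program also has size $2^{(\log N)^{o(1)}}$. Both are immediate, completing the equivalence.
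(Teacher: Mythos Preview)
Your proposal is correct and follows exactly the paper's approach: combine Theorem~\ref{thm:hard-oapt} for one direction with Lemma~\ref{lam:aprod-polylog} plus Theorem~\ref{thm:polylogspace-sat} for the other. The paper's proof is a one-line ``combining'' remark, and your parameter checks merely make explicit what the paper leaves implicit.
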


\subsection{A Communication Protocol for Branching Program}

Before we turn to show the equivalence between \BPSATPAIR and \MAXTT~/ \MINTT, we introduce the following \IP-protocol for branching program. Our reduction from \BPSATPAIR to \MAXTT (or \MINTT) directly follows by simulating the communication protocol using tropical algebra.

\begin{theorem} \label{thm:bp-ip}
Let $P$ be a branching program of length $T$ and width $W$ on $n$ boolean inputs $x_1, \dots, x_{n}$. Suppose Alice holds the input $x_1, \dots, x_{n/2}$ and Bob holds the input $x_{n/2+1}, \dots, x_{n}$. For every $\epsilon > 0$, there exists a computationally efficient \IP-protocol for checking whether $P$ accepts on $x_1, \dots, x_{n}$, in which:
\begin{enumerate}
\item Merlin and Alice exchange $O(\log^2 W \log^2 T \cdot (\log \log W + \log \log T + \log \epsilon^{-1}))$ bits;
\item Alice tosses $O(\log^2 W \log^2 T \cdot (\log \log W + \log \log T + \log \epsilon^{-1}))$ public coins;
\item Bob sends $O(\log \log W + \log \log T + \log \epsilon^{-1})$ bits to Alice;
\item Alice accepts or rejects in the end.
\end{enumerate}
If $P$ accepts on the input $x_1, \dots, x_{n}$, then Alice always accepts; otherwise, Alice rejects with probability at least $1 - \epsilon$.
\end{theorem}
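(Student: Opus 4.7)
The plan is to arithmetize the quantified Boolean formula $\phi(x)$ from the proof of Theorem~\ref{thm:hard-oapt} and run a Shamir-style interactive proof between Alice (serving as the verifier, who privately holds $\xa$) and Merlin, with Bob sending a tiny last-round message supplying the single field evaluation that Alice cannot compute on her own. This is exactly the ``low-space computation admits a succinct \IP\ communication protocol'' phenomenon of Aaronson--Wigderson, scaled to our setting.

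Recall that $\phi(x)=(\exists z_1)(\forall z_2)\cdots(\forall z_t)\bigl(f(z)\Rightarrow(g_1(\xa,z)\vee g_2(\xb,z))\bigr)$ with $t=O(\log W\log T)$. First I would fix a prime field $\mathbb{F}_q$ with $q=\Theta(t^2/\epsilon)$, so that $\log q=O(\log\log W+\log\log T+\log\epsilon^{-1})$, and pass to the standard multilinear arithmetization of $f, g_1, g_2$, replacing $\exists z_i$ by $\sum_{z_i\in\{0,1\}}$ and $\forall z_i$ by $\prod_{z_i\in\{0,1\}}$. To keep individual degrees at most $1$ in every variable, insert a linearization operator $L_{z_j}$ before each quantifier in the Shamir style; this yields an arithmetized expression with $O(t^2)$ operators. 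Alice and Merlin then process the operators one at a time: for each operator Merlin sends a univariate polynomial of constant degree, Alice checks consistency with the previous claim (by summing or multiplying the evaluations at $0$ and $1$ as appropriate), and then picks a uniformly random $r\in\mathbb{F}_q$ fixing that variable. This costs $O(\log q)$ bits of Merlin--Alice communication and $\log q$ public coins per operator.

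After all $O(t^2)$ operators have been processed the protocol reduces the original claim to a single evaluation $\widetilde R(\xa,\xb,r_1,\ldots,r_t)$ of the arithmetized innermost predicate at the random point $(r_1,\ldots,r_t)\in\mathbb{F}_q^t$. The critical structural point is that, by the construction in Theorem~\ref{thm:hard-oapt}, this innermost predicate is $\neg f(z)\vee g_1(\xa,z)\vee g_2(\xb,z)$, whose arithmetization decomposes into $O(1)$ terms of the form $F_i(\xa,r)\cdot G_i(\xb,r)$. Alice evaluates $\widetilde f(r)$ and every $F_i(\xa,r)$ locally in $\poly(t)$ time; she only needs the $O(1)$ field elements $G_i(\xb,r)$ from Bob, which accounts for the $O(\log q)=O(\log\log W+\log\log T+\log\epsilon^{-1})$ Bob-to-Alice bits.

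For the totals: Alice and Merlin exchange $O(t^2\log q)=O(\log^2 W\log^2 T\cdot(\log\log W+\log\log T+\log\epsilon^{-1}))$ bits and an equal number of public coins, matching the claimed bound. Completeness is perfect because an honest Merlin always supplies the true polynomials. Soundness follows by the usual Schwartz--Zippel argument: each of the $O(t^2)$ sum, product, and linearization rounds has cheating probability $O(\mathrm{deg}/q)=O(1/q)$, and a union bound over rounds together with our choice of $q$ gives overall soundness error at most $\epsilon$. The main obstacle I expect is the degree accounting around the $L_{z_j}$ operators, and confirming that the decomposition $\widetilde R=\sum_i F_i(\xa,r)\,G_i(\xb,r)$ really has only $O(1)$ terms so that Bob's message is $O(\log q)$ bits rather than $O(t\log q)$; this reduction relies critically on the fact that in the proof of Theorem~\ref{thm:hard-oapt} the predicates $g_1$ and $g_2$ were arranged to depend only on $\xa$ and only on $\xb$, respectively.
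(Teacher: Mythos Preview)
Your proposal is correct and essentially identical to the paper's proof: both arithmetize the QBF/alternating-product structure from Theorem~\ref{thm:hard-oapt} over a field of size $\Theta(t^2/\epsilon)$, run the Shamir protocol with degree-reduction operators across the $O(t^2)$ operators, and finish with Bob sending a single field element. The only presentational difference is that the paper first packages Alice's and Bob's contributions into the tensors $a',b'$ of Theorem~\ref{thm:hard-oapt} and works with their multilinear extensions $\alpha,\beta$, so the innermost expression is simply $\alpha(z)\cdot\beta(z)$ and Bob's one-element message $\beta(v_1,\dots,v_t)$ is immediate; you reach the same conclusion via the $F_i\cdot G_i$ decomposition you correctly flag at the end (indeed Bob need only send $\widetilde{g_2}(\xb,r)$).
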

\begin{proof}
Let $\bar{a}$ be the assignment to the input variables held by Alice, and $\bar{b}$ be the assignment to the input variables held by Bob. Recall the construction of the tensors in the proof for Theorem \ref{thm:hard-oapt}. First Alice constructs a tensor $a = G(\bar{a})$, and Bob constructs a tensor $b = H(\bar{b})$. Each tensor here is of shape $d_1 \times d_2  \times \cdots d_t = 2 \times 2 \times \cdots \times 2$ for $t = O(\log T \log W)$. Then the problem reduces to check whether the Alternating Product $\aprod(a, b)$ equals $0$. Now we show that there exists a communication protocol for checking $\aprod(a, b) = 0$, using the idea for proving $\IP = \PSPACE$ \cite{LundFKN92,Shamir92}.

\paragraph{Arithmetization.} First we arithmetize the computation of Alternating Product. Let $q \ge 1$ be a parameter to be specified. Construct a finite field $\F_{2^q}$. Then Alice finds a multilinear extension $\alpha$ over $\F_{2^q}$ for her tensor $a$, i.e., Alice finds a function $\alpha(z_1, \dots, z_t)$ such that $\alpha$ is linear in each of its variables, and $\alpha(z_1, \dots, z_t) = a_i$ for all $i \in [d_1] \times \cdots \times [d_t]$ and $i_k = z_k + 1$ ($1 \le k \le t$). Bob finds a multilinear extension $\beta$ for his tensor $b$ similarly.
Recall that the definition of Alternating Product. $p_{\mathrm{a}}(a, b)$ can be rewritten as
\[
\aprod(a, b) = \bigwedge_{z_1 \in \{0, 1\}} \left[\bigvee_{z_2 \in \{0, 1\}}\left( \bigwedge_{z_3 \in \{0, 1\}} \left[ \cdots \bigvee_{z_t \in \{0, 1\}} \left(\alpha(z_1, \dots, z_t) \cdot \beta(z_1, \dots, z_t) \right) \cdots \right] \right)\right].
\]

To arithmetize $\bigwedge_{z_k \in \{0, 1\}}$ and $\bigvee_{z_k \in \{0, 1\}}$, we define three kinds of operators acting on polynomials:
\begin{enumerate}
\item $\Pi_{z_m}$ operator, which arithmetizes the formula $\bigwedge_{z_m \in \{0, 1\}} F(z_1, \dots, z_{m-1}, z_m)$.
\[\Pi_{z_m} F(z_1, \dots, z_m) = F(z_1, \dots, z_{m-1}, 0) \cdot F(z_1, \dots, z_{m-1}, 1)\]
\item $\Sigma_{z_m}$ operator, which arithmetizes the formula $\bigvee_{z_m \in \{0, 1\}} F(z_1, \dots, z_{m-1}, z_m)$.
 \[\Sigma_{z_m} F(z_1, \dots, z_m) = 1-(1 - F(z_1, \dots, z_{m-1}, 0)) \cdot(1 - F(z_1, \dots, z_{m-1}, 1)).\]
\item $\mathcal{R}_{z_i}$ operator, which is used for the degree reduction. When acting on a polynomial $F(z_1, \dots, z_m)$, it replaces $z_i^k$ for $k \ge 1$ by $z_i$ in all terms. In this way, any polynomial $F(z_1, \dots, z_m)$ can be converted into a multilinear one preserving the values at every $(z_1, \dots, z_m) \in \{0, 1\}^m$. $\mathcal{R}_{z_i}$ operator can be written as
\[
\mathcal{R}_{z_i} F(z_1, \dots, z_m) = F(z_1, \dots, z_{i-1}, 0, z_{i+1}, \dots z_m) + z_i \cdot F(z_1, \dots, z_{i-1}, 1, z_{i+1}, \dots z_m).
\]
\end{enumerate}

Then it is easy to see that
\[
\aprod(a, b) = \Pi_{z_1} \Sigma_{z_2} \Pi_{z_3} \cdots \Sigma_{z_t} (\alpha \cdot \beta).
\]
Note that in the computation of Alternating Product, we only use the function value at Boolean inputs, thus we can insert $i$ operators $R_{z_1} R_{z_2} \cdots R_{z_i}$ right after each $\pi_{z_i}$ or $\Sigma_{z_i}$ without changing the final result:
\[
\aprod(a, b) = \Pi_{z_1} R_{z_1} \Sigma_{z_2} R_{z_1} R_{z_2} \Pi_{z_3} R_{z_1} R_{z_2} R_{z_3} \cdots \Sigma_{z_t} R_{z_1} \cdots R_{z_t} (\alpha \cdot \beta).
\]
In total we use only $M = O(t^2) \le O(\log^2 T \log^2 W)$ operators.

\paragraph{The Protocol.} We introduce our \IP-protocol in an inductive way.
Suppose that we have an \IP-protocol for some polynomial $F(z_1, \dots, z_m)$, in which for any given $(v_1, \dots, v_m) \in \F^m_{2^q}$ and $u = F(v_1, \dots, v_m)$, Merlin can convince Alice and Bob that $F(v_1, \dots, v_m) = u$ with perfect completeness and soundness error $\epsilon_0$. We show that for $G(z_1, \dots, z_{m'}) = \mathcal{O}_{z_i} F(z_1, \dots, z_m)$ and given $v_1, \dots, v_{m'}$ and $u'$ ($\mathcal{O}_{z_i} \in \{ \Sigma_{z_i}, \Pi_{z_i}, \mathcal{R}_{z_i}\}$, $m' = m$ when $\mathcal{O}_{z_i} = \mathcal{R}_{z_i}$ and $m' = m - 1$ otherwise), Merlin can convince Alice and Bob that $G(v_1, \dots, v_{m'}) = u'$ with perfect completeness and soundness error $\epsilon_0 + O(2^{-q})$:
\begin{itemize}
\item First Merlin sends the coefficients of the polynomial $F(v_1, \dots, v_{i-1}, z_i, v_{i+1}, \dots, v_m)$ to Alice (note that it is a univariate polynomial of $z_i$);
\item Alice calculates the value of $G(v_1, \dots, v_{m'})$ using the information sent by Merlin (assuming Merlin is honest), and reject if $G(v_1, \dots, v_{m'}) \neq u'$;
\item Alice randomly draws an element $r \in \F_{2^q}$. Let $v_{i} = r$ (reset $v_i = r$ if $v_i$ already exists);
\item Alice checks if $F(v_1, \dots, v_{i-1}, r, v_{i+1}, \dots, v_m) = u$ via the \IP-protocol for $F$.
\end{itemize}
It is easy to see that the above protocol has perfect completeness. For soundness, notice that $F$ and $G$ are always of $O(1)$ degree because our use of degree reduction operators, thus Alice can find $G(v_1, \dots, v_{m'}) \neq u'$ with probability $O(2^{-q})$ if Merlin lies. By the union bound, the soundness error of the \IP-protocol for $G$ is $\epsilon_0 + O(2^{-q})$.

Our \IP-protocol starts by checking $\aprod(a, b) = 0$. Following the inductive process above, there are $M$ rounds of communication between Merlin and Alice. And after the last round, $\aprod(a, b) = 0$ reduces to check if $\alpha(v_1, \dots, v_t) \cdot \beta(v_1, \dots, v_t) = u$ for given $(v_1, \dots, v_t) \in \F_{2^q}^t, u \in \F_{2^q}$.
Note that all the values of $v_1, \dots, v_t$ can be inferred by the results of public coins Alice tossed. Thus the \IP-protocol for $\alpha \cdot \beta$ is as follows: Bob learns the results of public coins and obtains $v_1, \dots, v_t$. Then Bob sends the value of $\beta(v_1, \dots, v_t)$ to Alice. Finally, Alice accepts iff $\alpha(v_1, \dots, v_t) \cdot \beta(v_1, \dots, v_t) = u$.

By induction, we can show that the whole \IP-protocol has perfect completeness and soundness error $O(M \cdot 2^{-q})$. Setting $2^q = c \cdot M \cdot \epsilon^{-1}$ for large enough constant $c$, we can achieve the soundness error $\epsilon$. And in this case we have
$$
q = \log M + \log \epsilon^{-1} + \log c = O(\log \log W + \log \log T + \log \epsilon^{-1}).
$$

It can be easily seen that Alice tosses 
$$
O(Mq) = O(\log^2 T \log^2 W (\log \log T + \log \log W + \log \epsilon^{-1}))
$$ 
public coins and Bob sends 
$$
O(q) = O(\log \log T + \log \log W + \log \epsilon^{-1})
$$ 
bits to Alice in our communication protocol. 

In each of the $M$ rounds, Merlin sends $O(1)$ elements in $\F_{2^q}$ since $F$ is of at most constant degree.
Thus Merlin sends at most 
$$
O(Mq) = O(\log^2 T \log^2 W (\log \log T + \log \log W + \log \epsilon^{-1}))
$$ bits to Alice.
\end{proof}

\subsection{Tropical Tensors}

Following from \cite{AbboudR18}, we can show a reduction from \BPSATPAIR to \GAPMAXTT{$\epsilon$} based on our \IP-protocol for branching program.

\begin{theorem} \label{thm:hard-gap-tt}

There is a reduction from \BPSATPAIR on branching program of length $T$ and width $W$ and two sets of $N$ strings to \GAPMAXTT{$\epsilon$} on two sets of $N$ tensors of size 
$$
D = 2^{O(\log^2 W \log^2 T (\log \log W + \log \log T + \log \epsilon^{-1}))},
$$
and the reduction runs in $O(N \poly(D))$. Here $\epsilon$ is a threshold value that can depend on $N$.

\end{theorem}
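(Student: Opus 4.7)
The plan is to invoke the \IP-communication protocol for branching programs supplied by Theorem~\ref{thm:bp-ip} and embed it coordinate-wise into the Tropical Similarity operator, following the template sketched in the introduction and attributed to~\cite{AbboudR18}. Fix soundness parameter $\epsilon$. For the given branching program $P$ and sets $A, B$, instantiate that protocol so that Alice's input half is an element of $A$ and Bob's input half is an element of $B$; the protocol guarantees perfect completeness and soundness $\epsilon$ while using only $d = O\bigl(\log^2 W \log^2 T \cdot (\log\log W + \log\log T + \log \epsilon^{-1})\bigr)$ total transcript bits.

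First, I would re-interpret the transcript as a sequence of coordinates $z_1, z_2, \ldots, z_t$ living in $\{0,1\}$, where the odd-indexed coordinates record Alice's public coin tosses and the even-indexed coordinates record Merlin's messages, with the final coordinate(s) reserved for Bob's short reply. This gives tensor dimension $D = 2^t = 2^{O(d)}$, matching the stated bound. Then, for each $a \in A$, construct $u(a) \in \{0,1\}^{D}$ whose entry at index $(z_1,\ldots,z_t)$ is the indicator that Alice, holding $a$, accepts on the transcript $(z_1,\ldots,z_t)$. For each $b \in B$, construct $v(b) \in \{0,1\}^{D}$ whose entry at that index is the indicator that Bob, holding $b$, would in fact emit the final block of $z$ given the preceding partial transcript. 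Each tensor is computable in $\poly(D)$ time per string, so the whole reduction runs in $O(N \cdot \poly(D))$.

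Next, I would verify the key identity
\[
s\bigl(u(a), v(b)\bigr) \;=\; \max_{\text{Merlin's strategy}} \Pr\bigl[\text{Alice accepts on input } (a,b)\bigr].
\]
This follows coordinate by coordinate from Definition~\ref{def:max-min-tt}: the $\max$ layers over Merlin's coordinates express Merlin choosing his next message optimally, the $\max$ layer over Bob's coordinate is collapsed onto a single value by the indicator inside $v(b)$ (so the $\max$ effectively ``reads off'' Bob's actual reply), and the $\E$ layers average over Alice's public coins. Combining this identity with Theorem~\ref{thm:bp-ip} gives $s(u(a),v(b)) = 1$ if $P$ accepts $(a,b)$ (completeness) and $s(u(a),v(b)) \le \epsilon$ otherwise (soundness). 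Taking the $\max$ over $(a,b) \in A \times B$ on both sides yields a \GAPMAXTT{\epsilon} instance whose answer distinguishes the \BPSATPAIR\ yes/no cases, completing the reduction.

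The main obstacle I expect is bookkeeping the dimension alternation cleanly. Theorem~\ref{thm:bp-ip} delivers a multi-round protocol whose rounds do not automatically align with the strict $\E, \max, \E, \max, \ldots, \max$ alternation demanded by $s(\cdot,\cdot)$: a logical Merlin round may emit many bits, a logical Alice round may toss many coins, and Bob's final message arrives only once at the end. Resolving this requires collapsing each run of ``$\max$'' coordinates into a single $\max$ over a product alphabet (and similarly for $\E$), padding with trivial singleton coordinates wherever the protocol does not supply a message, and checking that the padding is absorbed by one tensor being invariant in the extra coordinates --- precisely the restricted structure of Definition~\ref{def:restriced-max-min-tt}. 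Once the indexing convention is pinned down, the completeness and soundness claims follow directly from Theorem~\ref{thm:bp-ip}.
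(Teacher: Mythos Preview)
Your proposal is correct and takes essentially the same approach as the paper: invoke the \IP-protocol of Theorem~\ref{thm:bp-ip}, bit-serialize the transcript into alternating coordinates so that Merlin's bits sit under $\max$ and Alice's public coins under $\E$, encode Alice's accept predicate in one tensor and Bob's consistency predicate in the other, and read off the optimal acceptance probability as the Tropical Similarity. The paper handles the alternation bookkeeping exactly as you anticipate, by interleaving dummy Merlin bits with Bob's message so every Bob bit lands on a $\max$ coordinate; your padding idea is the same device.
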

\begin{proof}
For convenience, let 
$$
K = \log^2 W \log^2 T (\log \log W + \log \log T + \log \epsilon^{-1}).
$$
By Theorem \ref{thm:bp-ip}, there is an \textsf{IP}-protocol using $O(K)$ bits for determining whether a branching program accepts when Alice knows the first half and Bob knows the second half, with soundness error $\epsilon$. We can easily modify the communication protocol such that
\begin{itemize}
\item Alice and Merlin interact for $m = O(K)$ rounds, in each round Merlin sends one bit to Alice and Alice tosses one public coin;
\item After the interaction between Alice and Merlin, Bob sends $\ell = O(K)$ bits to Alice, and after Bob sending each bit Merlin sends a dummy bit to Alice;
\item Alice accepts or rejects in the end.
\end{itemize}

Let $t = 2\ell + 2m$ and $d_1 = \cdots = d_t = 2$. Now we construct two sets of $N = 2^{n/2}$ tensors $A, B \in \{0, 1\}^{d_1 \times \cdots \times d_t}$ as our \GAPMAXTT{$\epsilon$} instance. For every $0 \le k < m$, we associate the $(t - 2k)$-th dimension with the result of the public coin Alice tosses at the $(k + 1)$-th round and associate the $(t - 2k - 1)$-th dimension with the bit Merlin sends to Alice at the $(k+1)$-th round. For every $0 \le k < \ell$, we associate the $(2\ell - 2k)$-th dimension of a tensor with the $(k+1)$-th bit sent by Bob and associate the $(2\ell - 2k - 1)$-th dimension with the bit Merlin sent to Alice right after Bob sending the $(k+1)$-th bit to Alice. In this way, every index $p \in [d_1] \times \cdots \times [d_t]$ of a tensor can be seen as a communication transcript.

Let $A, B \subseteq \{0, 1\}^{n/2}$ be the two sets in the \textsf{BP-Satisfying-Pair} instance. For each assignment $a \in A$ to the first half variables $x_1, \dots, x_{n/2}$, we construct a tensor $G(a) \in \{0, 1\}^{d_1 \times \cdots \times d_t}$ where for every index $p$, $G(a)_{p}$ is $1$ iff Alice accepts after seeing the communication transcript $p$ when she holds the assignment $a$ for $x_1, \dots, x_{n/2}$. For each assignment $b \in B$ to the second half variables $x_{n/2+1}, \dots, x_{n}$, we construct a tensor $H(b) \in \{0, 1\}^{d_1 \times \cdots \times d_t}$ where for every index $p$, $H(b)_{p}$ is $1$ iff the bits sent by Bob in the communication transcript $p$ matches what Bob sends when he holds the assignment $b$ for $x_{n/2+1}, \dots, x_n$ and learning the results of Alice's public coins in $p$.

Let $D = 2^t = 2^{O(K)}$
be the size of each tensor. It is easy to see that when Alice holds $a$ and Bob holds $b$, the maximum probability (over all Merlin's actions) that Alice accepts in our communication protocol equals the Tropical Similarity $s(G(a), H(b))$.
\end{proof}

By negating the branching program $P$, we can also show a similar reduction from \BPSATPAIR to \GAPMINTT{$\epsilon$}:
\begin{theorem} \label{thm:hard-gap-min-tt}
There is a reduction from \BPSATPAIR on branching program of length $T$ and width $W$ and two sets of $N$ strings to \GAPMINTT{$\epsilon$} on two sets of $N$ tensors of size 
$$
D = 2^{O(\log^2 W \log^2 T (\log \log W + \log \log T + \log \epsilon^{-1}))},
$$ 
and the reduction runs in $O(N \poly(D))$. Here $\epsilon$ is a threshold value that can depend on $N$.
\end{theorem}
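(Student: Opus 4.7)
The plan is to run the pipeline of Theorem~\ref{thm:hard-gap-tt} but on an \IP-protocol that verifies the \emph{complement} of the branching program's computation. Two equivalent routes accomplish this with no change in parameters. (i) The sum-check protocol of Theorem~\ref{thm:bp-ip} reduces checking ``$P$ accepts $(\bar a,\bar b)$'' to checking $\aprod(a',b')=0$ via the arithmetized identity $\aprod(a',b')=\Pi_{z_1}R_{z_1}\Sigma_{z_2}\cdots(\alpha\cdot\beta)$; exactly the same protocol works verbatim if Alice targets the value $1$ instead of $0$. Since $\aprod(a',b')\in\{0,1\}$ and equals $0$ iff $P$ accepts, this immediately yields an \IP-protocol with perfect completeness when $P$ \emph{rejects} $(\bar a,\bar b)$ and soundness error $\epsilon$ when $P$ accepts, with identical round complexity, public-coin count, and Bob-to-Alice communication. (ii) Alternatively, using $\NL=\co\NL$, one can build a complement nondeterministic BP $\bar P$ of size $\poly(W,T)$ and apply Theorem~\ref{thm:bp-ip} directly to $\bar P$; since $\log \bar W,\log \bar T=O(\log W+\log T)$, the parameter bound is preserved.

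With such a ``complement'' \IP-protocol in hand, I would apply the tensor encoding of Theorem~\ref{thm:hard-gap-tt} verbatim: for each $\bar a\in A$, construct $G(\bar a)$ whose $p$-th entry indicates whether Alice accepts after seeing transcript $p$; for each $\bar b\in B$, construct $H(\bar b)$ indicating consistency with Bob's messages. Since $s(G(\bar a),H(\bar b))$ equals $\max_{\mathrm{Merlin}}\E[\text{Alice accepts}]$ in the complement protocol, the completeness/soundness properties translate into $s(G(\bar a),H(\bar b))=1$ whenever $P$ rejects $(\bar a,\bar b)$, and $s(G(\bar a),H(\bar b))<\epsilon$ whenever $P$ accepts $(\bar a,\bar b)$. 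This is exactly the structure of \GAPMINTT{$\epsilon$}: a Yes-instance of \BPSATPAIR (some pair accepted by $P$) produces a pair with $s<\epsilon$ (completeness), while a No-instance (all pairs rejected) yields $s=1$ for every pair (soundness). The tensor size $D=2^{O(\log^2 W\log^2 T(\log\log W+\log\log T+\log\epsilon^{-1}))}$ and running time $O(N\poly(D))$ are inherited unchanged from Theorem~\ref{thm:hard-gap-tt}.

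The argument faces no serious obstacle beyond the dualization step. The key observation is simply that the sum-check protocol is parameterized by the target value in $\mathbb{F}_{2^q}$, so its soundness analysis is symmetric under replacing $0$ by $1$; swapping the roles of ``$P$ accepts'' and ``$P$ rejects'' in the underlying \IP-protocol swaps completeness and soundness in the tropical similarity values produced by the encoding, which is precisely the difference between \GAPMAXTT{$\epsilon$} and \GAPMINTT{$\epsilon$}.
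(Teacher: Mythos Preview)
Your proposal is correct and takes essentially the same approach as the paper: the paper's proof precisely follows your route (i), noting that ``the only thing we need to change is to check whether the Alternating Product is $1$ rather than $0$,'' and then reuses the tensor encoding of Theorem~\ref{thm:hard-gap-tt} verbatim to obtain the \GAPMINTT{$\epsilon$} instance with the stated parameters. Your route (ii) via $\NL=\co\NL$ is a valid alternative the paper does not spell out, but the core idea and all parameters match.
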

\begin{proof}
The \textsf{IP}-protocol in Theorem \ref{thm:bp-ip} can be easily adapted to check the branching program $P$ does \textit{not} accept, i.e., if $P$ rejects on the input $x_1, \dots, x_{n}$, then Alice always accepts; otherwise, Alice rejects with probability $1 - \epsilon$. To do this, the only thing we need to change is to check whether the Alternating Product is $1$ rather than $0$.
Then, using the same reduction as in Theorem \ref{thm:hard-gap-tt}, we can obtain two sets $A' = \{G(a) \mid a \in A\}, B' = \{H(b) \mid b \in B\}$ of $N$ tensors of size 
$$
D = 2^{O(\log^2 W \log^2 T (\log \log W + \log \log T + \log\epsilon^{-1}))}
$$ 
such that for every pair of strings $(a, b) \in A \times B$, the maximum probability (over all Merlin's actions) that Alice accepts in the \textsf{IP}-protocol equals the Tropical Similarity score of the corresponding tensor gadgets $G(a)$ and $H(b)$. Thus, to decide whether there exists a pair of $(a, b) \in A \times B$ that can make $P$ accept, it is sufficient to distinguish from the case that there is a pair of $G(a), H(b)$ such that the Tropical Similarity score $s(G(a), H(b)) \le \epsilon$ and the case that every pair of $G(a), H(b)$ has perfect Tropical Similarity score $s(G(a), H(b)) = 1$.
\end{proof}

Note that in the above constructions in Theorem~\ref{thm:hard-gap-tt} and \ref{thm:hard-gap-min-tt}, tensors in $B$ are all $\max$-invariant, so we have the following corollary for \OAPTrestri:
\begin{cor} \label{thm:hard-maxmintt-restricted}
There is a reduction from \BPSATPAIR on branching program of length $T$ and width $W$ and two sets of $N$ strings to a \GAPMAXTTrestri{$\epsilon$}~/ \GAPMINTTrestri{$\epsilon$} on two sets of $N$ tensors of size 
$$
D = 2^{O(\log^2 W \log^2 T (\log \log W + \log \log T + \log \epsilon^{-1}))},
$$
and the reduction runs in $O(N \poly(D))$. Here $\epsilon$ is a threshold value that can depend on $N$.
\end{cor}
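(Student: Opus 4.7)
The plan is to take the reduction from the proof of Theorem~\ref{thm:hard-gap-tt} verbatim and verify that every tensor $H(b)$ placed in the set $B$ already belongs to the restricted class. Concretely, I would re-examine the definition of $H(b)_p$: it equals $1$ precisely when the bits that Bob transmits in the transcript $p$ agree with Bob's deterministic next-message function on input $b$ given the public coins revealed so far. Consequently, $H(b)_p$ depends on $p$ only through the coordinates recording (a)~Bob's own messages and (b)~Alice's public coin outcomes, and is independent of every coordinate that records a message from Merlin, whether one of the genuine messages during the first $m$ rounds or one of the dummy bits inserted after each Bob message.

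The next step is purely bookkeeping: read off the dimension assignment in the proof of Theorem~\ref{thm:hard-gap-tt} and check that the Merlin coordinates -- the dimensions $t-2k-1$ for $0 \le k < m$ together with the dimensions $2\ell - 2k - 1$ for $0 \le k < \ell$ -- form precisely the index set $\{i_2, i_4, \ldots, i_t\}$ appearing in Definition~\ref{def:restriced-max-min-tt}. Once this correspondence is established, the max-invariance of $H(b)$ is immediate from its dependency structure, which yields the reduction to \GAPMAXTTrestri{$\epsilon$} with the same size and time bounds as Theorem~\ref{thm:hard-gap-tt}. The same argument then transfers verbatim to Theorem~\ref{thm:hard-gap-min-tt}, whose tensor construction is identical to that of Theorem~\ref{thm:hard-gap-tt} -- only the underlying branching program is complemented in order to reduce to \GAPMINTT{$\epsilon$} -- so $H(b)$ is max-invariant in that reduction as well.

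The main obstacle I anticipate is entirely a matter of lining up the protocol-step labeling with the parity convention of Definition~\ref{def:restriced-max-min-tt}. If a parity mismatch arises in any of these assignments, it can be absorbed by a cosmetic relabeling that interchanges the two dimensions used for each round, and this has no effect on completeness, soundness, or the size of the reduction. No new interactive protocol, no additional arithmetization, and no extra size blow-up is required, so once the dependence of $H(b)$ on the transcript coordinates has been spelled out explicitly, both parts of the corollary follow immediately with the same parameters as Theorems~\ref{thm:hard-gap-tt} and~\ref{thm:hard-gap-min-tt}.
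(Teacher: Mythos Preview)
Your proposal is correct and follows exactly the paper's approach: the paper's entire justification is the one-line observation that the tensors $H(b)$ produced in Theorems~\ref{thm:hard-gap-tt} and~\ref{thm:hard-gap-min-tt} are already $\max$-invariant. You are simply spelling this out in more detail than the paper does.

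One remark on your bookkeeping step: with the dimension assignment literally stated in the proof of Theorem~\ref{thm:hard-gap-tt}, the Merlin coordinates $t-2k-1$ and $2\ell-2k-1$ are \emph{odd}, so they land in $\{i_1,i_3,\ldots,i_{t-1}\}$ rather than $\{i_2,i_4,\ldots,i_t\}$. In other words, the parity mismatch you anticipated does in fact occur, and your ``cosmetic relabeling'' escape hatch is exactly what is needed (and is also what is needed for the Tropical Similarity to compute the acceptance probability correctly, since $\max$ should range over Merlin's moves). This does not affect the validity of your argument; it just means the swap is not hypothetical.
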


Theorem \ref{thm:hard-gap-tt} and \ref{thm:hard-gap-min-tt} also imply reductions from \BPSATPAIR to exact \MAXTT or exact \MINTT. For the other direction of reduction, we have the following lemma:

\begin{lemma}\label{lam:tropical-polylog}
Given two tensors $a, b$ of size $n = 2^t$, their Tropical Similarity $s(a, b)$ can be computed in $\SPACE[O(\log^2 n)]$.
\end{lemma}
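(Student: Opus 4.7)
The plan is to evaluate $s(a,b)$ directly by recursion on its definition, and simply bound the bit-precision needed at each level of the recursion. Concretely, I would implement a recursive procedure $s_k(i_1,\dots,i_k)$ that returns the sub-expression obtained after fixing the first $k$ indices; the top-level call $s_0()$ returns $s(a,b)$. At each recursive call the algorithm either takes a $\max$ or an average over $d_{k+1}=2$ recursive calls, which is done by first recursing on $i_{k+1}=0$, storing the returned value, then recursing on $i_{k+1}=1$, and finally combining the two results by a $\max$ or by $(x+y)/2$. Standard stack-based recursion on a space-bounded Turing machine uses $O(\log n)$ bits per frame to record the current index $i_{k+1}$ and the saved value from the first sub-call.

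The main thing to check is the numerical precision of intermediate values. Since only the $\E$ operator introduces denominators, and each $\E$ divides by $2$, every value that ever appears is a dyadic rational with denominator at most $2^{t/2}$; the numerator is a nonnegative integer at most $2^{t/2}$ (it is the sum of at most that many $0/1$ terms weighted by $1/2^{t/2}$, scaled back up). Therefore each intermediate value can be written exactly in $O(t)=O(\log n)$ bits. The recursion has depth exactly $t$, and each of the $t$ stack frames stores a current index ($O(1)$ bits) together with at most one saved intermediate value ($O(\log n)$ bits), for a total of $O(t)\cdot O(\log n) = O(\log^2 n)$ bits of workspace. Reading $a_i\cdot b_i$ at the base case needs only $O(\log n)$ bits to index into the inputs.

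The step I expect to require the most care is the precision bound above: one has to verify that $\max$ and average of two dyadic rationals of the stated form stay within the same family (a dyadic rational with denominator at most $2^{t/2}$ and numerator at most $2^{t/2}$), which follows by a simple induction on the remaining recursion depth, noting that averaging only increases the denominator by a factor of $2$ and there are $t/2$ averaging levels in total. With this invariant in hand, the $O(\log^2 n)$ space bound follows immediately from the depth-times-per-frame accounting, completing the proof.
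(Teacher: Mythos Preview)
Your proposal is correct and takes essentially the same approach as the paper's proof: recursively evaluate the Tropical Similarity according to its definition, observe that each of the $t$ stack frames needs $O(t)$ bits (because intermediate values are dyadic rationals with denominator at most $2^{t/2}$), and conclude $O(t^2)=O(\log^2 n)$ total space. The paper's proof is a three-sentence sketch of exactly this argument; you have simply fleshed out the precision invariant more carefully.
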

\begin{proof}
We compute the Tropcial Similarity recursively according to the definition. Note that there are $t$ levels of recursion in total, and $O(t)$-bit precision is sufficient in this computation. Thus this algorithm uses only $O(t^2) \le O(\log^2 n)$ space.
\end{proof}

Combining Theorem \ref{thm:hard-gap-tt}, Theorem \ref{thm:hard-gap-min-tt} and Lemma \ref{lam:tropical-polylog}, we can establish the equivalence between \BPSATPAIR and the exact or approximate Tropical Similarity problems:
\begin{theorem} \label{thm:tt-equiv}
For the case that the maximum element size $D = 2^{(\log N)^{o(1)}}$, there are near-linear time reductions between all pairs of the following problems:
\begin{itemize}
\item \BPSATPAIR; 
\item Exact \MAXTT or \MINTT; 
\item \GAPMAXTT{$2^{-(\log D)^{1 - \Omega(1)}}$} or \GAPMINTT{$2^{-(\log D)^{1 - \Omega(1)}}$};
\item $2^{(\log D)^{1 - \Omega(1)}}$-approximate \MAXTT or \MINTT; 
\end{itemize}
\end{theorem}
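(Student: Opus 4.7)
The plan is to establish the equivalence by closing the cycle
\[
\text{Exact } \MAXTT/\MINTT \; \Longrightarrow\; \text{Approx. } \MAXTT/\MINTT \;\Longrightarrow\; \GAPMAXTT/\GAPMINTT \;\Longrightarrow\; \BPSATPAIR \;\Longrightarrow\; \text{Exact } \MAXTT/\MINTT,
\]
where $X \Longrightarrow Y$ denotes a near-linear time reduction from $Y$ to $X$. Two of these arrows are essentially immediate. The reduction from the gap problem to the approximation problem is trivial: a $2^{(\log D)^{1-\Omega(1)}}$-approximation algorithm applied to a \GAPMAXTT instance with threshold $\epsilon = 2^{-(\log D)^{1-\Omega(1)}}$ returns a value at least $2^{-(\log D)^{1-\Omega(1)}}$ in the completeness case and strictly less in the soundness case, and symmetrically for \MINTT. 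Likewise, an exact algorithm solves the approximation problem.

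For the arrow from $\BPSATPAIR$ to the gap problems I would directly invoke Theorem~\ref{thm:hard-gap-tt} (for \MAXTT) and Theorem~\ref{thm:hard-gap-min-tt} (for \MINTT). Starting from a branching program of size $s := \log W + \log T = (\log N)^{o(1)}$, the reduction produces tensors of size
\[
\log D' \;=\; O\!\left(\log^2 W\,\log^2 T \cdot\bigl(\log\log(WT)+\log\epsilon^{-1}\bigr)\right) \;=\; O\!\left(s^{4}\cdot(\log s+\log\epsilon^{-1})\right).
\]
I want to select $\epsilon$ so that simultaneously $\epsilon \le 2^{-(\log D')^{1-\Omega(1)}}$ and $\log D' = (\log N)^{o(1)}$. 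Picking $\log\epsilon^{-1} = s^{c}$ for a sufficiently large constant $c$ gives $\log D' = O(s^{c+4})$; the ratio $(\log\epsilon^{-1})/\log D' = \Omega(s^{-4})$ is at least $(\log D')^{-\gamma}$ for an appropriate $\gamma<1$, so $\log\epsilon^{-1} \ge (\log D')^{1-\gamma}$ as required, while $\log D' = s^{O(1)} = (\log N)^{o(1)}$. Hence the resulting instance fits the regime of this theorem.

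For the closing arrow, from exact \MAXTT/\MINTT back to $\BPSATPAIR$, I would apply Theorem~\ref{thm:polylogspace-min-max}. By Lemma~\ref{lam:tropical-polylog}, the Tropical Similarity $s(a,b)$ of two tensors of size $n=2^{t}$ is computable in $\SPACE[O(\log^2 n)] \subseteq \NSPACE[\polylog(n)]$. Because $s(a,b)$ is an alternating sequence of $\E$'s and $\max$'s over $\{0,1\}$-entries, it is a rational whose denominator divides $2^{t/2}$; multiplying through by $2^{t/2}$ turns it into an integer in $[0,2^{t/2}]\subseteq[0,\poly(n)]$, and deciding whether this rescaled value exceeds a given threshold is still in $\NSPACE[\polylog(n)]$ by closure of nondeterministic space under complement. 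Theorem~\ref{thm:polylogspace-min-max} then yields the desired near-linear time reduction from exact \MAXTT (and \MINTT by symmetry) to $\BPSATPAIR$ on branching programs of size $2^{(\log N)^{o(1)}}$, completing the cycle.

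The only genuinely delicate step I anticipate is the parameter balancing in the middle paragraph: the blow-up formula in Theorem~\ref{thm:hard-gap-tt} couples $D'$ and $\epsilon$, so one must verify that a single choice of $\log\epsilon^{-1}$ makes $\epsilon$ small enough to match a gap threshold of the form $2^{-(\log D')^{1-\Omega(1)}}$ while keeping $\log D'$ in the $(\log N)^{o(1)}$ regime. The quasi-polynomial blow-up $D' = 2^{\polylog(s)}$ inherent to the $\IP=\PSPACE$ sum-check machinery is precisely the reason we must restrict the equivalence class to element sizes $D = 2^{(\log N)^{o(1)}}$; a simple algebraic manipulation as above confirms the parameters are consistent.
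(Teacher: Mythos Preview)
Your proposal is correct and follows essentially the same approach as the paper: the same four-arrow cycle (using Theorems~\ref{thm:hard-gap-tt}/\ref{thm:hard-gap-min-tt} for $\BPSATPAIR \le \text{Gap}$, the trivial reductions for $\text{Gap} \le \text{Approx} \le \text{Exact}$, and Lemma~\ref{lam:tropical-polylog} with Theorem~\ref{thm:polylogspace-min-max} for $\text{Exact} \le \BPSATPAIR$), with the same parameter-balancing computation to ensure $\epsilon = 2^{-(\log D')^{1-\Omega(1)}}$ while $\log D' = (\log N)^{o(1)}$. Your explicit remark that $2^{t/2} s(a,b)$ is an integer in $[0,\poly(n)]$, needed to invoke Theorem~\ref{thm:polylogspace-min-max}, is a detail the paper leaves implicit.
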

\begin{proof}

Let $c > 0$ be a constant. For any instance of \BPSATPAIR on BP of size $S$, by Theorem \ref{thm:hard-gap-tt} with parameter $\epsilon = 2^{-(\log S)^{c}}$, we can reduce it to an \GAPMAXTT{$\epsilon$} instance on tensors of size $D = 2^{\Theta(\log^4 S \, \log \epsilon^{-1})} = 2^{\Theta(\log^{4+c} S)}$ (adding dummy dimensions to tensors if necessary). 

Thus, we have $\epsilon = 2^{-\Theta(\log D)^{c/(4+c)}}$. For any $0 < \delta \le 1$, by choosing an appropriate value for $c$, we can obtain a reduction from \BPSATPAIR on BP of size $S = 2^{(\log N)^{o(1)}}$ to \GAPMAXTT{$2^{-(\log D)^{1 - \delta}}$}.

\GAPMAXTT{$2^{-(\log D)^{1 - \delta}}$} can be trivially reduced to $2^{(\log D)^{1 - \delta}}$-approximate \MAXTT, and $2^{(\log D)^{1 - \delta}}$-approximate \MAXTT can be trivially reduced to \MAXTT. By Lemma \ref{lam:tropical-polylog} and Theorem \ref{thm:polylogspace-min-max}, \MAXTT can be reduced to \BPSATPAIR. 

Therefore under near-linear time reductions \BPSATPAIR, exact \MAXTT, \GAPMAXTT{$2^{-(\log D)^{1 - \Omega(1)}}$} and $2^{(\log D)^{1 - \Omega(1)}}$-approximate \MAXTT are all equivalent. Using a similar argument, we can also prove the same result for \MINTT.
\end{proof}
\section{Longest Common Subsequence}\label{sec:LCS}

In this section, we show near-liear time reductions between \BPSATPAIR and (exact or approximate) \MAXLCSPAIR~/ \MINLCSPAIR.

Our reduction from \BPSATPAIR to \MAXLCSPAIR~/~\MINLCSPAIR relies on the following \LCS gadgets in \cite{AbboudR18}.

\begin{theorem}[\cite{AbboudR18}] \label{thm:lcs-gadget}
Let $t$ be an even number and $d_1 = \cdots = d_t = 2$. Given two sets of $N$ tensors $A, B$ in $\{0, 1\}^{d_1 \times \cdots \times d_t}$, there is a deterministic algorithm running in $O(N \poly(2^t))$ time which outputs two sets $A', B'$ of $N$ strings of length $2^t$ over an $O(2^t)$-size alphabet $\Sigma$, such that each $a \in A$ corresponds to a string $a' \in A'$, each $b \in B$ corresponds to a string $b \in B'$, and $\textsf{LCS}(a', b') = 2^{t/2} \cdot s(a, b)$ for every $a \in A, b \in B$, where $s(a, b)$ stands for the Tropical Similarity score of two tensors $a$ and $b$.
\end{theorem}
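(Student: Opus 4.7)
\begin{proofsketch}
The plan is to build the string gadgets $S(u),T(v)$ by recursion on $t$, exactly following the alternating $\E/\max$ structure of the Tropical Similarity expression. At each level we will need two kinds of combining gadgets: a \emph{sum gadget} (to implement $\E_{i_k}$, up to a factor of $2$) and a \emph{max gadget} (to implement $\max_{i_k}$). Since each of the $t/2$ $\E$-levels will be simulated by summation rather than averaging, the final LCS will be inflated by a factor of $2^{t/2}$, yielding the claimed identity $\LCS(a',b')=2^{t/2}\cdot s(a,b)$.

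The base case $t=0$ is trivial: each ``tensor'' is a single bit $u,v\in\{0,1\}$, and we take $a'=b'=\sigma$ if $u=v=1$ and otherwise set one of them to be the empty string (or use a symbol of $a'$ that does not occur in $b'$), so $\LCS(a',b')=uv$. For the inductive step, suppose we have already constructed, for each of the two halves $u^{(0)},u^{(1)}$ of the tensor along its first coordinate, strings $A_i(u^{(i)}),B_i(v^{(i)})$ of equal length $L$ over some alphabet $\Sigma_L$, satisfying the LCS identity at level $t-1$. We then combine them using fresh alphabet symbols $x,y\notin\Sigma_L$:
\begin{itemize}
\item \textbf{Sum gadget} (for $\E$): roughly, $A(u)=A_0(u^{(0)})\,A_1(u^{(1)})$ and $B(v)=B_0(v^{(0)})\,B_1(v^{(1)})$, which, if the two halves cannot ``cross-match'' because of disjoint alphabets or appropriately inserted separator blocks, gives $\LCS(A,B)=\LCS(A_0,B_0)+\LCS(A_1,B_1)$.
\item \textbf{Max gadget} (for $\max$): a selector construction of the form $A(u)=P\,A_0(u^{(0)})\,Q\,A_1(u^{(1)})\,P$ and $B(v)=Q\,B_1(v^{(1)})\,P\,B_0(v^{(0)})\,Q$, padded with fixed ``baseline'' strings so that any optimal LCS must commit to aligning only one of the two pairs $(A_i,B_i)$; this makes $\LCS(A,B)=\max(\LCS(A_0,B_0),\LCS(A_1,B_1))+C$ for a constant padding $C$ that can be absorbed into both halves to keep equality with $\max(\ell_0,\ell_1)$.
\end{itemize}
Each fresh separator symbol is used only at the current recursion level, so the total alphabet size is $O(t)\cdot O(1)=O(t)$, which is trivially $O(2^t)$; the string length recursion is $L(t)=2L(t-1)+O(1)$, solving to $L(t)=\Theta(2^t)$, matching the stated bound (after possibly rescaling the base case by a constant).

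The main obstacle is \emph{soundness} of the two combining gadgets: for the sum gadget one must argue that no optimal LCS alignment can pair characters of $A_0$ with characters of $B_1$ (or vice versa), and for the max gadget that no alignment can extract contributions from both halves simultaneously. Both are standard in the LCS-hardness literature and are enforced by the interlocking separator pattern above together with a ``padding'' trick that equalises the contribution of every alignment that picks a single half, so the only freedom is which half to align. The bookkeeping is to verify that the constants introduced by the max and sum gadgets at each level combine cleanly; this is where one chooses the padding so that at every level the induction hypothesis $\LCS(A,B)=2^{(\text{levels of }\E \text{ used})}\cdot s(u,v)$ is preserved exactly, and one finally extracts the claimed factor $2^{t/2}$ after $t/2$ sum-gadget compositions. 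Time complexity is immediate: constructing each gadget takes time linear in its size, so $O(\poly(2^t))$ per tensor and $O(N\poly(2^t))$ overall.
\end{proofsketch}
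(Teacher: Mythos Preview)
The paper does not actually prove this theorem; it is quoted from \cite{AbboudR18} and used as a black box in Section~\ref{sec:LCS}. Your sketch follows the same recursive gadget strategy as the cited source: build $\Sigma$-gadgets for the $\E$ levels and selector gadgets for the $\max$ levels, composing them along the $t$ coordinates, and read off the factor $2^{t/2}$ from the $t/2$ summation levels. So at the level of overall approach you are aligned with the intended construction.

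One point to tighten: your alphabet-size accounting is not quite right as stated. You write that only $O(1)$ fresh separator symbols are introduced per level and conclude $|\Sigma|=O(t)$, but in the sum gadget you also invoke ``disjoint alphabets'' for the two halves as the mechanism preventing cross-matching. If you actually make the two halves' alphabets disjoint at every $\E$-level, the base alphabet doubles at each of the $t/2$ such levels, giving $2^{t/2}$ symbols, not $O(t)$. This is still comfortably within the $O(2^t)$ bound asserted in the theorem, so the statement is safe, but the $O(t)$ claim should be dropped unless you commit to the separator-block variant and verify that the needed separator lengths do not blow up $L(t)$ beyond $\Theta(2^t)$. Similarly, the max gadget you wrote down (``$P\,A_0\,Q\,A_1\,P$'' vs.\ ``$Q\,B_1\,P\,B_0\,Q$'') is the right shape, but the soundness argument---that no alignment can harvest from both halves---requires the padding blocks $P,Q$ to have length at least $L$, and you should check that the additive constant $C$ this introduces can indeed be cancelled exactly at every level rather than merely absorbed up to $O(1)$; otherwise you get $\LCS=2^{t/2}s(a,b)+O(2^{t/2})$ instead of the exact identity claimed.
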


Theorem \ref{thm:lcs-gadget} directly leads to the following two theorems:
\begin{theorem} \label{thm:hard-gap-max-lcs-pair}
There exists an $O(N\poly(D))$-time reduction from an \GAPMAXTT{$\epsilon(D)$} instance with two sets of $N$ tensors of size $D$ to an instance of the following approximation variant of \MAXLCSPAIR: Given two sets of $N$ strings of length $D$, distinguish between the following:
\begin{itemize}
\item \textbf{Completeness:} There exists a pair of $a, b$ such that $\textsf{LCS}(a, b) = \sqrt{D}$;
\item \textbf{Soundness:} For every pair $a, b$, $\textsf{LCS}(a, b) < \sqrt{D} \cdot \epsilon(D)$.
\end{itemize}
Thus \GAPMAXTT{$\epsilon(D)$} can be $O(N \poly(D))$-time reduced to $\epsilon(D)^{-1}$-approximate \MAXLCSPAIR.
\end{theorem}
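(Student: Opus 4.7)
The plan is to apply Theorem~\ref{thm:lcs-gadget} essentially as a black box. Given the \GAPMAXTT{$\epsilon(D)$} instance $A, B$ of $N$ tensors of shape $d_1 \times \cdots \times d_t$ with $d_i = 2$ and size $D = 2^t$, I would feed $A$ and $B$ directly into the \LCS\ gadget of Theorem~\ref{thm:lcs-gadget}. This produces, in $O(N \poly(D))$ time, two sets $A', B'$ of $N$ strings of length $2^t = D$ over an alphabet of size $O(D)$, with the defining property $\LCS(a', b') = 2^{t/2} \cdot s(a, b) = \sqrt{D} \cdot s(a,b)$ for each $(a,b) \in A \times B$.

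The next step is to translate the completeness and soundness guarantees of \GAPMAXTT{$\epsilon(D)$} through this identity. In the completeness case, some pair $(a,b)$ achieves $s(a,b) = 1$, which yields the corresponding pair $(a', b') \in A' \times B'$ with $\LCS(a',b') = \sqrt{D}$. In the soundness case, every pair satisfies $s(a,b) < \epsilon(D)$, so every $(a',b') \in A' \times B'$ satisfies $\LCS(a', b') < \sqrt{D} \cdot \epsilon(D)$. This is precisely the promise problem in the theorem statement, and the construction runs in $O(N \poly(D))$ time as claimed.

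Finally, the ``Thus'' clause is immediate: an $\epsilon(D)^{-1}$-approximation algorithm for \MAXLCSPAIR\ on the output instance must distinguish the maximum being $\sqrt{D}$ from the maximum being strictly less than $\sqrt{D} \cdot \epsilon(D)$, since these two values differ by exactly the claimed approximation factor. Invoking such an approximation after the gadget construction therefore decides the original \GAPMAXTT{$\epsilon(D)$} instance.

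I do not expect a real obstacle here, because the scaling factor $\sqrt{D}$ built into Theorem~\ref{thm:lcs-gadget} is perfectly aligned with the threshold $\sqrt{D}$ used in the target \MAXLCSPAIR\ promise. The only items worth verifying, both handed over by the gadget, are that the string length $2^t$ coincides with $D$ and that the reduction time is $O(N\poly(D))$; beyond these bookkeeping checks the proof is a one-line invocation.
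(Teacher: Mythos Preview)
Your proposal is correct and matches the paper's own approach exactly: the paper simply states that Theorem~\ref{thm:lcs-gadget} ``directly leads to'' this theorem without further argument, and your write-up spells out precisely the one-line translation (completeness $s(a,b)=1 \Rightarrow \LCS=\sqrt{D}$, soundness $s(a,b)<\epsilon(D) \Rightarrow \LCS<\sqrt{D}\cdot\epsilon(D)$) that the paper leaves implicit.
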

\begin{theorem} \label{thm:hard-gap-min-lcs-pair}
There exists an $O(N\poly(D))$-time reduction from an \GAPMINTT{$\epsilon(D)$} instance with two sets of $N$ tensors of size $D$ to an instance of the following approximation variant of \MINLCSPAIR: Given two sets of $N$ strings of length $D$, distinguish between the following:
\begin{itemize}
\item \textbf{Completeness:} There exists a pair of $a, b$ such that $\textsf{LCS}(a, b) < \sqrt{D} \cdot \epsilon(D)$;
\item \textbf{Soundness:} For every pair $a, b$, $\textsf{LCS}(a, b) = \sqrt{D}$.
\end{itemize}
Thus \GAPMINTT{$\epsilon(D)$} can be $O(N \poly(D))$-time reduced to $\epsilon(D)^{-1}$-approximate \MINLCSPAIR.
\end{theorem}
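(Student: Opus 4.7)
The plan is to invoke Theorem~\ref{thm:lcs-gadget} essentially as a black box, in the same way that Theorem~\ref{thm:hard-gap-max-lcs-pair} is proved. Given an input to $\GAPMINTT{\epsilon(D)}$ consisting of two sets $A, B$ of $N$ tensors of size $D = 2^t$, I feed $A$ and $B$ into the gadget construction to obtain in $O(N\poly(D))$ time two sets $A', B'$ of $N$ strings of length $D$ over an alphabet of size $O(D)$, such that for each corresponding pair the identity $\LCS(a', b') = \sqrt{D}\cdot s(a, b)$ holds.

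Next I would check that the two cases of the $\GAPMINTT{\epsilon(D)}$ promise translate into precisely the two cases of the $\MINLCSPAIR$ variant stated in the theorem. In the completeness case of $\GAPMINTT{\epsilon(D)}$ there is a pair $(a, b)$ with $s(a, b) < \epsilon(D)$, so the identity gives $\LCS(a', b') < \sqrt{D}\cdot \epsilon(D)$, i.e.\ the completeness case of the target $\MINLCSPAIR$ problem. In the soundness case $s(a, b) = 1$ for every pair, so $\LCS(a', b') = \sqrt{D}$ for every pair, matching the soundness side.

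Finally, to deduce the reduction to $\epsilon(D)^{-1}$-approximate $\MINLCSPAIR$, note that any algorithm that returns a value within a multiplicative factor $\epsilon(D)^{-1}$ of the true minimum separates the two cases: in the completeness case its output is at most $\epsilon(D)^{-1}\cdot \sqrt{D}\cdot \epsilon(D) = \sqrt{D}$ and strictly less than $\sqrt{D}$ after a trivial tightening (equivalently one pads by a factor to open a gap), while in the soundness case the output is at least $\sqrt{D}$. This is exactly parallel to how Theorem~\ref{thm:hard-gap-max-lcs-pair} derives its approximation corollary from the gap version.

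There is essentially no obstacle here beyond invoking Theorem~\ref{thm:lcs-gadget} correctly; the only mild subtlety is making sure that the completeness/soundness inequalities are oriented the right way for \emph{minimization} (rather than maximization), and that the $\epsilon(D)^{-1}$-approximation gap is genuinely preserved. The symmetry between the $\max$- and $\min$-versions of Tropical Similarity under the LCS gadget makes both orientations work, so the proof reduces to writing out these two lines of arithmetic explicitly.
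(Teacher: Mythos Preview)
Your proposal is correct and matches the paper's approach exactly: the paper states that both Theorem~\ref{thm:hard-gap-max-lcs-pair} and Theorem~\ref{thm:hard-gap-min-lcs-pair} follow directly from Theorem~\ref{thm:lcs-gadget}, without giving any further proof. Your translation of the completeness/soundness cases via the identity $\LCS(a',b') = \sqrt{D}\cdot s(a,b)$ is the intended argument; the only (harmless) redundancy is the ``trivial tightening'' remark, since the strict inequality $\LCS(a',b') < \sqrt{D}$ in the completeness case already follows from the strict inequality $s(a,b) < \epsilon(D)$.
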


According to Theorem \ref{thm:polylogspace-min-max}, in order to reduce \MAXLCSPAIR or \MINLCSPAIR problem to \BPSATPAIR, we only need to show that given a number $k$, deciding $\textsf{LCS}(a, b) \ge k$ for two strings $a, b$ of length $n$ is in $\textsf{NSPACE}[\polylog (n)]$:
\begin{lemma}[Folklore] \label{thm:lcs-nl}
Given two strings $a, b$ of length $n$ and a number $k$, deciding whether $\textsf{LCS}(a, b) \ge k$ is in $\NL$.
\end{lemma}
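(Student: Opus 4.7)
The plan is to exhibit an explicit nondeterministic logspace algorithm that guesses a common subsequence of length at least $k$ on the fly, using only a constant number of pointers and a counter.

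More concretely, I would maintain three counters: a pointer $i$ into $a$, a pointer $j$ into $b$, and a length counter $c$, all initialized to $0$. In a loop, the machine nondeterministically guesses the next pair of indices $(i', j')$ with $i < i' \le |a|$ and $j < j' \le |b|$, reads the characters $a[i']$ and $b[j']$ directly from the input tape, and rejects the current branch if they differ. Otherwise it updates $i \gets i'$, $j \gets j'$, $c \gets c + 1$, and continues. The machine accepts as soon as $c \ge k$. Each of $i, j, c, i', j'$ fits in $O(\log n)$ bits, so the total workspace is $O(\log n)$.

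For correctness, completeness follows because if $\textsf{LCS}(a,b) \ge k$ then there exist increasing index sequences $i_1 < \cdots < i_k$ in $a$ and $j_1 < \cdots < j_k$ in $b$ with $a[i_t] = b[j_t]$, and the machine can guess exactly these pairs in order. Soundness holds because any accepting computation exhibits strictly increasing index sequences of length $c \ge k$ in both strings whose corresponding characters agree, which is by definition a common subsequence of length at least $k$.

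There is no real obstacle here; the only thing worth noting is that the nondeterministic choices must produce \emph{strictly} increasing pointer values, which we enforce syntactically by requiring $i' > i$ and $j' > j$ in the guess, and that the guessed indices are written on the work tape only long enough to be verified and copied into $i$ and $j$, preserving the $O(\log n)$ space bound. This yields $\textsf{LCS}(a,b) \ge k \in \NL$, which by Savitch's theorem is also in $\textsf{SPACE}[\log^2 n] \subseteq \textsf{NSPACE}[\polylog n]$, as needed for the subsequent application of Theorem~\ref{thm:polylogspace-min-max}.
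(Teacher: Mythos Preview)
Your proof is correct and is essentially the same as the paper's: both guess, stage by stage, a pair of strictly increasing positions in $a$ and $b$ with matching characters, maintaining only a constant number of $O(\log n)$-bit pointers and a counter. Your write-up is in fact a bit more explicit about the strict-increase constraint and the space accounting, but the underlying algorithm is identical.
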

\begin{proof}
The algorithm consists of $k$ stages. Let $c$ be a longest common subsequence of $a$ and $b$. In the $i$-th stage, we nondeterministically guess which positions of $a$ and $b$ are matched with the $i$-th character of $c$, and then we check if the characters on the two positions of $a$ and $b$ are the same. Also, in each stage we store the positions being matched in the last stage, so that we can check if the matched positions in each string are increasing. Finally, we accept if the nondeterministic guesses pass all the checks. The total space for this nondeterministic algorithm is  $O(\log n)$ since we only need to maintain $O(1)$ positions in each stage.
\end{proof}

Combining Theorem \ref{thm:hard-gap-max-lcs-pair}, \ref{thm:hard-gap-min-lcs-pair} and Lemma \ref{thm:lcs-nl} together, we can prove the equivalence between \BPSATPAIR and exact or approximate LCS pair problems:
\begin{theorem} \label{thm:lcs-pair-equiv}
For the case that the maximum element size $D = 2^{(\log N)^{o(1)}}$, there are near-linear time reductions between all pairs of the following problems:
\begin{itemize}
\item \BPSATPAIR;
\item Exact \MAXLCSPAIR or \MINLCSPAIR;
\item $2^{(\log D)^{1 - \Omega(1)}}$-approximate \MAXLCSPAIR or \MINLCSPAIR.
\end{itemize}
\end{theorem}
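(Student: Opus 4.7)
The plan is to close a cycle of near-linear time reductions
\[
\BPSATPAIR \;\longrightarrow\; 2^{(\log D)^{1-\Omega(1)}}\text{-approx }\MAXLCSPAIR \;\longrightarrow\; \text{exact }\MAXLCSPAIR \;\longrightarrow\; \BPSATPAIR,
\]
and an analogous cycle for the \MINLCSPAIR\ variants. Since any exact algorithm trivially solves the approximation version, the middle arrow is free, so the work is concentrated at the two ends.

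For the forward direction, I would compose the protocol-based reduction of Theorem~\ref{thm:hard-gap-tt} with the LCS gadget reduction of Theorem~\ref{thm:hard-gap-max-lcs-pair}. Starting from a \BPSATPAIR\ instance with branching program of size $S = 2^{(\log N)^{o(1)}}$, pick a soundness parameter $\epsilon = 2^{-(\log S)^c}$ for some constant $c > 0$ to be chosen. Theorem~\ref{thm:hard-gap-tt} produces a \GAPMAXTT{$\epsilon$} instance on tensors of size $D = 2^{\Theta(\log^{4+c} S)}$, and Theorem~\ref{thm:hard-gap-max-lcs-pair} then yields a $\MAXLCSPAIR$ instance on strings of length $D$ over an $O(D)$-size alphabet (which can be encoded in $O(\log D)$ bits, absorbed into the $D^{o(1)}$ slack), where distinguishing $\sqrt{D}$ from $\sqrt{D}\cdot\epsilon$ is the required task. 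Since $\epsilon = 2^{-\Theta((\log D)^{c/(4+c)})}$, for every $\delta > 0$ one can pick $c$ so that the gap ratio is at least $2^{(\log D)^{1-\delta}}$; meanwhile $D$ stays $2^{(\log N)^{o(1)}}$, as $S$ does. This gives the reduction from \BPSATPAIR\ to the claimed approximate \MAXLCSPAIR, in time $N\cdot \poly(D) = N^{1+o(1)}$.

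For the return direction, I invoke Lemma~\ref{thm:lcs-nl}, which places the decision problem ``$\LCS(a,b) \ge k$'' in \NL\ (hence in $\NSPACE[\polylog(n)]$) on strings of length $n$. The optimum LCS value lies in $[0, D]$, so Theorem~\ref{thm:polylogspace-min-max} applies: exact \MAXLCSPAIR\ on $N$ strings of length $D = 2^{(\log N)^{o(1)}}$ reduces in near-linear time to $D = 2^{(\log N)^{o(1)}}$ instances of \BPSATPAIR\ with branching programs of size $2^{(\log N)^{o(1)}}$, closing the cycle. For the \MINLCSPAIR\ variants, the exact same argument applies: use Theorem~\ref{thm:hard-gap-min-lcs-pair} via Theorem~\ref{thm:hard-gap-min-tt} in the forward direction, and use that $\textsf{coNL} = \NL$ (so $\LCS(a,b) \le k$ is also in $\NL$) together with Theorem~\ref{thm:polylogspace-min-max} for the return.

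The main technical obstacle, as in Theorem~\ref{thm:tt-equiv}, is parameter tracking. The IP-based protocol for branching programs blows up $D$ quasi-polynomially (through the $\log^2 W \log^2 T \cdot \log\epsilon^{-1}$ communication complexity), and the LCS gadget only preserves this size polynomially, so I must verify that the composition still keeps $D$ at $2^{(\log N)^{o(1)}}$ while producing the correct $2^{(\log D)^{1-\Omega(1)}}$ gap; the choice $\epsilon = 2^{-(\log S)^c}$ above is exactly calibrated to make the two sides of this trade-off meet. Beyond this bookkeeping, everything else follows directly by chaining the cited theorems.
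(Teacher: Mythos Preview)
Your proposal is correct and follows essentially the same route as the paper: the paper invokes Theorem~\ref{thm:tt-equiv} (which internally performs exactly your $\epsilon = 2^{-(\log S)^c}$ parameter calculation) to pass from \BPSATPAIR\ to \GAPMAXTT{$2^{-(\log D)^{1-\delta}}$}, then Theorem~\ref{thm:hard-gap-max-lcs-pair} to reach approximate \MAXLCSPAIR, the trivial step to exact \MAXLCSPAIR, and finally Lemma~\ref{thm:lcs-nl} with Theorem~\ref{thm:polylogspace-min-max} to return to \BPSATPAIR, with the \MINLCSPAIR\ case handled symmetrically. The only cosmetic difference is that you unfold the parameter bookkeeping of Theorem~\ref{thm:tt-equiv} inline rather than citing it as a black box.
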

\begin{proof}
By Theorem \ref{thm:tt-equiv}, the \BPSATPAIR problem on branching program of size $2^{(\log N)^{o(1)}}$ is equivalent to \GAPMAXTT{$2^{(\log D)^{1 - \delta}}$} under near-linear time reductions. By Theorem \ref{thm:hard-gap-max-lcs-pair}, \GAPMAXTT{$2^{(\log D)^{1 - \delta}}$} can be reduced to $2^{(\log D)^{1 - \delta}}$-approximate \MAXLCSPAIR.
$2^{(\log D)^{1 - \delta}}$-approximate \MAXLCSPAIR can be trivially reduced to \MAXLCSPAIR. By Lemma \ref{thm:lcs-nl} and Theorem \ref{thm:polylogspace-min-max}, \MAXLCSPAIR can further be reduced to \BPSATPAIR. Thus \BPSATPAIR, exact \MAXLCSPAIR and $2^{(\log D)^{1 - \delta}}$-approximate \MAXLCSPAIR are equivalent under near-linear time reductions for all $\delta > 0$.

Using a similar argument, we can also prove the same result for exact and approximate \MINLCSPAIR.
\end{proof}

\section{Regular Expression Membership Testing}\label{sec:RegularExp}

In this section, we study the hardness of regular expression problems.
First we prove that \BPSATPAIR, \REGEXPSTRPAIR and \CLOSESTREGEXPSTRPAIR are equivalent under near-linear time reductions, then we show the hardness for the Regular Expression Membership Testing problem.

For simplicity, we denote $\max_{x \in L(a), \lvert x \rvert = \lvert b \rvert} \hamsim(x, b)$ by $\maxsim(a, b)$ for any regular expression $a$ and string $b$. The following theorem gives a construction to implement the Tropical Similarity using $\maxsim$.

\begin{theorem} \label{thm:re-str-mem-gadget}
Let $t$ be an even number and $d_1 = \cdots = d_t = 2$. Given two sets of $N$ tensors $A, B \subseteq \{0, 1\}^{d_1 \times \cdots \times d_t}$ satisfying that all the tensors in $B$ are $\max$-invariant, there is a deterministic algorithm running in $O(N \poly(2^{t}))$ time which outputs a set $A'$ of $N$ regular expressions and a set $B'$ of $N$ strings. Here strings are of length $2^{t}$, regular expressions are of length $\poly(2^t)$, and both of them are over alphabet $\Sigma = \{0, 1, \bot\}$. Each $a \in A$ corresponds to a regular expression $a' \in A'$, each $b \in B$ corresponds to a string $b' \in B'$, and 
\[\maxsim(a', b') = s(a, b).\]
\end{theorem}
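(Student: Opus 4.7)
The plan is to construct the regex $a' = R(a)$ and the string $b' = S(b)$ by recursion on the tensor dimension $t$, exploiting the two natural correspondences: concatenation of equal-length parts averages their Hamming similarities (matching the $\E_{i_p}$ operator on odd dimensions), and regex alternation $\mid$ turns a choice over sub-regexes into a single $\maxsim$ query (matching the $\max_{i_p}$ operator on even dimensions). The assumption that every $b \in B$ is $\max$-invariant enters crucially at each even level $p$, where it says the two sub-tensors $b\lvert_{i_p=0}$ and $b\lvert_{i_p=1}$ are literally equal, so the plain string $b'$ really can output a single canonical sub-string at that level without having to ``choose.''

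For the base case $t = 0$ (a scalar pair $(a,b) \in \{0,1\}^2$) I would use a length-one gadget over $\{0,1,\bot\}$, taking $R_a = 0$ if $a = 1$ and $R_a = 1$ if $a = 0$, and $S_b = 0$ if $b = 1$ and $S_b = \bot$ if $b = 0$. A four-case check gives $\maxsim(R_a, S_b) = a \cdot b$; the asymmetric role of $\bot$ (on the string side only) is exactly what zeroes out the delicate $(a, b) = (0, 0)$ entry, which any symmetric two-symbol encoding fails to handle. This is the one spot where the third alphabet symbol is genuinely needed.

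For the inductive step at the outermost remaining dimension $p$: if $p$ is odd, set $R(a) = R(a\lvert_{i_p=0}) \cdot R(a\lvert_{i_p=1})$ and $S(b) = S(b\lvert_{i_p=0}) \cdot S(b\lvert_{i_p=1})$, so that the Hamming similarity of the two equal-length concatenations is exactly the average of the two sub-similarities, matching $\E_{i_p}$. If $p$ is even, set
\[
R(a) = \bigl(R(a\lvert_{i_p=0}) \cdot R(a\lvert_{i_p=0})\bigr) \; \mid \; \bigl(R(a\lvert_{i_p=1}) \cdot R(a\lvert_{i_p=1})\bigr),
\]
and $S(b) = S(b\lvert_{i_p=0}) \cdot S(b\lvert_{i_p=0})$ (the two factors are interchangeable by $\max$-invariance). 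The inner duplication keeps the two levels at the same length $2\ell_{p+1}$; on either branch of the alternation Alice independently picks both halves to be the optimal match for $S(b\lvert_{i_p=0})$, so the $\maxsim$ of that branch is exactly $\maxsim(R(a\lvert_{i_p=j}), S(b\lvert_{i_p=j}))$, and taking the outer $\mid$ converts this into $\max_j \maxsim(R(a\lvert_{i_p=j}), S(b\lvert_{i_p=j}))$, matching $\max_{i_p}$. Invoking the inductive hypothesis inside gives $\maxsim(R(a), S(b)) = s(a,b)$.

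For the complexity bookkeeping, each level doubles the gadget length, so after $t$ levels the string has length $2^t$ and the regex has size $\poly(2^t)$; writing every gadget down takes $O(N \cdot \poly(2^t))$ time. The one real subtlety is the base case: any attempt to encode the four $(a,b)$ possibilities using a symmetric two-symbol alphabet creates a spurious match on $(0,0)$, which is why the alphabet $\{0,1,\bot\}$ and the asymmetric encoding above are necessary. Everything above the base case is inductive bookkeeping around the semantics of concatenation and alternation under $\maxsim$.
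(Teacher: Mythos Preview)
Your proposal is correct and follows essentially the same recursive scheme as the paper: single-character bit gadgets at the leaves, concatenation for the $\E$ (odd) levels, and regex alternation for the $\max$ (even) levels, using the $\max$-invariance of $B$ to collapse the string side to a single canonical sub-string. Two cosmetic differences are worth flagging. First, your base-case encoding places $\bot$ on the string side while the paper puts it on the regex side (taking $G_i(a)\in\{\bot,1\}$ and $H_i(b)\in\{0,1\}$); both pass the four-case check. Second, at the $\max$ levels the paper simply sets $R(a)=R(a\lvert_{i_p=0})\mid R(a\lvert_{i_p=1})$ and $S(b)=S(b\lvert_{i_p=0})$ with no duplication, since $\maxsim(R_0\mid R_1,\,S_0)=\max_j \maxsim(R_j,S_0)=\max_j \maxsim(R_j,S_j)$ already holds. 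Your inner duplication is harmless---and, amusingly, makes the final string length exactly $2^t$ as the theorem statement asserts, whereas the paper's own construction yields length $2^{t/2}$---but the stated rationale (``keeps the two levels at the same length'') is not needed: nothing in the induction requires gadgets at different recursion depths to have equal length, only that each regex matches strings of the same length as its paired string.
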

\begin{proof}
For each $k$ and each prefix of index $\ik \in  [d_1] \times \cdots \times [d_k]$ , we construct corresponding gadget $a' = G_{\ik}(a)$ and $b' = H_{\ik}(b)$ for each $a \in A$ and $b \in B$ (when $k = 0$, $\ik$ can only be the empty prefix, and we simply use $G(a)$ and $H(b)$ for convenience) inductively, mimicking the evaluation of the Tropical Similarity. For this purpose, we need to construct the following three types of gadgets.

\paragraph{Bit Gadgets.} First we need bit gadgets to simulate the innermost coordinatewise product in the evaluation of Tropical Similarity. For each coordinate $i \in [d_1] \times \cdots \times [d_t]$, for every $a \in A$ and $b \in B$, we construct
\[
G_i(a) = \begin{cases} \bot    &\quad \text{if $a_i = 0$,} \\
					   1             &\quad \text{if $a_i = 1$,}
		 \end{cases}
\quad\text{and}\quad
H_i(b) = \begin{cases} 0 &\quad \text{if $b_i = 0$,} \\
                       1 &\quad \text{if $b_i = 1$.}
         \end{cases}
\]
It is easy to see that $a_i \cdot b_i = \maxsim(G_i(a), H_i(b))$.

Now we combine bit gadgets recursively according to the $\max$ and $\E$ operators in the evaluation for Tropical Similarity. Starting from $k = t - 1$, there are two cases to consider.

\paragraph{Expectation Gadgets.} The first case is when $\E$ operator is applied to $(k+1)$-th dimension. We construct the corresponding gadgets $G_{\ik}(a)$ and $G_{\ik}(b)$ for any $\ik \in [d_1] \times \cdots \times [d_k]$, $a \in A$ and $b \in B$ as follows:
\[
G_{\ik}(a) = G_{\ik, 0}(a) \circ G_{\ik, 1}(a) \quad \text{and} \quad 
H_{\ik}(b) = H_{\ik, 0}(b) \circ H_{\ik, 1}(b).
\]
where $\circ$ stands for concatenation as usual. 
It is easy to see that \[
\maxsim(G_{\ik}(a), H_{\ik}(b)) = \E_{j \in \{0, 1\}}\left[ \maxsim(G_{\ik, j}(a), H_{\ik, j}(b)) \right].
\]

\paragraph{Max Gadgets.} The second case is when $\max$ operator is applied to $(k+1)$-th dimension.  For $\ik \in [d_1] \times \cdots \times [d_k]$ and $a \in A$, $G_{\ik}(a)$ is constructed as follows:
\[
G_{\ik}(a) = \left[~G_{\ik,0}(a) ~\middle|~  G_{\ik,1}(a)~\right],
\]
and we construct $H_{\ik}(b)$ for $b \in B$ to be
\[
H_{\ik}(b) = H_{\ik,j}(b)
\]
for all $j \in \{0, 1\}$, which is well-defined since $b$ is $\max$-invariant. It is easy to see that \[
\maxsim(G_{\ik}(a), H_{\ik}(b)) = \max_{j \in \{0, 1\}}\left[ \maxsim(G_{\ik, j}(a), H_{\ik, j}(b)) \right].
\]
Finally, we can obtain tensor gadgets $G(a), H(b)$ for each $a \in A$ and $b \in B$.
\end{proof}

From Theorem \ref{thm:re-str-mem-gadget}, we have the following reduction:
\begin{cor}\label{cor:MAXTT-to-CRSP}
Let \GAPCLOSESTREGEXPSTRPAIR{$\epsilon$} be the approximation variant of \CLOSESTREGEXPSTRPAIR: Given a set of $N$ regular expressions of length $O(\poly(D))$ and a set of $N$ strings of length $D$, distinguish between the following:
\begin{itemize}
\item \textbf{Completeness:} There exists a pair of $a, b$ such that $\maxsim(a, b) = 1$ (i.e., $b \in L(a)$);
\item \textbf{Soundness:} For every pair $a, b$, $\maxsim(a, b) < \epsilon$.
\end{itemize}
There exists an $O(N\poly(D))$-time reduction from a \GAPMAXTTrestri{$\epsilon(D)$} instance on two sets $A, B$ of $N$ tensors of size $D$ to an instance of \GAPCLOSESTREGEXPSTRPAIR{$\epsilon(D)$} on a set of $N$ regular expressions of length $O(\poly(D))$ and a set of $N$ strings of length $D$.
\end{cor}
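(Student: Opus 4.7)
The plan is to apply Theorem~\ref{thm:re-str-mem-gadget} as a black box. Given an instance $(A, B)$ of \GAPMAXTTrestri{$\epsilon(D)$} with tensors of size $D = 2^{t}$, the restriction guarantees that one of the two sets\textemdash we take it to be $B$ without loss of generality\textemdash consists entirely of $\max$-invariant tensors, which is exactly the structural hypothesis needed to invoke the gadget construction. Running the algorithm from Theorem~\ref{thm:re-str-mem-gadget} produces in $O(N\poly(D))$ time a set $A' = \{G(a) : a \in A\}$ of $N$ regular expressions over $\{0,1,\bot\}$ of length $\poly(D)$ and a set $B' = \{H(b) : b \in B\}$ of $N$ strings of length $D$, with the key identity
\[
\maxsim(G(a), H(b)) \;=\; s(a, b) \qquad \text{for every } (a,b)\in A\times B.
\]

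With that identity in hand, verifying the reduction is essentially bookkeeping. In the completeness case of \GAPMAXTTrestri{$\epsilon(D)$} there exists $(a,b)$ with $s(a,b) = 1$, hence $\maxsim(G(a), H(b)) = 1$, which means some string in $L(G(a))$ of length $D$ coincides with $H(b)$ on every coordinate, i.e., $H(b) \in L(G(a))$\textemdash this is precisely the completeness condition of \GAPCLOSESTREGEXPSTRPAIR{$\epsilon(D)$}. In the soundness case every pair satisfies $s(a,b) < \epsilon(D)$, and therefore $\maxsim(G(a), H(b)) < \epsilon(D)$ for every pair in $A' \times B'$, which is the soundness condition of the target problem.

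The only step that requires any thought is confirming that the output of the gadget construction really fits the input format of \GAPCLOSESTREGEXPSTRPAIR{$\epsilon(D)$}: the alphabet is a fixed size ($|\Sigma|=3$) so it can be encoded in $O(1)$ bits without affecting the length parameters; the regular expression length $\poly(D)$ is within the allowed bound; and the string length is exactly $D$ as required. The running time $O(N\poly(D))$ comes directly from Theorem~\ref{thm:re-str-mem-gadget}. I do not anticipate any genuine obstacle\textemdash the substantive work (designing bit, expectation, and $\max$ gadgets and proving their correctness) was already done in the proof of Theorem~\ref{thm:re-str-mem-gadget}, and this corollary is simply the statement that the same construction, viewed through the completeness/soundness lens rather than as an exact identity between two sets of $N$ pairs, yields the claimed gap-preserving reduction.
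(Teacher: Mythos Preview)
Your proposal is correct and mirrors the paper's own treatment: the paper presents this corollary as an immediate consequence of Theorem~\ref{thm:re-str-mem-gadget}, and you have done exactly that\textemdash applying the gadget construction as a black box and observing that the identity $\maxsim(G(a),H(b)) = s(a,b)$ transfers both the completeness and soundness conditions verbatim. Your additional verification of the output format (alphabet size, regex and string lengths, running time) is more explicit than the paper bothers to be, but entirely in the same spirit.
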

This corollary also follows that there is a reduction from \GAPMAXTT{$\epsilon$} to \REGEXPSTRPAIR, which is enough to show that \REGEXPSTRPAIR is no easier than \BPSATPAIR. But actually it is possible to show a direct reduction from \OAPTrestri to \REGEXPSTRPAIR, without using the reduction from \OAPTrestri to \GAPMAXTTrestri{$\epsilon$}:
\begin{theorem} \label{thm:re-str-mem-gadget-oapt}
There exists an $O(N\poly(D))$-time reduction from a \OAPTrestri instance with two sets $A, B$ of $N$ tensors of size $D$ to an \REGEXPSTRPAIR instance with a set of $N$ regular expressions of length $O(\poly(D))$ and a set of $N$ strings of length $D$.
\end{theorem}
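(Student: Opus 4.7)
The plan is to construct, for each tensor $a \in A$, a regular expression $G(a)$ of length $\poly(D)$, and for each $\land$-invariant tensor $b \in B$, a fixed-length string $H(b)$ of length $D$, over the alphabet $\Sigma = \{0,1,\bot\}$, so that $H(b) \in L(G(a))$ if and only if $\aprod(a,b) = 0$. Expanding the definition of Alternating Product, the condition $\aprod(a,b)=0$ is equivalent to the quantified statement
\[
\exists i_1\, \forall i_2\, \exists i_3 \cdots \forall i_t:\; a_i \wedge b_i = 0,
\]
so the construction mirrors this quantifier alternation, using regular expression alternation $[\,\cdot \mid \cdot\,]$ to implement $\exists$ and concatenation to implement $\forall$. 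This is analogous in spirit to (though dual to) the construction used in Theorem~\ref{thm:re-str-mem-gadget}.

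At the innermost (bit) level I would set $G_i(a) := [\,0 \mid 1 \mid \bot\,]$ if $a_i = 0$ and $G_i(a) := \bot$ if $a_i = 1$; symmetrically, $H_i(b) := \bot$ if $b_i = 0$ and $H_i(b) := 0$ if $b_i = 1$. A four-case check confirms that $H_i(b) \in L(G_i(a))$ exactly when $a_i \wedge b_i = 0$. Then for a partial index $\ik$ of length $k < t$, if $k+1$ is even (so the quantifier over $i_{k+1}$ in the unfolded formula is $\forall$), set $G_\ik(a) := G_{\ik,0}(a) \circ G_{\ik,1}(a)$ and $H_\ik(b) := H_{\ik,0}(b) \circ H_{\ik,1}(b)$; if $k+1$ is odd (so the quantifier is $\exists$), set $G_\ik(a) := [\,G_{\ik,0}(a) \mid G_{\ik,1}(a)\,]$ and $H_\ik(b) := H_{\ik,0}(b)$.

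Correctness proceeds by induction on $t - k$. The structural invariant to maintain is that every $G_\ik(a)$ accepts only strings of length exactly $2^{t-k}$, which forces the unique canonical split when concatenating, so the $\forall$ step correctly realizes conjunction; the alternation case directly realizes disjunction over the branch $j \in \{0,1\}$. The delicate point, and the only place where the $\land$-invariance hypothesis on $B$ is used, is that $H_\ik(b)$ must be well-defined in the $\exists$ case: since $k+1$ is odd there, and $b$ does not depend on its odd-indexed coordinates, an easy induction shows $H_{\ik,0}(b) = H_{\ik,1}(b)$, so either choice yields the same string. Evaluating the recursion at $k=0$ gives the desired equivalence $H(b) \in L(G(a)) \iff \aprod(a,b)=0$.

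Finally, sizes: each bit gadget has constant size, and $|G_\ik(a)|$ at most doubles (up to an $O(1)$ additive overhead for the alternation symbols) at each recursion level, giving $|G(a)| = O(D)$ and $|H(b)| = D$, with total construction time $O(N \cdot \poly(D))$ as required. The main obstacle is just lining up the indexing conventions so that the odd-indexed coordinates of the tensors correspond exactly to the $\exists$-positions of the unfolded formula for $\aprod(a,b)=0$, guaranteeing that $\land$-invariance is invoked at precisely the right levels; once this bookkeeping is done, everything else is a straightforward induction.
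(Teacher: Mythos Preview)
Your proposal is correct and follows essentially the same approach as the paper: the paper also builds $G_\ik(a)$ and $H_\ik(b)$ recursively, using alternation at the $\land$-levels (odd $k+1$), concatenation at the $\lor$-levels (even $k+1$), and invoking $\land$-invariance of $b$ to make $H_\ik(b)$ well-defined at the alternation levels. The only differences are cosmetic---the paper uses the simpler bit gadgets $G_i(a)\in\{[0\mid 1],\,0\}$ and $H_i(b)\in\{0,1\}$ over a binary alphabet---and a small bookkeeping slip on your side: since $H$ doubles only at the concatenation (even) levels, $|H(b)|=2^{t/2}=\sqrt{D}$ rather than $D$, and correspondingly $L(G_\ik(a))$ consists of strings of length $2^{\lceil (t-k)/2\rceil}$ rather than $2^{t-k}$; this does not affect the fixed-length invariant you actually use, nor the theorem bounds.
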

\begin{proof}
We use nearly the same reduction as in Theorem \ref{thm:re-str-mem-gadget}. The bit gadgets are constructed as follows:
\[
G_i(a) = \begin{cases} [0 \mid 1]    &\quad \text{if $a_i = 0$,} \\
					   0             &\quad \text{if $a_i = 1$,}
		 \end{cases}
\quad\text{and}\quad
H_i(b) = \begin{cases} 0 &\quad \text{if $b_i = 0$,} \\
                       1 &\quad \text{if $b_i = 1$.}
         \end{cases}
\]
for any $a \in A$, $b \in B$. And we construct $\land$ gadgets in the same way as the $\max$ gadgets, $\lor$ gadgets in the same way as the $\E$~gadgets. By De Morgan's laws, we can show that $H(b) \in L(G(a))$ iff $\aprod(a, b) = 0$.
\end{proof}

For the other direction, we note that the following theorem gives a low-space algorithm for exact and approximate regular expression membership testing, then we can obtain a reduction by Theorem~\ref{thm:polylogspace-sat}. The following theorem is noted in \cite{JR91}:
\begin{theorem}[\cite{JR91}] \label{thm:regexp-polylog}
Given a regular expression $a$ and a string $b$, deciding whether $b \in L(a)$ is in $\NL$.
\end{theorem}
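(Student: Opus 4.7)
The plan is to give a standard nondeterministic log-space simulation of Thompson's NFA construction applied to $a$. Recall that Thompson's construction produces an NFA $N_a$ (with $\epsilon$-transitions) whose states correspond naturally to positions inside the parse tree of $a$: each subexpression contributes a ``start'' and ``end'' state, and the transitions are purely local, being determined by the operator at each internal node of the parse tree. In particular, $N_a$ has only $O(|a|)$ states, so a single state of $N_a$ can be encoded by an $O(\log|a|)$-bit pointer into the regular expression~$a$.

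The nondeterministic log-space algorithm will not build $N_a$ explicitly (that would require linear space). Instead, it maintains only: a pointer $p$ to the current state of $N_a$ (i.e., a position in $a$), a pointer $i$ to the current position in $b$, and a constant amount of scratch work. At each step, the algorithm nondeterministically guesses the next state $p'$ and either (i) checks that there is an $\epsilon$-transition from $p$ to $p'$, leaving $i$ unchanged, or (ii) checks that there is a transition from $p$ to $p'$ on the symbol $b_i$ and increments $i$. It accepts if, after consuming all of $b$, the current state $p$ is the accepting state of $N_a$.

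The main step to verify is that the transition check ``is $(p, p')$ an edge of $N_a$, and if so on which label?'' can be performed in deterministic log-space given $p$, $p'$, and $a$. This reduces to local parse-tree navigation on $a$: given a position in $a$, determine the operator (concatenation, union, Kleene star/plus, or literal) whose scope begins or ends there, locate the matching parenthesis, and identify the parent subexpression and the sibling. All of these queries reduce to bracket matching and a small number of pointer moves on $a$, which are well-known to lie in deterministic log-space (indeed in~$\NC^1$). Once the local operator and siblings are identified, the allowable Thompson transitions out of position $p$ are a fixed $O(1)$-size case analysis, so the check against the guessed $p'$ uses only $O(\log|a|)$ space.

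The only place I anticipate needing to be a little careful is formalising ``pointer navigation in the parse tree of $a$ in log-space'', which is where the argument has substance; bounding the number of consecutive $\epsilon$-transitions is unnecessary because $\NL$ is closed under taking transitive closure, but in any case, by Thompson's construction the $\epsilon$-graph is acyclic except around $^*$ and $^+$ nodes, so the simulation need never exceed $O(|a|)$ $\epsilon$-steps between two symbol reads and one can detect loops with a single counter if desired. Summing up, the whole simulation uses $O(\log(|a| + |b|))$ space, establishing the claim.
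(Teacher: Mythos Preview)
Your proposal is correct and is essentially the standard argument: build Thompson's NFA implicitly, keep one state pointer and one position pointer in $b$, and check transitions by log-space navigation of the parse tree of $a$. The one point worth tightening is the termination issue with $\epsilon$-cycles around $^{*}$ and $^{+}$; your fix (a step counter bounded by $O(|a|)$) is exactly right, and is equivalent to the usual convention that $\NL$ machines run with a polynomial clock.

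Note, however, that the paper does not actually give its own proof of this theorem: it is stated with a citation to \cite{JR91} and used as a black box. So there is nothing in the paper to compare your argument against beyond the fact that you have supplied the (standard) details the paper omits.
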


Combining all the above reductions together, we can show the equivalence between all pairs of \BPSATPAIR, \REGEXPSTRPAIR and \GAPCLOSESTREGEXPSTRPAIR{$\epsilon$}.
\begin{theorem} \label{thm:regexp-string-pair-equiv}
For the case that the maximum element size $D = 2^{(\log N)^{o(1)}}$, there are near-linear time reductions between all pairs of the following problems:
\begin{itemize}
\item \BPSATPAIR;
\item \REGEXPSTRPAIR;
\item \GAPCLOSESTREGEXPSTRPAIR{$2^{(\log D)^{1 - \Omega(1)}}$}.
\end{itemize}
\end{theorem}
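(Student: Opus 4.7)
The plan is to close a cycle of near-linear time reductions among the three problems, using earlier constructions. I will first reduce $\BPSATPAIR$ on branching programs of size $S = 2^{(\log N)^{o(1)}}$ to $\REGEXPSTRPAIR$: apply Corollary~\ref{thm:hard-oapt-restricted} to get an $\OAPTrestri$ instance on tensors of size $2^{O(\log^2 S)} = 2^{(\log N)^{o(1)}}$, then feed this into the direct reduction of Theorem~\ref{thm:re-str-mem-gadget-oapt}, which produces a $\REGEXPSTRPAIR$ instance with regular expressions of length $\poly(D) = 2^{(\log N)^{o(1)}}$ and strings of length $D = 2^{(\log N)^{o(1)}}$. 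For the reverse direction, $\REGEXPSTRPAIR$ reduces to $\BPSATPAIR$ by the generic template of Theorem~\ref{thm:polylogspace-sat}, since deciding $b \in L(a)$ is in $\NL$ by Theorem~\ref{thm:regexp-polylog}.

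Next, I will reduce $\BPSATPAIR$ to $\GAPCLOSESTREGEXPSTRPAIR{2^{-(\log D)^{1-\Omega(1)}}}$ by composing two steps. First, apply Corollary~\ref{thm:hard-maxmintt-restricted} to produce a $\GAPMAXTTrestri{\epsilon}$ instance; by the same parameter balancing used in the proof of Theorem~\ref{thm:tt-equiv}, choosing $\epsilon = 2^{-(\log S)^{c}}$ for an appropriate constant $c$ yields tensors of size $D = 2^{\Theta(\log^{4+c} S)}$ with approximation gap $\epsilon = 2^{-(\log D)^{c/(4+c)}} = 2^{-(\log D)^{1-\delta}}$ for the desired $\delta > 0$, and $D$ is still $2^{(\log N)^{o(1)}}$. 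Second, apply Corollary~\ref{cor:MAXTT-to-CRSP} to convert the $\GAPMAXTTrestri{\epsilon}$ instance into a $\GAPCLOSESTREGEXPSTRPAIR{\epsilon}$ instance of length $\poly(D) = 2^{(\log N)^{o(1)}}$, preserving the completeness/soundness gap.

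To complete the cycle, observe that $\GAPCLOSESTREGEXPSTRPAIR{\epsilon}$ trivially reduces to $\REGEXPSTRPAIR$: in the completeness case there exists a pair $(a,b)$ with $b \in L(a)$, while in the soundness case no such pair exists (since $\maxsim(a,b) < \epsilon < 1$ implies $b \notin L(a)$); hence any algorithm that decides $\REGEXPSTRPAIR$ correctly distinguishes the two promise cases. Chaining this with the earlier reduction of $\REGEXPSTRPAIR$ to $\BPSATPAIR$ closes the loop and yields the claimed equivalence.

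The main obstacle is purely bookkeeping: one must verify that the two successive size blow-ups in the chain $\BPSATPAIR \to \OAPTrestri \to \REGEXPSTRPAIR$ and $\BPSATPAIR \to \GAPMAXTTrestri \to \GAPCLOSESTREGEXPSTRPAIR$ both stay within $2^{(\log N)^{o(1)}}$, and that the parameter $\epsilon$ in the gap problem can simultaneously be driven down to $2^{-(\log D)^{1-\Omega(1)}}$ by an appropriate choice of the inner soundness constant in the $\IP$-protocol of Theorem~\ref{thm:bp-ip}. No new technical ingredient beyond the tools already established in Sections~\ref{sec:tensor} and the earlier parts of Section~\ref{sec:RegularExp} is required.
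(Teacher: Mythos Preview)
Your proposal is correct and follows essentially the same approach as the paper: close a cycle $\BPSATPAIR \to \GAPCLOSESTREGEXPSTRPAIR{\epsilon} \to \REGEXPSTRPAIR \to \BPSATPAIR$ using Corollary~\ref{thm:hard-maxmintt-restricted}, Corollary~\ref{cor:MAXTT-to-CRSP}, the trivial promise-to-decision step, and Theorems~\ref{thm:regexp-polylog} and~\ref{thm:polylogspace-sat}, with the parameter balancing borrowed from the proof of Theorem~\ref{thm:tt-equiv}. Your additional direct reduction $\BPSATPAIR \to \OAPTrestri \to \REGEXPSTRPAIR$ via Theorem~\ref{thm:re-str-mem-gadget-oapt} is valid but redundant for the equivalence; the paper's proof is a one-line reference to the analogous argument in Theorem~\ref{thm:lcs-pair-equiv}.
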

\begin{proof}
By Theorem \ref{thm:re-str-mem-gadget} and \ref{thm:regexp-polylog}, we can show cyclic reductions between these three problems in a similar way as in the proof of Theorem \ref{thm:lcs-pair-equiv}.
\end{proof}

We can also show a reduction from \OAPTrestri to Regular Expression Membership Testing on two strings using the same gadgets in Theorem \ref{thm:re-str-mem-gadget-oapt}.
\begin{theorem} \label{thm:regexp-string-hard}
There exists an $O(N \poly(D))$-time reduction from a \OAPTrestri instance with two sets $A, B$ of $N$ tensors of size $D$ to an instance of Regular Expression Membership Testing on a regular expression $R$ of length $O(N \poly(D))$ and a string $S$ of length $O(N \poly(D))$.
\end{theorem}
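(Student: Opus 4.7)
The plan is to first apply Theorem~\ref{thm:re-str-mem-gadget-oapt} to the given \OAPTrestri instance on two sets $A, B$ of $N$ tensors of size $D$, obtaining in $O(N \cdot \poly(D))$ time a set $A' = \{a_1', \ldots, a_N'\}$ of regular expressions of length $\poly(D)$ over the alphabet $\{0, 1, \bot\}$, and a set $B' = \{b_1', \ldots, b_N'\}$ of strings of length $D$ over the same alphabet, such that the original instance has a pair with Alternating Product $0$ iff there exist $i, j \in [N]$ with $b_j' \in L(a_i')$.

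Next, I would collapse the pair version into a single instance via a standard separator trick. Let $\#$ be a fresh symbol and let $\Gamma = \{0, 1, \bot, \#\}$. Define the single string
\[
S \;=\; \#\, b_1'\, \#\, b_2'\, \#\, \cdots\, \#\, b_N'\, \#
\]
and the single regular expression
\[
R \;=\; \Gamma^*\; \#\; \bigl(a_1' \mid a_2' \mid \cdots \mid a_N'\bigr)\; \#\; \Gamma^*.
\]
Both objects have length $O(N \cdot \poly(D))$ and the construction clearly runs in $O(N \cdot \poly(D))$ time.

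It then remains to prove that $S \in L(R)$ iff some $b_j' \in L(a_i')$. The $(\Leftarrow)$ direction is immediate: one matches the prefix $\Gamma^*$ to $\#\, b_1'\, \cdots\, \#\, b_{j-1}'$, the middle alternation to $b_j'$ via $a_i'$, and the suffix $\Gamma^*$ to the remainder. For $(\Rightarrow)$, the main observation is that every $a_i'$ is a regular expression over $\{0,1,\bot\}$, hence $L(a_i')$ contains no string with a $\#$. Consequently, in any match of $R$ against $S$, the substring matched by the inner alternation is a contiguous $\#$-free region flanked on both sides by a $\#$ from $S$; the separator pattern of $S$ forces that region to sit between two \emph{consecutive} $\#$'s, and thus to coincide exactly with some $b_j'$. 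I expect this alphabet-based isolation argument, which rules out the inner match from straddling a separator or covering only a proper part of some $b_j'$, to be the only delicate point; everything else is routine bookkeeping of lengths and running times.
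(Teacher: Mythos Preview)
Your proof is correct, but it takes a different route from the paper. The paper concatenates the $b_j'$ \emph{without} any separator, setting $S = b_1' b_2' \cdots b_N'$, and builds
\[
R \;=\; \Bigl[\,\textstyle\bigcirc_{i=1}^{D}[0\mid 1]\Bigr]^{*}\circ [\,a_1'\mid\cdots\mid a_N'\,]\circ\Bigl[\,\textstyle\bigcirc_{i=1}^{D}[0\mid 1]\Bigr]^{*}.
\]
The alignment argument there rests on the fact (coming out of the gadget construction in Theorem~\ref{thm:re-str-mem-gadget-oapt}) that every string in each $L(a_i')$ and every $b_j'$ has exactly the same fixed length~$D$; hence the outer Kleene stars can only swallow whole $b_j'$-blocks, forcing the inner alternation to match some $b_j'$ exactly. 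Your approach instead introduces a fresh separator symbol $\#$ and argues via alphabet isolation. Both are standard tricks; yours is arguably more robust since it does not depend on the uniform-length property of the gadgets, while the paper's keeps the alphabet binary. One minor inaccuracy: the gadgets of Theorem~\ref{thm:re-str-mem-gadget-oapt} are over $\{0,1\}$, not $\{0,1,\bot\}$ (the symbol $\bot$ appears only in Theorem~\ref{thm:re-str-mem-gadget}); this does not affect your argument.
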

\begin{proof}
We construct the two sets $A', B'$ as in Theorem \ref{thm:re-str-mem-gadget}. For the construction for the regular expression, let $w$ be the concatenation of all $a' \in A$ separated by ``~$|$~''. Then we construct the regular expression $R$ to be
\[
R = \left[\bigcirc_{i=1}^{D} [0 \mid 1] \right]^{*} \circ [w] \circ \left[\bigcirc_{i=1}^{D} [0 \mid 1] \right]^{*}
\]
and we construct the string $S$ by concatenating all  $b' \in B$ directly. It is easy to see that there exists a pair $(a, b) \in A \times B$ with $\aprod(a, b) = 0$ iff $S \in L(R)$, by noticing that all the strings $x \in L(w), b' \in B$ are of the same length $D$.
\end{proof}
\section{Subtree Isomorphism and Largest Common Subtree} \label{sec:subtree-Iso}

In this section, we study the hardness of Subtree Isomorphism and Largest Common Subtree. Our reductions here are inspired by \cite{ABHVZ16,AbboudR18}. We begin with some notations to ease our construction of trees. Recall that all trees considered in this paper are bounded-degree and unordered. We are interested in both rooted and unrooted trees.
Here ``rooted'' means that the root of $G$ must be mapped to the root of $H$ in the isomorphism.

We use $\mathcal{T}_2$ to denote the tree with exactly two nodes. Let $\mathcal{T}^{0}_3$ be the $3$-node tree with root degree $1$, and let $\mathcal{T}^{1}_3$ be the $3$-node tree with root degree $2$. For a tree $T$, let $\mathcal{P}^k(T)$ be the tree constructed by joining a path of $k$ nodes and the tree $T$: one end of the path is regarded as the root, the other end of the path is linked to the root of $T$ by an edge. For two trees $\mathcal{T}_a$ and $\mathcal{T}_b$, we use $(\mathcal{T}_a \circ \mathcal{T}_b)$ to denote the tree whose root has two children $\mathcal{T}_a$ and $\mathcal{T}_b$.

\subsection{Subtree Isomorphism}

In this subsection, first we prove that $\BPSATPAIR$ and $\STIPAIR$ are equivalent under near-linear time reductions, then we show the hardness for Subtree Isomorphism on two trees.

For two trees $T_a, T_b$, we use $\Sub(T_a, T_b)$ to indicate whether $T_a$ is isomorphic to a subtree of $T_b$ when $T_a, T_b$ are seen as unrooted trees. Also, we use $\RSub(a, b)$ to indicate whether $T_a$ is isomorphic to a subtree of $T_b$ when $T_a, T_b$ are seen as rooted trees.

\begin{theorem} \label{thm:hard-sti-pair}
Let $t$ be an even number and $d_1 = \cdots = d_t = 2$. Given two sets of $N$ tensors $A, B$ in $\{0, 1\}^{d_1 \times \cdots \times d_t}$ satisfying that all the tensors in $A$ are $\land$-invariant, there is a deterministic algorithm running in $O(N \poly(2^t))$ time which outputs two sets $A', B'$ of $N$ binary trees of size $O(2^t)$ and depth $O(t)$, such that each $a \in A$ corresponds to a tree $a' \in A'$, each $b \in B$ corresponds to a tree $b \in B'$, and
\[\overline{\aprod(a, b)} = \RSub(a',b') = \Sub(a', b'),\]
where $\overline{\aprod(a, b)}$ is the negation of the Alternating Product of $a$ and $b$.
\end{theorem}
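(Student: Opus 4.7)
The plan is to mimic the recursive gadget constructions used earlier for $\LCS$ (Theorem~\ref{thm:lcs-gadget}) and for regular expressions (Theorem~\ref{thm:re-str-mem-gadget}), but now using rooted subtree isomorphism as the primitive. Since $\RSub$ naturally acts like a conjunction (each root-child of the smaller tree must be matched into a distinct root-child of the larger tree), it is cleaner to target the De Morgan dual
\[
\overline{\aprod(a,b)} \;=\; \bigvee_{i_1}\bigwedge_{i_2}\bigvee_{i_3}\cdots\bigwedge_{i_t}\bigl(\lnot a_i \lor \lnot b_i\bigr).
\]
I would build gadgets $G_\ik(a), H_\ik(b)$ for every index prefix $\ik \in [d_1]\times\cdots\times[d_k]$, bottom-up from $k = t$ down to $k = 0$, where the final gadgets are $G(a)=G_{()}(a)$ and $H(b)=H_{()}(b)$.

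For the bit gadgets ($k = t$), I would pick constant-size rooted binary trees so that $\RSub(G_i(a), H_i(b)) = \lnot a_i \lor \lnot b_i$, i.e., the embedding fails exactly when $a_i = b_i = 1$; for instance, take $G_i(a) = \mathcal{T}^{a_i}_3$ and $H_i(b) = \mathcal{T}^{1-b_i}_3$, padded with a uniform pendant edge so that all shapes share a common root neighbourhood. At a disjunctive step (an odd-indexed dimension, on which $a$ is $\land$-invariant by hypothesis and hence $G_{\ik,0}(a) = G_{\ik,1}(a) =: X$), define $G_\ik(a) := \mathcal{P}^1(X)$ and $H_\ik(b) := (H_{\ik,0}(b) \circ H_{\ik,1}(b))$. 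Any rooted embedding must send the unique root-child of $G_\ik(a)$ to one of the two root-children of $H_\ik(b)$ and then embed $X$ below it, realising $\bigvee_{j\in\{0,1\}} \RSub(G_{\ik,j}(a), H_{\ik,j}(b))$. At a conjunctive step, set $G_\ik(a) := (G_{\ik,0}(a)\circ G_{\ik,1}(a))$ and $H_\ik(b) := (H_{\ik,0}(b) \circ H_{\ik,1}(b))$, tagging each branch with a level-dependent pendant path so that the ``$j=0$'' branch can only pair with the ``$j=0$'' branch (and similarly for $j=1$); both child embeddings then happen simultaneously, realising $\bigwedge_{j\in\{0,1\}} \RSub(G_{\ik,j}(a), H_{\ik,j}(b))$. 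A bottom-up induction gives $\RSub(G(a), H(b)) = \overline{\aprod(a,b)}$, and by construction the trees are binary, of size $O(2^t)$ and depth $O(t)$, producible in $O(N\poly(2^t))$ time.

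To upgrade to the unrooted equality $\Sub(G(a),H(b)) = \RSub(G(a),H(b))$, I would hang a sufficiently long pendant path from the root of each final gadget (longer than the diameter of $G(a)$); any unrooted subtree isomorphism is then forced to map this distinguished long pendant onto the unique comparably long pendant in $H(b)$, pinning the two roots together and reducing to the rooted case. The main obstacle is the conjunctive step: subtree isomorphism by itself neither prevents both children of $G_\ik(a)$ from being embedded into the same child of $H_\ik(b)$ (collapsing the conjunction to a disjunction), nor prevents $G_{\ik,0}(a)$ from being matched with $H_{\ik,1}(b)$. Choosing the level-dependent pendant markers globally---so that marker lengths are preserved under embedding at every depth yet never accidentally coincide with a path inside some other gadget---is the main technical bookkeeping; once it is in place, the rest is a routine inductive verification analogous to the $\LCS$ and regular-expression gadget arguments in the preceding sections.
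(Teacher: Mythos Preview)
Your overall plan---De Morgan dual, recursive bit/OR/AND gadgets, then a rooted-vs-unrooted argument---is exactly the paper's approach, and your disjunctive gadget $G_\ik(a)=\mathcal{P}^1(X),\ H_\ik(b)=(H_{\ik,0}(b)\circ H_{\ik,1}(b))$ coincides verbatim with what the paper does. Two points are handled more cleanly in the paper than in your sketch, and one of your concerns is misplaced.

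First, for the conjunctive step you propose ``level-dependent pendant paths'' and flag the global bookkeeping as the main obstacle. The paper avoids all of that with a \emph{constant-size} marker: define $\mathcal{U}_0(T)=(\mathcal{P}^3(T)\circ\mathcal{T}^0_3)$ and $\mathcal{U}_1(T)=(\mathcal{P}^3(T)\circ\mathcal{T}^1_3)$, and set $G_\ik(a)=(\mathcal{U}_0(G_{\ik,0}(a))\circ\mathcal{U}_1(G_{\ik,1}(a)))$ and $H_\ik(b)$ analogously. Because $\mathcal{T}^0_3$ does not rooted-embed into $\mathcal{T}^1_3$ (nor vice versa) while the $\mathcal{P}^3$ lift keeps the ``real'' branch from being mistaken for a $3$-node marker, the $0$-branch can only go to the $0$-branch---no level-dependence, no global consistency check. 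Incidentally, your first worry (``both children of $G_\ik(a)$ embedded into the same child of $H_\ik(b)$'') cannot happen: adjacency is preserved, so the two root-children of $G$ must map to \emph{distinct} neighbours of the image of the root; only the cross-matching issue is real.

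Second, for $\Sub=\RSub$ the paper does not add a long pendant. Since the outermost operator is $\land$, the root of $G(a)$ already has degree one; its unique child has two subtrees of equal height $h-2$, and $H(b)$ has the same height $h$. A short height count then forces the root of $G(a)$ to land on the root of $H(b)$ in any unrooted embedding. Your pendant-path trick would also work, but it is unnecessary.

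Finally, your proposed bit gadget $G_i(a)=\mathcal{T}_3^{a_i},\ H_i(b)=\mathcal{T}_3^{1-b_i}$ fails as written (e.g.\ $a_i=b_i=0$ gives $\RSub(\mathcal{T}^0_3,\mathcal{T}^1_3)=0$, but you want $1$); the paper instead uses $G_i(a)\in\{\mathcal{T}_2,\mathcal{T}^1_3\}$ and $H_i(b)\in\{\mathcal{T}^1_3,\mathcal{T}_2\}$, which checks out in all four cases.
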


\begin{proof} 
For each $k$ and each prefix of index $\ik \in  [d_1] \times \cdots \times [d_k]$ , we construct corresponding tree gadgets $G_{\ik}(a)$ and $H_{\ik}(b)$ for each $a \in A$ and $b \in B$ (when $k = 0$, $\ik$ can only be the empty prefix, and we simply use $G(a)$ and $H(b)$ for convenience) inductively, mimicking the evaluation of the alternating product. 
Our gadgets satisfy that
\[
\RSub(G_{\ik}(a),H_{\ik}(b)) = \overline{\aprod(a_{\ik}, b_{\ik})}
\]
for any subtensors $a_{\ik}, b_{\ik}$.
For this purpose, we need to construct the following three types of tree gadgets.

\paragraph{Bit Gadgets.} First we need bit gadgets to simulate the innermost coordinatewise product in the alternating product. For each coordinate $i \in [d_1] \times \cdots \times [d_t]$, for every $a \in A$ and $b \in B$, we construct 

\[
G_i(a) = \begin{cases} \mathcal{T}_2   &\quad \text{if $a_i = 0$,} \\
					   \mathcal{T}^1_3 &\quad \text{if $a_i = 1$,}
		 \end{cases}
\quad\text{and}\quad
H_i(b) = \begin{cases} \mathcal{T}^1_3 &\quad \text{if $b_i = 0$,} \\
                       \mathcal{T}_2 &\quad \text{if $b_i = 1$.}
         \end{cases}
\]
It is easy to see that $\RSub(G_i(a),H_i(b))$ iff $a_i \land b_i = 0$.

Now we combine bit gadgets recursively according to the $\land$ and $\lor$ operators in the Alternating Product.
Starting from $k = t - 1$, there are two cases to consider.

\paragraph{AND Gadgets.} The first case is when $\land$ operator is applied to $(k+1)$-th dimension, then by De~Morgan's laws, we need to construct our gadgets such that for all $\ik \in [d_1] \times \cdots \times [d_k]$
\[
\RSub(G_{\ik}(a), H_{\ik}(b)) = \RSub(G_{\ik,0}(a), H_{\ik,0}(b)) \lor \RSub(G_{\ik,1}(a), H_{\ik,1}(b)).
\]

To do so, for $a \in A$, we construct $G_{\ik}(a)$ to be
\[
	G_{\ik}(a) = \mathcal{P}^{1}(G_{\ik, 0}(a)) = \mathcal{P}^{1}(G_{\ik, 1}(a)),
\]
which is well-defined since $a$ is $\land$-invariant.
And we construct $H_{\ik}(b)$ to be
\[
	H_{\ik}(b) = \left(H_{\ik, 0}(b) \circ H_{\ik, 1}(b)\right).
\]
In any subtree isomorphism, it is easy to see that $G_{\ik, 0}(a)$ (or $G_{\ik, 1}(a)$) can only be mapped to either $H_{\ik, 0}(b)$ or $H_{\ik, 1}(b)$, so $G_{\ik}(a), H_{\ik}(b)$ implement an $\land$ operator.

\paragraph{OR Gadgets.} The second case is when $\lor$ operator is applied to $(k+1)$-th dimension, then by De~Morgan's laws, we need to construct our gadgets such that for all $\ik \in [d_1] \times \cdots \times [d_k]$,
\[
\RSub(G_{\ik}(a), H_{\ik}(b)) = \RSub(G_{\ik,0}(a), H_{\ik,0}(b)) \land \RSub(G_{\ik,1}(a), H_{\ik,1}(b)).
\]

First for any tree $T$, we define two auxiliary trees $\mathcal{U}_0(T), \mathcal{U}_1(T)$  to ease our construction:
\[
\mathcal{U}_0(T) = \left(\mathcal{P}^3(T) \circ \mathcal{T}^0_3 \right)
\quad\text{and}\quad
\mathcal{U}_1(T) = \left(\mathcal{P}^3(T) \circ \mathcal{T}^1_3 \right).
\]
It is easy to verify that for any two trees $T_1, T_2$,  $\RSub(\mathcal{U}_0(T_1),\mathcal{U}_1(T_2)) = \RSub(\mathcal{U}_1(T_1),\mathcal{U}_0(T_2)) = 0$ and $\RSub(\mathcal{U}_0(T_1),\mathcal{U}_0(T_2)) = \RSub(\mathcal{U}_1(T_1),\mathcal{U}_1(T_2)) = \RSub(T_1, T_2)$.

We construct the corresponding tensor gadgets $G_{\ik}(a)$ and $H_{\ik}(b)$ for $a \in A$ and $b \in B$ as follows:
\[
G_{\ik}(a) = \left( \mathcal{U}_0(G_{\ik, 0}(a)) \circ \mathcal{U}_1(G_{\ik, 1}(a)) \right),
\]
and
\[
H_{\ik}(b) = \left( \mathcal{U}_0(H_{\ik, 0}(b)) \circ \mathcal{U}_1(H_{\ik, 1}(b))  \right).
\]
In any subtree isomorphism, it is easy to see that $\mathcal{U}_0(G_{\ik, 0}(a))$ can only be mapped to $\mathcal{U}_0(H_{\ik, 0}(b))$, and $\mathcal{U}_1(G_{\ik, 1}(a))$ can only be mapped to $\mathcal{U}_1(H_{\ik, 1}(b))$, so $G_{\ik}(a), H_{\ik}(b)$ implement an $\lor$ operator. 

\paragraph{Correctness.} It is not hard to verify that $\RSub(G(a), H(b)) = \overline{\aprod(a, b)}$ by De~Morgan's laws. To show $\RSub(G(a), H(b)) = \Sub(G(a), H(b))$, we focus on the case that $t > 0$ since the case that $t = 0$ is obvious. Let the root of $G_A$ be $r$ and the height of $G_A$ be $h$. The outermost operator in an Alternating Product is $\land$, so $r$ has only one child which has two subtrees of equal height $h - 2$. It is easy to see that the height of $H_B$ is also $h$. Suppose that $r$ is mapped to a node $r'$ in $H_B$ and $c$ is mapped to  $c'$. If we regard $c'$ as the root of $H_B$, then after deleting $c'$, $H_B$ should be split into two subtrees of height $\ge h - 2$ and a single node $r'$. The only possible case is that $c'$ is of depth $1$ w.r.t. the original root of $H_B$ (the depth of a root is $0$) and $r'$ is the original root of $H_B$.
\end{proof}

\begin{theorem} \label{thm:sti-polylog}
Given two bounded-degree unrooted trees $T_A$ and $T_B$, it can be decided in $\NSPACE[(\log n)^2]$ that whether $T_A$ is isomorphic to a subtree of $T_B$.
\end{theorem}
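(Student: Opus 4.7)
The plan is to prove this by giving a nondeterministic algorithm for subtree isomorphism that uses $O((\log n)^2)$ space. I would first reduce the unrooted version to the rooted version (where the isomorphism must map designated roots $r_A \in V(T_A)$, $r_B \in V(T_B)$ to each other) by nondeterministically guessing $r_A$ and $r_B$; this only costs $O(\log n)$ extra space. Thus it suffices to give an $\NSPACE[(\log n)^2]$ algorithm for rooted bounded-degree subtree isomorphism.

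For the rooted version, I would use a recursive nondeterministic algorithm driven by \emph{centroid decomposition} of the $T_A$-side. On a subproblem $(T_A', r_A') \hookrightarrow (T_B', r_B')$ with the constraint $r_A' \mapsto r_B'$ (where $T_A', T_B'$ are subtrees of the original trees, specified implicitly by root pointers), the algorithm: (i) finds a centroid $c$ of $T_A'$ treated as unrooted; (ii) nondeterministically guesses the image $c' \in V(T_B')$ and verifies that the unique path from $r_B'$ to $c'$ in $T_B'$ has the same number of edges as the path $P = (r_A' = p_0, p_1, \ldots, p_\ell = c)$ in $T_A'$; (iii) walks along $P$ by incrementing an index $i$ and computing $p_i$ and the corresponding $q_i$ on-the-fly, and for each ``side child'' $u$ of $p_i$ (a child not on $P$), nondeterministically picks a distinct child $v$ of $q_i$ in $T_B'$ (distinct from $q_{i+1}$ and from children of $q_i$ chosen earlier at this level), then recursively solves $(T_{A',u}, u) \hookrightarrow (T_{B',v}, v)$.

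Correctness follows because each recursive subproblem is restricted to embed into the subtree $T_{B',v}$ rooted at $v$, and these target subtrees are pairwise disjoint (for different choices of $v$ attached to the same $q_i$ or to different $q_i$'s along the path), so the locally enforced injectivity of images of sibling side children lifts to global injectivity of the whole embedding. For complexity, the centroid property guarantees $|T_{A',u}| \le |T_A'|/2$ in every recursive call, so the recursion depth is $O(\log n)$. Each stack frame stores only $O(\log n)$ bits: the pointers $r_A', r_B', c, c'$, the index $i$, an $O(1)$-sized bitmap of used children of $q_i$ (using bounded degree), and the current side-child image $v$. Multiplying depth by per-level space yields total nondeterministic space $O((\log n)^2)$.

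The main obstacle I anticipate is implementing all per-level operations in $O(\log n)$ space given only root pointers to $T_A', T_B'$: computing a centroid of $T_A'$, extracting the $i$-th node $p_i$ along the path $P$, and finding the corresponding $q_i$ in $T_B'$. This should go through by standard log-space graph routines, since undirected reachability is in $\LSPACE$ by Reingold's theorem, and subtree-size and unique-path computations in trees reduce to reachability tests. Care is needed to ensure each of these subroutines is invoked sequentially and does not exceed the $O(\log n)$ per-level budget, but I do not see any fundamental obstruction beyond bookkeeping.
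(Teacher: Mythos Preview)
Your proposal is correct and follows essentially the same approach as the paper: centroid decomposition on the $T_A$-side, giving $O(\log n)$ recursion depth with $O(\log n)$ bits per stack frame. The only organizational difference is that the paper works directly on unrooted trees and carries a set $M$ of forced node-pair constraints through the recursion, whereas you reduce to the rooted case up front and then walk the root-to-centroid path so that each recursive subproblem again has a single ``root $\mapsto$ root'' constraint; both realizations implement the same idea.
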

\begin{proof}
This algorithm works by divide and conquer on trees. At each recursion, we have two trees $S_A$ and $S_B$ (implicit representation) as well as a set of node pairs $M = \{(a_1, b_1), \dots, (a_k, b_k)\}$ (initially, $S_A = T_A, S_B = T_B$ and $M = \varnothing$). We need to decide whether there is an isomorphism from $S_A$ to some subtree of $S_B$ satisfying $a_i$ in $S_A$ is mapped to $b_i$ in $S_B$ for all $1 \le i \le k$.

First we find a centroid $c$ of $S_A$, i.e., a node of $S_A$ that decomposes $S_A$ into subtrees of size at most $\lceil \lvert {S_A} \rvert / 2 \rceil$ when the node is deleted. Then we nondeterministically guess a node $c'$ in $S_B$ to be the node that mapped by $c$ in the isomorphism. 
If $c = a_i$ for some $i$ but $c' \neq b_i$, then we reject; otherwise, we guess an injective mapping from the neighbors of $c$ in $S_A$ to the neighbors of $c'$ in $S_B$.

For each neighbor $v$ of $c$, let $v'$ be the neighbor of $c'$ mapped by $v$, $S^v_A$ be the subtree of $S_A$ containing $v$ when the edge between $v$ and $c$ is deleted, $S^{v'}_B$ be the subtree of $S_B$ containing $v'$ when the edge between $v'$ and $c'$ is deleted. We create a new set of node pairs $M' = \{(a_i, b_i) \in M \mid a_i \in S^v_A \}$. If $b_i \notin S^{v'}_B$ for some pair $(a_i, b_i) \in M'$, then we reject; otherwise, we recursively checking if there is an isomorphism from $S^v_A$ to some subtree of $S^{v'}_B$ satisfying $a_i$ is mapped to $b_i$ for all $(a_i, b_i) \in M'$ and $v$ is mapped to $v'$.

This algorithm terminates when $S_A$ is a single node. There are at most $O(\log n)$ levels of recursion by the property of centroid.
At each level, we use only $O(\log n)$ space for $c, c'$ and their neighbors (note that $S_A, S_B$ are bounded-degree trees), and $S_A, S_B$ can always be accessed according to the information stored at the upper levels of recursion. Thus this algorithm runs in $\NSPACE[(\log n)^2]$.
\end{proof}

Combining the above reductions together, we can show the equivalence between \BPSATPAIR and \STIPAIR.
\begin{theorem}\label{thm:bppair-stipair-equiv}
\BPSATPAIR on branching program of size $2^{(\log N)^{o(1)}}$ and \STIPAIR on (rooted or unrooted) trees of size $2^{(\log N)^{o(1)}}$ are equivalent under near-linear time reductions.
\end{theorem}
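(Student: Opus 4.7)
The plan is to establish the equivalence by chaining together results that have already been proved in the preceding sections, closing the cycle through the intermediate problem \OAPTrestri.

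For the forward direction (\BPSATPAIR reduces to \STIPAIR), I would compose two reductions. First, apply Corollary~\ref{thm:hard-oapt-restricted} to transform a \BPSATPAIR instance on a branching program of size $2^{(\log N)^{o(1)}}$ into an \OAPTrestri instance on two sets of $N$ tensors of size $2^{(\log N)^{o(1)}}$, where one of the two sets contains only $\land$-invariant tensors. Then feed the output directly into Theorem~\ref{thm:hard-sti-pair}, which is precisely set up to consume an \OAPTrestri instance (under the $\land$-invariance hypothesis) and produce a \STIPAIR instance on two sets of bounded-degree binary trees whose size is still $2^{(\log N)^{o(1)}}$. Because Theorem~\ref{thm:hard-sti-pair} simultaneously guarantees $\RSub(G(a),H(b)) = \Sub(G(a),H(b)) = \overline{\aprod(a,b)}$, a single reduction handles both the rooted and unrooted versions of \STIPAIR. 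Both reductions run in time $O(N \cdot \poly(D))$ with the parameters above, which is near-linear in $N$.

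For the reverse direction (\STIPAIR reduces to \BPSATPAIR), I would invoke Theorem~\ref{thm:sti-polylog}, which places the decision problem ``is the bounded-degree tree $T_A$ isomorphic to a subtree of the bounded-degree tree $T_B$?'' into $\NSPACE[(\log n)^2] \subseteq \NSPACE[\polylog(n)]$. This is exactly the hypothesis of the generic Theorem~\ref{thm:polylogspace-sat}, which converts any $\NSPACE[\polylog(n)]$ decision procedure on a pair $(a,b)$ into a near-linear-time reduction from the corresponding \ASATPAIR{\mathcal{A}} problem to \BPSATPAIR with branching programs of size $2^{(\log N)^{o(1)}}$. Applying this to $\mathcal{A} = \Sub$ (or $\mathcal{A} = \RSub$, which is also in $\NSPACE[\polylog(n)]$ by essentially the same divide-and-conquer argument used in Theorem~\ref{thm:sti-polylog}, with the extra constraint that the centroid of the root-containing subtree must map to the root at each recursive step) yields the desired reduction.

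I do not anticipate a genuinely hard step here, since each building block has been proved. The one place to be careful is the parameter bookkeeping: I must check that the tree size $2^{(\log N)^{o(1)}}$ output by Theorem~\ref{thm:hard-sti-pair} stays in the admissible window (it does, since Theorem~\ref{thm:hard-sti-pair} blows the tensor size $D$ up only polynomially into tree size $O(D)$), and that the branching program produced by Theorem~\ref{thm:polylogspace-sat} applied to the $\NSPACE[\polylog(n)]$ algorithm on inputs of length $n = 2^{(\log N)^{o(1)}}$ indeed has size $2^{\polylog(n)} = 2^{(\log N)^{o(1)}}$, matching the size regime required by the statement. With these parameter checks in place, composing the two directions proves the equivalence.
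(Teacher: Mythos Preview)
Your proposal is correct and follows essentially the same route as the paper: the paper's proof is a one-line ``combining the above reductions'' that chains Corollary~\ref{thm:hard-oapt-restricted} with Theorem~\ref{thm:hard-sti-pair} for the forward direction, and Theorem~\ref{thm:sti-polylog} with Theorem~\ref{thm:polylogspace-sat} for the reverse, exactly as you outline. Your additional care about the rooted case and the parameter bookkeeping is appropriate and not spelled out in the paper's terse proof.
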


We can also show a reduction from \OAPT to Subtree Isomorphism on two trees using the same gadgets in Theorem \ref{thm:hard-sti-pair}.

\begin{theorem} \label{thm:sti-hard}
There exists an $O(N \poly(D))$-time reduction from 
a \OAPTrestri instance with two sets $A, B$ of $N$ tensors of size $D$
to an instance of Subtree Isomorphism on two (rooted or unrooted) binary trees of size $O(N \poly(D))$ and depth $2 \log N + O(\log D)$.
\end{theorem}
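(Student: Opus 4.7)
The plan is to combine the per-pair gadgets $G(a), H(b)$ from Theorem~\ref{thm:hard-sti-pair} into a single pair of binary trees $(\mathcal{G}, \mathcal{H})$ via two nested selector structures, so that a subtree isomorphism witnesses a pair $(i,j) \in [N]^2$ with $\Sub(G(a_i), H(b_j))$. By Theorem~\ref{thm:hard-sti-pair} the per-pair trees have size $O(\poly(D))$ and depth $O(\log D)$ and satisfy $\RSub(G(a),H(b)) = \overline{\aprod(a,b)}$, so the remaining work is to aggregate $N$ of them on each side while controlling the total depth.

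First I would handle the ``OR over $j$'' by building a text selector $\mathcal{H}_{\mathrm{sel}}$ as a complete rooted binary tree of depth $\log N$ with $H(b_j)$ attached at the $j$-th leaf. For any fixed pattern tensor $a$, the probe $\mathcal{G}_a := \mathcal{P}^{\log N}(G(a))$ has the property that $\RSub(\mathcal{G}_a, \mathcal{H}_{\mathrm{sel}}) \Leftrightarrow \exists\, j : \RSub(G(a), H(b_j))$: the rooted embedding forces the path-prefix of $\mathcal{G}_a$ to slide down the binary selector to some leaf, after which $G(a)$ must embed into that leaf's attached $H(b_j)$. Next I would add an outer layer that realizes an ``OR over $i$'' by re-using the AND gadget of Theorem~\ref{thm:hard-sti-pair} (whose $\RSub$-semantics is exactly OR): iteratively combine the probes $\mathcal{G}_{a_i}$ on the pattern side with matching $\mathcal{H}_{\mathrm{sel}}$-bearing gadgets on the text side across $\log N$ further levels, so that the final relation unfolds to $\bigvee_{i} \RSub(\mathcal{G}_{a_i}, \mathcal{H}_{\mathrm{sel}}) = \bigvee_{i,j} \RSub(G(a_i), H(b_j))$. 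The resulting trees have depth $\log N + \log N + O(\log D) = 2\log N + O(\log D)$ and size $O(N\,\poly(D))$, matching the stated bounds.

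The main obstacle is that the AND gadget of Theorem~\ref{thm:hard-sti-pair} requires the two pattern-side children to be \emph{identical} (the $\land$-invariance condition), whereas we want them to vary with $i$. I would resolve this by encoding the index $i$ as extra ``address'' coordinates of an enlarged $\land$-invariant tensor on the $A$-side and routing the $i$-dependent pieces through the $B$-side of each outer gadget (which carries no invariance restriction), in the same spirit as the $w = a_1' \mid \cdots \mid a_N'$ alternation embedded inside the wildcard sandwich of Theorem~\ref{thm:regexp-string-hard}. This use of $\land$-invariance is precisely why the theorem is stated for \OAPTrestri rather than for general \OAPT. The unrooted variant then follows from the rooted construction by attaching a unique long anchor path to the roots of $\mathcal{G}$ and $\mathcal{H}$, whose length exceeds all other depths and therefore forces any unrooted embedding to align the two roots.
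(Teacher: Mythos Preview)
Your inner selector $\mathcal{H}_{\mathrm{sel}}$ together with the path-probe $\mathcal{G}_a=\mathcal{P}^{\log N}(G(a))$ is exactly the right idea and matches the paper. The gap is in the outer OR-over-$i$. You correctly observe that iterating the AND gadget of Theorem~\ref{thm:hard-sti-pair} forces the pattern-side child to be \emph{fixed}; after $\log N$ levels the pattern side collapses to a single path carrying one gadget, not $N$ different $\mathcal{G}_{a_i}$'s. Your proposed fix, pushing the $a_i$-dependence to the $B$-side via extra $\land$-invariant address coordinates, cannot stay within $O(N\poly(D))$ size: for the $A$-side to remain $\land$-invariant it must be constant over the address dimensions, so the $B$-side has to carry the data of \emph{both} $a_i$ and $b_j$ at address $(i,j)$, yielding a tensor (and hence a tree) of size $\Theta(N^2 D)$. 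If instead you keep $j$ in the inner selector and only address $i$, the $a_i$'s end up on the text side of the outer gadget and the induced relation becomes $\RSub(\mathcal{H}_{\mathrm{sel}},\mathcal{G}_{a_i})$, which is the wrong direction.

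The paper sidesteps this entirely with a pigeonhole trick. On the pattern side, $G_A$ is a complete binary tree on $N$ leaves with $\mathcal{P}^{\log N}(G(a_i))$ hung from leaf $i$. On the text side, $H_B$ is a complete binary tree on $N$ leaves where $N-1$ leaves carry universal dummies $\mathcal{P}^{\log N}(H(\mathbf{0}))$ (every $G(a)$ embeds in $H(\mathbf{0})$ since $\aprod(a,\mathbf{0})=0$) and the one remaining leaf carries the selector over all $H(b_j)$'s. Any rooted embedding of $G_A$ into $H_B$ sends the $N$ pattern leaves to $N$ distinct text leaves; with only $N-1$ dummies available, some $G(a_i)$ is forced through the selector into some $H(b_j)$. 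Conversely, if $\aprod(a_i,b_j)=0$ for some pair, route $G(a_i)$ to $H(b_j)$ and the other $N-1$ patterns to the dummies. This gives size $O(N\poly(D))$ and depth $2\log N+O(\log D)$ directly; the unrooted case follows from a height argument as in Theorem~\ref{thm:hard-sti-pair}.
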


\begin{proof} 
Using the recursive construction in Theorem \ref{thm:hard-sti-pair} we can obtain tensor gadgets $G(a), H(b)$
for each $a \in A$ and $b \in B$, such that $\overline{\aprod(a,b)} = \RSub(G(a),H(b)) = \Sub(G(a),H(b))$.

We can assume the set size $N$ is a power of $2$ by adding dummy vectors into each set. Now we combine the tensor gadgets in each set respectively to construct two trees $G_A, H_B$ as our instance for Subtree Isomorphism:
\begin{enumerate}[a)]
\item To construct $G_A$ for set $A$:
\begin{itemize}
\item Initialize $G_A$ by a complete binary tree of $N$ leaves;
\item Associate each leaf with a tensor $a \in A$;
\item For all $a \in A$, construct $\mathcal{P}^{\log N}(G(a))$ and link an edge from its root to the corresponding leaf of $a$.
\end{itemize}
\item To construct $H_B$ for set $B$:
\begin{itemize}
\item Initialize $H_B$ by a complete binary tree of $N$ leaves;
\item Select one leaf node $v_{\ell}$;
\item For every unselected leaf, construct $\mathcal{P}^{\log N}(H(\mathbf{0}))$ and link an edge from its root to the leaf.
\item Construct a complete binary tree of $N$ leaves rooted at $v_{\ell}$;
\item Associate each leaf of the tree rooted at $v_{\ell}$ with a tensor $b \in B$;
\item For all $b \in B$, construct $\mathcal{P}^{\log N}(H(b))$ and link an edge from its root to the corresponding leaf of $b$.
\end{itemize}
\end{enumerate}
\paragraph{Correctness} For any subtree isomorphism, one $G(a)$ can be mapped to any $H(b)$ or $H(\mathbf{0})$. Since there are only $N - 1$ gadgets of $H(\mathbf{0})$, there must be some $G(a)$ mapped to some $H(b)$. Thus $\RSub(G_A, H_B)$ iff there exists a pair of $(a, b) \in A \times B$ with $\aprod(a, b) = 0$. It is not hard to see that the root of $G_A$ can only be mapped to the root of $H_B$ by arguing about the tree height (similar as Theorem \ref{thm:hard-sti-pair}), so $\RSub(G_A, H_B) = \Sub(G_A, H_B)$.
\end{proof}

\subsection{Largest Common Subtree}

In this subsection, first we prove that under near-linear time reductions between $\BPSATPAIR$ and (exact or approximate) $\MAXLCSTPAIR$ / $\MINLCSTPAIR$ are equivalent, then we show the hardness for Largest Common Subtree on two trees.

For two trees $a, b$, define $\LCST(a, b)$ to be the size of the largest common subtree of $a$ and $b$ when $a, b$ are seen as unrooted trees. Also, we define $\RLCST(a, b)$ to be the size of the largest common subtree of $a$ and $b$ when $a, b$ are seen as rooted trees.

Now we establish a connection between \MAXTTrestri and \MAXLCSTPAIR:

\begin{theorem} \label{thm:hard-approx-lcst-pair}
Let $t$ be an even number and $d_1 = \cdots = d_t = 2$. Given two sets of $N$ tensors $A, B$ in $\{0, 1\}^{d_1 \times \cdots \times d_t}$ satisfying that all the tensors in $A$ are $\max$-invariant, for any $L \ge 2^{t}$, there is a deterministic algorithm running in $O(N\cdot \poly(2^{t}) \cdot L)$ time which outputs two sets $A', B'$ of $N$ binary trees of size $O(\poly(2^{t}) \cdot L )$ and depth $O( (2^{t/2}+ \log L) \cdot t)$, such that each $a \in A$ corresponds to a tree $a' \in A'$, each $b \in B$ corresponds to a tree $b \in B'$, and
\begin{align*}
\RLCST(a', b') &= (2^{t/2} s(a, b) + O(1)) L \\
\LCST(a', b')  &= (2^{t/2} s(a, b) + O(1)) L
\end{align*}
where $s(a, b)$ is the Tropical Similarity score of $a$ and $b$. In particular, if $s(a, b) = 1$, then $a', b'$ satisfy $\RLCST(a', b') = \LCST(a', b') = \lvert a' \rvert$.
\end{theorem}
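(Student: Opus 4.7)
The plan is to follow the recursive gadget-construction strategy used in Theorem~\ref{thm:hard-sti-pair}, but replacing the decision-style $\land/\lor$ gadgets with arithmetic-style $\max/\E$ gadgets so that the size of the largest common subtree tracks the tropical similarity $s(a,b)$ rather than its alternating product. For every prefix $\ik$, I will inductively build trees $G_{\ik}(a)$ and $H_{\ik}(b)$ so that
\[
\RLCST(G_{\ik}(a), H_{\ik}(b)) \;=\; \bigl(2^{(t-k)/2}\cdot s_{\ik}(a,b) + O(1)\bigr)\cdot L,
\]
where $s_{\ik}$ is the remaining tropical expression after fixing the first $k$ indices. The padding parameter $L$ is used to dominate additive errors accumulated across the $O(t)$ recursion levels, so that the final equality absorbs only an $O(1)$ additive term.

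I would design three gadget types. The \emph{bit gadget} at coordinate $i$ encodes $a_i\cdot b_i$: if the product is $1$ the common subtree has size $L$ (e.g.\ a matched path of $L$ distinctly marked nodes, implemented via Theorem~\ref{thm:hard-sti-pair}-style subtree tags of size $\Theta(L)$), and if the product is $0$ the largest match has size $O(1)$. The \emph{expectation gadget} for the $(k{+}1)$-st dimension takes the two children $G_{\ik,0},G_{\ik,1}$ and hangs them under a common root with a small rigid spacer on each branch so that the optimal subtree isomorphism is forced to match each child to the corresponding child on the $H$ side; this yields $\RLCST(G_{\ik},H_{\ik})=\RLCST(G_{\ik,0},H_{\ik,0})+\RLCST(G_{\ik,1},H_{\ik,1})+O(1)$, which is precisely the sum-behaviour needed to simulate a $\E$ (times two). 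The \emph{max gadget} uses $\max$-invariance of $A$: since $G_{\ik,0}(a)=G_{\ik,1}(a)$, define $G_{\ik}(a)$ to be that single common tree (padded with a distinguishing root-attached marker), and build $H_{\ik}(b)$ by joining $H_{\ik,0}(b)$ and $H_{\ik,1}(b)$ under a common root with two different marker-paths (in the spirit of the $\mathcal{U}_0,\mathcal{U}_1$ trees from Theorem~\ref{thm:hard-sti-pair}) so that any subtree isomorphism from $G_{\ik}(a)$ can embed into at most one of the two branches of $H_{\ik}(b)$. Then $\RLCST(G_{\ik},H_{\ik})=\max_j\RLCST(G_{\ik,j},H_{\ik,j})+O(1)$, exactly implementing the $\max$ operator.

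Once these three gadgets are in place, the recursion gives the claimed rooted bound $\RLCST(G(a),H(b))=(2^{t/2}s(a,b)+O(1))L$ directly. To pass from $\RLCST$ to $\LCST$, I will use a height argument analogous to the one at the end of Theorem~\ref{thm:hard-sti-pair}: prepend a long pendant path (of length $\Theta((2^{t/2}+\log L)\cdot t)$) above the roots of both $G(a)$ and $H(b)$ so that any unrooted common subtree achieving the optimum must map the root of $G(a)$ to the root of $H(b)$, forcing the unrooted LCST to equal the rooted LCST. Finally, when $s(a,b)=1$ every bit product used is $1$, so $G(a)$ itself embeds in $H(b)$ and the equality $\RLCST(a',b')=\LCST(a',b')=|a'|$ follows. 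The size/depth bounds $|a'|,|b'|=O(\poly(2^t)\cdot L)$ and depth $O((2^{t/2}+\log L)\cdot t)$ come from the fact that the recursion has $t$ levels, each multiplying size by a constant, plus the final root-path padding.

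The main obstacle I expect is the \emph{max gadget}: ensuring that the LCST really equals the max and not a mix of partial contributions from both branches. Concretely, I need to prevent an isomorphism from mapping part of $G_{\ik}(a)$ into $H_{\ik,0}(b)$ and another part into $H_{\ik,1}(b)$, which would artificially inflate the LCST. Resolving this requires careful use of distinguishing marker structures (asymmetric pendant paths of different lengths at the children of the $H$-root, together with a single fixed spine in the $G$-root) so that the root of $G_{\ik}(a)$ must be mapped to a node whose neighbourhood in $H_{\ik}(b)$ forces it into a single branch. A secondary subtlety is the bookkeeping of additive $O(1)$ errors across the $O(t)$ recursion levels, which accumulates to $O(t)$; taking $L\ge 2^t$ is exactly what makes this accumulation negligible compared to the leading $2^{t/2}s(a,b)\cdot L$ term, explaining the hypothesis $L\ge 2^t$.
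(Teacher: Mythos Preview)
Your proposal has a genuine gap in the expectation gadget. You plan to use ``small rigid spacers'' of size $O(1)$ to force $G_{\ik,0}$ to match $H_{\ik,0}$ and $G_{\ik,1}$ to match $H_{\ik,1}$. This works for Subtree Isomorphism (where the \emph{entire} $G$-tree must embed, so a mismatched marker kills the embedding), but it fails for Largest Common Subtree: the optimal common subtree can simply exclude the $O(1)$-size marker and cross-match $G_{\ik,0}$ with $H_{\ik,1}$. Since the recursive \emph{shapes} of $G_{\ik,0}(a)$ and $H_{\ik,1}(b)$ are identical (only the leaf bit gadgets differ), such a cross-match can pick up a size-$L$ contribution whenever $a_{\ik,0,j}\cdot b_{\ik,1,j}=1$, which is completely uncontrolled by $s(a,b)$. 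Local $O(1)$ markers cannot repair this, because losing them costs only $O(1)$ while the spurious gain can be $\Theta(L)$.

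The paper's fix is global rather than local: it encodes the entire sequence of $\E$-choices $i_1 i_3\cdots i_{t-1}$ into the bit gadget itself, by attaching the size-$L$ complete binary tree at position $\binodd(i)$ on a path of length $2^{t/2}$. Then any cross-match at an expectation level shifts this attachment position for \emph{every} descendant bit gadget, so mismatched bit gadgets contribute only the path (size $2^{t/2}$) instead of $2^{t/2}+L$. The paper then proves inductively that the honest matching dominates every cross-matching (the $f(\ik)\ge\epsilon_k$ argument). Two further remarks: first, your plan to pass from $\RLCST$ to $\LCST$ by prepending a long pendant path does not work either---unlike in subtree isomorphism, the optimal unrooted common subtree can ignore the pendant path, so no root-to-root mapping is forced; the paper instead shows $\LCST-\RLCST=O(L)$ via a case analysis on node depths modulo the padding length $K$. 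Second, you misidentify the hard gadget: the $\max$ gadget is actually the easy one (by $\max$-invariance $G_{\ik}$ is a single subtree, which in any rooted common subtree must go into a single branch of $H_{\ik}$); the real obstacle is the expectation gadget.
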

\begin{proof}
For each $k$ and each prefix of index $\ik \in  [d_1] \times \cdots \times [d_k]$ , we construct corresponding gadget $a' = G_{\ik}(a)$ and $b' = H_{\ik}(b)$ for each $a \in A$ and $b \in B$ (when $k = 0$, $\ik$ can only be the empty prefix, and we simply use $G(a)$ and $H(b)$ for convenience) inductively, mimicking the evaluation of the Tropical Similarity. For this purpose, we need to construct the following three types of gadgets.

\paragraph{Bit Gadgets.} For each coordinate $i \in [d_1] \times \cdots \times [d_t]$, let $\mathcal{C}_i$ be the tree constructed by join a path of length $2^{t/2}$ and a complete binary tree of $L$ nodes: one end of the path is regarded as the root, and we link an edge between the node of depth $\binodd(i)$ and 
the root of the complete binary tree, where $\binodd(i) \in [0, 2^{t/2})$ is the number whose binary representation is $i_1i_3 \cdots i_{t-1}$.

For every $a \in A$ and for each coordinate $i \in [d_1] \times \cdots \times [d_t]$, we construct $G_i(a) = \mathcal{C}_i$ if $a_i = 1$, or simply a path of length $2^{t/2}$ if $a_i = 0$. Similarly, for every $b \in B$ and for each coordinate $i \in [d_1] \times \cdots \times [d_t]$, we construct $H_i(b) = \mathcal{C}_i$ if $b_i = 1$, or simply a path of length $2^{t/2}$ if $a_i = 0$.

If $a_i \cdot b_i = 0$, then $\textsf{RLCST}(G_i(a), H_i(b)) = 2^{t/2}$; otherwise $\RLCST(G_i(a), H_i(b)) = 2^{t/2} + L$. Furthermore, for any two coordinates $i, j$ with $\binodd(i) \ne \binodd(j)$, $\textsf{RLCST}(G_i(a), H_j(b)) = 2^{t/2}$.

Let $K = 2^{t/2} + \lceil \log L \rceil + 1$ be the maximum possible height of a bit gadget. Now we combine bit gadgets recursively according to the $\E$ and $\max$ operators in the evaluation of Tropical Similarity score. Starting from $k = t - 1$, there are two cases to consider.

\paragraph{Expectation Gadgets.} The first case is when $\E$ operator is applied to $(k+1)$-th dimension. For $\ik \in [d_1] \times \cdots \times [d_k]$ and $a \in A$, we construct $G_{\ik}(a), H_{\ik}(b)$ as follows:
\[
G_{\ik}(a) = \left( \mathcal{P}^{K-1}(G_{\ik,0}(a)) \circ \mathcal{P}^{K-1}(G_{\ik,1}(a)) \right)
\]
and
\[
H_{\ik}(a) = \left( \mathcal{P}^{K-1}(H_{\ik,0}(b)) \circ \mathcal{P}^{K-1}(H_{\ik,1}(b)) \right)
\]

\paragraph{Max Gadgets.} The second case is when $\max$ operator is applied to $(k+1)$-th dimension.  For $\ik \in [d_1] \times \cdots \times [d_k]$ and $a \in A$, we construct $G_{\ik}(a), H_{\ik}(b)$ as follows:
\[
G_{\ik}(a) = \mathcal{P}^{K}(G_{\ik,0}(a)) = \mathcal{P}^{K}(G_{\ik,1}(a))
\]
and
\[
H_{\ik}(a) = \left( \mathcal{P}^{K-1}(H_{\ik,0}(b)) \circ \mathcal{P}^{K-1}(H_{\ik,1}(b)) \right).
\]
Note that $G_{\ik}(a)$ is well-defined since $a$ is $\max$-invariant.

Finally we obtain tensor gadgets $G(a), H(b)$ for every $a \in A, b \in B$. It is easy to see that the depth of trees is $O(tK) \le O((2^{t/2} + \log L)\cdot t)$, and the size of trees is $O(L \cdot 2^t + K \cdot 2^t) = O(\poly(2^t) \cdot L)$.

\paragraph{Correctness for $\RLCST$.} First we show that $\RLCST(G(a), H(b)) = (2^{t/2} s(a, b) + O(1)) L$. We fix two tensors $a, b$. For every dimension $k$, let $U_k$ be a set of gadget pairs:
\[
U_k = \{(G_{\ik}(a), H_{j_{(k)}}(b)) \mid \ik, j_{(k)} \in [d_1] \times \cdots \times [d_k], i_p \ne j_p \text{ for some odd } p\}.
\]
Let $\epsilon_k$ be the maximum $\RLCST$ among the pairs in $U_k$.

For every $\ik \in [d_1] \times \cdots \times [d_t]$, let $f(\ik) = \textsf{RLCST}(G_{\ik}(a), H_{\ik}(b))$. For the last dimension $t$, it is easy to see that $f(i) = 2^{t/2} + (a_i \cdot b_i) \cdot L$ and $\epsilon_t = 2^{t/2}$. Now we prove by induction that $f(i_{(k)}) \ge \epsilon_k$ holds for every $0 \le k \le t$ and $\ik \in [d_1] \times \cdots \times [d_k]$.

On the one hand, if an $\E$ operator is applied to the $(k+1)$-th dimension, by induction hypothesis we have
$f(\ik,0) + f(\ik,1) \ge 2\epsilon_{k+1}$, so $f(\ik) = f(\ik,0) + f(\ik,1) + 2(K - 1) + 1$. Note that $\epsilon_k \le 2\epsilon_{k+1} + 2(K - 1) + 1$, so $f(i_{(k)}) \ge \epsilon_k$ holds.
On the other hand, If the $\max$ operator is applied to the $(k+1)$-th dimension, then we have $f(\ik)  = \max\{ f(\ik,0), f(\ik,1) \} + K$ and $\epsilon_k \le \epsilon_{k-1} + K$, so $f(i_{(k)}) \ge \epsilon_k$ holds as well.

Expanding the above recurrence relation of $f(\ik)$, we have
\begin{align*}
\RLCST(G(a), H(b)) &= 2^{t/2} s(a, b) L + O(K) \cdot 2^{t/2} \\
& = 2^{t/2} s(a, b) L + O(2^{t/2} + \log L) \cdot 2^{t/2} \\
& = (2^{t/2} s(a, b) + O(1)) L.
\end{align*}

\paragraph{Correctness for $\LCST$.} Now we show that $\LCST(G(a), H(b)) = \RLCST(G(a), H(b)) + O(L)$. Fix a pair of $(a, b) \in A \times B$.
If a node of $G(a)$ or $H(b)$ is in a bit gadget, then we call it \textit{bit node}. If a node of $G(a)$ or $H(b)$ is not in any bit gadget, then we call it \textit{operator node}.

Let $I_G(a)$ and $I_H(b)$ be the largest isomorphic subtrees in $G(a)$ and $H(b)$.
Let $r_a$ be the root of $I_G(a)$, i.e., the lowest node when the tree is directed with respect to the root of $G(a)$, and let $r_b$ be the root of $I_H(b)$.
Let $r'_b$ be the node in $G(a)$ that is mapped to $r_b$, and $r'_a$ be the node in $H(b)$ that is mapped from $r_a$.

\begin{figure}
\centering
\begin{tikzpicture}[
	level/.style={sibling distance=95},
	level distance = 45,scale=0.7,transform shape]

\tikzset{every node/.style={minimum width=20}}
\tikzset{itria/.style={
  draw,shape border uses incircle,
  isosceles triangle,shape border rotate=90,yshift=-50},
}

\node [circle,fill=gray!40,draw] (z){$r_a$}
	child {node [circle,fill=gray!40,draw] (a) {}
		child {node [circle,fill=white,draw] (u2) {$u_2$}}
    child {node [circle,fill=gray!40,draw] (g) {$$}
    		child {node [circle,fill=white,draw] (u3) {$u_3$}}
		child[missing]
    }
  }
  child[sibling distance=120] {
	node [circle,fill=white,draw] (u1) {$u_1$}
};

\node [circle,fill=gray!40,draw,level 1/.style={sibling distance=150}] (h) at ($(g) + (1.6, -3.5)$) {}
	child { node [circle,fill=white,draw] (uq1) {$u_{q-1}$} }
	child { node [circle,fill=gray!40,draw] (i) {$r'_b$}
		child { node [circle,fill=white,draw] (uq) {$u_{q}$} }
	}
;

\node [circle,fill=gray!40,draw] (zk) at ($(z) + (10, 0)$) {$r_b$}	
  child[sibling distance=120] {
	node [circle,fill=white,draw] (u1k) {$u'_{q}$}
  }
  child {node [circle,fill=gray!40,draw] (ak) {}
		child {node [circle,fill=white,draw] (u2k) {$u'_{q-1}$}}
    child {node [circle,fill=gray!40,draw] (gk) {$$}
		child[missing]
		child {node [circle,fill=white,draw] (u3k) {$u'_{q-2}$}}
    }
  };

\node [circle,fill=gray!40,draw,level 1/.style={sibling distance=150}] (hk) at ($(gk) + (-1.6, -3.5)$) {}
	child { node [circle,fill=white,draw] (uq1k) {$u'_{2}$} }
	child { node [circle,fill=gray!40,draw] (ik) {$r'_a$}
		child { node [circle,fill=white,draw] (uqk) {$u'_{1}$} }
	}
;

\node at ($(z)+(0,1.2)$) {$I_G(a)$};
\node at ($(zk)+(0,1.2)$) {$I_H(b)$};

\draw[loosely dotted,line cap=round,thick] (g) to (h);
\draw[loosely dotted,line cap=round,thick] (gk) to (hk);

\begin{scope} [on background layer]
\node[itria] (v1) at ($(u1)$) {$\widetilde{T}_{u_1}$};
\node[itria] (v2) at (u2) {$\widetilde{T}_{u_2}$};
\node[itria] (v3) at (u3) {$\widetilde{T}_{u_3}$};
\node[itria] (vq1) at (uq1) {$\widetilde{T}_{u_{q-1}}$};
\node[itria] (vq) at (uq) {$\widetilde{T}_{u_{q}}$};

\node[itria] (v1k) at (u1k) {$\widetilde{T}_{u'_{q}}$};
\node[itria] (v2k) at (u2k) {$\widetilde{T}_{u'_{q-1}}$};
\node[itria] (v3k) at (u3k) {$\widetilde{T}_{u'_{q-2}}$};
\node[itria] (vq1k) at (uq1k) {$\widetilde{T}_{u'_{2}}$};
\node[itria] (vqk) at (uqk) {$\widetilde{T}_{u'_{1}}$};
\end{scope}

\end{tikzpicture}
\caption{An illustration of $I_G(a)$ and $I_H(b)$.}
\end{figure}

If $r_a = r'_b$, then let $q = 1$ and $u_1 = r_a, u'_1 = r_b$. Otherwise,
let $u_1, \dots, u_q$ be the list of nodes that are in $I_G(a)$ and are adjacent to some node on the path from $r_a$ to $r'_b$. Assume $u_1, \dots, u_q$ is in depth-increasing order (it is easy to see that no two such nodes are of same depth). Let $u'_1, \dots, u'_q$ be the nodes in $I_H(b)$ that are mapped by $u_1, \dots, u_q$, respectively. Each node in $u'_1, \dots, u'_q$ should be adjacent to some node on the path from $r'_a$ to $r_b$ in $I_H(b)$.

For a node $u_i$, we denote the whole subtree of $u_i$ in $G(a)$ as $T_{u_i}$, and we define $T'_{u'_i}$ similarly. We can decompose the subtree $I_G(a)$ into two parts: the first part is the path from $r_a$ to $r'_b$, and the second part is the $q$ rooted subtrees $\widetilde{T}_{u_1} = T_{u_1} \cap I_G(a), \dots, \widetilde{T}_{u_q} = T_{u_q} \cap I_G(a)$. Similarly, we can decompose $I_H(b)$ into the path from $r_b$ to $r'_a$ and $q$ rooted subtrees $\widetilde{T}_{u'_1} = T_{u'_1} \cap I_H(b), \dots, \widetilde{T}_{u'_q} = T_{u'_q} \cap I_H(b)$. Thus we have
\[
\LCST(G(a), H(b)) = q + \sum_{i = 1}^{q} \left\lvert \widetilde{T}_{u_i} \right\rvert \le  O(tK) + \sum_{i = 1}^{q} \RLCST(T_{u_i}, T'_{u'_i})
\]

It is sufficient to obtain a bound for the sum of $\RLCST$ of $T_{u_i}, T'_{u'_i}$. 
If some $u_j$ is a bit node, then $u_i$ is also a bit node for all $i > j$, and all of them are in the same bit gadget, so $\sum_{i = j}^{q} \RLCST(T_{u_i}, T'_{u'_i}) \le O(2^{t/2} + L)$. Similarly, some $u'_j$ is a bit node, then $u'_i$ is also a bit node for all $i < j$, and all of them are in the same bit gadget, so $\sum_{i = 1}^{j} \RLCST(T_{u_i}, T'_{u'_i}) \le O(2^{t/2} + L)$.

Now we consider the following three cases when both $u_i$ and $u'_i$ are operator nodes: ($\depth(u_i)$ stands for the depth of $u_i$ in $G(a)$, $\depth(u'_i)$ stands for the depth of $u'_i$ in $H(b)$)
\begin{enumerate}[\bf C{a}se 1.]
\item $\depth(u_i) \not\equiv \depth(u'_i) \pmod{K}$, then it is impossible to map some operator node with two children in $T_{u_i}$ to an operator node with two children in $T_{u'_i}$. If $\depth(u_i) > \depth(u'_i)$, then at most one bit gadget in $T_{u_i}$ can have nodes being mapped to $T_{u'_i}$, and the case that $\depth(u_i) < \depth(u'_i)$ is similar. Thus $\RLCST(T_{u_i}, T'_{u'_i}) \le O(tK) + (2^{t/2} + L) = O(tK + L)$.

Note that the depth of parent nodes of $u_i$ and $u'_i$ should also be different modulo $K$,
so either $u_i$ has no parent in $I_G(a)$ (this is the case when $r_a = r'_b$) or the parent of $u_i$ has only one child in $I_G(a)$, and either case implies $i = q$.

\item If $\depth(u_i) \equiv \depth(u'_i) \pmod{K}$ but $\depth(u_i) > \depth(u'_i)$, then since $K$ is an upper bound of the height of any bit gadget, all the nodes in $T_{u_i}$ can only be mapped to the operator nodes in $T_{u'_i}$, which implies $\RLCST(T_{u_i}, T'_{u'_i})$ is no more than the number of operator nodes in $T'_{u'_i}$. The case that $\depth(u_i) \equiv \depth(u'_i) \pmod{K}$ but $\depth(u_i) < \depth(u'_i)$ is similar.

All the trees $T'_{u'_i}$ are disjoint. Thus the sum $\sum_{i} \RLCST(T_{u_i}, T'_{u'_i})$ over all $i$ in this case can be upper-bounded by the total number of operator nodes in $G(a)$, which is $O(K \cdot 2^{t/2})$.
\item If $\depth(u_i) = \depth(u'_i)$, then there exists two prefixes of index $i_{(k)}, j_{(k)}$, such that
\[
\RLCST(T_{u_i}, T_{u'_i}) = \RLCST(G_{i_{(k)}}(a), H_{j_{(k)}}(b)) + O(K)
\]
Note that $u_1, \dots, u_q$ are in depth-increasing order, and $u'_q, \dots, u'_1$ are in depth-decreasing order, so this case can only happen for at most one pair of nodes.
\end{enumerate}
Summing up all the above cases, we have
\[
\LCST(G(a), H(b)) \le O(tK) + O(2^{t/2} + L) + O(tK + L) + O(K \cdot 2^{t/2}) +  \RLCST(G_{i_{(k)}}(a), H_{j_{(k)}}(b))
\]
for any $i_{(k)}, j_{(k)}$. Thus $\LCST(G(a), H(b)) \le \RLCST(G(a), H(b)) + O(L)$.
\end{proof}

By Theorem \ref{thm:hard-approx-lcst-pair} with $L = 2^{t}$, we can easily have the following reductions from \GAPMAXTT{$\epsilon(D)$} and \GAPMINTT{$\epsilon(D)$} to approximation variants of \MAXLCSTPAIR and \MINLCSTPAIR. Here we focus on the case that $\epsilon(D) = \Omega(D^{-1/2})$. It is reasonable since $\sqrt{D} \cdot s(a, b)$ is always an integer, and \GAPMAXTT{$o(D^{-1/2})$} is essentially equivalent to \MAXTT. This argument also holds for \GAPMINTT{$o(D^{-1/2})$}.

\begin{theorem} \label{thm:hard-gap-max-lcst-pair}
For any function $\epsilon(D) = \Omega(D^{-1/2})$,
there exists an $O(N \poly(D))$-time reduction from a \GAPMAXTTrestri{$\epsilon(D)$} instance with two sets of $N$ tensors of size $D$ to an instance of the following approximation variant of $\MAXLCSTPAIR$: Given two sets $A, B$ of $N$ trees of size $\poly(D)$ and a set $B$ of $N$ strings of length $D$ over a constant-size alphabet, distinguish between the following:
\begin{itemize}
\item \textbf{Completeness:} There exists a pair of $a, b$ such that $\LCST(a, b) = \lvert a \rvert =  (1 + o(1))D^{3/2}$;
\item \textbf{Soundness:} For every pair $a, b$, $\LCST(a, b) \le O(\epsilon(D) D^{3/2})$.
\end{itemize}
And this conclusion also holds for $\RLCST$.
\end{theorem}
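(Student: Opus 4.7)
The plan is to invoke Theorem~\ref{thm:hard-approx-lcst-pair} with the parameter $L = D = 2^t$ (the smallest choice allowed by the hypothesis $L \geq 2^t$), which gives tree gadgets $a' = G(a)$ and $b' = H(b)$ for every $(a,b) \in A \times B$, computable in total time $O(N \poly(D))$. The theorem then yields
\[
\RLCST(a',b') = (\sqrt{D}\, s(a,b) + O(1))\, D = D^{3/2}\, s(a,b) + O(D),
\]
and $\LCST(a',b') = \RLCST(a',b') + O(L) = \RLCST(a',b') + O(D)$.

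The crucial step is verifying that $|a'| = (1+o(1))\, D^{3/2}$, so that in the completeness case (where Theorem~\ref{thm:hard-approx-lcst-pair} guarantees $\RLCST(a',b') = \LCST(a',b') = |a'|$ whenever $s(a,b) = 1$) we land exactly on the target value. Because the tensors in $A$ are $\max$-invariant, the construction of $G(a)$ does not branch at any of the $t/2$ levels governed by a $\max$ operator; those levels simply prepend a path of length $K = \sqrt{D} + O(\log D)$ above the single common subtree. Only the $t/2$ levels governed by $\E$ introduce branching, so $G(a)$ contains exactly $2^{t/2} = \sqrt{D}$ bit gadgets, each of size $O(\sqrt{D} + L) = O(D)$. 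The total operator overhead is bounded by $\sum_{k=0}^{t} 2^{\lceil k/2 \rceil} \cdot K = O(\sqrt{D} \cdot K) = O(D)$. Summing,
\[
|a'| = \sqrt{D} \cdot (\sqrt{D} + L) + O(D) = \sqrt{D} \cdot D + O(D) = (1+o(1))\, D^{3/2}.
\]

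For completeness, when $s(a,b) = 1$ we get $\LCST(a',b') = \RLCST(a',b') = |a'| = (1+o(1))\, D^{3/2}$, as required. For soundness, when $s(a,b) < \epsilon(D)$ we have $\LCST(a',b') \le D^{3/2}\, \epsilon(D) + O(D)$, and the hypothesis $\epsilon(D) = \Omega(D^{-1/2})$ ensures that $O(D) = O(\epsilon(D)\, D^{3/2})$, giving $\LCST(a',b') = O(\epsilon(D)\, D^{3/2})$. The identical calculation applies to $\RLCST$, so the last sentence of the theorem follows.

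The main obstacle is accounting for $|a'|$ with a $(1+o(1))$ leading constant rather than merely $O(D^{3/2})$; a naive analysis that treats $A$ like $B$ would yield $D$ bit gadgets and a total size of $\Theta(D^2)$, breaking the completeness target. The fix is precisely to exploit the $\max$-invariance of $A$ so that $G(a)$ collapses to $\sqrt{D}$ bit gadgets, after which the operator-path overhead $\tilde{O}(D)$ is dwarfed by the dominant bit-gadget contribution $\sqrt{D} \cdot L = D^{3/2}$.
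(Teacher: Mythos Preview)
Your proposal is correct and follows exactly the paper's approach: invoke Theorem~\ref{thm:hard-approx-lcst-pair} with $L = 2^t = D$, read off the completeness and soundness bounds, and use $\epsilon(D) = \Omega(D^{-1/2})$ to absorb the additive $O(D)$ term. Your explicit size computation for $|a'|$ is more detailed than the paper (which leaves this implicit), and in fact slightly redundant: once $s(a,b)=1$, Theorem~\ref{thm:hard-approx-lcst-pair} already gives $|a'| = \RLCST(a',b') = (\sqrt{D}+O(1))D = (1+o(1))D^{3/2}$ directly, without needing to count bit gadgets.
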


\begin{remark}
Note that Theorem~\ref{thm:hard-gap-max-lcst-pair} also implies a reduction from \BPSATPAIR to \STIPAIR, but the trees constructed by the reduction in Theorem \ref{thm:hard-sti-pair} have a smaller size and a lower depth.
\end{remark}

\begin{theorem} \label{thm:hard-gap-min-lcst-pair}
For any function $\epsilon(D) = \Omega(D^{-1/2})$,
there exists an $O(N \poly(D))$-time reduction from 
a \GAPMINTTrestri{$\epsilon(D)$} instance with two sets of $N$ tensors of size $D$
to an instance of the following approximation variant of $\MINLCSTPAIR$: Given two sets $A, B$ of $N$ trees of size $\poly(D)$ and a set $B$ of $N$ strings of length $D$ over a constant-size alphabet, distinguish between the following:
\begin{itemize}
\item \textbf{Completeness:} There exists a pair of $a, b$ such that $\LCST(a, b) \le O(\epsilon(D) D^{3/2})$;
\item \textbf{Soundness:} For every pair $a, b$, $\LCST(a, b) = \lvert a \rvert = (1 + o(1))D^{3/2}$.
\end{itemize}
And this conclusion also holds for $\RLCST$.
\end{theorem}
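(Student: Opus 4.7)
The plan is to invoke Theorem~\ref{thm:hard-approx-lcst-pair} directly, in complete analogy with the proof of Theorem~\ref{thm:hard-gap-max-lcst-pair}: given a \GAPMINTTrestri{$\epsilon(D)$} instance on two sets of $N$ tensors $A,B \subseteq \{0,1\}^{d_1 \times \cdots \times d_t}$ with $D = 2^t$, set the scaling parameter $L = 2^t = D$ and apply the tree construction of Theorem~\ref{thm:hard-approx-lcst-pair} (after swapping the roles of $A$ and $B$ if needed so that the $\max$-invariant set plays the role required by the construction). This produces, in $O(N \poly(D))$ time, two sets $A', B'$ of $N$ bounded-degree binary trees of size $\poly(D)$ such that for every $(a,b) \in A \times B$ with corresponding $(a',b') \in A' \times B'$,
\[
\LCST(a',b') \;=\; \bigl(2^{t/2} s(a,b) + O(1)\bigr) \cdot L \;=\; \bigl(\sqrt{D}\, s(a,b) + O(1)\bigr) \cdot D,
\]
and the same identity holds for $\RLCST$.

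Next I would translate the two cases of the \GAPMINTTrestri{$\epsilon(D)$} gap promise into the gap promise stated in the theorem. In the completeness case, there exists a pair $(a,b)$ with $s(a,b) < \epsilon(D)$; plugging into the identity above,
\[
\LCST(a',b') \;\le\; \sqrt{D}\,\epsilon(D) \cdot D + O(D) \;=\; O\!\bigl(\epsilon(D)\, D^{3/2}\bigr),
\]
where the additive $O(D)$ is absorbed into the main term using the hypothesis $\epsilon(D) = \Omega(D^{-1/2})$. In the soundness case, every pair satisfies $s(a,b) = 1$, and here I invoke the ``In particular'' clause of Theorem~\ref{thm:hard-approx-lcst-pair}: whenever $s(a,b) = 1$, the construction actually satisfies $\LCST(a',b') = \RLCST(a',b') = |a'|$, which by the size bound equals $(1+o(1)) D^{3/2}$. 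Thus every pair in $A' \times B'$ has $\LCST$ exactly $|a'|$, as required.

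The whole argument is essentially a one-step application of Theorem~\ref{thm:hard-approx-lcst-pair}, so there is no real obstacle; the only subtlety worth flagging is that the ``soundness'' side of the min-gap requires the exact equality $\LCST(a',b') = |a'|$ rather than just the additive approximation $(\sqrt{D}\cdot 1 + O(1))D$, and it is precisely the ``In particular'' clause of Theorem~\ref{thm:hard-approx-lcst-pair} that supplies this. The statement for $\RLCST$ follows by the same reasoning, since the identity provided by Theorem~\ref{thm:hard-approx-lcst-pair} holds for both $\LCST$ and $\RLCST$ with matching constants.
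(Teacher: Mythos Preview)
Your proposal is correct and follows essentially the same approach as the paper: the paper also derives Theorem~\ref{thm:hard-gap-min-lcst-pair} directly from Theorem~\ref{thm:hard-approx-lcst-pair} by setting $L = 2^t$, and your translation of the completeness and soundness cases (including the use of the ``In particular'' clause for the exact equality in the soundness case) matches exactly what the paper intends.
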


By Theorem \ref{thm:polylogspace-min-max}, we show reductions from \MAXLCSTPAIR (or \MINLCSTPAIR) to \BPSATPAIR via the following theorem:
\begin{theorem} \label{thm:lcst-polylog}
Given two bounded-degree unrooted trees $T_A$ and $T_B$ and a number $q$, it can be decided in $\NSPACE[(\log n)^2]$ that whether there is an isomorphism between a subtree of $T_A$ and a subtree of $T_B$ of size $q$.
\end{theorem}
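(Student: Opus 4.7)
The plan is to mimic the centroid-based nondeterministic procedure of Theorem~\ref{thm:sti-polylog}, but to recurse on the (unknown) common subtree $S$ itself rather than on one of the given input trees. We never materialize $S$; instead, at each recursion level we maintain an ``entry configuration'' on each side: a current subtree of $T_A$ specified by an entry node and the forbidden edge back to its parent in $T_A$, a current subtree of $T_B$ specified analogously, a target size $q' \le q$, and a list of anchor pairs $(a_i,b_i)$ that any candidate embedding of the current piece of $S$ must respect.

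At each level we first guess a centroid $c$ of the size-$q'$ piece of $S$ that we are committing to, by nondeterministically picking its image $c_A$ in the current subtree of $T_A$ and its image $c_B$ in the current subtree of $T_B$, rejecting if any anchor constraint is violated. Next we guess the degree $d$ of $c$ in that piece (bounded by the input degree bound, hence $O(1)$), an injection from $d$ neighbors of $c_A$ to $d$ neighbors of $c_B$, and sizes $q_1,\ldots,q_d$ summing to $q'-1$ with each $q_i \le \lceil q'/2 \rceil$ (the centroid property). For each $i$ we recursively decide the analogous question with target size $q_i$ on the pair of subtrees hanging off the two matched neighbors, passing along the new anchor $(c_A,c_B)$ together with the inherited anchors that still lie in the chosen subtree pair. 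The base case $q'=1$ just checks every surviving anchor against the guessed $c_A,c_B$.

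Completeness is witnessed by guessing according to the actual centroid decomposition of a genuine size-$q$ common subtree $S$; soundness follows by gluing the per-level injections into a single well-defined isomorphism between a size-$q$ subtree of $T_A$ and a size-$q$ subtree of $T_B$. Because the centroid property halves the target size at each step, the recursion depth is $O(\log q) \le O(\log n)$. Since $T_A$ and $T_B$ are bounded-degree, each stack frame stores only $O(1)$ node identifiers, $O(1)$ neighbor pointers, and $O(1)$ integers bounded by $n$, for a per-frame cost of $O(\log n)$, yielding total space $O((\log n)^2)$.

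The main obstacle will be the anchor bookkeeping: anchors are what prevent two independent sibling recursions from inconsistently mapping the shared centroid, so we must correctly filter the inherited anchor list before each recursive call, retaining only those pairs whose $T_A$-node lies in the chosen child subtree (which can be tested by identifying which neighbor's subtree contains it, using only $O(\log n)$ additional workspace to walk from the node back toward $c_A$). Since each ancestor call contributes at most one new anchor, the number of simultaneously active anchors is at most the recursion depth $O(\log n)$, so storing them across all live frames stays within the overall $O((\log n)^2)$ budget.
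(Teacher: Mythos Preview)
Your overall strategy—nondeterministically guess a centroid of the unknown common subtree $S$ and recurse with $q_i\le\lceil q'/2\rceil$—is a valid variant of the paper's proof (the paper instead computes the centroid of the current host subtree $S_A\subseteq T_A$ and adds a branch for when that centroid is absent from the common subtree; your version cleanly avoids that branch). The space accounting is also right: $O(\log q)$ depth, $O(\log n)$ per frame.

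Where the proposal goes wrong is the anchor. You pass $(c_A,c_B)$ as the new anchor, but $c_A$ lies in none of the child subtrees, so your own filtering rule drops it immediately, and by induction the anchor list stays empty forever. The constraint you actually need is the entry-node pair: when recursing into the $i$-th child you must add $(v_i,v'_i)$ (the matched neighbors of $c_A,c_B$), forcing the size-$q_i$ piece to \emph{contain} $v_i$ and map it to $v'_i$; otherwise nothing makes the child piece attach to $c_A$, and the glued object can be a forest rather than a subtree. You must also reject (not silently drop) whenever a surviving anchor's $T_A$-node is neither $c_A$ nor in any matched child. A related point: one entry node plus one forbidden edge does not delimit the current region once you are two levels deep—if the new entry $w$ sits on the path from the level-$1$ centroid back toward the level-$1$ entry, the component of $T_A-(w,c_A^{(1)})$ containing $w$ also contains $c_A^{(0)}$ and all of its other branches, so a deeper piece can reuse $c_A^{(0)}$ or overlap a sibling, and the glued map either double-counts nodes or is ill-defined. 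The fix is exactly what the paper uses in Theorem~\ref{thm:sti-polylog} (``$S_A,S_B$ can always be accessed according to the information stored at the upper levels of recursion''): test membership in the current region against \emph{all} centroids on the recursion stack, not just the most recent one; there are $O(\log n)$ of them, so this still fits in $O((\log n)^2)$.
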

\begin{proof}
The algorithm in Theorem \ref{thm:sti-polylog} suffices to fulfill the requirement if modified slightly. At each level of recursion, we have two trees $S_A$ and $S_B$, a number $q$, and a set of node pairs $M = \{(a_1, b_1), \dots, (a_k, b_k)\}$. We need to decide whether there is an isomorphism from a subtree of $S_A$ to a subtree of $S_B$ satisfying it is of size $q$ and $a_i$ in $S_A$ is mapped to $b_i$ in $S_B$ for all $1 \le i \le k$.

First we find a centroid $c$ of $S_A$, then we guess if there is a subtree of size $q$ that contains $c$ and is isomorphic to a subtree of $S_B$. If not, then we delete $c$ to decompose $S_A$ into subtrees, guess which subtree contains a subtree that is isomorphic to a subtree of $S_B$ of size $q$, and runs our algorithm to check recursively; If it is, then follow the same routine as in Theorem \ref{thm:sti-polylog}: we guess a node $c'$ in $S_B$ to be the node that mapped by $c$ in the isomorphism and a bijective mapping from some of the neighbors of $c$ in $S_A$ to some of the neighbors of $c'$ in $S_B$. Additionally, we guess a number $q_v$ for each neighbor $v$ of $c$ and ensure the sum of $q_v$ over all neighbors equals to $q - 1$. We then check recursively if there is an isomorphism from a subtree of $S^{v}_A$ to a subtree of $S^{v'}_B$ of size $q$ subject to the constraint that some set of node pairs are matched. It is clear that this algorithm runs in $\NSPACE[(\log n)^2]$.
\end{proof}

\begin{theorem} \label{thm:lcst-pair-equiv}
For the case that the maximum element size $D = 2^{(\log N)^{o(1)}}$, there are near-linear time reductions between all pairs of the following problems:
\begin{itemize}
\item \BPSATPAIR;
\item \MAXLCSTPAIR or \MINLCSTPAIR on (rooted or unrooted) trees;
\item $2^{(\log D)^{1 - \Omega(1)}}$-approximate \MAXLCSTPAIR or \MINLCSTPAIR on (rooted or unrooted) trees;
\end{itemize}
\end{theorem}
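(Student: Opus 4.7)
The plan is to close a cycle of near-linear time reductions that mirrors the one used for \MAXLCSPAIR in Theorem~\ref{thm:lcs-pair-equiv}, but now running through the tree gadgets of Section~\ref{sec:subtree-Iso} and through the $\max$-invariant variant of the tropical tensor problem. The four waypoints will be \BPSATPAIR, \GAPMAXTTrestri{$\epsilon$} for an appropriately chosen $\epsilon$, the approximation variant of \MAXLCSTPAIR, and exact \MAXLCSTPAIR. Once the cycle is closed, the parallel statement for \MINLCSTPAIR follows by swapping in the matching $\min$-flavored ingredients.

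Concretely, I would chain the following four reductions. First, starting from \BPSATPAIR on a branching program of size $S = 2^{(\log N)^{o(1)}}$, invoke Corollary~\ref{thm:hard-maxmintt-restricted} with soundness parameter $\epsilon = 2^{-(\log S)^{c}}$ for a constant $c>0$ to be tuned; this yields a \GAPMAXTTrestri{$\epsilon$} instance on tensors of size $D = 2^{\Theta(\log^{4+c} S)} = 2^{(\log N)^{o(1)}}$, and by choice of $c$ one can arrange $\epsilon = 2^{-(\log D)^{1-\delta}}$ for any desired $\delta > 0$. Note $\epsilon = \Omega(D^{-1/2})$ trivially, so the hypothesis of Theorem~\ref{thm:hard-gap-max-lcst-pair} is satisfied. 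Second, feed this instance into Theorem~\ref{thm:hard-gap-max-lcst-pair} to obtain an instance of $2^{(\log D)^{1-\delta}}$-approximate \MAXLCSTPAIR on (rooted or unrooted) bounded-degree trees of size $\poly(D) = 2^{(\log N)^{o(1)}}$. Third, the approximation problem trivially reduces to exact \MAXLCSTPAIR. Fourth, close the cycle by appealing to Theorem~\ref{thm:lcst-polylog}, which places the decision version ``is there a common subtree of size $\ge k$?'' in $\NSPACE[\polylog(n)]$; combined with Theorem~\ref{thm:polylogspace-min-max} this reduces exact \MAXLCSTPAIR to \BPSATPAIR on a branching program of size $2^{(\log N)^{o(1)}}$.

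The \MINLCSTPAIR direction proceeds identically: replace Corollary~\ref{thm:hard-maxmintt-restricted}'s $\max$-branch by its $\min$-branch, use Theorem~\ref{thm:hard-gap-min-lcst-pair} instead of Theorem~\ref{thm:hard-gap-max-lcst-pair}, and close the cycle by observing that the complementary predicate ``$\LCST(a,b) \le k$'' also lies in $\NSPACE[\polylog(n)]$ since $\NSPACE[\polylog(n)] = \co\NSPACE[\polylog(n)]$ by Immerman--Szelepcs\'enyi (the same step used inside the proof of Theorem~\ref{thm:polylogspace-min-max}). Both the rooted and unrooted cases are covered simultaneously because the tree gadgets of Theorem~\ref{thm:hard-approx-lcst-pair} guarantee $\LCST = \RLCST$ in the completeness case and a matching additive slack in the soundness case, and Theorem~\ref{thm:lcst-polylog} already handles the unrooted (hence harder) decision problem.

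The main obstacle is not in any single step but in verifying that the \emph{composition} of the two quasi-polynomial blow-ups—first from $S$ to $D = 2^{\polylog(S)}$ through the tensor construction, then from $D$ to $\poly(D)$ through the tree gadget of Theorem~\ref{thm:hard-approx-lcst-pair}—still fits inside the $2^{(\log N)^{o(1)}}$ budget in the theorem's hypothesis. This is exactly why the size parameter is restricted to $D = 2^{(\log N)^{o(1)}}$: under this hypothesis both $\polylog(S)$ and $\poly(D)$ remain $2^{(\log N)^{o(1)}}$, so no blow-up escapes the regime in which our near-linear-time reductions are meaningful. The remaining bookkeeping (matching the approximation factor $2^{(\log D)^{1-\Omega(1)}}$ in the statement to the $\epsilon$ produced by Corollary~\ref{thm:hard-maxmintt-restricted}) is a straightforward tuning of $c$ and $\delta$, analogous to the calculation inside the proof of Theorem~\ref{thm:tt-equiv}.
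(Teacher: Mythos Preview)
Your proposal is correct and follows essentially the same cyclic reduction approach as the paper, which itself simply invokes Theorems~\ref{thm:hard-gap-max-lcst-pair}, \ref{thm:hard-gap-min-lcst-pair}, and~\ref{thm:lcst-polylog} and defers to the pattern of Theorem~\ref{thm:lcs-pair-equiv}. If anything, you are more careful than the paper's terse proof: you explicitly route through the \emph{restricted} (i.e.\ $\max$-invariant) tropical tensor variant via Corollary~\ref{thm:hard-maxmintt-restricted}, which is indeed required since the tree gadget of Theorem~\ref{thm:hard-approx-lcst-pair} (and hence Theorem~\ref{thm:hard-gap-max-lcst-pair}) assumes one side consists of $\max$-invariant tensors.
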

\begin{proof}
By Theorem \ref{thm:hard-gap-max-lcst-pair}, \ref{thm:hard-gap-min-lcst-pair} and \ref{thm:lcst-polylog}, we can show cyclic reductions between these three problems in a similar way as in the proof of Theorem \ref{thm:lcs-pair-equiv}.
\end{proof}

We can also show a reduction from \GAPMAXTT{$\epsilon(N)$} to Largest Common Subtree on two large trees using the same gadgets in Theorem \ref{thm:hard-approx-lcst-pair}.

\begin{theorem} \label{thm:lcst-gadget}
Let $t$ be an even number and $d_1 = \cdots = d_t = 2$. Given two sets of $N$ tensors $A, B$ in $\{0, 1\}^{d_1 \times \cdots \times d_t}$, for $L = \Theta(2^{t} \log^2 N)$, there is a deterministic algorithm running in $O(N \log^2 N 2^{O(t)})$ time which outputs two binary trees $A', B'$ of size $O(N \log^2 N 2^{O(t)})$ and depth $O(\log^2 N  \poly(t) 2^{t/2})$, such that
\begin{align*}
\RLCST(A', B') &= (2^{t/2} s_{\max} + O(1)) L \\
\LCST(A', B')  &= (2^{t/2} s_{\max} + O(1)) L
\end{align*}
where $s_{\max}$ is the maximum Tropical Similarity among all pairs of $(a, b) \in A \times B$.
\end{theorem}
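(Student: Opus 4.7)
The plan is to first invoke Theorem~\ref{thm:hard-approx-lcst-pair} with the parameter $L = \Theta(2^t \log^2 N)$, obtaining pair gadgets $G(a), H(b)$ for each $a \in A, b \in B$ of size $\poly(2^t) \cdot L = \poly(2^t) \log^2 N$ and depth $O((2^{t/2}+\log L)\,t) = O(\poly(t) \cdot 2^{t/2})$, satisfying $\RLCST(G(a),H(b)) = \LCST(G(a),H(b)) = (2^{t/2} s(a,b)+O(1))L$. The goal is then to weld these $N$ pair gadgets into two single trees $A',B'$ via a backbone that forces the eventual LCST to be governed by the \emph{best} pair, rather than a sum over many pairs.

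The combination will use a two-level complete-binary-tree backbone in the style of Theorem~\ref{thm:sti-hard}. Let $K' = \Theta(\log^2 N \cdot \poly(t) \cdot 2^{t/2})$ be a path length that strictly exceeds the depth of any pair gadget. Define $A'$ as a complete binary tree of depth $\lceil \log N \rceil$ with $\mathcal{P}^{K'}(G(a_i))$ attached to its $i$-th leaf. Define $B'$ analogously to the $H_B$ construction of Theorem~\ref{thm:sti-hard}: an outer complete binary tree of depth $\lceil \log N \rceil$ in which a distinguished leaf $v^*$ is replaced by an inner complete binary tree of depth $\lceil \log N \rceil$ whose $j$-th leaf carries $\mathcal{P}^{K'}(H(b_j))$, while the remaining $N-1$ outer leaves carry $\mathcal{P}^{K'}(D)$ for a small dummy tree $D$ chosen so that $\RLCST(G(a), \mathcal{P}^{K'}(D)) = o(L/N)$ for every $a \in A$. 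The sizes and depths then work out to $O(N \log^2 N \cdot 2^{O(t)})$ and $O(\log^2 N \cdot \poly(t) \cdot 2^{t/2})$ respectively, matching the claimed bounds.

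For the lower bound, fix $(i^*, j^*)$ with $s(a_{i^*}, b_{j^*}) = s_{\max}$ and exhibit a common subtree: embed the outer backbone of $A'$ into the path of $B'$ that passes through $v^*$ and continues into the inner binary tree down to the leaf carrying $H(b_{j^*})$, then apply the pair-gadget isomorphism from Theorem~\ref{thm:hard-approx-lcst-pair} between $G(a_{i^*})$ and $H(b_{j^*})$. This realises $\RLCST(A', B') \ge (2^{t/2} s_{\max} + O(1))L$, with backbone contributions absorbed into the $O(L)$ additive term since $L = \Omega(\log^2 N)$.

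For the upper bound, we perform a case analysis on where the image of the root of $A'$ lies in $B'$: either the outer backbone of $A'$ is aligned with the outer backbone of $B'$ (in which case at most one leaf of $A'$ lands in the inner subtree at $v^*$, and the remaining $N-1$ leaves match dummies contributing only $o(L/N)$ each, totalling $o(L)$), or the backbone of $A'$ is pushed into the inner tree rooted at $v^*$ (in which case at most one $G(a_i)$ gadget can be aligned with an $H(b_j)$ gadget because the depth accounting precludes more). In both cases the total is $(2^{t/2} s_{\max} + O(1))L$. The unrooted bound $\LCST(A', B') = \RLCST(A', B') + O(L)$ follows by the same height-argument used at the end of the proof of Theorem~\ref{thm:hard-approx-lcst-pair}. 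The main obstacle is the quantitative design of the dummy tree $D$ so that $N-1$ dummy contributions indeed fit inside $O(L)$ while $D$ is still structurally compatible with the padding paths; balancing this against the gadget depths is the delicate step.
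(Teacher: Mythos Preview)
Your upper bound argument has a genuine gap.  In your construction, the backbone of $A'$ and the inner tree of $B'$ at $v^*$ are \emph{identical} in shape: both are complete binary trees of depth $\lceil\log N\rceil$ whose leaves carry $\mathcal{P}^{K'}$-padded gadgets.  Nothing prevents taking a common subtree whose root is sent to the root of $A'$ on one side and to $v^*$ on the other.  Under that alignment the two backbones match node-for-node, and \emph{all $N$} gadget pairs $(G(a_i),H(b_{\sigma(i)}))$ align simultaneously for some bijection $\sigma$.  This yields
\[
\LCST(A',B')\;\ge\;\sum_{i}\RLCST\bigl(G(a_i),H(b_{\sigma(i)})\bigr)\;+\;N\cdot K'\;+\;O(N),
\]
which is in general $\Theta(N)$ times the target $(2^{t/2}s_{\max}+O(1))L$, not within $O(L)$ of it.  Your ``depth accounting'' in Case~2 cannot rule this out precisely because the two subtrees have the same depth profile.  (The pigeonhole trick from Theorem~\ref{thm:sti-hard} works for Subtree Isomorphism only because there the \emph{entire} $A'$ must embed; for $\LCST$ the common subtree is free to ignore the outer layer of $B'$ altogether.)

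The paper fixes exactly this by breaking the symmetry between the two backbones.  It uses ``$K$-zoomed'' complete binary trees---each backbone edge subdivided into a path of length $K$---with \emph{different} zoom factors: $K_D$ (the gadget diameter) on the $A$ side and $K_G=2(m{+}1)K_D$ on the $B$ side.  Because the distances between successive branch points differ by a factor of roughly $2m$, the two backbones cannot be aligned at more than one branch point, and a case analysis on where the common subtree sits shows that at most one pair $(G(a),H(b))$ can contribute its full $\RLCST$.  A long path $\mathcal{P}^{mK_G+1}$ is prepended to the $A$-tree so that the lower-bound witness can still reach a single gadget on the $B$ side despite the depth mismatch.  No dummy tree $D$ is needed.
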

\begin{proof}
Using the recursive construction in Theorem \ref{thm:hard-approx-lcst-pair}, we can obtain tensor gadgets $G(a), H(b)$ for each $a \in A$ and $b \in B$. Let $m$ be the smallest number such that $N \le 2^m$. Now we combine the tensor gadgets in each set respectively to construct two trees $G_A, H_B$ as our instance for Largest Common Subtree.

For any number $K$, we define $K$-zoomed complete binary tree $\mathcal{Z}_K$ of $2^m$ as follows: first we construct a complete binary tree of $2^m$ leaves, then we insert $K-1$ internal nodes between every pair of adjacent nodes (so $\mathcal{Z}_K$ is of height $mK + 1$).

Let $K_D = O((2^{t/2}+ \log L)t)$ be the maximum diameter of any tensor gadget ($G(a)$ or $H(b)$). Let $K_G = 2(m+1)K_D$.

We construct $G_A=\mathcal{P}^{m K_G+1}(T_A)$, where $T_A$ is the following auxiliary tree:
\begin{itemize}
\item Initialize $T_A = \mathcal{Z}_{K_D}$, and arbitrarily select $N$ leaves;
\item For each selected leaf, associate it with a tensor $a \in A$;
\item Construct $G(a)$ for every $a \in A$, and link an edge from its root to the corresponding leaf of $a$.
\end{itemize}
And the tree $H_B$ for set $B$ is constructed as follows:
\begin{itemize}
\item Initialize $H_B = \mathcal{Z}_{K_G}$ and arbitrarily select $N$ leaves;
\item For each selected leaf, associate it with a tensor $b \in B$;
\item Construct $\mathcal{P}^{mK_D+1}(b)$ for every $b \in B$, and link an edge from its root to the corresponding leaf of $b$.
\end{itemize}

\paragraph{Proof for $\RLCST$.} Note that all the tensor gadgets are of same depth, and only one gadget $G(a)$ in $G_A$ can be mapped to a gadget $H(b)$ in $H_B$.

For $L = \Theta(2^{t} \log^2 N)$, using the fact that $\RLCST(G(a), H(b)) =  (2^{t/2} s(a, b) + O(1))L$, one can easily show that
\[
\RLCST(G_A, H_B) = (2^{t/2} s_{\max} + O(1))L + O(mK_G + mK_D) = (2^{t/2} s_{\max} + O(1))L.
\]
\paragraph{Proof for $\LCST$.} Now we show that $\LCST(G_A, H_B) \le \LCST(G(a), H(b)) + O(L)$ for some $(a, b) \in A \times B$.

If a node of $G_A$ or $H_B$ is in a tensor gadget, then we call it \textit{tensor node}. If a node of $G(a)$ or $H(b)$ is not in any tensor gadget, then we call it \textit{assembly node}. Let $G'_A$ be the tree $G_A$ with all tensor nodes removed, and we define $H'_B$ respectively.

We consider the following three cases:
\begin{enumerate}[\bf C{a}se 1.]
\item If none of tensor node of $G_A$ is in the $\LCST$, then
\[
\LCST(G_A, H_B) = \LCST(G'_A, H_B) \le \LCST(P_A, H_B) + \LCST(\mathcal{Z}_{K_D}, H_B),
\]
where $P_A$ is the path of length $mK_G + 1$ linked with the root of $T_A$. It is easy to see that $\LCST(P_A, H_B) \le mK_G + 1$.
Note that every pair of two tensor node from different tensor gadgets has distance at least $2K_G$, which is greater than the diameter of $T_A$, so the isomorphic subtree of $T_A$ in $H_B$ cannot contain nodes from more than one tensor gadgets. 
By noticing that $\LCST(\mathcal{Z}_{K_D}, H(b)) = O(K_D)$ for all $b \in B$ and $\LCST(\mathcal{Z}_{K_D},\mathcal{Z}_{K_G}) = O(K_G)$, we have $\LCST(\mathcal{Z}_{K_D}, H_B) = O(K_G)$. Thus $\LCST(G_A, H_B) = mK_G + O(K_G) = O(mK_G)$.

\item If $\LCST$ contains some tensor nodes of $G_A$, and all such tensor nodes are mapped to assembly nodes of $H_B$, then $\LCST(G_A, H_B)$ is no more than $\LCST(G'_A, H_B)$ plus the number of tensor nodes in the $\LCST$. By Case 1, $\LCST(G'_A, H_B) = O(m K_G)$.
Note that every two tensor nodes in $G_A$ has distance at most $K_G$ and any set of nodes of diameter $O(K_G)$ in $H_B$ can only have size $O(K_G)$, so the number of tensor nodes in $\LCST$ is at most $O(K_G)$. Thus $\LCST(G_A, H_B) = O(m K_G) + O(K_G) = O(m K_G)$.

\item If some tensor node in $G_A$ is mapped to a tensor node in $H_B$, then all the tensor nodes of $G_A$ in the $\LCST$ are in the same tensor gadget, and it also holds for $G_B$. This can be shown as follows: Let $u_1, u_2$ be two tensor node in $G_A$ that are mapped to tensor nodes $u'_1, u'_2$ in $H_B$, then
\begin{itemize}
\item $u_1, u_2$ are in the same tensor gadget. This is because that the minimum distance between two tensor nodes from different tensor gadgets in $G_A$ is at least $2K_D$ and at most $K_G$, but the distance between any two tensor nodes in $H_B$ is either $\le K_D$ or $\ge 2K_G$.
\item $u'_1, u'_2$ are in the same tensor gadget. This is because that the minimum distance between two tensor nodes from different tensor gadgets in $H_B$ is at least $2K_G$, but the distance between any two tensor nodes in $G_A$ is at most $K_G$.
\end{itemize}
Let $G(a)$ be the unique tensor gadget in $G_A$ that has nodes in the $\LCST$, and let $H(b)$ be the unique tensor gadget in $H_B$ that has nodes in the $\LCST$.  By Case 1, $\LCST(G'_A, H_B) = O(m K_G)$. Thus we have
\begin{align*}
\LCST(G_A, H_B) &\le \LCST(G_A \setminus G(a), H_B) + \LCST(G(a), H(b)) \\
&= O(mK_G) + \LCST(G(a), H(b)).
\end{align*}
\end{enumerate}

In any case, we can show that $\LCST(G_A, H_B) \le \LCST(G(a), H(b)) + O(L)$ for some $(a, b) \in A \times B$, which completes the proof.
\end{proof}

\begin{theorem} \label{thm:lcst-hard}
There exists an $O(N \poly(D))$-time reduction from an \GAPMAXTT{$\epsilon(N)$} instance with two sets of $N$ tensors of size $D$ to an instance of $o(\epsilon(N)^{-1})$-approximate Largest Common Subtrees on two (rooted or unrooted) binary trees of size $O(N \poly(D, \epsilon(N)^{-1}))$ and depth $O(\log^2 N \poly(D, \epsilon(N)^{-1}))$.
\end{theorem}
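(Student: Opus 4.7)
The plan is to apply the two-set-to-two-tree gadget construction of Theorem~\ref{thm:lcst-gadget} with carefully chosen parameters so that the $\LCST$ gap on the resulting single pair of trees faithfully reflects the $s_{\max}$ gap of the input \GAPMAXTT{$\epsilon(N)$} instance. The first step is to handle the boundary regime $\epsilon(N) < D^{-1/2}$, where the additive slack hidden in the gadget's value bound is of the same order as the multiplicative signal. I would eliminate this by padding every tensor in $A$ and $B$ with extra dimensions whose entries are identically $1$, preserving both $s(a,b)$ and the $\max$-invariance hypothesis used inside the proof of Theorem~\ref{thm:lcst-gadget}, and raising the effective dimension to $D' = \Theta(\max(D, \epsilon(N)^{-2}))$, which is still $\poly(D, \epsilon(N)^{-1})$.

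Next, I would invoke Theorem~\ref{thm:lcst-gadget} on the padded instance with $t = \log D'$ and $L = \Theta(D' \log^2 N)$. This produces, in $O(N \poly(D, \epsilon(N)^{-1}))$ time, two binary trees $G_A, H_B$ of the claimed size and depth for which
\[
\LCST(G_A, H_B) \;=\; \bigl(\sqrt{D'} \cdot s_{\max} + O(1)\bigr)\, L,
\]
where $s_{\max}$ is the maximum Tropical Similarity over pairs in $A \times B$, and the same identity holds for $\RLCST$. In the completeness case of the input, $s_{\max} = 1$, so $\LCST(G_A, H_B) \geq (\sqrt{D'} + O(1))L$; in the soundness case, $s_{\max} < \epsilon(N)$, so $\LCST(G_A, H_B) < (\sqrt{D'}\,\epsilon(N) + O(1))L$.

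The padding guarantees $\sqrt{D'}\,\epsilon(N) \ge 1$, so the completeness-to-soundness ratio is
\[
\frac{\sqrt{D'} + O(1)}{\sqrt{D'}\,\epsilon(N) + O(1)} \;=\; \Theta(\epsilon(N)^{-1}).
\]
Hence any $f(N) = o(\epsilon(N)^{-1})$-approximation algorithm for Largest Common Subtree applied to $(G_A, H_B)$ distinguishes the two cases and therefore solves the \GAPMAXTT{$\epsilon(N)$} instance, completing the reduction.

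The main obstacle I expect is controlling the additive slack hidden in the $O(1) \cdot L$ term of the gadget bound, so that in the soundness regime it does not drown out the signal. This is precisely the reason the padding step, and hence the $\poly(\epsilon(N)^{-1})$ dependence in the final tree size, is unavoidable; a secondary technical point is verifying that the padding preserves the structural hypothesis (e.g.\ the $\max$-invariance on one side) used by the underlying construction, which should follow from appending dimensions of the form $1 \cdot 1$ but requires careful bookkeeping.
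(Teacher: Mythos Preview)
Your proposal is correct and follows essentially the same approach as the paper: pad the tensors with dummy dimensions so the new size $C$ satisfies $C \ge \Omega(\epsilon(N)^{-2})$, then apply Theorem~\ref{thm:lcst-gadget} with $L = \Theta(C\log^2 N)$ to obtain the two trees, and read off the $\Theta(\epsilon(N)^{-1})$ gap between the completeness and soundness values of $\LCST$. Your discussion of why the padding is needed (to make the additive $O(1)\cdot L$ term negligible compared to the $\sqrt{C}\,\epsilon(N)\cdot L$ signal) and your attention to preserving the $\max$-invariance hypothesis are both on point; the paper's proof is terser but identical in substance.
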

\begin{proof}
First we add dummy dimensions to each tensor such that the new size $C$ of every tensor is at least $\Omega(\epsilon^{-2}(N))$. We construct the two trees $A', B'$ as in Theorem \ref{thm:lcst-gadget}. By setting $L = C \log^2 N$, we have
\[
\LCST(A', B') = (\sqrt{C} \cdot s_{\max} + O(1)) \cdot C \log^2 N.
\]
Thus it reduces to distinguish $\LCST(A', B')$ from being $\ge C^{3/2} \log^2 N$ and $\le O(\epsilon(N) C^{3/2} \log^2 N)$, which can be solved by an $o(\epsilon(N)^{-1})$-approximation algorithm for $\LCST$. This conclusion also holds for $\RLCST$.
\end{proof}

\section{Equivalence in the Data Structure Setting}\label{sec:data-structure}

In this section, we establish the equivalence between $\equivclass$ problems in the data structure setting.

\begin{theorem}\label{theo:data-structure-formal}
	For the following data structure problems, if any of the following problems admits an algorithm with preprocessing time $T(N)$, space $S(N)$ and query time $Q(N)$, then all other problems admits a similar algorithm with preprocessing time $T(N) \cdot N^{o(1)}$, space $S(N) \cdot N^{o(1)}$ and query time $Q(N) \cdot N^{o(1)}$.
	
	\begin{itemize}
		\item $\NNS_\LCS$: Preprocess a database $\mathcal{D}$ of $N$ strings of length $D = 2^{(\log N)^{o(1)}}$, and then for each query string $x$, find $y \in \mathcal{D}$ maximizing $\LCS(x,y)$.
		
		\item Approx. $\NNS_\LCS$: Find $y \in \mathcal{D}$ s.t. $\LCS(x,y)$ is a $2^{(\log D)^{1- \Omega(1)}}$ approximation to the maximum value.
		
		\item Regular Expression Query: Preprocess a database $\mathcal{D}$ of $N$ strings of length $D = 2^{(\log N)^{o(1)}}$, and then for each query regular expression $y$, find an $x \in \mathcal{D}$ matching $y$.
		
		\item Approximate Regular Expression Query: for a query expression $y$, distinguish between: (1) there is an $x \in \mathcal{D}$ matching $y$; and (2) for all $x \in \mathcal{D}$, the hamming distance between $x$ and all $z \in L(y)$ is at least $(1-o(1)) \cdot D$.
	\end{itemize}
\end{theorem}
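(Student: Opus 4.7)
\begin{proofof}{Theorem~\ref{theo:data-structure-formal} (sketch proposal)}
The plan is to observe that every reduction in the $\equivclass$ chain has a \emph{separable} form: for a reduction from an $\mathcal{A}$-pair problem on sets $(A,B)$ to a $\mathcal{B}$-pair problem, there are functions $G,H$ (depending only on the decision problem and possibly a shared gadget, not on the particular sets) so that the output sets are $\{G(a) : a \in A\}$ and $\{H(b) : b \in B\}$, with $G(a)$ computed from $a$ alone and $H(b)$ from $b$ alone in $N^{o(1)}$ time. This is immediate for the tensor-to-$\LCS$/regex/tree gadgets (Sections~\ref{sec:LCS},~\ref{sec:RegularExp},~\ref{sec:subtree-Iso}), where the construction is element-wise, and for the reduction to $\BPSATPAIR$ via low-space simulation (Theorems~\ref{thm:polylogspace-sat}--\ref{thm:polylogspace-min-max}), since there each string is simply viewed as an input to a shared branching program $P$; finally, separability of the $\BPSATPAIR \to \OAPT$ reduction is exactly the content of the proof of Theorem~\ref{thm:hard-oapt}, where the tensors for $a$ and $b$ are built independently from the respective halves of the input.

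Given separability, each reduction lifts automatically to the data structure setting: preprocess the database $\mathcal{D}$ by applying the ``database-side'' transformation to each element (cost $|\mathcal{D}| \cdot N^{o(1)}$ extra preprocessing and space), and at query time apply the ``query-side'' transformation to $x$ (cost $N^{o(1)}$ extra query time) before invoking the assumed data structure for the target problem. Concretely, to go from approximate $\NNS_\LCS$ back to exact $\NNS_\LCS$ I would chain: view the exact problem as a family of $\mathcal{A}_k$-pair queries with $\mathcal{A}_k(x,y) = [\LCS(x,y) \ge k]$, reduce each $\mathcal{A}_k$ to $\BPSATPAIR$ via Theorems~\ref{thm:polylogspace-sat}--\ref{thm:polylogspace-min-max} (Lemma~\ref{thm:lcs-nl} gives the needed $\NL$ bound), then to $\GAPMAXTT{\epsilon}$ via Theorem~\ref{thm:hard-gap-tt}, and finally to approximate $\MAXLCSPAIR$ (hence approximate $\NNS_\LCS$) via Theorem~\ref{thm:hard-gap-max-lcs-pair}. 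The analogous chains handle the regex directions: Regex Query $\to \REGEXPSTRPAIR \to \BPSATPAIR$ (using Theorem~\ref{thm:regexp-polylog}) and back out via Theorem~\ref{thm:re-str-mem-gadget} or the $\LCS$ gadget, always keeping one side of each pair problem aligned with the database and the other with the query.

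The main technical obstacle is the binary search over thresholds in Theorem~\ref{thm:polylogspace-min-max}: each value $k$ in the range $[1,D]$ gives a distinct branching program $P_k$, hence a distinct data structure $\mathcal{DS}_k$. Since $D = 2^{(\log N)^{o(1)}}$, there are only $N^{o(1)}$ such values, so the total preprocessing and space become $D \cdot T(N) \cdot N^{o(1)} = T(N) \cdot N^{o(1)}$ and similarly for space; for a query $x$ we binary search over $k$, performing $O(\log D) = N^{o(1)}$ queries to the $\mathcal{DS}_k$'s. The second subtle point is maintaining which side of the pair becomes the ``query'' side: since the gadgets in Theorems~\ref{thm:re-str-mem-gadget},~\ref{thm:hard-oapt} distinguish between $\land$-invariant / $\max$-invariant tensors and general tensors, and between the regex side and the string side, one must pick the direction of each reduction so that the invariant side corresponds to the database (which is fixed at preprocessing) and the arbitrary side corresponds to the query; fortunately, the reductions via Corollary~\ref{thm:hard-oapt-restricted} and Corollary~\ref{thm:hard-maxmintt-restricted} provide both flavors, so either assignment is achievable. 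Putting these together yields preprocessing $T(N) \cdot N^{o(1)}$, space $S(N) \cdot N^{o(1)}$, and query time $Q(N) \cdot N^{o(1)}$ for every pair of problems in the list, as claimed.
\end{proofof}
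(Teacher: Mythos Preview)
Your proposal is correct and follows essentially the same approach as the paper: observe that the reductions act element-wise (what you call ``separable''), so they lift to the data structure setting by applying the database-side map at preprocessing time and the query-side map at query time, with a binary search over the threshold $k$.

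One small difference worth noting: you handle the threshold $k$ by building a separate data structure $\mathcal{DS}_k$ for each of the $D = N^{o(1)}$ values, whereas the paper folds $k$ into the branching program input itself (the BP accepts $(a,b,k)$ iff $\LCS(a,b)\ge k$), so that a \emph{single} data structure suffices and $k$ is absorbed into the query-side map $g(b,k)$. Both approaches stay within the $N^{o(1)}$ overhead, but the paper's version avoids the factor-of-$D$ blowup in preprocessing and space and is a bit cleaner. Your discussion of the $\land$-invariant/$\max$-invariant side assignment is more careful than the paper's (which only spells out the $\NNS_\LCS$ case where no invariance constraint arises), and is a genuine point to keep track of for the regex directions.
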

\begin{proof}
We show a reduction from $\NNS_\LCS$ to Approx. $\NNS_\LCS$ for illustration, and the proofs for the other pairs of problems are essentially the same.

By Lemma \ref{thm:lcs-nl}, there is a BP of poly-logarithmic size that accepts $(a, b, k)$ iff $\LCS(a, b) \ge k$. Then using a similar argument as in Theorem \ref{thm:lcs-pair-equiv}, we can show a reduction from an instance $\phi = (A, B)$ of the following problem to an Approx.~$\MAXLCSPAIR$ instance $\phi' = (A', B')$: given a set of strings $A$ and a set of string-integer pairs, determine whether there are $a \in A$ and $(b, k) \in B$ such that $\LCS(a, b) \ge k$.
Moreover, this reduction maps each element separetely, i.e., there exist two maps $f, g$ such that $A' = \{f(a) \mid a \in A\}, B' = \{g(b,k) \mid (b,k) \in B\}$, and both $f$ and $g$ are computable in $O(2^{\polylog(D)}) = 2^{(\log N)^{o(1)}}$ time and space for each string of length $D$.

Now suppose there is a data structure for Approx. $\NNS_\LCS$ with preprocessing time $T(N)$, space $S(N)$ and query time $Q(N)$. For a set of string $A$, we construct a data structure for $\NNS_\LCS$ as follows. In the preprocessing stage, 
we map all the strings in $A$ via $f$ and store them in a data structure $\mathcal{D}$ for Approx.~$\NNS_\LCS$.
For each query string, we do a binary search for the maximum LCS. For every length $k$ encountered, we first map the query string and the length $k$ via $g$ and then query it in the data structure $\mathcal{D}$. The time cost and space usage of the new data structure can be easily analyzed.
\end{proof}

A direct generalization of the above proof is that $\NNS_\LCS$ is actually the hardest $\NNS$ problem among all distance that can be computed in small space.
\begin{cor}
For every distance function $\dist$ that can be computed in poly-logarithmic space, the exact $\NNS$ problem with respect to $\dist$ ($\NNS_\dist$) can be reduced to $2^{(\log D)^{1- \Omega(1)}}$-approximate $\NNS_\LCS$ in near-linear time.
\end{cor}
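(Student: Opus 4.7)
The plan is to mimic the argument used in Theorem~\ref{theo:data-structure-formal} for reducing $\NNS_\LCS$ to approximate $\NNS_\LCS$, but starting from an arbitrary distance function $\dist$ that is computable in $\SPACE[\polylog(D)]$. First, since $\dist$ lies in $\SPACE[\polylog(D)] \subseteq \NSPACE[\polylog(D)]$, and since $\NSPACE[\polylog(D)] = \co\NSPACE[\polylog(D)]$, the decision problem ``$\dist(a,b) \ge k$'' (for $k$ given in binary alongside the input) is in $\NSPACE[\polylog(D)]$. By the standard conversion between small-space computations and branching programs, this decision problem is recognized by a BP of size $2^{\polylog(D)} = 2^{(\log N)^{o(1)}}$.

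Next I would chain together the element-wise reductions developed earlier in the paper. Running Theorem~\ref{thm:polylogspace-sat} followed by the cyclic reductions of Theorem~\ref{thm:lcs-pair-equiv} produces, for any threshold $k$, two maps $f_k$ and $g_k$ computable in $2^{(\log N)^{o(1)}}$ time and space on each input such that the $\dist(a,b) \ge k$ instance is transformed into an approximate $\MAXLCSPAIR$ instance via $A \mapsto \{f_k(a) : a \in A\}$ and $b \mapsto g_k(b)$, with the approximation factor being $2^{(\log D')^{1-\Omega(1)}}$ where $D' = 2^{\polylog(D)}$, which is still $2^{(\log D)^{1-\Omega(1)}}$ after adjusting constants. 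Crucially, these maps act on each element independently, exactly as exploited in the proof of Theorem~\ref{theo:data-structure-formal}.

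I would then assemble the data structure as follows. In the preprocessing phase, guess (or binary-search over) the $O(\polylog D)$ possible threshold values $k$ of interest, and for each such $k$ build an approximate $\NNS_\LCS$ data structure $\mathcal{D}_k$ storing $\{f_k(a) : a \in \mathcal{D}\}$. For a query $x$, perform binary search on $k$: for each candidate $k$, query $\mathcal{D}_k$ with $g_k(x)$ to determine whether some $a \in \mathcal{D}$ satisfies $\dist(a,x) \ge k$. The total preprocessing time and space inflate by a factor of $2^{(\log N)^{o(1)}} = N^{o(1)}$ (from the size of the BP, the element-wise gadget construction, and the $\polylog(D)$ thresholds), and the query time inflates by the same amount.

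The only real subtlety — and the main point that needs care — is ensuring that every reduction in the long chain \BPSATPAIR $\to$ \OAPT $\to$ \GAPMAXTT $\to$ approximate \MAXLCSPAIR is \emph{element-wise}: each element of the database side and each element of the query side is transformed independently of the other elements. Inspecting the gadget constructions (notably Theorem~\ref{thm:hard-oapt}, Theorem~\ref{thm:hard-gap-tt}, and Theorem~\ref{thm:lcs-gadget}), each gadget indeed depends only on a single string/tensor and on public parameters (the BP, the threshold $k$, the alphabet), so composition preserves the element-wise property. With this observation in place, the proof of Theorem~\ref{theo:data-structure-formal} transports verbatim to the present setting, giving the claimed reduction.
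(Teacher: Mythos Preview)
Your proposal is correct in spirit and follows the same route the paper indicates (the paper itself just calls this corollary a ``direct generalization'' of the proof of Theorem~\ref{theo:data-structure-formal}, without spelling out details). Your identification of the key point---that every gadget in the chain $\BPSATPAIR \to \OAPT \to \GAPMAXTT{} \to$ approximate $\MAXLCSPAIR$ acts element-wise---is exactly what the paper relies on.

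One detail needs fixing. You propose building a separate data structure $\mathcal{D}_k$ for each threshold $k$ and speak of ``$O(\polylog D)$ possible threshold values,'' but the number of distinct thresholds is the size of the range of $\dist$ (which by the Remark following the corollary may be as large as $2^{\polylog(N)}$), not the depth of the binary search; moreover the thresholds visited by binary search depend on the query, so you cannot precompute only those. The paper avoids this by feeding $k$ into the branching program on \emph{Bob's} side: the BP accepts $(a,(b,k))$ iff $\dist(a,b)\ge k$, so the database-side map $f$ is independent of $k$ while only the query-side map $g(b,k)$ carries the threshold. This yields a \emph{single} preprocessed data structure, with binary search over $k$ performed entirely at query time via $\polylog(N)$ queries. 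With this adjustment your argument goes through.
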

\begin{remark}
	Here we assume that when the size parameter for $\NNS_\dist$ is $N$, the inputs to $\dist$ takes $2^{(\log N)^{o(1)}}$ bits to describe, and the output of $\dist$ takes $\polylog(N)$ bits to describe.
\end{remark}

Furthermore, basing on the hardness of solving $\BPSAT$, we can show that there may not be an efficient data structure for $\NNS_\LCS$ in the following sense:
\begin{theorem}\label{theo:hardness-for-datastructure}
Assuming the satisfiability for branching programs of size $2^{n^{o(1)}}$ cannot be decided in $O(2^{(1-\delta)n})$ for some $\delta > 0$, there is no data structure for Approx.~$\NNS_\LCS$ (or other data structure problems listed in Theorem~\ref{theo:data-structure-formal}) with preprocessing time $O(N^c)$ and query time $N^{1-\epsilon}$ for some $c, \epsilon > 0$.
\end{theorem}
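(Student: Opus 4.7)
The plan is to prove the contrapositive: given a data structure for Approx.~$\NNS_\LCS$ with preprocessing time $O(N^c)$ and query time $O(N^{1-\epsilon})$ for some constants $c, \epsilon > 0$, I will construct an algorithm for $\BPSAT$ on branching programs of size $2^{n^{o(1)}}$ running in time $O(2^{(1-\delta) n})$ for an explicit $\delta = \delta(c, \epsilon) > 0$, contradicting the hypothesis. The approach is the classical ``split and query'' technique: turn single-sided $\BPSAT$ into a pair problem, preprocess one half of the assignments as the database, and answer using queries on the other half.

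Fix a constant $\alpha \in (0,1)$ to be chosen later. Given a BP $P$ of size $2^{n^{o(1)}}$ on $n$ inputs, first split the inputs into $\xa$ of length $n_a = \alpha n$ and $\xb$ of length $n_b = (1-\alpha) n$, and let $A := \{0,1\}^{n_a}$, $B := \{0,1\}^{n_b}$, so that $P$ is satisfiable iff the \BPSATPAIR\ instance $(A, B, P)$ has a witness. The regime $D = 2^{(\log N)^{o(1)}}$ required by Theorem~\ref{theo:data-structure-formal} is satisfied because both the input length $D = n$ and the BP size $2^{n^{o(1)}}$ are bounded by $2^{(\log N)^{o(1)}}$ for $N := |A| = 2^{\alpha n}$. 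Next, invoke the element-wise reduction implicit in the proof of Theorem~\ref{theo:data-structure-formal}: there exist maps $f$ and $g$, computable per element in time $2^{n^{o(1)}}$, such that the pair $(a, b)$ satisfies $P$ iff $\LCS(f(a), g(b, k^\ast))$ exceeds (for the appropriate threshold $k^\ast$) a gap that the $2^{(\log D)^{1-\Omega(1)}}$-approximate data structure can distinguish. Preprocessing $\{f(a) : a \in A\}$ into the data structure then costs $|A|^c \cdot 2^{n^{o(1)}} = 2^{\alpha c n + n^{o(1)}}$. For each $b \in B$, binary-search over $k$ (a $\polylog(n)$ overhead) and query the data structure with $g(b, k)$, at cost $|A|^{1-\epsilon} \cdot 2^{n^{o(1)}}$ per query, for a total query cost of $|B| \cdot |A|^{1-\epsilon} \cdot 2^{n^{o(1)}} = 2^{(1 - \alpha \epsilon) n + n^{o(1)}}$.

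Balancing the two exponents via $\alpha c = 1 - \alpha \epsilon$, i.e., $\alpha := 1/(c+\epsilon)$, both terms collapse to $2^{(1 - \epsilon/(c+\epsilon)) n + n^{o(1)}}$. Taking $\delta := \epsilon/(2(c+\epsilon)) > 0$, the total running time is $O(2^{(1-\delta) n})$, contradicting the hypothesis. For the other data-structure problems listed in Theorem~\ref{theo:data-structure-formal}, the same argument applies via their respective element-wise reductions from Section~\ref{sec:data-structure}. I expect the main obstacle to be bookkeeping rather than conceptual: one must verify that the full reduction chain \BPSATPAIR\ $\to$ \OAPT\ $\to$ \GAPMAXTT{$\epsilon_0$} $\to$ Approx.~\MAXLCSPAIR\ $\to$ Approx.~$\NNS_\LCS$ preserves the element-wise structure (so that $A$ can be preprocessed once, independently of the $B$-side queries), and that the quasi-polynomial blow-ups in string length incurred at each step remain absorbed in the $2^{n^{o(1)}} = N^{o(1)}$ slack of our time budget.
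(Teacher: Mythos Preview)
Your proposal is correct, but it takes a genuinely different route from the paper's proof. The paper does not do an asymmetric split of the $\SAT$ variables. Instead, it shows that the assumed data structure yields a truly-subquadratic algorithm for the \emph{offline} problem approximate $\MAXLCSPAIR$, and then appeals to the already-established equivalence class (Theorem~\ref{theo:better-foundation-equiv}) to contradict $\BPSAT$ hardness. Concretely, given an approximate $\MAXLCSPAIR$ instance $(A,B)$ with $|A|=|B|=N$, the paper partitions $A$ into $M = N^{1-\gamma}$ blocks of size $N^\gamma$ with $\gamma = 1/(2c)$, builds a separate data structure for each block, and queries every $b \in B$ against every block; the totals are $M \cdot (N^\gamma)^c = O(N^{3/2})$ for preprocessing and $N \cdot M \cdot (N^\gamma)^{1-\epsilon} = O(N^{2-\epsilon\gamma})$ for queries.

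Your approach instead balances by choosing an asymmetric split $\alpha = 1/(c+\epsilon)$ of the $n$ input bits and building a \emph{single} data structure on all of $\{0,1\}^{\alpha n}$. This is more direct (it skips the intermediate offline problem) and yields a slightly better exponent $1 - \epsilon/(c+\epsilon)$ versus roughly $1 - \epsilon/(4c)$, but it requires you to re-verify that the full reduction chain is element-wise and stays in the $D = 2^{(\log N)^{o(1)}}$ regime when $N = 2^{\alpha n}$; the paper's route is more modular because it cleanly factors through the equivalence class and reuses Theorem~\ref{theo:data-structure-formal} only implicitly. Both arguments are standard and valid.
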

\begin{proof}
Our proof closely follows the proof for Corollary 1.3 in \cite{Rub18}. We only prove that it is true for Approx.~$\NNS_\LCS$, the case for other data structures are similar. Assume such a data structure for Approx.~$\NNS_\LCS$ does exist. Now we show that there is an algorithm for approximate \MAXLCSPAIR that runs in $O(N^{2 - \delta'})$ time for some $\delta' > 0$.

Let $(A, B)$ be an instance of approximate \MAXLCSPAIR. Let $\gamma = 1/(2c)$. We partition the set $A$ into $M = O(N^{1-\gamma})$ subsets $A_1, \dots, A_M$, each of size $O(N^{\gamma})$. Now for each subset $A_i$, we build a data structure for Approx.~$\NNS_\LCS$, and query each $b \in B$ in the data structure. Finally, we take the maximum among all the query results. The total time for preprocessing is $O(M \cdot (N^{\gamma})^c) = O(N^{3/2})$ and that for query is $O(N \cdot M \cdot (N^\gamma)^{1-\epsilon}) = O(N^{2 - \epsilon \gamma})$.
\end{proof}
\section{Faster BP-SAT Implies Circuit Lower Bounds}\label{sec:algo-to-circuit-lowb}

In \cite{AHVW16}, Abboud et al. showed that faster exact algorithms for Edit Distance or LCS imply faster \BPSAT, and it leads to circuit lower bound consequences that are far stronger than any state of art.
Using a similar argument, strong circuit lower bounds can also be shown if any of \equivclass or \equivclasshard problems has faster algorithms, even for shaving a quasipolylog factor.

We apply the following results from \cite{AHVW16} to show the circuit lower bound consequences, which are direct corollaries from \cite{Wil13,Wil14ACC}:
\begin{theorem}[\cite{Wil13,Wil14ACC}]
\label{thm:truly-faster-implies}
Let $n \le S(n) \le 2^{o(n)}$ be a time constructible and monotone non-decreasing function. Let $\mathcal{C}$ be a class of circuits. If the satisfiability of a function of the form
\begin{itemize}
\item AND of fan-in in $O(S(n))$ of
\item arbitrary functions of fan-in $3$ of
\item $O(S(n))$-size circuits from $\mathcal{C}$
\end{itemize}
can be decided in $\DTIME[O(2^n/n^{10})]$ time, then $\ETIME^{\NP}$ does not have $S(n)$-size $\mathcal{C}$-circuits.
\end{theorem}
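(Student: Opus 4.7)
The plan is to derive this as an instance of Williams' algorithm-to-lower-bound paradigm \cite{Wil13, Wil14ACC} by arguing the contrapositive: if $\ETIME^{\NP} \subseteq \mathrm{SIZE}_{\mathcal{C}}(S(n))$, then the assumed SAT algorithm would violate the nondeterministic time hierarchy theorem. Choose a language $L \in \NTIME[2^m]$ that is not in $\NTIME[2^m / m^{10}]$ (which exists by the hierarchy theorem), and design a faster nondeterministic decision procedure for $L$ under the hypothesized circuit upper bound.

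The procedure follows the standard Williams scheme. For an input $x \in \{0,1\}^m$, first use a Cook-Levin style reduction through a suitably oblivious simulator to encode $x \in L$ as the existence of a witness $y \in \{0,1\}^{\Theta(2^m)}$ satisfying $\Theta(2^m)$ fan-in-$3$ local predicates. Then, invoking the Impagliazzo-Kabanets-Wigderson easy-witness lemma (which applies precisely under the assumption $\ETIME^{\NP} \subseteq \mathrm{SIZE}_{\mathcal{C}}(S(n))$), upgrade this to: if $x \in L$ then there exists an \emph{easy witness}, given by a $\mathcal{C}$-circuit $C$ of size $O(S(n))$ on $n \approx m$ inputs, with $y_i = C(i)$. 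Nondeterministically guess such $C$ and then check its validity by substituting $y_i = C(i)$ into the Cook-Levin verification. The resulting verification circuit has exactly the syntactic shape demanded by the hypothesis: an AND of fan-in $O(S(n))$ of fan-in-$3$ predicates, each applied to outputs of $O(S(n))$-size $\mathcal{C}$-circuits on $n$ SAT variables. Feeding this into the hypothesized $O(2^n / n^{10})$-time SAT algorithm, together with the $O(S(n)) = 2^{o(m)}$ bits used to guess $C$, decides $L$ in nondeterministic time $2^m / m^9$, contradicting the choice of $L$.

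The principal obstacle is matching the parameters so that the top AND fan-in is $O(S(n))$ rather than the naive $\Theta(2^m)$, while keeping every $\mathcal{C}$-subcircuit within size $O(S(n))$. This is Williams' \emph{succinct verification} trick: instead of instantiating the $2^m$ Cook-Levin constraints individually, one encodes the constraint selector itself as a small $\mathcal{C}$-circuit (which fits within the $O(S(n))$ bound by the assumed circuit upper bound for $\ETIME^{\NP}$-computable predicates) and composes it with the guessed easy-witness circuit $C$; the top-level conjunction then counts only the constant number of combining gates per constraint "slice", not the $2^m$ raw predicates. Aligning the number of SAT variables $n$, the circuit-size parameter $S(n)$, and the length $m$ so that a $2^n / n^{10}$-time SAT algorithm translates into a $2^m / m^9$-time decision for $L$ is the delicate piece of bookkeeping carried out in \cite{Wil13, Wil14ACC}; beyond this, the argument is entirely structural and reduces cleanly to the easy-witness lemma plus the Cook-Levin reduction.
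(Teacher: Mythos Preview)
The paper does not prove this theorem; it is quoted as a black box from \cite{Wil13,Wil14ACC} (via \cite{AHVW16}), so there is no in-paper proof to compare against. Your high-level plan—contrapositive through the nondeterministic time hierarchy, the Impagliazzo--Kabanets--Wigderson easy-witness lemma, and oblivious Cook--Levin locality—is exactly the Williams paradigm, and at that level your sketch is correct and matches what those papers do.

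Your account of the ``principal obstacle,'' however, misidentifies where the $O(S(n))$-fan-in top AND comes from. The $\Theta(2^m)$ Cook--Levin constraints are \emph{not} compressed into a smaller AND; they are absorbed into the SAT variables themselves. The $n\approx m$ variables are the bits of a tableau index $i$, and one checks \emph{un}satisfiability of the single circuit $D(i)=$ ``the local constraint at position $i$ is violated by the guessed witness circuit $C$.'' The universal quantifier over constraints therefore never appears as an explicit AND gate. The $O(S(n))$-way AND in the theorem statement instead arises inside the per-index verification circuit $D(i)$: evaluating $C$ at the three relevant tableau positions requires composing $C$ with the (also guessed, also size-$O(S(n))$) position-computing circuits, and to keep every piece inside $\mathcal{C}$ at size $O(S(n))$ one guesses auxiliary circuits for the intermediate wire values and conjoins $O(S(n))$ gate-consistency checks, each a fan-in-$3$ predicate on $\mathcal{C}$-circuit outputs. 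Your ``constraint selector'' intuition is in the right spirit—under the hypothesis, small $\mathcal{C}$-circuits encode everything succinctly—but the conjunction you need to explain runs over guessed circuit gates, not over the $2^m$ Cook--Levin clauses.
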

\begin{theorem}[\cite{Wil13}] \label{thm:implies-ntime-nc}
For the complexity class $\NTIME[2^{O(n)}]$, we have
\begin{enumerate}
\item If for every constant $k > 0$,
there is a satisfiability algorithm for bounded fan-in formulas of size $n^k$ running in $\DTIME[O(2^n/n^k)]$ time,
then $\NTIME[2^{O(n)}]$ is not contained in non-uniform $\NC^1$;
\item If for every constant $k > 0$,
there is a satisfiability algorithm for \NC-circuits of size $n^k$ running in $\DTIME[O(2^n/n^k)]$ time,
then $\NTIME[2^{O(n)}]$ is not contained in non-uniform $\NC$.
\end{enumerate}
\end{theorem}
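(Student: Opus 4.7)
The plan is to invoke Williams' \cite{Wil13} algorithmic approach to circuit lower bounds, which converts a nontrivial satisfiability algorithm for a circuit class $\mathcal{C}$ into a superpolynomial lower bound against $\mathcal{C}$ for a language in $\NTIME[2^{O(n)}]$. I will sketch the argument for part~(1); part~(2) is analogous with $\NC^1$ replaced by $\NC$ and Formula-SAT replaced by satisfiability for $\NC$-circuits throughout.

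Suppose for contradiction that $\NTIME[2^{O(n)}] \subseteq$ non-uniform $\NC^1$, and that for every constant $k$ there is a Formula-SAT algorithm running in time $O(2^n/n^k)$. I would derive a contradiction with the nondeterministic time hierarchy theorem by exhibiting a language in $\NTIME[2^n]$ decidable in nondeterministic time $2^n/n^{\omega(1)}$. Pick a hard $L \in \NTIME[2^n]$ witnessing the time hierarchy. By Cook--Levin, membership $x \in L$ is equivalent to the existence of an accepting computation tableau of size $2^{O(n)}$ for a fixed machine. An easy-witness style argument adapted to $\NC^1$, in the spirit of Impagliazzo--Kabanets--Wigderson, leverages the assumption $\NTIME[2^{O(n)}] \subseteq$ non-uniform $\NC^1$ to conclude that such witness tableaux can be encoded by bounded-fan-in formulas of size $n^{O(1)}$ on $O(n)$ address bits.

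A nondeterministic algorithm for $L$ on input $x$ then proceeds as follows. It guesses a bounded-fan-in formula $W$ of size $n^k$ purported to encode such a witness tableau, and verifies its validity. Verification amounts to checking that no local tableau consistency condition is violated: construct a bounded-fan-in formula $F$ on $O(n)$ input bits that, on index $i$, uses $W$ to read the constantly many tableau entries relevant to the local check at position $i$ and outputs $1$ iff that check fails. Then $W$ is a valid witness iff $F$ is unsatisfiable, and $F$ has size $n^{O(k)}$. Applying the assumed Formula-SAT algorithm with a parameter $k' \gg k$ decides unsatisfiability of $F$ in time $O(2^n/n^{k'}) = O(2^n/n^{\omega(1)})$, so the entire nondeterministic simulation runs in time $2^n/n^{\omega(1)}$, contradicting the nondeterministic time hierarchy.

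The main obstacle is the easy-witness step: from the bare hypothesis $\NTIME[2^{O(n)}] \subseteq$ non-uniform $\NC^1$, one needs witnesses for $2^n$-time nondeterministic computations to admit succinct $\NC^1$-circuit encodings. This is the substantive content imported from \cite{Wil13}, argued via a win-win paradigm: either such easy witnesses exist (and the above speedup goes through), or else the nontrivial Formula-SAT algorithm already separates $\NTIME[2^{O(n)}]$ from non-uniform $\NC^1$ directly through a Kabanets--Impagliazzo-style argument. Once this ingredient is in place, the construction of $F$ and the associated parameter bookkeeping are routine.
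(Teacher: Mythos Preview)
The paper does not prove this theorem; it is quoted verbatim as a black-box result from \cite{Wil13} (with attribution in the theorem header) and is only \emph{applied} in the proofs of Theorem~\ref{thm:bp-pair-shaving} and Corollary~\ref{cor:cons-bpseth}. So there is no ``paper's own proof'' to compare against.

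That said, your sketch is a faithful outline of Williams' algorithmic method as developed in \cite{Wil13}: assume the containment, invoke an easy-witness lemma to get succinct $\NC^1$ descriptions of accepting tableaux, build a small formula whose unsatisfiability certifies the guessed witness, and use the assumed Formula-SAT speedup to beat the nondeterministic time hierarchy. You correctly flag the easy-witness step as the nontrivial imported ingredient and note the win-win dichotomy that handles the case where easy witnesses fail. One quibble: to make the verification formula $F$ genuinely an $\NC^1$ formula of polynomial size, you need that the local tableau check is itself computable by a small formula and that plugging copies of the guessed formula $W$ into it keeps you in $\NC^1$; this is fine because $\NC^1$ is closed under composition with polynomial blowup, but it is worth saying explicitly. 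With that addition your sketch would be an adequate proof-by-reference for a result the paper simply cites.
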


First we show the circuit lower bound consequences if truly-subquadratic algorithm exists:

\begin{reminder}{of Corollary \ref{cor:cons-bpseth}}
If any of the \equivclass or \equivclasshard problems admits an $N^{2 - \eps}$ time deterministic algorithm (or $(NM)^{1-\epsilon}$ time algorithm for regular expression membership testing) for some $\eps > 0$, then $\textsf{E}^{\textsf{NP}}$ does not have:
	\begin{enumerate}
		\item non-uniform $2^{n^{o(1)}}$-size Boolean formulas,
		\item non-uniform $n^{o(1)}$-depth circuits of bounded fan-in, and
		\item non-uniform $2^{n^{o(1)}}$-size nondeterministic branching programs.
	\end{enumerate}
	Furthermore, $\NTIME[2^{O(n)}]$ is not in non-uniform \NC.
\end{reminder}
\begin{proof}
A truly-subquadratic time algorithm for \equivclass or \equivclasshard problems implies a $2^{(1 - \Omega(1))n}$-time algorithm for \BPSAT on branching program of size $2^{n^{o(1)}}$. Let $S(n) = 2^{n^{o(1)}}$. $O(S(n))$-size boolean formulas, $O(\log S(n))$-depth circuits, $2^{n^{o(1)}}$-size nondeterministic branching programs are all closed under AND, OR and NOT gates proscribed in Theorem~\ref{thm:truly-faster-implies}. Note that any formula of size $2^{n^{o(1)}}$ can be transformed into an equivalent $n^{o(1)}$-depth circuit~\cite{Spi71}, and any $n^{o(1)}$-depth circuit can be transformed into $2^{n^{o(1)}}$-size branching program by Barrington's Theorem~\cite{Barrington89}. Then all the consequences in Item 1, 2, 3 follow from Theorem~\ref{thm:truly-faster-implies}. Combining Item~2 and Theorem~\ref{thm:implies-ntime-nc}, we can obtain the consequence that $\NTIME[2^{O(n)}]$ is not in non-uniform \NC.
\end{proof}

We can also obtain results showing that even shaving a quasipolylog factor $2^{(\log \log N)^3}$ for problems in \equivclass and \equivclasshard can imply new circuit lower bound. First, it is easy to see that shaving a $(\log N)^{\omega(1)}$ factor can lead to new circuit lower bound by Theorem~\ref{thm:implies-ntime-nc}.

\begin{theorem}\label{thm:bp-pair-shaving}
If there is a deterministic algorithm for \BPSATPAIR on BP of size $S = 2^{(\log N)^{o(1)}}$ running in $O(N^2 \poly(S) / (\log N)^{\omega(1)})$ time, then the following holds:
\begin{enumerate}
\item For any constant $k > 0$, $\SAT$ on bounded fan-in formula of size $n^{k}$ can be solved in $O(2^n / n^{\omega(1)})$ deterministic time;
\item $\NTIME[2^{O(n)}]$ is not contained in non-uniform $\NC^1$.
\end{enumerate}
\end{theorem}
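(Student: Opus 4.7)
The plan is to reduce \textsf{Formula-SAT} on bounded fan-in formulas of size $n^k$ to a \BPSATPAIR instance of appropriate parameters, and then invoke the hypothesized fast algorithm together with Theorem~\ref{thm:implies-ntime-nc}. This is the same ``split-and-list'' recipe used in \cite{AHVW16}, specialized to our branching-program-based hypothesis.

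First I would use the standard split trick. Given a bounded fan-in formula $\phi$ of size $n^k$ on $n$ input variables, partition the variables into two halves $x^{\mathrm{a}}, x^{\mathrm{b}} \in \{0,1\}^{n/2}$ and set $A = B = \{0,1\}^{n/2}$, so $N := |A| = |B| = 2^{n/2}$. Then $\phi$ is satisfiable iff there exists $(a,b) \in A \times B$ with $\phi(a,b) = 1$. The next step is to convert $\phi$ into a branching program $P$ of size $S$ such that $P$ accepts the concatenation $a \circ b$ iff $\phi(a,b) = 1$; by Barrington's Theorem~\cite{Barrington89}, the formula $\phi$ of size $n^k$ can be simulated by a branching program of width $5$ and length $n^{O(k)}$, so $S = \poly(n)$. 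In particular $S \le 2^{n^{o(1)}} = 2^{(\log N)^{o(1)}}$, which satisfies the size requirement in our hypothesis. Building $P$ takes $\poly(n)$ time, and listing all $2 \cdot 2^{n/2} = O(N)$ elements of $A \cup B$ takes $\widetilde O(N)$ time, so the reduction is dominated by the downstream call.

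Now I would invoke the hypothesized algorithm on this \BPSATPAIR instance. Its running time is
\[
O\!\left(N^2 \poly(S) / (\log N)^{\omega(1)}\right) = O\!\left(2^n \cdot \poly(n) / n^{\omega(1)}\right) = O\!\left(2^n / n^{\omega(1)}\right),
\]
using $\log N = n/2$ and $S = \poly(n)$. In particular, for every constant $k > 0$ this dominates $O(2^n / n^k)$, establishing item~1. Item~2 then follows immediately from item~1 of Theorem~\ref{thm:implies-ntime-nc}: the existence, for every $k$, of a deterministic $O(2^n / n^k)$-time satisfiability algorithm for bounded fan-in formulas of size $n^k$ implies $\NTIME[2^{O(n)}] \not\subseteq$ non-uniform $\NC^1$.

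There is no real obstacle here; the one thing to verify carefully is the parameter bookkeeping, namely that $\poly(n) \le 2^{(\log N)^{o(1)}}$ (trivially true since $N = 2^{n/2}$), and that Barrington's construction is indeed applicable to bounded fan-in formulas rather than arbitrary circuits (which is why the statement specifies formulas, yielding $\NC^1$-size simulation via width-$5$ BPs). Everything else is routine substitution.
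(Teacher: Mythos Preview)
Your proposal is correct and follows essentially the same route as the paper: split-and-list to turn \textsf{Formula-SAT} into a \BPSATPAIR instance with $N=2^{n/2}$, convert the formula to a width-$5$ branching program of polynomial length, invoke the hypothesis, and then apply Theorem~\ref{thm:implies-ntime-nc}. The only point the paper makes more explicit is that before invoking Barrington's theorem one first rebalances the size-$n^k$ formula to depth $O(k\log n)$ via Spira's theorem~\cite{Spi71}, since Barrington's construction yields a BP of length $4^d$ from a depth-$d$ circuit; you implicitly rely on this when asserting length $n^{O(k)}$, so it is worth stating.
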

\begin{proof}
By Theorem~\ref{thm:implies-ntime-nc}, Item 1 implies Item 2, so we only need to show the conclusion in Item 1.

If \BPSATPAIR can be solved in $O(N^2 \poly(S)/(\log N)^{\omega(1)})$ time, then \BPSAT on BP of size $O(\poly(n))$ can be solved in
\[
O(2^{n} \poly(n)/n^{\omega(1)}) = O(2^{n} / n^{\omega(1)}).
\]
Note that any formula of size $n^k$ can be transformed into an equivalent BP of width $W = 5$ and length $T = O(n^{8k})$ (by rebalancing into a formula of depth $4k \log n$~\cite{Spi71} and using Barrington's Theorem~\cite{Barrington89}). Thus $\SAT$ on bounded fan-in formulas of size $n^k$ can also be solved in $O(2^{n} / n^{\omega(1)})$.
\end{proof}

A part of our reductions from \BPSATPAIR to problems in \equivclass can be summerized below.
In the rest of this section, for each problem in \equivclass (but except \BPSATPAIR), we use the variable $N$ to denote the number of elements in each set, and $D$ to denote the maximum length (or size) of each element. We exclude \BPSATPAIR here because the size $S$ of BP is more important than $D$ in \BPSATPAIR.

\begin{cor} \label{cor:bp-to-equiv}
For every problem $\mathcal{P}$ in \equivclass except \BPSATPAIR:
\begin{itemize}
\item If $\mathcal{P}$ is a decision problem, then there is an $O(N \poly(D))$-time reduction from \OAPTrestri to $\mathcal{P}$;
\item If $\mathcal{P}$ is an approximate problem, then for every $\epsilon(D) = \Omega(D^{-1/2})$, there is an $O(N \poly(D))$-time reduction from \GAPMAXTTrestri{$\epsilon(D)$} or \GAPMAXTTrestri{$\epsilon(D)$} to $\mathcal{P}$ with approximation ratio $o(\epsilon(D)^{-1})$, and each element has size $O(\poly(D))$.
\end{itemize}
And any reduction here preserves the value of $N$.
\end{cor}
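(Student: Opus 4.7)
The plan is to simply catalogue the reductions already established in Sections~\ref{sec:tensor}--\ref{sec:subtree-Iso} and verify, one problem at a time, that they fit into the template prescribed by the corollary. That is, I will go through the seven non-\BPSATPAIR members of $\equivclass$ and, for each one, point to the relevant theorem in the paper whose output is exactly an instance of that problem and whose input is either \OAPTrestri (for decision problems) or \GAPMAXTTrestri{$\epsilon(D)$} / \GAPMINTTrestri{$\epsilon(D)$} (for approximation problems). Three observations must be checked each time: (i) the running time is $O(N\poly(D))$; (ii) the reduction acts essentially element-by-element so that $N$ is preserved; and (iii) the output size of each element is $\poly(D)$, and, in the approximation case, the gap $\epsilon(D)$ translates to an approximation ratio $o(\epsilon(D)^{-1})$.

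Concretely, the catalog is as follows. For the decision problems: \OAPT is trivial since \OAPTrestri is by definition a special case; \REGEXPSTRPAIR is handled by Theorem~\ref{thm:re-str-mem-gadget-oapt}; \STIPAIR is handled by Theorem~\ref{thm:hard-sti-pair} (whose hypothesis precisely requires $\land$-invariance of tensors in $A$, matching the definition of \OAPTrestri); and the exact optimization versions of \MAXTT/\MINTT, \MAXLCSPAIR/\MINLCSPAIR, \MAXLCSTPAIR/\MINLCSTPAIR all follow by first reducing \OAPTrestri to \BPSATPAIR via Corollary~\ref{thm:hard-oapt-restricted} together with the trivial fact that an exact optimization algorithm solves the corresponding satisfying-pair decision problem by comparison with a threshold. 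For the approximation problems: approximate \MAXLCSPAIR/\MINLCSPAIR follow from Theorems~\ref{thm:hard-gap-max-lcs-pair} and \ref{thm:hard-gap-min-lcs-pair} (the gadget of Theorem~\ref{thm:lcs-gadget} works unconditionally, so it in particular works on restricted tensors); approximate \CLOSESTREGEXPSTRPAIR is exactly Corollary~\ref{cor:MAXTT-to-CRSP}; and approximate \MAXLCSTPAIR/\MINLCSTPAIR are Theorems~\ref{thm:hard-gap-max-lcst-pair} and \ref{thm:hard-gap-min-lcst-pair}. Approximate \MAXTT/\MINTT are again trivial since the restricted problems are sub-cases.

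In every case the element-level maps $f,g$ that build the output instance are computable in $\poly(D)$ time from a single input element, so processing all $N$ elements takes $O(N\poly(D))$ time total and the number of elements is preserved exactly. The new element size is at most $\poly(D)$, as each gadget construction blows up size by a polynomial factor. For the approximation reductions, the completeness/soundness gap guaranteed by the gap tensor hypothesis directly yields the ratio $o(\epsilon(D)^{-1})$ — e.g.\ for \MAXLCSPAIR, Theorem~\ref{thm:hard-gap-max-lcs-pair} gives completeness $\sqrt{D}$ versus soundness $\sqrt{D}\cdot\epsilon(D)$, and similarly for the other problems with the appropriate normalization (the requirement $\epsilon(D)=\Omega(D^{-1/2})$ is needed only for the \LCST reductions, since there the integrality of $\sqrt{D}\cdot s(a,b)$ imposes this minimum meaningful gap).

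The main (minor) obstacle is simply to make sure the $\land$-invariance versus $\max$-invariance conventions line up: the decision-side reductions (\REGEXPSTRPAIR, \STIPAIR) use tensors from \OAPT with one side $\land$-invariant, whereas the approximation-side reductions (\CLOSESTREGEXPSTRPAIR, \MAXLCSTPAIR, \MINLCSTPAIR) require $\max$-invariance on one side. Both conventions are captured by \OAPTrestri / \GAPMAXTTrestri{$\epsilon(D)$} / \GAPMINTTrestri{$\epsilon(D)$} respectively, so the statements of those restricted problems are already set up to match. No new gadgetry is needed; the corollary is simply a unified repackaging of these prior results, which is why it is stated as a corollary rather than a theorem.
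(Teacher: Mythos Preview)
Your overall approach---cataloguing the element-wise reductions from Sections~\ref{sec:tensor}--\ref{sec:subtree-Iso}---is exactly what the corollary intends, and your treatment of the decision problems (\OAPT, \REGEXPSTRPAIR, \STIPAIR) and the approximation problems is correct and matches the paper.

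There is, however, a genuine error in how you handle the \emph{exact} optimization problems. You place them under the decision bullet and write that they ``follow by first reducing \OAPTrestri to \BPSATPAIR via Corollary~\ref{thm:hard-oapt-restricted}.'' This is backwards: Corollary~\ref{thm:hard-oapt-restricted} gives a reduction \emph{from} \BPSATPAIR \emph{to} \OAPTrestri, not the reverse. One can of course reduce \OAPTrestri to \BPSATPAIR (via Lemma~\ref{lam:aprod-polylog} and Theorem~\ref{thm:polylogspace-sat}), but then continuing on to, say, exact \MAXLCSPAIR would require passing through the whole BP\,$\to$\,TQBF\,$\to$\,tensor\,$\to$\,LCS chain again, which incurs a $2^{\polylog(D)}$ blowup in element size and violates the stated $\poly(D)$ bound. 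The correct treatment is much simpler: an exact-value problem is just the approximation problem with ratio $1$, which is trivially $o(\epsilon(D)^{-1})$, so the second bullet already covers them via the same direct reductions (Theorems~\ref{thm:hard-gap-max-lcs-pair}, \ref{thm:hard-gap-min-lcs-pair}, \ref{thm:hard-gap-max-lcst-pair}, \ref{thm:hard-gap-min-lcst-pair}). Indeed, this is how the paper uses the corollary in the proof of Theorem~\ref{theo:shave-logs-equiv}: the exact-value case is absorbed into the approximation case by the remark ``exact value problem can be trivially reduced to its approximation version.''
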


\begin{reminder}{Theorem \ref{theo:shave-logs-equiv}}
For $D = 2^{(\log N)^{o(1)}}$, if there is an 
\[
O\left(N^2\poly(D) / 2^{(\log \log N)^3}\right) ~\text{or}~ O\left(N^2 / (\log N)^{\omega(1)}\right)
\]
time deterministic algorithm for the decision, exact value, or $O(\polylog(D))$-approximation problems in \equivclass, then the same consequences in Theorem \ref{thm:bp-pair-shaving} follows.
\end{reminder}
\begin{proof}
By Corollary \ref{cor:bp-to-equiv} and the fact that exact value problem can be trivially reduced to its approximation version, we only need to show that this statement is true for \OAPT and \GAPMAXTT{$(\log D)^c$} for every $c > 0$ (the proof for \GAPMINTT{$(\log D)^c$} should be similar).

Note that all our reductions here preserve the value of $N$. If there is an $O(N^2 / (\log N)^{\omega(1)})$-time algorithm, then \BPSATPAIR can also be solved in $O(N^2 / (\log N)^{\omega(1)})$-time and the consequences in Theorem \ref{thm:bp-pair-shaving} follows.

Now consider the case that a $O(N^2\poly(D) / 2^{(\log \log N)^3})$-time algorithm exists.
Recall that the hard instances of \BPSATPAIR we constructed in the proof of Theorem \ref{thm:bp-pair-shaving} is on BP of width $W = O(1)$ and length $T = O(\poly(n)) = O(\polylog(N))$. By Theorem \ref{thm:hard-oapt}, we know that this instance can be near-linear time reduced to an \OAPT instance with
\[
D = 2^{O(\log W \log T)} = 2^{O(\log \log N)} = \polylog(N).
\]
Thus shaving an $O(2^{(\log \log N)^3})$ factor to \OAPT implies an $O(N^2 / (\log N)^{\omega(1)})$-time algorithm for \BPSATPAIR.

By Theorem \ref{thm:hard-gap-tt}, for $\epsilon = \log^{-3c}(T)$, we know that a hard instance of \BPSATPAIR can also be near-linear time reduced to an \GAPMAXTT{$\epsilon$} instance with (adding dummy dimensions if necessary)
\[
D = 2^{\Theta(\log^2 W \log^2 T(\log \log W + \log \log T + \log\epsilon^{-1}))} = 2^{\Theta(\log^2 T \log \log T)}.
\]
Then we have $(\log D)^c = o(\epsilon^{-1})$, and thus shaving an $O(2^{(\log \log N)^3})$ factor to \GAPMAXTT{$(\log D)^c$} implies an algorithm for the hard instances of \BPSATPAIR running in the following time:
\begin{align*}
O(N^2\poly(D) / 2^{(\log \log N)^3}) &= O(N^2 \cdot 2^{\Theta(\log^2 T \log \log T)} / 2^{(\log \log N)^3}) \\
&= O(N^2 / (\log N)^{\omega(1)}).
\end{align*}
\end{proof}

For \equivclasshard problems, recall that part of our reduction can be summerized below:
\begin{cor} \label{cor:bp-to-hard}
For every problem $\mathcal{P}$ in \equivclasshard:
\begin{itemize}
\item If $\mathcal{P}$ is a decision problem, then there is an $O(N \poly(D))$-time reduction from \OAPTrestri to $\mathcal{P}$ on input of length $O(N \poly(D))$;
\item If $\mathcal{P}$ is an approximate problem, then for every $\epsilon(N)$, there is an $O(N \poly(D, \epsilon(N)^{-1}))$-time reduction from \GAPMAXTTrestri{$\epsilon(N)$} or \GAPMAXTTrestri{$\epsilon(N)$} to $\mathcal{P}$ with approximation ratio $o(\epsilon(N)^{-1})$ on input of length $O(N \poly(D, \epsilon(N)^{-1}))$.
\end{itemize}
\end{cor}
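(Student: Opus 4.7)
The plan is to assemble three explicit reductions already constructed earlier in the paper, one for each problem in $\equivclasshard$ listed in Theorem~\ref{theo:informal-hard}. Since each of those reductions starts either from $\OAPTrestri$ (for decision problems) or from an unrestricted gap tropical tensor problem (for the approximate problem), and yields an output instance whose size is within the envelope claimed by the corollary, the proof reduces to a verification exercise.

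For the decision-problem half, I would first invoke Theorem~\ref{thm:sti-hard} to handle Subtree Isomorphism: given any $\OAPTrestri$ instance with $N$ tensors of size $D$, that theorem produces in $O(N \poly(D))$ time a pair of binary trees of total size $O(N \poly(D))$, matching the stated bound. Next I would invoke Theorem~\ref{thm:regexp-string-hard} for Regular Expression Membership Testing, which concatenates the gadgets built in the proof of Theorem~\ref{thm:re-str-mem-gadget-oapt} into a single regular expression together with a single string, both of length $O(N \poly(D))$ and computable in time $O(N \poly(D))$.

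For the approximate-problem half, I would apply Theorem~\ref{thm:lcst-hard}, which converts a \GAPMAXTT{$\epsilon(N)$} instance with $N$ tensors of size $D$ into two binary trees of size $O(N \poly(D, \epsilon(N)^{-1}))$ in time $O(N \poly(D, \epsilon(N)^{-1}))$, such that any $o(\epsilon(N)^{-1})$-approximation algorithm for Largest Common Subtree distinguishes the two gap cases. Because every \GAPMAXTTrestri{$\epsilon(N)$} instance is syntactically also a \GAPMAXTT{$\epsilon(N)$} instance, this reduction specializes at once to the restricted version; an identical argument using the dualization idea from Theorem~\ref{thm:hard-gap-min-tt} handles \GAPMINTTrestri{$\epsilon(N)$} by negating the underlying branching program before applying the gadget construction.

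The main obstacle is not conceptual but notational: the work lies in confirming that each existing reduction, which was stated for a slightly different input class, specializes cleanly to the restricted tensor variants demanded by Corollary~\ref{cor:bp-to-hard}, and that the size blow-up is genuinely of the claimed $\poly(D)$ or $\poly(D, \epsilon(N)^{-1})$ form. In every gadget construction, that blow-up arises from either attaching $N$ disjoint tensor gadgets at the leaves of a balanced binary tree or concatenating them into a single regular expression, so the $O(N \poly(\cdot))$ envelope is preserved at each step and no new combinatorial construction is required.
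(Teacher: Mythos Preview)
Your proposal is correct and matches the paper's (implicit) approach: the corollary is stated without proof in the paper precisely because it is a summary of Theorems~\ref{thm:sti-hard}, \ref{thm:regexp-string-hard}, and \ref{thm:lcst-hard}, exactly the three results you cite.

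One small remark: your final sentence about handling \GAPMINTTrestri{$\epsilon(N)$} via ``negating the underlying branching program'' is both unnecessary and slightly misplaced. It is unnecessary because the only approximate problem in \equivclasshard\ is approximate Largest Common Subtree, which is a max-type problem, so only the \GAPMAXTT\ source is needed (the ``or \GAPMAXTTrestri{$\epsilon(N)$}'' in the statement is visibly a typo). It is misplaced because the negation trick of Theorem~\ref{thm:hard-gap-min-tt} converts \BPSATPAIR\ into \GAPMINTT, whereas here you would need to go \emph{from} a tensor instance \emph{to} LCST; there is no branching program in the picture at that point. This does not affect the validity of your argument, since the MIN case is not actually required.
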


Then we can obtain the following result:

\begin{reminder}{Theorem \ref{theo:shave-logs-hard}}
If there is an deterministic algorithm for any decision, exact value or $\polylog(N)$-approximation problems among \equivclasshard problems listed in Theorem~\ref{theo:informal-hard} running in running in
\[
O\left(N^2 / 2^{\omega(\log \log N)^3}\right)
\]
time (or $O\left(NM / 2^{\omega(\log \log(NM))^3}\right)$ time for Regular Expression Membership Testing), then the same consequences in Theorem~\ref{theo:shave-logs-equiv} follows.
\end{reminder}
\begin{proof}
By Corollary \ref{cor:bp-to-hard} and the fact that exact value problem can be trivially reduced to its approximation version, we only need to show that this statement is true for \OAPT and \GAPMAXTT{$(\log N)^c$} for every $c > 0$. 

The proof for \OAPT is similar as in Theorem \ref{theo:shave-logs-equiv}. For \GAPMAXTT{$(\log N)^c$}, we know that the hard instances of \BPSATPAIR in Theorem \ref{thm:bp-pair-shaving} can be reduced to a \GAPMAXTT{$\epsilon$} instance with
\[
D = 2^{O(\log^2 W \log^2 T(\log \log W + \log \log T + \log\epsilon^{-1}))} = 2^{O(\log \log N)^3}
\]
for $\epsilon = (\log N)^c$. Thus shaving an $O(2^{\omega(\log \log N)^3})$ factor to \GAPMAXTT{$(\log N)^c$} implies an algorithm for the hard instances of \BPSATPAIR running in $O(N^2 / (\log N)^{\omega(1)})$ time.
\end{proof}

\section{Derandomization Implies Circuit Lower Bounds}\label{sec:derand-to-circuit-lowb}

For the some problems $\mathcal{A}$ like \textit{Longest Common Subsequence}, despite its approximating for the pair version of $\mathcal{A}$ (Approximate \MAXAPAIR{$\mathcal{A}$}) is subquadratically equivalent to \MAXTT, 
it is still hard to find a reduction from approximating $\mathcal{A}$. The main barrier is when trying to construct gadgets to reduce Approximate \MAXAPAIR{$\mathcal{A}$} to Approximate $\mathcal{A}$, the contribution to the final result for just one pair is too small to make a large approximating gap.

To overcome this barrier, we follows from \cite{AbboudR18} to define \SUPERGAPMAXTT{$\epsilon(N)$}, which is a variant of \GAPMAXTT{$\epsilon(N)$} with a large fraction of pairs having perfect Tropical Similarities:
\begin{definition}[\SUPERGAPMAXTT{$\epsilon$}] \label{def:att}
Let $t$ be an even number and $d_1 = d_2 = \cdots = d_t = 2$. Given two sets of tensors $A, B \in \{0, 1\}^{d_1 \times \cdots \times d_t}$ of size $D = 2^t$, distinguish between the following:
\begin{itemize}
\item \textbf{Completeness:} A $(1 - 1/\log^{10} N)$-fraction of the pairs of $a \in A, b \in B$ have a perfect Tropical Similarity, $s(a, b) = 1$;
\item \textbf{Soundness:} Every pair has low Tropical Similarity score, $s(a, b) < \epsilon$.
\end{itemize}
where $\epsilon$ is a threshold that can depend on $N$ and $D$.
\end{definition}

In \cite{AbboudR18}, Abboud and Rubinstein has shown that \SUPERGAPMAXTT{$o(1)$} can be reduced to $O(1)$-approximate \LCS. Using the same reduction, we have the following corollary for arbitrary approximation ratio:
\begin{theorem}[\cite{AbboudR18}] \label{thm:super-gap-lcs-red}
Given an \SUPERGAPMAXTT{$\epsilon(N)$} instance on $N$ tensors of size $D$, we can construct two strings $x, y$ of length $ND$ in $O(N \poly(D))$ deterministic time such that:
\begin{itemize}
\item If $(1 - 1/\log^{10} N)$-fraction of the pairs have a perfect Tropical Similarity, then $\LCS(x, y) > (1/3)N \sqrt{D}$;
\item If every pair has low Tropical Similarity score, then $\LCS(x, y) < 2\epsilon(N) N\sqrt{D}$.
\end{itemize}
Thus, if there is an $\epsilon(N)^{-1}$-approximation algorithm for such kind of $(x, y)$ pairs, then there is a faster algorithm for \SUPERGAPMAXTT{$(\epsilon(N)/6)$}.
\end{theorem}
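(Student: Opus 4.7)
The plan is to adapt the alignment-gadget construction of~\cite{AbboudR18} to this parameter regime. At a high level, I would build per-tensor gadgets whose pairwise $\LCS$ tracks the Tropical Similarity, and then string them together with separator (anchor) gadgets that force a monotone block-by-block alignment, so that the $\LCS$ of the whole object decomposes as a sum over diagonal pairs.

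\emph{Step 1 (per-pair gadgets).} For each $a \in A$ and $b \in B$, I would construct strings $S(a)$ and $T(b)$ of length $\Theta(D)$ such that $\LCS(S(a), T(b)) = \Omega(D)$ when $s(a,b) = 1$ while $\LCS(S(a), T(b)) \le O(\epsilon(N) \cdot D)$ when $s(a,b) < \epsilon(N)$. The gadgets of Theorem~\ref{thm:lcs-gadget} already give $\LCS(S(a), T(b)) = \sqrt{D} \cdot s(a,b)$ out of the box; one then amplifies this (via the repetition / alphabet-expansion trick from~\cite{AbboudR18}) so that the normalised $\LCS$ becomes a constant fraction of the gadget length in the perfect case.

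\emph{Step 2 (concatenation with anchors).} Form
\[
x \;=\; S(a_1) \,\|\, Z \,\|\, S(a_2) \,\|\, Z \,\|\, \cdots \,\|\, S(a_N), \qquad y \;=\; T(b_1) \,\|\, Z' \,\|\, T(b_2) \,\|\, Z' \,\|\, \cdots \,\|\, T(b_N),
\]
where $Z, Z'$ are blocks of a fresh alphabet symbol not used inside the $S(\cdot), T(\cdot)$'s, whose lengths are tuned so any optimal common subsequence of $x$ and $y$ essentially decomposes as $\sum_i \LCS(S(a_i), T(b_{f(i)}))$ for some monotone $f:[N]\to[N]$, plus an additive error bounded by the total anchor mass.

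\emph{Step 3 (completeness and soundness).} For completeness, since a $(1-1/\log^{10}N)$-fraction of all pairs in $A \times B$ are perfect, a simple averaging argument shows that at least $(1-o(1))N$ of the diagonal pairs $(a_i, b_i)$ are perfect; taking $f=\mathrm{id}$ then yields $\LCS(x,y) \ge (1-o(1)) \cdot N \cdot \Omega(D) > \tfrac{1}{3} ND$. For soundness, every pair satisfies $s(a,b) < \epsilon(N)$, so every term in the decomposition contributes at most $O(\epsilon(N) D)$, giving $\LCS(x,y) \le N \cdot O(\epsilon(N) D) + (\text{anchor slack}) < 2\epsilon(N) \cdot ND$. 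The running time $O(N \poly(D))$ is immediate from Theorem~\ref{thm:lcs-gadget} and the $O(N)$ concatenation overhead.

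The main obstacle is calibrating the anchor construction: the separators $Z, Z'$ must be \emph{long enough} to force the monotone block-by-block decomposition (so the soundness bound goes through cleanly), yet \emph{short enough} that they contribute only $o(ND)$ to $\LCS(x,y)$ in the soundness case and do not swamp the $2\epsilon(N)ND$ budget. The $1/\log^{10}N$ slack in the completeness hypothesis is exactly what is needed to absorb the per-gadget and per-anchor losses. A secondary, more pedestrian obstacle is verifying that~\cite{AbboudR18}'s analysis, originally stated for a specific constant-factor $\epsilon = o(1)$, ports to arbitrary $\epsilon(N)$; this is a parameter-tracking exercise rather than a conceptual one.
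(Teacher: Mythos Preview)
Your completeness argument (Step 3) has a genuine gap. You claim that ``a simple averaging argument shows that at least $(1-o(1))N$ of the diagonal pairs $(a_i, b_i)$ are perfect,'' and hence take $f = \mathrm{id}$. This is false: the hypothesis is that a $(1 - 1/\log^{10} N)$-fraction of \emph{all} $N^2$ pairs are perfect, which says nothing about the particular diagonal determined by whatever ordering you used to write down $x$ and $y$. (For instance, every off-diagonal pair could be perfect and every diagonal pair bad.) The argument actually used in~\cite{AbboudR18} averages over \emph{shifted} diagonals: among the $N$ cyclic shifts $\{(i,\, (i+k)\bmod N) : i \in [N]\}$, some shift $k$ contains at least $(1-\delta)N$ perfect pairs; splitting that cyclic diagonal into its two monotone halves, one half has at least $(1-\delta)N/2$ perfect pairs and is a valid monotone alignment for the $\LCS$ lower bound. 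This is exactly the source of the factor $1/2$ in the paper's completeness bound $\LCS(x,y) > (1 - 1/\log^{10}N) \cdot ND/2$, which then exceeds $ND/3$.

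Relatedly, the anchors $Z, Z'$ are unnecessary, and the ``main obstacle'' you identify (calibrating their length) is a red herring. The paper's construction simply concatenates the tensor gadgets with no separators. For soundness, the monotone block decomposition you want is automatic: any common subsequence of $x = S(a_1)\cdots S(a_N)$ and $y = T(b_1)\cdots T(b_N)$ traces out a monotone staircase of block-pairs $(i,j)$ visiting at most $2N-1$ cells, so $\LCS(x,y) \le (2N-1)\cdot \max_{i,j} \LCS(S(a_i), T(b_j)) < 2\epsilon(N) \cdot ND$ directly. For completeness, you never needed to \emph{force} any particular $f$; you only need to exhibit one good monotone $f$, and $\LCS$ is automatically at least the value along it. Both directions therefore go through without separators, and your calibration worry disappears once you fix the completeness step as above.
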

\begin{proof}
We construct strings $G(a), H(b)$ as tensor gadgets for each tensor $a \in A, b \in B$ as in the reduction in \cite{AbboudR18} (stated in Theorem \ref{thm:lcs-gadget}). Then we construct the final strings $x, y$ by concatenating all the tensor gadgets. Using a similar argument as in \cite{AbboudR18}, we can show that if there are at least $(1 - 1/\log^{10} N) N^2$ pairs of tensors with perfect Tropcial Similarities, then $\LCS(x, y) > (1 - 1/\log^{10}N) \cdot N \sqrt{D} / 2$; if every pair has low Tropical Similarity score, then $\LCS(x, y) < \epsilon(N) \cdot 2N\sqrt{D}$.
\end{proof}

There is no obvious reduction from \BPSATPAIR to \SUPERGAPMAXTT{$\epsilon(N)$}, and a randomized algorithm can even solve \SUPERGAPMAXTT{$\epsilon(N)$} in nearly linear time. But finding a \textit{deterministic} algorithm for \SUPERGAPMAXTT{$\epsilon(N)$} is still hard: as noted by Abboud and Rubinstein in \cite{AbboudR18}, a truly-subquadratic time deterministic algorithm for \SUPERGAPMAXTT{$\epsilon(N)$} can imply some circuit lower bound for $\ETIME^\NP$. Combining their ideas with the connection between Tropical Tensors and \BPSAT we established, we can show even stronger circuit lower bounds if such algorithm exists.

We base our proof on the following results in the literature:
\begin{theorem}[\cite{BV14}] \label{thm:ac-prob-enp}
Let $F_n$ be a set of function from $\{0, 1\}^n$ to $\{0, 1\}$ that are efficiently closed under projections. If the acceptance probability of a function of the form
\begin{itemize}
\item AND of fan-in in $n^{O(1)}$ of
\item OR's of fan-in $3$ of
\item functions from $F_{n + O(\log n)}$
\end{itemize}
can be distinguished from being $= 1$ or $\le 1/n^{10}$ in $\DTIME[2^n / n^{\omega(1)}]$, then there is a function $f \in \ETIME^{\NP}$ on $n$ variables and $f \notin F_n$.
\end{theorem}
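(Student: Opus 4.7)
The plan is to follow the Williams-style paradigm for turning non-trivial circuit analysis algorithms into lower bounds against $\ETIME^{\NP}$, specialized to the two-sided promise (derandomization) setting developed in \cite{BV14}. The argument proceeds by contradiction: assume that every Boolean function on $n$ variables in $\ETIME^{\NP}$ lies in $F_n$, and derive a violation of the nondeterministic time hierarchy theorem.

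First, fix an arbitrary language $L \in \NTIME[2^n]$ for which we wish to produce a faster nondeterministic algorithm. Invoking the witness-compression idea behind the easy-witness lemma, the lex-smallest accepting witness of an $\NTIME[2^n]$ computation on input $x$ is computable in $\ETIME^{\NP}$; under the contradictory hypothesis, this witness is therefore the truth table of some $F_{n+O(\log n)}$-circuit $W$ of description size $n^{O(1)}$. The fact that $F$ is efficiently closed under projections is exactly what lets us index into $W$ by substituting constants into some input wires without leaving the class.

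Next, I would reduce verifying that a guessed $W$ encodes a valid witness for $x$ to an instance of the promise problem solvable by the hypothesized algorithm. A Cook--Levin-style local verifier expresses validity as a conjunction of $n^{O(1)}$ local clauses, each a $3$-OR over a constant number of witness bits; each bit is a projection of $W$, hence an $F_{n+O(\log n)}$-function. This matches the ``AND of $n^{O(1)}$-fan-in, of $3$-ORs, of $F_{n+O(\log n)}$-functions'' format precisely. For completeness, a genuine witness makes every clause satisfied, so the acceptance probability (under a uniformly random clause) is $1$; for soundness, one composes the verifier with an error-correcting (or PCP-of-proximity) encoding so that any invalid $W$ is rejected on at least a $1 - 1/n^{10}$ fraction of clauses.

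Guessing the poly-size description of $W$ nondeterministically and then applying the assumed $2^n/n^{\omega(1)}$-time distinguishing algorithm to the resulting circuit yields an $\NTIME[2^n/n^{\omega(1)}]$ algorithm for $L$, contradicting the nondeterministic time hierarchy theorem. The main obstacle I anticipate is the soundness step: arranging a local verifier that both preserves the exact syntactic template (AND of $3$-ORs of $F_n$-functions) and achieves a $1$ vs.\ $1/n^{10}$ gap requires carefully choosing the PCP-style encoding so that random clause sampling amplifies the rejection probability without blowing up the fan-ins or introducing gates outside $F$. This amplification-within-a-fixed-template is where the technical ingenuity of \cite{BV14} must be deployed.
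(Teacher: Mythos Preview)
The paper does not prove this statement; it is quoted as a black box from \cite{BV14} and then applied in Section~\ref{sec:derand-to-circuit-lowb}. So there is no in-paper proof to compare against, only the question of whether your sketch matches the argument of \cite{BV14}.

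Your high-level plan is the right one and is what \cite{BV14} does: assume $\ETIME^{\NP}\subseteq F$, use a lex-first-witness argument to compress $\NTIME[2^n]$ witnesses into $\poly(n)$-size $F$-circuits, nondeterministically guess such a circuit $W$, verify it with the hypothesized distinguisher, and contradict the nondeterministic time hierarchy.

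Where your sketch goes astray is in parsing the circuit template and the meaning of ``acceptance probability.'' You write that validity is ``a conjunction of $n^{O(1)}$ local clauses'' and that the acceptance probability is taken ``under a uniformly random clause.'' That is not the picture. The \emph{input} to the function in the theorem is the PCP randomness $r\in\{0,1\}^{n+O(\log n)}$, and the acceptance probability is $\Pr_r[f(r)=1]$. The top-level AND of fan-in $n^{O(1)}$ is not an AND over tableau positions (those are precisely what $r$ ranges over); it is the $3$-CNF encoding of the PCP verifier's local predicate at a single fixed $r$, whose literals are the queried proof bits $W(q_j(r))$. Each such bit, viewed as a function of $r$, is a projection of the guessed $W$ and hence lies in $F_{n+O(\log n)}$ by the closure assumption. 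With this reading, completeness ($\Pr_r[f(r)=1]=1$) is immediate for a correct $W$, and soundness $\le 1/n^{10}$ comes directly from the PCP's soundness error after routine parallel repetition. Your anticipated obstacle of ``amplification within a fixed template via random clause sampling'' is therefore not where the work lies: once the input is identified with the PCP randomness, a standard PCP for $\NTIME[2^n]$ already delivers the required gap in the required syntactic form.
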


\begin{theorem}[\cite{Wil13, BV14}] \label{thm:ac-prob-ntime}
	If the acceptance probability of a function from $\NC^1$ can be distinguished from being $ = 1$ or $\le 1/n^{10}$ in $\DTIME[2^n / n^{\omega(1)}]$, then $\NTIME[2^{O(n)}]$ is not contained in $\NC^1$.
\end{theorem}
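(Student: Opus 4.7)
The plan is to carry out Williams' nondeterministic circuit lower bound template, specialized to $\NC^1$ and to the specific promise gap of ``acceptance probability equals $1$'' versus ``acceptance probability at most $1/n^{10}$''. I assume for contradiction that $\NTIME[2^{O(n)}] \subseteq \NC^1$ non-uniformly. By the nondeterministic time hierarchy theorem, pick a language $L \in \NTIME[2^n] \setminus \NTIME[2^n/n^{\omega(1)}]$. The goal is to derive a contradiction by showing that the hypothesis together with the assumed collapse yields an $\NTIME[2^n/n^{\omega(1)}]$ algorithm for $L$.

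The heart of the reduction is the easy-witness lemma of Impagliazzo--Kabanets--Wigderson, in the form sharpened by Williams: under the collapse, for every $x \in L$ there is an accepting witness $y \in \{0,1\}^{2^{O(n)}}$ whose truth table is computed by some polynomial-size $\NC^1$ circuit $W$ on $n$ inputs. On input $x$, the new nondeterministic algorithm first guesses $W$ in $\poly(n)$ bits and $\poly(n)$ time. It then builds a single $\NC^1$ verification circuit $D$ on $n$ inputs: on input $r$, the circuit $D$ evaluates $W$ at the positions required by the $r$-th local test of a $\PCP$-of-proximity verifier applied to $(x, y)$, and outputs that test's decision. Because the collapse places the $L$-verifier itself into $\NC^1$, and because one can select a $\PCP$ verifier whose local tests are $\NC^1$-computable, the full $D$ lies in $\NC^1$. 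By choice of $\PCP$ encoding and amplification, a correctly guessed $W$ makes $D$ accept with probability exactly $1$, while every other guess makes $D$ accept with probability at most $1/n^{10}$.

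Invoking the hypothesized distinguisher on $D$ then separates the two cases in deterministic time $2^n/n^{\omega(1)}$, completing the nondeterministic algorithm for $L$ in time $2^n/n^{\omega(1)}$ and contradicting the hierarchy. The main obstacle is the amplification step: generic $\PCP$ verifiers give only a constant soundness gap, and amplifying down to $1/n^{10}$ while keeping the resulting verifier inside $\NC^1$ (rather than merely $\NC$) is delicate. This is exactly where the BV14 construction is needed: it supplies $\PCP$-of-proximity verifiers together with an amplification procedure whose composed verifier admits an $\NC^1$ implementation under the assumed collapse. Maintaining this tight depth control on the verifier is precisely what distinguishes the $\NC^1$ conclusion here from the weaker $\NC$ conclusion obtained by the same template in item (2) of Theorem~\ref{thm:implies-ntime-nc}.
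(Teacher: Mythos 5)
The paper itself gives no proof of this theorem---it is imported from \cite{Wil13,BV14}---and your high-level template (assume $\NTIME[2^{O(n)}]\subseteq\NC^1$, invoke an easy-witness lemma, compose a \PCP verifier with guessed witness circuits, call the hypothesized distinguisher, and contradict the nondeterministic time hierarchy) is indeed the route those references take. But two of your steps, as written, do not go through. First, you run a \PCP-of-proximity verifier on $(x,y)$ while guessing only a circuit $W$ for the $L$-witness $y$: a PCPP verifier also queries an auxiliary proof oracle $\pi$, and your circuit $D$ has nothing with which to answer those queries. You must guess a small circuit for $\pi$ as well, and its existence has to be justified---the standard fix is to apply the easy-witness lemma to the modified $\NTIME[2^{O(n)}]$ verifier whose witness is the pair $(y,\pi)$ (equivalently, let $W$ encode the \PCP proof itself rather than $y$). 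Relatedly, the easy-witness lemma only yields polynomial-size \emph{unrestricted} witness circuits; pushing them into $\NC^1$ is a separate step that reuses the collapse (Circuit Evaluation is in $\PTIME\subseteq\NTIME[2^{O(n)}]\subseteq$ non-uniform $\NC^1$, so an equivalent $\NC^1$ witness circuit exists and can be guessed), and this should be made explicit rather than folded into ``sharpened by Williams.''

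The more serious gap is your justification that $D\in\NC^1$: ``the collapse places the $L$-verifier itself into $\NC^1$'' and the composed verifier is $\NC^1$ ``under the assumed collapse.'' The collapse only guarantees that \emph{some} non-uniform $\NC^1$ circuits exist for the (uniform, poly-time) query-selection and decision functions of the verifier; your nondeterministic algorithm would have to guess them, and unlike a wrong witness circuit (which soundness tolerates), a wrong guess here is fatal: an adversarial guess can make $D\equiv 1$ even when $x\notin L$, and there is no way to verify the guessed circuit within the time budget, so soundness breaks. This is exactly why \cite{BV14} is needed in the actual proof: their \PCP{}s have projection queries and a decision predicate that is an AND of $n^{O(1)}$ fan-in-$3$ ORs, so the verifier-side circuitry is simple \emph{unconditionally}, and composing it with the guessed $\NC^1$ witness circuits stays in $\NC^1$ (closure under polynomial fan-in AND and fan-in-$3$ OR); that repetition structure is also what supplies the $1$ versus $1/n^{10}$ gap, rather than a separate depth-delicate amplification performed ``under the collapse.'' Finally, a small accounting point you skip: $D$ has $n+O(\log n)$ inputs, so you need quasilinear-length \PCP proofs for the distinguisher call to cost $2^{n}\cdot\poly(n)/n^{\omega(1)}=2^{n}/n^{\omega(1)}$, which is what makes the contradiction with the nondeterministic time hierarchy legitimate.
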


\begin{theorem} \label{thm:hard-super-gap-tt}
Let \ACBPSAT be the following problem: given a branching program $P$ of length $T$ and width $W$ on $n$ inputs, distinguish the acceptance probability of $P$ from being $ = 1$ or $\le 1/n^{10}$. 

There is a reduction from \ACBPSAT to  \SUPERGAPMAXTT{$\epsilon$} on two sets of $N = 2^{n/2}$ tensors of size
$D = 2^{O(\log^2 W \log^2 T (\log \log W + \log \log T + \log \epsilon^{-1}))}$,
and the reduction runs in $O(N\poly(D))$. Here $\epsilon$ is a threshold value that can depend on $N$ (but cannot depend on $D$).
\end{theorem}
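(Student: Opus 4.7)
The plan is to augment the reduction of Theorem~\ref{thm:hard-gap-tt} by \emph{AND-ing} its per-pair IP protocol with an additional global IP check, all within a single tensor product. Protocol (A) is the per-pair BP-SAT IP for $P(a,b)=1$ from Theorem~\ref{thm:hard-gap-tt}; it handles the \SUPERGAPMAXTT{$\epsilon$} completeness case for free (in the YES case every pair is accepting, so $s_A=1$ everywhere), but it fails soundness on its own because in the NO case up to $N^2/n^{10}$ accepting pairs would still give $s_A=1$. Protocol (B) is a global IP that verifies the tautology claim $\#P=2^n$; since (B) is sound uniformly in $(a,b)$ against every Merlin strategy, combining it multiplicatively with (A) forces $s<\epsilon$ on \emph{every} pair in the NO case.

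Concretely, I would set $G(a):=G_A(a)\otimes G_B$ and $H(b):=H_A(b)\otimes H_B$, where $(G_A(a),H_A(b))$ are the Theorem~\ref{thm:hard-gap-tt} gadgets for (A), and $(G_B,H_B)$---independent of $(a,b)$---implement (B) by arithmetizing the claim $\sum_{x\in\{0,1\}^n}\tilde P(x)=2^n$ in the $\Sigma,\Pi,\mathcal{R}$ framework of Theorem~\ref{thm:bp-ip}, delegating the final evaluation of $\tilde P$ at the sum-check's random point to one invocation of the Theorem~\ref{thm:bp-ip} sub-protocol itself. I place the $A$-block of tensor dimensions outermost so that the alternating $\mathbb{E}/\max$ pattern of Definition~\ref{def:max-min-tt} is preserved across the $A$/$B$ boundary. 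Because $G_AH_A$ depends only on the $A$-indices, $G_BH_B$ only on the $B$-indices, and both tensors are non-negative, the Tropical Similarity factors cleanly as
\[
    s(G(a),H(b)) \;=\; s_A(a,b)\cdot s_B,
\]
where $s_B$ is a single scalar independent of $(a,b)$.

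Correctness follows immediately: in the YES case $P(a,b)=1$ for every $(a,b)$, so $s_A(a,b)=1$, and an honest Merlin in (B) proves the true claim, giving $s_B=1$ and hence $s(G(a),H(b))=1$ for every pair, comfortably within the $(1-1/\log^{10}N)$-fraction completeness threshold; in the NO case the claim $\#P=2^n$ is false, so by sum-check soundness over (B)'s fresh public coins every Merlin strategy in (B) passes with probability at most some $\epsilon_B$, giving $s_B\le\epsilon_B$ and hence $s(G(a),H(b))\le\epsilon_B$ uniformly in $(a,b)$. Picking the field size so that $\epsilon_B\le\epsilon$ delivers \SUPERGAPMAXTT{$\epsilon$} soundness, while the negligible soundness loss of (A) does not hurt because (A) is only used as an upper factor bounded by $1$.

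For parameters, (A) uses $M_A=O(\log^2 W\log^2 T)$ operators over $\mathbb{F}_{2^q}$ with $q=\Theta(\log\log W+\log\log T+\log\epsilon^{-1})$ by Theorem~\ref{thm:hard-gap-tt}; the cost of (B) is dominated by its sub-protocol call to evaluate $\tilde P(r)$, contributing another $O(\log^2 W\log^2 T)$ operators at the same $q$, so that the combined tensor dimension is $O((M_A+M_B)q)$ and hence $D=2^{O(\log^2 W\log^2 T\cdot(\log\log W+\log\log T+\log\epsilon^{-1}))}$, matching the claimed bound; the reduction spends $O(\poly(D))$ per tensor and constructs $N$ of them, yielding $O(N\poly(D))$ time. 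The main obstacle I anticipate is controlling the $n$ outer sum-check rounds in (B): naively they contribute an additive $n$ to $M_B$, which lies within the stated $D$ only when $n=O(\log^2 W\log^2 T)$. Handling the general case will require folding the $\Sigma$-operators for the outer sum-check into the same arithmetization (exploiting the multilinearity of $\tilde P$, so that no extra $\mathcal{R}$-operators are needed in the outer rounds), which is where the main technical bookkeeping of the proof will live.
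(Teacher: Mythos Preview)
Your approach has a genuine parameter gap that your proposed fix does not close. The global protocol (B) performs a sum-check over all $n$ input bits of $P$; those are $n$ rounds of $\Sigma$-operators \emph{regardless} of whether $\mathcal{R}$-operators are needed, so $M_B \ge n$ and hence $D \ge 2^{\Omega(nq)}$. Since $N = 2^{n/2}$ and $q \ge \log\epsilon^{-1}$, this makes $D$ at least $N^{\Omega(\log\epsilon^{-1})}$, which is far larger than the claimed $D = 2^{O(\log^2 W \log^2 T \cdot q)}$; in the applications (e.g.\ Item~1 of Theorem~\ref{thm:approx-lcs-cons}, where $\log T = n^{o(1)}$) this destroys the argument entirely. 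Your remark that ``no extra $\mathcal{R}$-operators are needed in the outer rounds'' is true but irrelevant: it is the $n$ outer $\Sigma$-rounds themselves that are fatal, and there is no sum-check over an $n$-bit domain in fewer than $n$ rounds. (Incidentally, note that in your construction protocol (A) is redundant: since $s = s_A \cdot s_B \le s_B$ and in the YES case $s_A = s_B = 1$, protocol (B) alone already decides the instance; so the whole weight rests on (B), which is exactly the part whose cost you cannot control.)

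The paper's intended argument avoids all of this by a one-line observation you are missing: \emph{negate}. Run the Theorem~\ref{thm:hard-gap-tt} reduction with the IP protocol that checks ``$P$ \emph{rejects} $(a,b)$'' (i.e.\ $\aprod = 1$, exactly as in the proof of Theorem~\ref{thm:hard-gap-min-tt}). Then $s(G(a),H(b)) = 1$ whenever $P(a,b)=0$ and $s(G(a),H(b)) < \epsilon$ whenever $P(a,b)=1$. In the YES case of \ACBPSAT every pair has $P(a,b)=1$, so every pair has $s < \epsilon$ (Soundness); in the NO case at least a $1 - 1/n^{10} \ge 1 - 1/\log^{10} N$ fraction of pairs has $P(a,b)=0$, hence $s=1$ (Completeness). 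The answer is simply flipped. No global tautology check is needed, the protocol is purely per-pair, and the parameters are identical to Theorem~\ref{thm:hard-gap-tt}, matching the stated $D$ exactly.
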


\begin{reminder}{Theorem \ref{thm:approx-lcs-cons}}
	The following holds for deterministic approximation to $\LCS$:
	
	\begin{enumerate}
		\item A $2^{(\log N)^{1 - \Omega(1)}}$-approximation algorithm in $N^{2 - \delta}$ time for some constant $\delta > 0$ implies that $\textsf{E}^{\NP}$ has no $n^{o(1)}$-depth bounded fan-in circuits;
		
		\item A $2^{o(\log N / (\log \log N)^2)}$-approximation algorithm in $N^{2 - \delta}$ time for some constant $\delta > 0$ implies that $\NTIME[2^{O(n)}]$ is not contained in non-uniform $\NC^1$;
		
		\item An $O(\polylog(N))$-approximation algorithm in $N^2 / 2^{\omega(\log \log N)^3}$ time implies that $\NTIME[2^{O(n)}]$ is not contained in non-uniform $\NC^1$.
	\end{enumerate}
	
\end{reminder}
\begin{proof} By Theorem \ref{thm:ac-prob-enp} and Theorem~\ref{thm:ac-prob-ntime}, for Item 1, it is sufficient to show \ACBPSAT on BP of length $2^{n^{o(1)}}$ and width $O(1)$ on $n$ inputs can be solved in $2^{(1 - \Omega(1))n}$ time; for Item 2 and 3, it is sufficient to show \ACBPSAT on BP of length $O(\poly(n))$ and width $O(1)$ on $n$ inputs can be solved in $2^{n}/n^{\omega(1)}$ time (the former scale of BP is able to simulate $n^{o(1)}$-depth circuit, while the later one is able to simulate $\NC^1$ by Barrington's Theorem \cite{Barrington89}).

\item
\paragraph*{Item 1.}
Assume there exists a $2^{(\log N)^{1 - c}}$-approximation algorithm for \LCS in $N^{2 - \delta}$ time for some $c > 0$ and $\delta > 0$.
By Theorem~\ref{thm:hard-super-gap-tt}, \ACBPSAT on BP of length $T = 2^{n^{o(1)}}$ and width $W = O(1)$ on $n$ inputs can be reduced to \SUPERGAPMAXTT{$(2^{-(\log K)^{1 - c}}/6)$} on $K = 2^{n/2}$ tensors of size 
\[
	D =  2^{O(n^{o(1)} \cdot (o(\log n) + (\log K)^{1 - c}))} = 2^{n^{1 - c + o(1)}}.
\] 
Then by Theorem~\ref{thm:super-gap-lcs-red}, \SUPERGAPMAXTT{$(2^{-(\log K)^{1 - c}}/6)$} can be reduced to $2^{(\log K)^{1 - c}}$-approximate LCS for strings of length 
\[
N = KD = 2^{n/2 + n^{1-c+o(1)}} = 2^{(1/2 + o(1))n}.
\] 
By our assumption, the last problem can be solved in $N^{2-\delta}$ time, so \ACBPSAT on branching program of length $2^{n^{o(1)}}$ and width $O(1)$ on $n$ inputs can be solved in 
$2^{(1 - \delta/2 + o(1))n}$
time. Applying Theorem~\ref{thm:ac-prob-enp} completes the proof.

\paragraph*{Item 2.} Assume there exists a $2^{f(\log N)}$-approximation algorithm for \LCS in $N^{2 - \delta}$ time for some constant $\delta > 0$ and some function $f(k) = o(k/\log^2 k)$. 
Let $g(k) = 2f(k) + \log k$. Then we have $g(\log N) = o(\log N/(\log \log N)^2)$ and
\[
	2^{f((1+o(1))\log K)} \le 2^{(1+o(1)) f(\log K)} \le 2^{2 f(\log K)} = o(2^{g(\log K)}).
\]
By Theorem~\ref{thm:hard-super-gap-tt}, \ACBPSAT on BP of length $T = O(\poly(n))$ and width $W = O(1)$ on $n$ inputs can be reduced to \SUPERGAPMAXTT{$2^{-g(\log K)}$} on $K = 2^{n/2}$ tensors of size 
\[
	D =  2^{O(\log^2 n \cdot (\log \log n + g(\log K)))} = 2^{O(\log^2 n \cdot o(\log K / (\log \log K)^2))} = 2^{o(n)}.
\] 
Then by Theorem~\ref{thm:super-gap-lcs-red}, \SUPERGAPMAXTT{$2^{-g(\log K)}$} can be reduced to $o(2^{g(\log K)})$-approximate LCS for strings of length $N = KD =  2^{(1/2 + o(1))n}$.
Note that $2^{f(\log(K^{1+o(1)}))} = o(2^{g(\log K)})$. Thus by our assumption, the last problem can be solved in $N^{2-\delta}$ time, which means \ACBPSAT on branching program of length $2^{n^{o(1)}}$ and width $O(1)$ on $n$ inputs can be solved in $2^{(1 - \delta/2 + o(1))n}$ time. Applying Theorem~\ref{thm:ac-prob-ntime} completes the proof.

\paragraph*{Item 3.} Assume there exists a $\log^c(N)$-approximation algorithm for \LCS in $N^2 / 2^{\omega(\log \log N)^3}$ time for some $c > 0$. Using a similar calculation as in Theorem \ref{theo:shave-logs-hard}, we know that 
\ACBPSAT on branching program of length $O(\poly(n))$ and width $O(1)$ on $n$ inputs can be solved in 
\[
N^2/2^{\omega(\log \log N)^3} \le 2^{n + O(\log^3 n)} / 2^{\omega(\log^{3} n)} \le 2^n / n^{\omega(1)}
\] 
time. Applying Theorem~\ref{thm:ac-prob-ntime} completes the proof.
\end{proof}

\section*{Acknowledgment}
The authors would like to thank Ryan Williams for helpful discussion. The authors thank Ray Li for pointing out a mistake in Theorem 11.2 in an earlier version of this paper.



\bibliographystyle{alpha}
\bibliography{main}

\end{document}